\def\C{\mathbb{C}}
\def\R{\mathbb{R}}
\def\Z{\mathbb{Z}}
\def\N{\mathbb{N}}
\let\ds\displaystyle
\newcommand{\be}{\begin{equation}}
\newcommand{\ee}{\end{equation}}
\newcommand{\pa}{\partial}
\newcommand{\IdentityMatrix}{{\bf 1}}
\theoremstyle{plain}
\newtheorem{theorem}{Theorem}[chapter]
\newtheorem{lemma}{Lemma}[chapter]
\newtheorem{corollary}{Corollary}[chapter]
\newtheorem{remark}{Remark}[chapter]
\newtheorem{proposition}{Proposition}[chapter]
\newtheorem{conjecture}{Conjecture}[chapter]
\newtheorem{op}{Open problem}[chapter]
\newtheorem{observation}{Observation}[chapter]
\def\ldb{\mathopen{\{\!\!\{}} \def\rdb{\mathclose{\}\!\!\}}}
\theoremstyle{remark}
\newtheorem{definition}{Definition}[chapter]
\newtheorem{example}{Example}[chapter]
\newtheorem{problem}{Problem}[chapter]
\newtheorem{exercise}{Exercise}[chapter]
\newtheorem{application}{Application}[chapter]
\DeclareMathOperator{\grad}{grad}
\def\p{\partial }
\begin{document}

\begin{center}

\vspace{100mm}

{\Huge{\sc algebraic structures related \\ to  integrable \\[5mm] differential equations}
}\\


\vspace{35pt}
{\bf \LARGE Vladimir Sokolov$^{a,b}$}\\[25pt]

{${}^a$\normalsize
 Landau Institute for Theoretical Physics, \\
142432 Chernogolovka (Moscow region), Russia\\}e-mail: {\normalsize
 \it
vsokolov@landau.ac.ru}\vspace{15pt}

{${}^b$\normalsize
 Sao Paulo University, \\
 Instituto de Matematica e Estatistica
 \\ 05508-090, Sao Paulo, Brasil\\} {\normalsize
 \it
sokolov@landau.ac.ru}  https://www.ime.usp.br \vspace{10pt}

\vspace{3mm}

\end{center}

\begin{abstract} The survey is devoted to algebraic structures related to integrable ODEs and evolution PDEs. A description of Lax representations is given in terms of vector space decomposition of loop algebras into direct sum of Taylor series and a complementary subalgebras. Examples of complementary subalgebras and corresponding integrable models are presented. In the framework of the bi-Hamiltonian approach compatible associative algebras related affine Dynkin diagrams are considered. A bi-Hamiltonian origin of the classical elliptic Calogero-Moser models is revealed.
\end{abstract}

\tableofcontents

\date{}

\chapter{Introduction}

\qquad A constructive approach to integrability is based upon the study of hidden and rich
algebraic or analytic structures associated with integrable equations.  In this survey algebraic structures associated with integrable ODEs and PDEs with two independent variables are considered.  Some of them are related to Lax representations for differential equations.  Furthermore, the bi-Hamiltonian formalism and the AKS factorization method  are considered.  Structures relevant to Yang-Baxter r-matrix are not discussed since many nice books have been written on the subject (see, for example \cite{FadTah, reysem}).
 
The statements are formulated  in the simplest form  but usually possible ways for generalization are pointed out.  In the proofs only essential points are mentioned while for technical details references are given.  
The text contains many carefully selected examples, which give a sense of the subject. A number of open problems are suggested.

The author is not a scrabble in original references. Instead, some references to reviews, where an information of pioneer works can be found, are given.  

The survey is addressed to both experts in algebra and in classical integrable systems. It is accessible to PhD students and can serve as an introduction to classical integrability for scientists with algebraic inclinations.

The exposition is based on a series of lectures  delivered by the author in USP (Sao Paulo, 2015).

The contribution of my collaborators I. Golubchik, V. Drinfeld, and  A. Odesskii  to results presented in  this survey is difficult to  overestimate. 

The author is thankful to the first readers of the survey A. Zobnin and S. Carpentier who made many suggestions and found a lot of misprints  contributing to the improvement of the text. 

The author is grateful to V. Kac, I. Shestakov and V. Futorny for their attention and to FAPESP for the financial support (grants 2014/00246-2 and 2016/07265-8) of my visits to Brazil, where the survay has been written.

\bigskip

\bigskip

\section{List of basic notation} 

\subsection{Constants, vectors and matrices}

Henceforth, the the field of constants is $\C$; $\bf u$ stands for $N$-dimensional vector, namely ${\bf u}=(u^1,\dots, u^N).$ Moreover, the  standard scalar product $\sum_{i=1}^{N} u^i\, v^i$ is denoted by $\langle {\bf u},\, {\bf v}\rangle$.

The associative algebra of order ``m'' square matrices is denoted by ${\rm Mat}_m$; the matrix  $\{u_{ij}\}\in {\rm Mat}_m$ is denoted by $\bf U.$  The unity matrix is denoted by ${\bf 1}$ or  ${\bf 1}_m$. The notation ${\bf U}^t$ stands for the matrix transpose of ${\bf U}$.

For the set of $n\times m$ matrices we use the notation ${\rm Mat}_{n,m}.$

\subsection{Derivations and differential operators} 
For ODEs the independent variable is denoted by $t$, whereas for PDEs we have two independent variables $t$ and $x$. Notation $u_t$ stands for the partial derivative of $u$ with respect to $t$. For the $x$-partial derivatives of $u$ the notation $u_x=u_1$, $u_{xx}=u_2,$ etc, is used.

The operator $\ds \frac{d}{d x}$ is often denoted  by $D$. For the differential operator $L=\sum_{i=0}^{k} a_i \,D^i$ we define the operator $L^{+}$ as
$$
L^+=\sum_{i=0}^{k} (-1)^i \,D^i\circ\, a_{i},
$$
where $\circ$ means that, in this formula, $a_i$ is the operator of multiplication by $a_i$. By $L_t$ we denote 
$$
L_t = \sum_{i=0}^{k} (a_i)_{t} \,D^i.
$$

\subsection{Differential algebra}
We denote by ${\cal F}$  a differential field. 
For our main considerations one can assume that elements of $ {\cal F}$ are rational functions of a finite number of independent variables $u_i.$ However, very ofter we find some functions solving overdetermind systems of PDEs. In such a case we have to extend the basic field $\cal F$. We will avoid any formal description of such extensions hoping that in any particular case it will be clear what we really need from $ {\cal F}$.
 
The principle derivation 
\be \label{DD}
D \stackrel{def}{=} \frac{\partial}{\partial x}+\sum_{i=0}^\infty u_{i+1} \frac{\partial}{\partial u_i},
\ee generates all independent variables $u_i$ starting from $u_0=u$. 

When we speak of solutions (or common solutions) of ODEs and PDEs, we mean {\it local}  solutions with generic initial date. 

\subsection{Algebra} 
We denote by $A(\circ)$ an $N$-dimensional algebra $A$ over $\C$ with an operation $\circ$.  A basis of $A$ is denoted by ${\bf e}_1, \dots , {\bf e}_N$, and corresponding structural constants  by $C^{i}_{jk}$:
$$
{\bf e}_j \circ {\bf e}_k =  C^i_{jk}\, {\bf e}_i.  
$$
In what follows we assume that the summation is carried out over repeated indices. 
We will use the following notation: 
\begin{equation}\label{as}
As(X, Y, Z) = (X \circ Y) \circ Z - X \circ (Y \circ Z),
\end{equation}
\begin{equation}\label{br}
[X, Y, Z] = As(X, Y, Z) - As(Y, X, Z).
\end{equation}

By ${\cal G}$ and ${\cal A}$ we usually denote a Lie and an associative algebra, respectively.

The algebra of Laurent series of the form
$$
S=\sum_{i=-n}^{\infty} c_{i} \lambda^{i}, \qquad c_{i}\in \C, \qquad n\in \Z
$$
is denoted by $\C ((\lambda)),$ for the subalgebra of Taylor series we  use $\C [[\lambda]]$ and $\C[\lambda]$ stands for polynomials in $\lambda.$  By $S_{+}$ and $S_{-}$ we denote  $$
S_{+}=\sum_{i=0}^{\infty} c_{i} \lambda^{i},\qquad {\rm and} \qquad 
S_{-}=\sum_{i=-n}^{-1} c_{i} \lambda^{i},
$$
respectively. We use a similar notation for the commutative and non-commutative Laurent series with coefficients from Lie and associative algebras.

\section{Lax pairs} 

The modern theory of integrable systems was inspired by
the discovery of the inverse transform method \cite{Z}, \cite[Chapter 1]{AblSeg81}. The main ingredient of this method is a Lax representation for 
a differential equation under investigation.

A Lax representation for a given differential equation is a relation of the form 
\begin{equation} \label{Lax}
L_t=[A,\,L],
\end{equation}
where $L$ and $A$ are some operators,
which is equivalent to the differential equation. To apply the technique of the inverse scattering method  the operators $L$ and $A$ should depend on an additional (complex) parameter $\lambda$.

\subsubsection{ODE case}

A Lax representation for a differential equation
\begin{equation} \label{eq}
{\bf u}_t=\bf F({\bf u}), \qquad {\bf u}=({\rm u^1}, \dots, {\rm u^N}),
\end{equation}
is a relation of the form~\eqref{Lax}, where $L=L({\bf u},\lambda), \,A=A({\bf u},\lambda)$ are some
matrices.

\begin{lemma}\label{tr} \
\begin{itemize}
\item[i)] If $L_1$ and $L_2$ satisfy~\eqref{Lax}, then $L=L_1 L_2$ satisfies~\eqref{Lax};
\item[ii)] $\bar L=L^n$ satisfies~\eqref{Lax} for any $n\in \N$;
\item[iii)] $\hbox{\rm tr}\, L^n$ is an integral of motion for~\eqref{eq};
\item[ i{\rm v})] the coefficients of the characteristic polynomial $\,\hbox{\rm Det}\,(L-\mu \IdentityMatrix)\,$ are integrals of motion.
\end{itemize}
\end{lemma}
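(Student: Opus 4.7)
The plan is to prove the four items in order, since each one feeds into the next.

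For (i), I would just apply the Leibniz rule in $t$: write $(L_1L_2)_t = (L_1)_tL_2 + L_1(L_2)_t$, substitute the two given Lax equations, and watch the cross terms $L_1AL_2$ cancel, producing exactly $[A,L_1L_2]$. This is a one-line computation and is the conceptual engine for everything else.

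For (ii), I would proceed by induction on $n$. The base case $n=1$ is trivial, and the inductive step $L^n = L\cdot L^{n-1}$ is an immediate application of (i) with $L_1=L$ and $L_2=L^{n-1}$. For (iii), I would combine (ii) with the cyclic invariance of the trace: differentiating $\mathrm{tr}\,L^n$ in $t$ and using (ii) gives $\mathrm{tr}[A,L^n]=\mathrm{tr}(AL^n)-\mathrm{tr}(L^nA)=0$, so $\mathrm{tr}\,L^n$ is constant on solutions of \eqref{eq}.

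For (iv) I see two natural routes and would take the more elementary one. The first route observes that $L-\mu\IdentityMatrix$ itself satisfies a Lax equation with the same $A$, since $[A,\mu\IdentityMatrix]=0$, so $(L-\mu\IdentityMatrix)_t=[A,L-\mu\IdentityMatrix]$; then the characteristic polynomial $\mathrm{Det}(L-\mu\IdentityMatrix)$ is invariant because determinant is invariant under conjugation, which makes formal sense viewing $L$ as undergoing an infinitesimal similarity transformation generated by $A$. The second, cleaner route is to use (iii): the coefficients of the characteristic polynomial are, by Newton's identities, polynomial functions of the power sums $\mathrm{tr}\,L^k$ for $k=1,\dots,m$; since each of these is a constant of motion by (iii), so is every coefficient. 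I would present the Newton-identity route because it reuses (iii) directly and avoids any issue about non-invertibility of $L-\mu\IdentityMatrix$ at eigenvalues.

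There is no real obstacle here; the only mildly non-automatic step is (iv), where one has to recognize that the conserved quantities of (iii) already encode the full spectral data of $L$. I would flag that $\lambda$ plays no role in this lemma (each $\lambda$ gives conserved quantities separately), and that the genuinely useful consequence for applications is that expanding $\mathrm{tr}\,L^n$ or $\mathrm{Det}(L-\mu\IdentityMatrix)$ in powers of $\lambda$ produces an infinite family of integrals of motion.
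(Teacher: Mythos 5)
Your proof is correct and follows essentially the same route as the paper: the Leibniz computation for (i), induction for (ii), cyclicity of the trace for (iii), and a reduction of (iv) to the conserved power sums $\mathrm{tr}\,L^k$. The only cosmetic difference is in (iv), where the paper invokes $\mathrm{Det}\,(L-\mu \IdentityMatrix)=\exp\bigl(\mathrm{tr}\,\log(L-\mu \IdentityMatrix)\bigr)$ while you use Newton's identities; these are the same fact in two packagings.
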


\begin{proof} Item i). We have 
$$
L_t=(L_1)_t L_2+L_1 (L_2)_t = [A,\,L_1] L_2+L_1 [A,\, L_2]=A\, L - L\,A.
$$
Item ii) follows from Item i).  
Item iii): if we apply the ${\rm trace}$ functional  to both sides of the identity 
$(L^n)_t=[A,\,L^n]$, we get $({\rm tr}\, L^n)_t=0.$ Item {i\rm v}) follows from Item ii) and from the formula
$${\rm Det} (L-\mu \IdentityMatrix)=\exp{({\rm tr}\, ({\rm log}\,(L-\mu \IdentityMatrix)))}.$$
\end{proof}

\begin{example}
\label{Example1.1}\label{manak}\cite{man}
Let ${\bf U}(t)$ be an $m\times m$-matrix,
$$
L=a \lambda+{\bf U}, \qquad A=\frac{\bf U^2}{\lambda},
$$
where $a=\hbox{diag}(a_1,\dots,a_m)$. Then~\eqref{Lax} is
equivalent to the ODE
\begin{equation}\label{euler}
{\bf U}_t=[{\bf U}^2,\,a].
\end{equation}

If
$$
{\bf U}=\left(%
\begin{array}{ccc}
  0 & u_1 & u_2 \\
  -u_1 & 0 & u_3 \\
  -u_2 & -u_3 & 0
\end{array}%
\right), \qquad
a=\left(%
\begin{array}{ccc}
  a_3 & 0 & 0 \\
  0 & a_2 & 0 \\
  0 & 0 & a_1
\end{array}%
\right),
$$
where $a_i$ are arbitrary parameters, 
then~\eqref{euler} is equivalent to the Euler top
$$
(u_1)_t=(a_3-a_2) u_2 u_3,\qquad
(u_2)_t=(a_1-a_3) u_1 u_3,\qquad
(u_3)_t=(a_2-a_1) u_1 u_2.
$$

The characteristic polynomial $\,\hbox{Det}\,(L-\mu \IdentityMatrix)\,$ is given by
$$
(\mu-a_1 \lambda)(\mu-a_2 \lambda)(\mu-a_3 \lambda) + 
(u_1^2+u_2^2+u_3^2) \mu + (a_1 u_1^2+a_2 u_2^2+a_3 u_3^2) \lambda.
$$
The coefficients of the monomials in $\lambda$ and $\mu$ provide two non-trivial first integrals for the Euler top. The corresponding characteristic curve
$$
\,\hbox{Det}\,(L-\mu \IdentityMatrix)=0\,
$$
is elliptic.
The eigenfunction $\Psi(\lambda,\mu,t)$ satisfying 
\be \label{ee1}
L\Psi=\mu \Psi
\ee 
defines a vector bundle over this curve.
The dependence $\Psi$ on $t$ is described by the linear equation
\begin{equation} \label{ee2} \Psi_t=A\,\Psi.
\end{equation}
Using~\eqref{ee1} and~\eqref{ee2}, one can construct $\Psi(t)$ and after that find the corresponding solution ${\bf U}(t)$.
\end{example}

The assumption that $L$ and $A$ in \eqref{Lax} are functions of $t$ and $\lambda$ with values in a finite-dimensional Lie algebra $\cal G$ is a remarkable specification in the case of generic matrices $L$ and $A$, which reduces the number of unknown functions in the corresponding non-linear system of ODEs.  

\begin{remark}\label{rem11} We may assume also that in \eqref{Lax} the $A$-operator belongs to  $\cal G$ while $L$ belongs to a module over $\cal G$ {\rm (see, for example,~ \cite{sokolgok1}}\rm ).
\end{remark} 

\subsubsection{Lax pairs for   evolution PDEs}
\begin{example} \label{Example1.2}
The Lax pair~\eqref{Lax} for the KdV equation
\begin{equation} \label{kdv} 
u_t=u_{xxx}+ 6\,u\,u_x
\end{equation}
found by P.~Lax in~\cite{Lax} is given by  
$$
L=D^2+u+\lambda, \qquad A=4 D^3+6 u D+ 3 u_x, \qquad D=\frac{d}{d x}. 
$$
In contrast with Example~\ref{Example1.1}, here $L$ and $A$ are differential operators.
The relations \eqref{ee1}, \eqref{ee2} allow to construct $\Psi(x,t)$
by the inverse scattering method and, as a result, to find the corresponding solution $u(x,t)$ for the KdV equation.
\end{example}

\begin{example}\label{Example1.3}
The Lax representation for the nonlinear Schr\"odinger (NLS) equation  written as a system of two equations
\begin{equation}\label{NLS}
u_t=-u_{xx}+2 u^2 v, \qquad v_t=v_{xx}-2 v^2 u
\end{equation}
has been found by V.~Zakharov and A.~Shabat~\cite{ZS}.
The Lax $L$-operator is defined by 
$$
L=D 
+ \lambda \begin{pmatrix}
  1 & 0 \\
  0 & -1
\end{pmatrix} + \begin{pmatrix}
  0 & u \\
  v & 0
\end{pmatrix}.
$$
The operator $A$ is a polynomial in $\lambda$ with matrix coefficients
which depend on $u, v, u_x, v_x, u_{xx}, v_{xx}, \ldots$ (see Section~\ref{SectionMatrixLaxPairs}).
\end{example}

In this example the $L$-operator has the form $L=D-B$, where $B$ is a matrix depending on unknown functions and the spectral parameter $\lambda.$ For this special case equation  \eqref{Lax} can be written as
\begin{equation}\label{zercur}
A_x - B_t=[A,\, B].
\end{equation}
Relation  \eqref{zercur} is called a 
{\it zero-curvature representation}.

In contrast with the ODE case (see Remark \ref{rem11}) we may additionally assume only that $A$ and $B$ in  \eqref{zercur} are functions of $x, t, \lambda$ with values in a finite-dimensional Lie algebra $\cal G$. In the NLS case we have ${\cal G}=\mathfrak{sl}_2.$

\section{Hamiltonian structures}

 Let $y_1,\dots,y_m$ be coordinate functions. Any
Poisson bracket between functions $f(y_1,\dots,y_m)$ and
$g(y_1,\dots,y_m)$ is given by
\begin{equation}\label{brcom}
\{f,\,g\}=\sum_{i,j} P_{i,j}(y_1,\dots,y_m) \frac{\partial
f}{\partial y_i} \frac{\partial g}{\partial y_j},
\end{equation}
where $P_{i,j}=\{y_i,y_j\}$. The functions $P_{ij}$ are not
arbitrary since by definition
\be \label{Hcond1}
\{f,g\}=-\{g,f\}, 
\ee
\be \label{Hcond2}
\{\{f,g\},h\}+\{\{g,h\},f\}+\{\{h,f\},g\}=0.
\ee
Formula \eqref{brcom} can be rewritten as 
\begin{equation}\label{brop}
\{f,\,g\}= \langle \grad \,f, \, {\cal H} (\grad \,g)\rangle,
\end{equation}
where ${\cal H}=\{P_{i,j}\}$ and $\langle \cdot,\cdot \rangle$ is the standard scalar product. ${\cal H}$ is called a {\it Hamiltonian operator} or a {\it Poisson tensor}.
\begin{definition} The Poisson bracket is called {\it degenerate} if ${\rm Det} \,{\cal H}=0.$
\end{definition}

The Hamiltonian dynamics is defined by 
\be \label{Hdyn}
\frac{d y_i}{d t}=\{H,\,y_i\}, \qquad i=1,\dots,m,
\ee
where $H$ is a Hamiltonian function. If $\{K,\,H\}=0,$ then $K$ is
an integral of motion for the dynamical system. In this case the dynamical system 
$$
\frac{d y_i}{d \tau}=\{K,\,y_i\}
$$
is an infinitesimal symmetry for~\eqref{Hdyn} \cite{Olv93}. 

If $\, \{J,\,f\}=0\, $ for any $f$, then $J$ is called a {\it Casimir
function} of the Poisson bracket. The Casimir functions exist iff
the bracket is degenerate.

For the symplectic manifold the coordinates are denoted by $q_i$ and $p_i$,
$i=1,\dots N$. The standard constant Poisson bracket is given by
\begin{equation}\label{stand}\emph{\emph{\emph{}}}
\{p_i,p_j\}=\{q_i,q_j\}=0, \qquad \{p_i,q_j\}=\delta_{i,j},
\end{equation}
where $\delta$ is the Kronecker symbol. 
The corresponding dynamical system has the usual Hamiltonian form
$$\frac{d p_i}{d t}=- \frac{\partial H}{\partial q_i},\qquad
\frac{d q_i}{d t}=\frac{\partial H}{\partial p_i}.$$

For linear Poisson brackets we have
$$
P_{ij}=\sum_k b^k_{ij} x_k, \qquad i,j,k =1,\dots, N.
$$
It is well-known that this formula defines a Poisson bracket iff
$b^k_{ij}$ are {\it structure constants of a Lie algebra}.
Very often the title of this Lie algebra is also used for the corresponding linear Poisson bracket. 

For the spinning top-like systems~\cite{bormam, audin} the Hamiltonian structure is defined by 
linear Poisson brackets.  

\begin{example}
For the models of rigid body dynamics~\cite{bormam} the Poisson
bracket is given by
$$
\{M_{i},M_{j}\}=\varepsilon_{ijk}\,M_{k}, \qquad
\{M_{i},\gamma_{j}\}=\varepsilon_{ijk}\,\gamma_{k}, \qquad
\{\gamma_{i},\gamma_{j}\}=0 .
$$
Here $M_{i}$ and $\gamma_{i}$ are components of 3-dimensional
vectors $\bf{M}$ and $\bf{\Gamma}$, $\varepsilon_{ijk}$ is the
totally skew-symmetric tensor.
The corresponding Lie algebra $e(3)$ is the Lie algebra of the
group of motions in $\R^{3}.$
 This bracket  has two Casimir functions
$$
J_{1}=\langle {\bf M}, {\bf \Gamma}\rangle, \qquad  J_{2}= \vert {\bf
\Gamma}\vert ^{2}. 
$$
\end{example}

\bigskip

The class of quadratic Poisson brackets
\begin{equation}\label{qua}
\{x_i,\,x_j\}=\sum_{p,q}r_{i,j}^{p,q} \,x_p x_q, \qquad i,j=1,\dots,N,
\end{equation}
is of a great importance for the modern mathematical physics.

As for evolution PDEs of the form 
\begin{equation}\label{eveq}
u_t=F(u, u_x,  u_{xx}, \dots , u_n), \qquad u_i=\frac{\partial^i
u}{\partial x^i},
\end{equation}
the Poisson brackets are also defined by formula \eqref{brop}. However, we should take the variational derivative  instead of the gradient. Furthermore, the Hamiltonian operator ${\cal H}$ is not a matrix but a differential (or even pseudo-differential) operator.

\bigskip

\section{Bi-Hamiltonian formalism}\label{sec14}

\begin{definition}\cite{magri}
Two Poisson brackets $\{\cdot, \cdot \}_1$ and $\{\cdot, \cdot
\}_2$ are said to be compatible if
$$
\{\cdot,\,\cdot\}_{\lambda}=\{\cdot, \cdot \}_1 +\lambda \{\cdot,
\cdot \}_2
$$
is a Poisson bracket for any $\lambda$.
\end{definition}
General results on the structure of the Hamiltonian pencil $\{\cdot,\,\cdot\}_{\lambda}$ can be found in \cite{magri1, zakhar}. In particular,
if the bracket $\{\cdot,\,\cdot\}_{\lambda}$ is degenerate, then a set of commuting integrals can be constructed as follows. 
\begin{theorem}\label{biH}{\rm \cite{
magri}}
Let
$$
C(\lambda)=C_0+\lambda C_1+\lambda^2 C_2+\cdots
$$
be a Casimir function for the bracket $\{\cdot, \cdot\}_{\lambda}$.
Then the coefficients $C_i$ commute with each other
with respect to both brackets $\{\cdot, \cdot \}_1$ and $\{\cdot, \cdot\}_2$.
\end{theorem}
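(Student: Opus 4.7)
The plan is to expand the Casimir condition $\{C(\lambda),f\}_\lambda=0$ for arbitrary $f$ into a formal power series in $\lambda$, extract a recursion between the two brackets, and then close the argument by antisymmetry.

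First I would write
\[
0=\{C(\lambda),f\}_\lambda=\{C(\lambda),f\}_1+\lambda\{C(\lambda),f\}_2
=\sum_{i\ge 0}\lambda^i\{C_i,f\}_1+\sum_{i\ge 0}\lambda^{i+1}\{C_i,f\}_2,
\]
and collect powers of $\lambda$. The $\lambda^0$ coefficient gives $\{C_0,f\}_1=0$ for all $f$, so $C_0$ is a Casimir of the first bracket. The $\lambda^k$ coefficient for $k\ge 1$ gives the Lenard-type recursion
\[
\{C_k,f\}_1=-\{C_{k-1},f\}_2.
\]

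Next I would substitute $f=C_j$ into this recursion in two ways. Directly, for $i\ge 1$,
\[
\{C_i,C_j\}_1=-\{C_{i-1},C_j\}_2. \tag{$\ast$}
\]
Applying the recursion instead to the pair $(C_j,C_i)$ with $j\ge 1$ and then using antisymmetry of both brackets gives
\[
\{C_i,C_j\}_1=-\{C_i,C_{j-1}\}_2.
\]
Comparing the two expressions yields the shift relation
\[
\{C_{i-1},C_j\}_2=\{C_i,C_{j-1}\}_2\qquad(i\ge 1,\ j\ge 1),
\]
so the value of $\{C_p,C_q\}_2$ depends only on the sum $p+q$. Iterating the shift all the way down to the second slot I get $\{C_p,C_q\}_2=\{C_{p+q},C_0\}_2$, and likewise $\{C_q,C_p\}_2=\{C_{p+q},C_0\}_2$. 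By antisymmetry these must be negatives of each other, forcing
\[
\{C_p,C_q\}_2=0 \quad\text{for all } p,q\ge 0.
\]
Feeding this back into $(\ast)$ and using $\{C_0,C_j\}_1=0$ for the base case, I obtain $\{C_p,C_q\}_1=0$ as well.

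I do not expect a serious obstacle: the whole argument is formal manipulation of the identity $\{C(\lambda),f\}_\lambda=0$, and the only place requiring mild care is the bookkeeping of the shift $(p,q)\mapsto(p+1,q-1)$, which is legal precisely when $q\ge 1$. If $C(\lambda)$ is interpreted as a formal series (rather than truncated), one must also justify termwise extraction of coefficients, but under the standing assumption that $\{\cdot,\cdot\}_\lambda$ is a Poisson bracket for every $\lambda$ this is immediate by specializing $\lambda$ and taking derivatives, or by reading the identity formally in $\C[[\lambda]]$.
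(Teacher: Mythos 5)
Your proof is correct and complete: the expansion of $\{C(\lambda),f\}_\lambda=0$ in powers of $\lambda$, the resulting Lenard recursion $\{C_k,f\}_1=-\{C_{k-1},f\}_2$, the shift relation showing $\{C_p,C_q\}_2$ depends only on $p+q$, and the final antisymmetry argument are exactly the standard Magri--Lenard scheme. The paper itself gives no proof of this theorem (it only cites Magri), but the recursion it records immediately afterwards, $\{C_{k+1},y\}_1=-\{C_k,y\}_2$, is precisely the key identity you extract, so your argument supplies the missing details along the intended route.
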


It follows from Theorem \ref{biH} that 
$$
\{ C_{k+1}, y \}_1=-\{C_{k}, y \}_2
$$
for any function $y$ and any $k$.
Let us take $C_{k+1}$ for a Hamiltonian $H$. Then dynamical system \eqref{Hdyn} can be written in two different ways:
$$
\frac{d y_i}{d t}=\{C_{k+1},y_i\}_1=-\{C_{k},y_i\}_2. 
$$
All functions $C_j$ are integrals of motion for this system.

We see that the same dynamical system can be represented in two different Hamiltonian forms, with different compatible Hamiltonian structures and different Hamiltonians $C_{k+1}$ and $- C_k$. In this case we say that this system possesses a {\it bi-Hamiltonian representation}~\cite{magri}.

The spinning top-like systems usually are bi-Hamiltonian with respect to two compatible linear Poisson
brackets.  The corresponding algebraic object is a pair of compatible Lie algebras (see Section 3.2). 

\subsection{Shift argument method} 
Here is a standard way of constructing compatible Poisson brackets. 

Let ${\bf a}=(a_1,\dots, a_N)$ be a constant
vector. Then any linear Poisson bracket produces a constant
bracket compatible with the initial linear one by the
transformation $x_i \rightarrow x_i+\lambda a_i$ (see \cite{man, fom}). 

Consider now quadratic Poisson brackets \eqref{qua}. 
The shift $x_i\mapsto x_i+\lambda a_i$ leads to a Poisson bracket of the form   
$\{\cdot,\cdot\}_{\lambda}=\{\cdot,\cdot\}+\lambda
\{\cdot,\cdot\}_1+\lambda^2 \{\cdot,\cdot\}_2.$ If the coefficient of  $\lambda^2$ equals zero, then this formula defines a linear Poisson bracket $\{\cdot,\cdot\}_1$ compatible with~\eqref{qua}.

Thus, in the case of quadratic brackets the shift vector
${\bf a}=(a_1,\dots, a_N)$ is not arbitrary. Its components have to satisfy the overdetermined system of algebraic equations 
\begin{equation}\label{admiss}
\sum_{p,q} r_{i,j}^{p,q} \,a_p a_q=0,  \qquad i,j=1,\dots,N.
\end{equation}
Such a vector $\bf a$ is called {\it admissible}. The admissible vectors are nothing but 0-dimensional 
symplectic leafs for the Poisson bracket~\eqref{qua}.

Any $p$-dimensional vector space of admissible vectors generates $p$ pairwise compatible linear Poisson brackets. Each of them 
is compatible with the initial quadratic bracket~\eqref{qua}.

Many interesting integrable models can be obtained~\cite{olsh, odsok11, sok1} by the shift argument method from the {\it elliptic quadratic Poisson brackets}~\cite{feyod}.

\begin{theorem} For the quadratic Poisson bracket $q_{m n^2,kmn-1}( \tau)$ {\rm (}for the definition of these brackets see {\rm \cite{feyod}}{\rm )}, the set of admissible vectors is
a union of $n^2$ components which are $m$-dimensional vector spaces. The space of generators of the
algebra is the direct sum of these spaces.
\end{theorem}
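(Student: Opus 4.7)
My approach would exploit the Heisenberg symmetry that is built into any Feigin--Odesskii algebra. Recall that $q_{N,k}(\tau)$ with $N=mn^{2}$ is generated by $x_{0},\dots,x_{N-1}$ (indices mod $N$), and carries a natural projective action of the finite Heisenberg group $H_{N}$, generated by the shift $\sigma:x_{\alpha}\mapsto x_{\alpha+1}$ and the diagonal $\tau_{N}:x_{\alpha}\mapsto \zeta^{\alpha}x_{\alpha}$ with $\zeta=e^{2\pi i/N}$; the quadratic coefficients $r^{p,q}_{i,j}$ transform covariantly under this action. Thus the variety of admissible vectors defined by \eqref{admiss} is $H_{N}$-invariant, and I would study it through its $H_{N}$-orbit structure.

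First I would rewrite the defining relations of $q_{mn^{2},kmn-1}(\tau)$ using the theta-function formulas from \cite{feyod}, so that each coefficient $r^{p,q}_{i,j}$ appears as a ratio of theta values evaluated at arguments depending on $k=kmn-1$. The arithmetic input is the factorization of the numerical data: since $kmn\equiv 0\pmod n$, the exponent $k-1\equiv -1\pmod{n}$ and $N=mn^{2}$ are compatible in the sense that translation by $n^{2}$ acts with order $m$ on the indices while the dual shift $\tau_{N}^{\,m}$ has order $n^{2}$. This produces a distinguished subgroup $K\subset H_{N}$ of order $n^{2}$ whose character lattice labels $n^{2}$ subspaces of $\mathbb{C}^{N}$.

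The candidate decomposition is then $\mathbb{C}^{N}=\bigoplus_{j=0}^{n^{2}-1} V_{j}$, where $V_{j}$ is spanned by the arithmetic progression $\{x_{j},x_{j+n^{2}},x_{j+2n^{2}},\dots,x_{j+(m-1)n^{2}}\}$ (possibly twisted by a $K$-character). I would check that each $V_{j}$ is admissible by plugging such vectors into \eqref{admiss} and using the elliptic quasi-periodicity of the theta coefficients: the relevant residues forming $r^{p,q}_{i,j}$ collapse when the indices all lie in a single coset of $n^{2}\mathbb{Z}/N\mathbb{Z}$, because the arguments of the theta functions then land on a common lattice point. Counting dimensions gives $n^{2}\cdot m=N$, so these subspaces already exhaust $\mathbb{C}^{N}$ as a direct sum.

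The last and hardest step is to prove the converse: no admissible vector lies outside $\bigcup V_{j}$. Here I would argue that any admissible $\mathbf{a}$ defines a $0$-dimensional symplectic leaf, which under the Feigin--Odesskii moduli interpretation corresponds to a stable vector bundle on the elliptic curve of rank and degree dictated by $(mn^{2},kmn-1)$; the classification of such bundles with trivial automorphism group gives exactly $n^{2}$ components of dimension $m$. The main obstacle is making this moduli identification watertight for the non-generic parameters considered; if one prefers a direct approach, one must compute the rank of the Jacobian of \eqref{admiss} along each $V_{j}$ and show it equals $N-m$, forcing $V_{j}$ to be an irreducible component. I expect the theta-identity bookkeeping in this rank computation to be the most delicate part of the argument.
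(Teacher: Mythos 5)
The paper states this theorem without proof --- it appears in Subsection 1.4.1 as a bare statement with a pointer to \cite{feyod} for the definition of the brackets and to \cite{olsh, odsok11, sok1} for applications --- so there is no in-text argument to compare yours against. Judged on its own terms, your proposal is a reasonable research plan rather than a proof, and the two load-bearing steps are both left open. Your structural framing is sound: the Heisenberg symmetry of $q_{N,k}(\tau)$ does constrain the admissible variety, the numerology $n^{2}\cdot m=N$ is the right count, and your candidate decomposition into cosets of $n^{2}\Z/N\Z$ is at least consistent with the one case that can be checked by hand (the Sklyanin algebra $q_{4,1}$, i.e.\ $n=2$, $m=1$, where the four $0$-dimensional leaves are indeed the four coordinate axes). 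But consistency with one degenerate example is not verification.

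The concrete gaps are these. First, you never establish that the subspaces $V_{j}$ are admissible: the claim that ``the relevant residues collapse when the indices lie in a single coset'' is exactly the theta-function identity one must prove, and you flag it as something you ``would check.'' Moreover your own hedge ``possibly twisted by a $K$-character'' concedes that you do not actually know which $m$-dimensional subspaces occur --- if a nontrivial twist is needed, the components are not coordinate subspaces in the standard basis and the whole residue-collapse heuristic has to be redone. Second, the converse (no admissible vectors outside $\bigcup V_{j}$) is deferred to either a moduli identification of $0$-dimensional leaves with stable bundles, which you acknowledge is not watertight for these parameters, or to a Jacobian rank computation that is not performed. Since equation \eqref{admiss} is an overdetermined system of $\binom{N}{2}$ quadrics in $N$ unknowns, showing that its zero locus has no components beyond the ones you exhibit is genuinely the hard part of the theorem, and the proposal contains no argument for it --- only two candidate strategies with their difficulties named. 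As it stands, the proposal would need both halves filled in before it could be called a proof.
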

\begin{op}
Find integrable systems generated by the shift argument method from the elliptic Poisson brackets  $q_{m n^2,\,k m n-1}(\tau)$ . 
\end{op}

\subsection{Bi-Hamiltonian form for KdV equation}

Most of known integrable equations  \eqref{eveq} can be written in a
Hamiltonian form
$$
u_t={\cal H}\left(\frac{\delta \rho}{\delta u}  \right),
$$
where  ${\cal H}$ is a Hamiltonian
operator. The corresponding Poisson bracket is given by 
\begin{equation}\label{evpuas}
\{f, \, g\}=\frac{\delta f}{\delta u}\,{\cal H}\Big(\frac{\delta g}{\delta u} \Big),
\end{equation}
where 
\[
\frac{\delta}{\delta u}=\sum_k (-1)^k D^k \circ \frac{\partial}
{\partial u_k}
\]
is the Euler operator or the variational derivative.
\begin{definition}\label{def28} Two functions $\rho_1, \rho_2$ are called
equivalent $\rho_1\sim \rho_2$ if $\rho_1-\rho_2\in {\rm Im}\,D$.
\end{definition}
\begin{remark} For functions $u(x)$ which are rapidly decreasing at $x \to \pm \infty$,  two equivalent polynomial conserved densities $\rho_1$ and $\rho_2$  with zero constant terms define the same functional
$$
  \int_{-\infty}^{+\infty} \rho(u, u_x, \dots)\, dx.
$$
\end{remark}

\begin{proposition}\label{tgms} If  $a\in  {\rm Im}\, D$, then  
  \[\frac{\delta a}{\delta u}=0\, \]
and therefore the variational derivative is well defined on the equivalent classes. 
\end{proposition}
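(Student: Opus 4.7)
My plan is to reduce the statement to a telescoping sum via the commutator between $\partial/\partial u_k$ and $D$. Write $a = D(b)$ for some $b \in \mathcal{F}$ and compute $\delta a/\delta u$ directly from the definition.

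The crucial preliminary identity is
\[
\Big[\frac{\partial}{\partial u_k},\, D\Big] \;=\; \frac{\partial}{\partial u_{k-1}} \quad (k\ge 1), \qquad \Big[\frac{\partial}{\partial u_0},\, D\Big] \;=\; 0,
\]
which follows immediately from the explicit form of $D$ in (\ref{DD}), since the only term in $D$ not commuting with $\partial/\partial u_k$ is $u_{k+1}\,\partial/\partial u_k$, which contributes $\partial/\partial u_{k-1}$ to the commutator when $k\ge 1$. I would establish this first as a one-line lemma.

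Applying this identity, I get $\partial_{u_k}(Db) = D\,\partial_{u_k}b + \partial_{u_{k-1}}b$ for $k\ge 1$, and $\partial_{u_0}(Db) = D\,\partial_{u_0}b$. Plugging into the definition of the Euler operator,
\[
\frac{\delta (Db)}{\delta u} \;=\; D\,\frac{\partial b}{\partial u_0} \;+\; \sum_{k\ge 1}(-1)^k D^k\!\left(D\,\frac{\partial b}{\partial u_k} + \frac{\partial b}{\partial u_{k-1}}\right).
\]
Splitting the sum into two pieces and reindexing $j=k-1$ in the second piece yields
\[
\frac{\delta (Db)}{\delta u} \;=\; D\,\frac{\partial b}{\partial u_0} \;+\; \sum_{k\ge 1}(-1)^k D^{k+1}\,\frac{\partial b}{\partial u_k} \;-\; \sum_{j\ge 0}(-1)^j D^{j+1}\,\frac{\partial b}{\partial u_j}.
\]
The $j=0$ contribution of the last sum cancels the first term, and the remaining two sums coincide termwise, giving $0$.

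There is no real obstacle here; the only mild point is bookkeeping the index shift so that the telescoping is visible. If one wishes to avoid assuming that $b$ depends on only finitely many $u_i$, one observes that for any fixed $a \in \mathcal{F}$ all sums are automatically finite since $a$ involves only finitely many variables $u_i$, so convergence is not an issue. The corollary that $\delta/\delta u$ descends to the quotient $\mathcal{F}/\mathrm{Im}\,D$ is then immediate from linearity.
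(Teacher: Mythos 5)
The paper states this proposition without proof, so there is no argument of its own to compare against; your telescoping computation is correct and complete, and it is the standard way to show that the Euler operator annihilates ${\rm Im}\,D$ (and hence descends to $\mathcal{F}/{\rm Im}\,D$ by linearity). One small slip in your justification of the commutator lemma: the summand of $D$ that fails to commute with $\partial/\partial u_k$ is $u_{k}\,\partial/\partial u_{k-1}$ (the term with $i=k-1$ in \eqref{DD}), not $u_{k+1}\,\partial/\partial u_k$; the resulting identity $\bigl[\partial/\partial u_k,\,D\bigr]=\partial/\partial u_{k-1}$ is nevertheless exactly as you state it, and since the remainder of your argument uses only that identity together with the finiteness of the sums for elements of $\mathcal{F}$, the proof stands.
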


By definition the Poisson bracket \eqref{evpuas} is defined on the vector space of equivalence classes  and satisfies \eqref{Hcond1}, \eqref{Hcond2}. The finite-dimensional bracket \eqref{brcom} satisfies also the Leibniz rule
$$
\{f,\,g\, h\} = \{f,\,g\}\, h + g\, \{f,\,h\}.
$$
For brackets  \eqref{evpuas} the Leibniz rule has no sense since the product of equivalence classes is not defined.
 
We don't discuss here the bi-Hamiltonian formalism for evolution equations of the form \eqref{eveq} 
in general.  Notice only that KdV equation~\eqref{kdv} 
is a bi-Hamiltonian system~\cite{magri}. Two compatible Poisson brackets are given by the formula \eqref{evpuas}, where the Hamiltonian operators
${\cal H}_i$ are differential ones:
$$
{\cal H}_1 = D, \qquad {\cal H}_2 = D^3+4 u D+ 2 u_x. 
$$
Notice that ${\cal H}_1$ can be obtained from ${\cal H}_2$ by the argument shift $u\to u+\lambda.$

The KdV equation can be written in the bi-Hamiltonian form:
$$
u_t = {\cal H}_1 \frac{\delta \rho_1}{\delta u} = {\cal H}_2 \frac{\delta \rho_2}{\delta u},
$$
where
$$
\rho_1= -\frac{u_x^2}{2} + u^3, \qquad \rho_2=
\frac{u^2}{2}.
$$

\chapter{Factorization of Lie algebras and Lax pairs}

In this chapter we discuss different types of Lax representations for integrable PDEs and some constructions that allow one to find higher symmetries and conservation laws using Lax pairs. For Hamiltonian structures related to Lax operators see, for example, the books \cite{FadTah, reysem} and the original papers \cite{drsok, krich}.  
 
\section{Scalar Lax pairs for evolution equations} 

In this section the Lax $L$-operators are linear differential operators or ratios of linear differential operators. The corresponding $A$-operators are constructed by the use of formal non-commutative ``pseudo-differential'' series. 

For our purporses the language of differential algebra \cite{kapl} is the most adequate one.

Consider evolution equations of the form~\eqref{eveq}.  
Suppose that the right hand side of~\eqref{eveq} as well as other functions in $u,u_x,u_{xx}, \dots$ belong to a differential field 
$ {\cal F}$. For main considerations one can assume that elements of $ {\cal F}$ are rational functions of finite number of independent variables $$u_i=\frac{\partial^i u}{\partial x^i}.$$  In order to integrate a
function with
respect to one of its arguments or to take one of its roots, we sometimes have to
extend the
basic field $\cal F$.  
  
As usual in differential algebra, we have a principle derivation \eqref{DD},
 which generates all independent variables $u_i$ starting from $u_0=u$. This derivation is a formalization of the total $x$-derivative, 
which acts on functions of the form $\ds g\Big(x, u(x), \frac{\partial u}{\partial x}, \dots\Big)$. 

The variable $t$ in the local algebraic theory of evolution equations plays the role of a parameter. 

A higher (or generalized) infinitesimal symmetry (or a commuting flow) for \eqref{eveq} is an evolution equation
\begin{equation}\label{evsym}
u_{\tau}=G(u, u_x,  u_{xx}, \dots , u_m), \qquad m > 1
\end{equation}
which is compatible with~\eqref{eveq}. Compatibility means that $$
\frac{\partial}{\partial t}\frac{\partial u}{\partial \tau}=\frac{\partial}{\partial \tau}\frac{\partial u}{\partial t},
$$
where the partial derivatives are calculated in virtue of~\eqref{eveq} and~\eqref{evsym}. In other words, for any initial value $u_0(x)$ 
there exists a common solution $u(x,t,\tau)$ of equations~\eqref{eveq} and~\eqref{evsym} such that $u(x,0,0)=u_0(x)$.

\begin{example} \label{Example1.6}
The simplest higher symmetry for the Korteweg--de Vries equation \eqref{kdv}
has the following form
\begin{equation}\label{kdvsym}
u_{\tau}=u_{5}+10 u u_{3}+20 u_1 u_2+30 u^2 u_1.
\end{equation}
\end{example}

The infinite-dimensional vector field   
\begin{equation}
\label{Dt} D_F=F \frac{\partial}{\partial u_{0}}+
D(F)\frac{\partial} {\partial u_{1}}+ D^{2}(F)\frac{\partial}{\partial
u_{2}}+ \cdots\,  
\end{equation}
is associated with evolution equation  \eqref{eveq}. 
This vector field commutes with $D$. We shall call vector fields of the form  \eqref{Dt} {\it evolutionary}.
The function $F$ is called {\it generator} of that 
evolutionary vector field. Sometimes we will call  \eqref{Dt} {total} $t$-{derivative}
with respect to  \eqref{eveq} and denote it by $D_t$. The set of all evolutionary vector fields form a Lie algebra over $\C$: $[D_G,\, D_H]=D_K,$ where
\begin{equation}\label{comev}
K=D_G(H)-D_H(G)=H_*(G)-G_*(H).
\end{equation}
Here and in the sequel we use the following  notation:
\begin{definition}\label{dfrechet} For any element $a\in {\cal F}$ the Fr\'echet
derivative $a_{*}$ is a linear differential operator defined by
$$
a_* \stackrel{def}{=} \sum_k\frac{\partial a}{\partial u_k}D^k \, . 
$$
\end{definition}

We defined a generalized symmetry of equation \eqref{eveq} as an evolution equation~\eqref{evsym} that is compatible with~\eqref{eveq}. By definition, the compatibility means that $[D_F,\, D_G]=0.$ It can also be written in the form 
$$
G_*(F)-F_*(G)=D_t(G)-F_*(G)=0.
$$

Formula  \eqref{comev} defines a Lie bracket on our differential field ${\cal F}$. The integrable hierarchy is nothing but an infinite-dimensional commutative subalgebra of this Lie algebra.

A local conservation law for equation~\eqref{eveq} is a pair of functions
$\rho(u,u_x,...)$ and $\sigma(u,u_x,...)$ such that 
\be \label{conlaw}
D_t\Big(\rho(u, u_x,\dots, u_p) \Big)=D\Big(\sigma(u, u_x,\dots, u_q) \Big)
\ee
for any solution $u(x,t)$ of equation~\eqref{eveq}.
The functions $\rho$ and $\sigma$ are called {\it density} and {\it flux} of the conservation law~\eqref{conlaw}. It is easy to see that $q=p+n-1$, where $n$ is the order of equation \eqref{eveq}.

\begin{example} \label{Example1.7}
Functions
$$\rho_1=u,\qquad \rho_2=u^2,\qquad \rho_3=-u_x^2+2u^3 $$
are conserved densities of the Korteweg--de Vries equation~\eqref{kdv}.
Indeed, 
\begin{gather*}
D_t\Big(u\Big)=D\Big(u_2+3u^2\Big),\\[2mm]
D_t\Big(u^2\Big)=D\Big(2 u u_{xx}-u_x^2+4u^3\Big),\\[2mm]
D_t\Big(-u_x^2+2 u^3\Big)=D\Big(9u^4+6u^2 u_{xx}+u_{xx}^2-12u u_x^2-2u_x u_3\Big).
\end{gather*}
\end{example}

For solitonic type solutions $u(x,t)$ of \eqref{eveq}, which are decreasing at $x \to \pm \infty$, it follows from \eqref{conlaw} that  
$$
\frac{\partial}{\partial t} \int_{-\infty}^{+\infty} \rho\, dx = 0.
$$
This justifies the name {\it conserved density} for the function $\rho$.
Analogously, if $u$ is a function periodic in space  with period $L$, then the value of the functional 
$I(u)=\int_0^L \rho\, dx$ does not depend on time and therefore it is a constant
of motion.

Suppose that functions $\rho$ and $\sigma$ satisfy  \eqref{conlaw}. Then for any function $s(u,u_x,
\dots)$ the functions $\bar \rho=\rho+D(s)$ and 
 $\bar \sigma=\sigma+D_t(s)$ satisfy  \eqref{conlaw} as well. We call the conserved densities $\rho$ and $\bar \rho$ {\it equivalent}. It is clear that  equivalent densities define the same functional.

\subsection{Pseudo-differential series}

Consider a skew field of (non-commutative) formal series of the form
\begin{equation}\label{serA}
 S=s_{m}D^m+s_{m-1}D^{m-1}+\cdots + s_0+s_{-1}D^{-1}+
 s_{-2}D^{-2}+\cdots\,
\qquad s_k\in {\cal F}\, .
\end{equation}
The number $m\in \Z$ is called the {\it order} of $S$ and is denoted by ${\rm ord}\, S$.  If $s_i=0$ for $i<0$ that $S$ is called a {\it differential operator}.
\medskip

The product of two formal series is defined by the formula 
$$
 D^k\circ s D^m =s D^{m+k}+C_k^1 D(s)D^{k+m-1} + 
 C_k^2 D^2 (s)D^{k+m-2}+
\cdots \, ,
$$
where $k,m\in \mathbb Z$ and $C^j_n$ is the binomial coefficient
\[
C^j_n=\frac{n(n-1)(n-2)\cdots(n-j+1)}{j!},\qquad n\in \Z.
\]

\begin{remark} \label{remcom} For any series $S$ and $T$ we have ${\rm ord} (S\circ T -T\circ S)\le {\rm ord}\,S + {\rm ord}\,T -1.$
\end{remark}

The formally conjugated formal series $S^+$ is defined as
\[
 S^+=(-1)^m D^m\circ\, s_{m}+(-1)^{m-1}D^{m-1}\circ\, s_{m-1}+\cdots +
 s_0-D^{-1}\circ\,s_{-1}+D^{-2}\circ\, s_{-2}+\cdots\,.
\]

\begin{example}
Let
$$
R=u D^2+u_1 D,\qquad S=-u_1 D^3,\qquad T=u D^{-1};
$$
then
\begin{gather*}
R^+=D^2\circ u-D\circ u_1=R,\\[2mm]
S^+=D^3\circ u_1=u_1D^3+3u_2D^2+3u_3D+u_4,\\[2mm]
T^+=-D^{-1}u=-u D^{-1}+u_1D^{-2}-u_2D^{-3}+\cdots \,.
\end{gather*}
\end{example}

For any series \eqref{serA}
one can uniquely find the inverse series
$$
T=t_{-m}D^{-m}+t_{-m-1}D^{-m-1}+\cdots\, ,\qquad t_k\in  {\cal F}
$$
such that $S\circ T=T\circ S=1$.
Indeed, multiplying $S$ and $T$ and equating the result to 1, we find that $s_m t_{-m}=1$, i.~e., $t_{-m}=1/s_m$. Comparing the coefficients of $D^{-1},$ we get
$$
m s_m \, D(t_{-m})+s_m \,t_{-m-1}+s_{m-1}\,t_{-m}=0
$$
and therefore
$$
t_{-m-1}=-\frac{s_{m-1}}{s_m^2}-m D\Big(\frac{1}{s_m}\Big) \, ,\quad \mbox{etc.}
$$

Furthermore, we can find the $m$-th root of the series $S$, i.~e., a series
$$
R=r_1 D+r_0+r_{-1}D^{-1}+r_{-2}D^{-2}+\cdots
$$
such that $R^m=S$.
This root is unique up to any number factor $\varepsilon$ such that  $\varepsilon^m=1$.

\begin{example}\label{ex10}
Let $S=D^2+u$. Assuming
$$
R=r_1 D+r_0+r_{-1}D^{-1}+r_{-2}D^{-2}+\cdots ,
$$
we compute
$$
R^2=R\circ R= r_1^2 D^2+(r_1 D(r_1)+r_1 r_0 +r_0 r_1)D+r_1D(r_0)+r_0^2+r_1r_{-1}+r_{-1}r_1+\cdots\, ,
$$
and compare the result with $S$. From the coefficients of $D^2$ we find $r_1^2=1$ or
$r_1=\pm 1$. Let  $r_1=1$. Comparing coefficients of $D,$ we get $2r_0=0$, i.~e., $r_0=0$.
From $D^{0}$ we obtain $2r_{-1}=u$, terms of $D^{-1}$
$r_{-2}=-u_1/4$, etc., i.~e.
$$
R=S^{1/2}=D+\frac{u}{2}D^{-1}-\frac{u_1}{4}D^{-2}+\cdots\, .
$$
\end{example}

\begin{definition}
The {\it residue} of a formal series  \eqref{serA} by definition is
the coefficient of $D^{-1}$:
$$
 \hbox{res}\, (S) \stackrel{def}{=} s_{-1}\, .
$$
The {\it logarithmic residue} of $S$ is defined as
$$
 \hbox{res} \log S \stackrel{def}{=} \frac{s_{m-1}}{s_m}\, .
$$
\end{definition}

We will use the following important  
\begin{theorem} \label{adler} {\rm \cite{adler}}
For any two formal series $S$, $T$ the
residue of the commutator belongs to ${\rm Im}\,D$:
$$
{\rm res} [S,\,T]=D(\sigma (S,\,T)),
$$
where
$$
\sigma (S,\,T)=\sum_{i\le {\rm ord}(T),\ j\le
{\rm ord}(S)}^{i+j+1>0}C^{i+j+1}_{j}\, \times  
\sum_{k=0}^{i+j}(-1)^k D^k(s_j)D^{i+j-k}(t_j)\, .
$$
\end{theorem}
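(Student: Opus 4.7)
The plan is to prove this by bilinearity and reduction to monomial pseudo-differential series, then apply an integration-by-parts identity that is essentially the same identity one uses to define the formal adjoint.

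First I would exploit the $\C$-bilinearity of $[\,\cdot\,,\,\cdot\,]$ in the pseudo-differential skew field and write $S = \sum_j s_j D^j$, $T = \sum_i t_i D^i$, so that it suffices to compute $\mathrm{res}\,[s_j D^j,\, t_i D^i]$ and then sum with the appropriate multipliers. Using the multiplication rule stated in the excerpt,
\[
s_j D^j \circ t_i D^i = s_j \sum_{k \geq 0} C^k_j\, D^k(t_i)\, D^{i+j-k},
\]
and extracting the coefficient of $D^{-1}$ (which forces $k = i+j+1$), I get
\[
\mathrm{res}\,[s_j D^j,\, t_i D^i] = C^{i+j+1}_j\, s_j D^{i+j+1}(t_i) - C^{i+j+1}_i\, t_i D^{i+j+1}(s_j),
\]
and the sum is restricted to pairs $(i,j)$ with $i+j+1>0$ since otherwise both binomial coefficients vanish (recalling the convention $C^n_k = n(n-1)\cdots(n-k+1)/k!$ for integer $n$ and $k\ge 0$, which is zero when $n\ge 0$ and $n<k$).

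Second, I would establish the sign identity $C^{i+j+1}_i = (-1)^{i+j+1} C^{i+j+1}_j$ (valid for arbitrary integers $i,j$ with $i+j+1\ge 0$) by writing both binomials as ratios of explicit products of $i+j+1$ consecutive integers and noting that one product is the reversed negative of the other. Third, I would invoke the integration-by-parts identity
\[
f\,D^s(g) - (-1)^s D^s(f)\,g = D\!\left( \sum_{k=0}^{s-1} (-1)^k D^k(f)\, D^{s-1-k}(g) \right),
\]
which follows from a one-line telescoping computation after expanding the right-hand side via the Leibniz rule. Combining this with the sign identity for $s = i+j+1$ yields
\[
\mathrm{res}\,[s_j D^j,\, t_i D^i] = C^{i+j+1}_j\, D\!\left( \sum_{k=0}^{i+j} (-1)^k D^k(s_j)\, D^{i+j-k}(t_i) \right),
\]
which is exactly the term-wise contribution to $D(\sigma(S,T))$ as stated.

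The calculations themselves are routine; the main obstacle is purely bookkeeping: one must handle the binomial coefficients for negative integer upper index consistently (so that the restriction $i+j+1>0$ falls out automatically), check that the sign coming from $C^{i+j+1}_i = (-1)^{i+j+1} C^{i+j+1}_j$ is precisely what is needed to convert the difference $s_j D^s(t_i) - (-1)^s D^s(s_j)\, t_i$ into a total derivative via the identity above, and finally sum over $i,j$ to recover the global formula for $\sigma(S,T)$ in the statement.
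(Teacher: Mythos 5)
Your argument is correct and essentially complete; note that the paper itself offers no proof of this statement (it is quoted from Adler's paper \cite{adler}), so your write-up is a genuine addition rather than a variant of an argument in the text. The three ingredients all check out: the extraction of the residue of $[s_jD^j,\,t_iD^i]$ from the multiplication rule (the coefficient of $D^{-1}$ forces $k=i+j+1$, giving $C^{i+j+1}_j s_jD^{i+j+1}(t_i)-C^{i+j+1}_i t_iD^{i+j+1}(s_j)$), the sign identity $C^{s}_i=(-1)^{s}C^{s}_j$ for $s=i+j+1$, and the telescoping identity $f D^s(g)-(-1)^sD^s(f)g=D\bigl(\sum_{k=0}^{s-1}(-1)^kD^k(f)D^{s-1-k}(g)\bigr)$, which together give exactly the term-wise formula for $\sigma$. (Your final formula also silently corrects a typo in the statement, where $D^{i+j-k}(t_j)$ should read $D^{i+j-k}(t_i)$.) One parenthetical needs repair: your reason for discarding the pairs with $i+j+1\le 0$ is not that the binomial coefficients vanish. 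For $i+j+1<0$ the term is simply absent from the expansion because the summation index $k$ in the product rule runs over $k\ge 0$; for $i+j+1=0$ one has $C^0_j=C^0_i=1$, and the two residue contributions are $s_jt_i-t_is_j$, which cancel because the coefficients lie in a commutative differential field. Neither case affects the result, but the stated justification should be replaced by this one. You should also record the (routine) observation that for each fixed pair the constraint $i+j+1\ge 0$ together with $i\le{\rm ord}(T)$, $j\le{\rm ord}(S)$ leaves only finitely many contributing pairs, so the bilinear reduction to monomials is legitimate for genuinely infinite series.
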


\medskip 

\subsection{Korteweg--de Vries hierarchy} 

For the KdV equation~\eqref{kdv} the Lax pair is defined by
\begin{equation}\label{LAkdv}
L=D^2+u, \qquad A=4\,\Big(D^3+\frac{3}{2} u D+ \frac{3}{4} u_x\Big).
\end{equation}
Using these $L$ and $A$ operators, we are going to demonstrate how a scalar differential Lax pair generates 
higher symmetries, conservation laws and explicit solutions of the solitonic type.

One can easily verify that the commutator $[A, L]$ is equal to the right hand side of~\eqref{kdv}. Since $L_t=u_t$ the relation  \eqref{Lax} is equivalent to~\eqref{kdv}.
In particular, the commutator $[A, L]$ does not contain any powers of $D$, i.~e., it is a differential operator of zero order. 

\begin{problem}\label{problem1}
How to describe all differential operators
$$P_n=D^n+\sum_{i=0}^{n-1} p_i D^i$$
such that $[P_n,\,L]$ is a differential operator of zero order?
\end{problem}
It is clear that for such an operator $P_n$ the relation $L_t=[P_n, L]$ is equivalent to an evolution equation of the form  \eqref{eveq}.
  
\medskip

\begin{definition} For any series
$$
P=\sum_{i=-\infty}^{k} p_i D^i
$$
we denote
$$
P_{+}=\sum_{i=0}^{k} p_i D^i, \qquad  P_{-}=\sum_{i=-\infty}^{-1}
p_i D^i.
$$
\end{definition}
\begin{remark} We consider a vector space decomposition of the associative algebra of all
pseudo-differential series into a direct sum of the subalgebra of
differential operators and the subalgebra of series of negative orders.
  The subscripts $+$ and $-$ symbolize the projections onto these subalgebras.  
\end{remark}

\begin{lemma}\label{lem1} Let $P$ be a formal series such that $[L,P]=0;$
  then
\begin{equation}\label{AA}
[P_{+},\,L]=f, \qquad  f\in  {\cal F}.
\end{equation}
\end{lemma}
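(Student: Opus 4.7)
The plan is to split $P = P_+ + P_-$ and exploit the commutation $[L,P]=0$ together with the general order bound for commutators of pseudo-differential series (the remark immediately preceding the theorem of Adler).

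First I would rewrite the hypothesis as
\[
0=[L,P]=[L,P_+]+[L,P_-],
\]
so that $[P_+,L]=[L,P_-]$. Now I would analyze the two sides separately. On the left, $P_+$ and $L$ are both differential operators, hence so is their commutator; in particular $[P_+,L]$ contains no terms with negative powers of $D$. On the right, since $\operatorname{ord} L = 2$ and $\operatorname{ord} P_- \le -1$, the remark on orders of commutators of formal series gives
\[
\operatorname{ord}[L,P_-]\le \operatorname{ord} L+\operatorname{ord} P_- -1 \le 2+(-1)-1=0,
\]
so the series $[L,P_-]$ contains no terms of positive order in $D$.

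Putting these two observations together, $[P_+,L]$ is simultaneously a differential operator and of order at most $0$. The only possibility is that $[P_+,L]$ is a single coefficient of $D^0$, i.e.\ an element $f\in\mathcal{F}$, which is exactly \eqref{AA}.

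There is no real obstacle: the content of the lemma is precisely that the ``leak'' of $[P_+,L]$ into nonzero orders is exactly killed by the order estimate for commutators with a series of negative order. The only point to be mildly careful about is checking the order bound, which is immediate from the multiplication formula $D^k\circ sD^m = sD^{m+k}+kD(s)D^{m+k-1}+\cdots$, where the leading coefficients on both sides of any commutator cancel, dropping the order by one.
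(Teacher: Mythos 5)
Your proof is correct and follows exactly the paper's own argument: decompose $P=P_++P_-$, observe that $[P_+,L]=[L,P_-]$ is simultaneously a differential operator and, by the order estimate ${\rm ord}[S,T]\le{\rm ord}\,S+{\rm ord}\,T-1$, a series of order at most $0$, hence an element of $\cal F$. Nothing further is needed.
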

\begin{proof} Since $[L,\,P_{+}+P_{-}]=0,$ we have
$$
[P_{+},\,L]=-[P_{-},\,L].
$$
The left hand side of this identity is a differential operator while according to Remark \ref{remcom} the order of the right hand side is not positive. \end{proof}

\begin{lemma}\label{lem2} The following relation holds:
$$
[L, \, L^{\frac{1}{2}}]=0. 
$$
\end{lemma}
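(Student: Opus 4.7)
The plan is to exploit the fact that, by construction, $L^{1/2}$ is a formal pseudo-differential series $R$ whose square is $L$, and to observe that associativity of the product in the skew field of formal series forces any element to commute with its own powers.

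First I would recall from Example \ref{ex10} that $R := L^{1/2}$ is a well-defined formal series satisfying $R \circ R = L$. (The construction proceeds by comparing coefficients of $D^i$ and solving recursively; existence and uniqueness up to the sign $\varepsilon = \pm 1$ were already established.) Then I would compute, using associativity of $\circ$:
\begin{equation*}
L \circ L^{1/2} \;=\; (R \circ R) \circ R \;=\; R \circ (R \circ R) \;=\; L^{1/2} \circ L,
\end{equation*}
whence $[L, L^{1/2}] = 0$.

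That is essentially the whole argument. The only point that deserves mention is that the formal product on pseudo-differential series, defined by the expansion $D^k \circ s D^m = \sum_j C_k^j D^j(s) D^{k+m-j}$, is associative; this is a standard fact about the Leibniz-type composition law and can be verified by direct expansion on monomials $sD^k$. Once associativity is invoked, the conclusion is immediate and requires no further computation.

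The ``hard part'' is in fact no obstacle at all: the lemma is conceptually a tautology about roots in an associative algebra, and the real content lies in Example \ref{ex10}, where the existence of the square root was established. Hence I would keep the proof to a single line, stressing that it is the existence of $L^{1/2}$ as an element of the skew field of pseudo-differential series, together with associativity, that does all the work.
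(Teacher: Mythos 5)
Your proof is correct, but it takes a different route from the paper. You argue directly from associativity: writing $R=L^{1/2}$ with $R\circ R=L$, you get $L\circ R=(R\circ R)\circ R=R\circ(R\circ R)=R\circ L$, so the commutator vanishes identically. The paper instead supposes $[L,\,L^{1/2}]=\sigma D^{p}+\cdots$ with leading coefficient $\sigma$, expands
$$
0=[L,\,L]=[L,\,L^{1/2}]\,L^{1/2}+L^{1/2}\,[L,\,L^{1/2}]=2\sigma D^{p+1}+\cdots,
$$
and concludes $\sigma=0$ by comparing leading symbols. Both arguments ultimately rest on associativity of the composition of pseudo-differential series (the paper's Leibniz expansion of $[L,R^{2}]$ is itself a consequence of it), and your version makes transparent that the lemma is a tautology about an element commuting with its own powers, given the existence of the square root from Example \ref{ex10}. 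What the paper's leading-symbol argument buys is that it is the prototype of the technique reused immediately afterwards (e.g.\ in Lemma \ref{lem3} and in the construction of the recursion operator), where one genuinely only has control of the top-order coefficients; for the present statement, however, your one-line argument is perfectly adequate and arguably cleaner.
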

\begin{proof}  
Let $$[L, \, L^{\frac{1}{2}}]=\sigma D^p+\cdots$$
Then (see Example \ref{ex10}) we have
$$
0=[L, \, L]=[L, \, L^{\frac{1}{2}}] L^{\frac{1}{2}}+L^{\frac{1}{2}}[L, \, L^{\frac{1}{2}}]=2 \sigma D^{p+1}\cdots ,
$$
and therefore $s=0.$
\end{proof} 

\begin{corollary} It follows from Lemmas \ref{lem1} and \ref{lem2} that for any $n\in \Z_{+}$ the differential operator $\ds P=L^{\frac{n}{2}}_{+}$
satisfies the relation 
\begin{equation}\label{AAA}
[P,\,L]=f_P, \qquad  f_P\in  {\cal F}
\end{equation}
for some $f_P$. If $n$ is even, then $f_P=0.$
\end{corollary}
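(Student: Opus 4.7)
The plan is to reduce the corollary to the two lemmas by noting that $L^{n/2}$ is (up to uniqueness) the $n$-th power of $L^{1/2}$, and powers of a series that commutes with $L$ still commute with $L$.

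First I would verify that $[L,\,L^{n/2}]=0$ for every $n\in\Z_{+}$. By Lemma \ref{lem2}, $S:=L^{1/2}$ satisfies $[L,S]=0$. Writing $L^{n/2}=S^{n}$ (which is unambiguous up to the sign choice made in constructing the root, and irrelevant for the commutator), one gets inductively
\[
[L,S^{n}]=[L,S]\,S^{n-1}+S\,[L,S^{n-1}]=0,
\]
since each term vanishes by the induction hypothesis. Hence $[L,L^{n/2}]=0$ for all positive integers $n$.

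Next I would apply Lemma \ref{lem1} with $P:=L^{n/2}$. That lemma says precisely that whenever $[L,P]=0$, the truncation $P_{+}$ satisfies $[P_{+},L]=f$ for some $f\in\mathcal{F}$. Setting $P_{+}=L^{n/2}_{+}$ yields the first assertion of the corollary, with $f_{P}=[L^{n/2}_{+},L]\in\mathcal{F}$.

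Finally, for $n$ even, write $n=2m$. Then $L^{n/2}=L^{m}$ is already a genuine differential operator (no negative powers of $D$ appear), so $L^{n/2}_{+}=L^{m}$ and $L^{n/2}_{-}=0$. Therefore
\[
f_{P}=[L^{m},L]=0.
\]
This finishes the proof. The only non-routine point is the first step — confirming that the $n$-th power of $L^{1/2}$ commutes with $L$ — but this is a one-line induction using that the commutator $[L,\cdot]$ is a derivation of the skew field of pseudo-differential series, so there is no real obstacle.
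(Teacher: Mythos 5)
Your proof is correct and fills in exactly the steps the paper leaves implicit when it says the corollary "follows from Lemmas \ref{lem1} and \ref{lem2}": the induction showing $[L,\,L^{n/2}]=0$ via the derivation property of $[L,\cdot]$, the application of Lemma \ref{lem1} to $P=L^{n/2}$, and the observation that for even $n$ the series $L^{n/2}=L^{m}$ is already a differential operator so its negative part vanishes and $f_P=[L^m,L]=0$. This is the same route the paper intends; no issues.
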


It is clear that the set of all differential operators that satisfy  \eqref{AAA} is a vector space over $\C.$

\begin{lemma}\label{lem3} The differential operators
$\ds L^{\frac{n}{2}}_{+}, \,\,\, n\in \Z_{+},$ form a basis of this vector space.
 \end{lemma}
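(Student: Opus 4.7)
The plan is to prove the two halves of ``basis'' separately: linear independence is almost automatic from orders, and spanning follows by induction on order once one knows that the leading coefficient of any operator in the space is constant.

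For linear independence, observe that by Example \ref{ex10} we have $L^{1/2}=D+\tfrac{u}{2}D^{-1}+\cdots$, hence $L^{n/2}=D^n+\text{(lower order)}$, and consequently $L^{n/2}_{+}$ is a \emph{monic} differential operator of order exactly $n$. Any nontrivial finite $\mathbb{C}$-linear combination $\sum_n c_n L^{n/2}_{+}$ is therefore monic (up to scalar) of order equal to its largest index, so it cannot vanish.

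For spanning, let $V$ denote the $\mathbb{C}$-vector space of differential operators $P$ satisfying \eqref{AAA}. I would prove by induction on $n=\operatorname{ord}P$ that any $P\in V$ is a $\mathbb{C}$-linear combination of $L^{0/2}_{+},L^{1/2}_{+},\dots,L^{n/2}_{+}$. The key computational claim is: if $P=\sum_{i=0}^{n}p_iD^i\in V$, then $p_n\in\mathbb{C}$. To check this, expand $[P,L]=[P,D^2]+[P,u]$. Using $D^2\circ p_n D^n=p_nD^{n+2}+2D(p_n)D^{n+1}+D^2(p_n)D^n$, one finds that the only contribution to the coefficient of $D^{n+1}$ in $[P,L]$ is $-2D(p_n)$. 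Since $[P,L]\in\mathcal{F}$ has order $\le 0$, this forces $D(p_n)=0$.

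Given this, the inductive step is formal: set $P':=P-p_nL^{n/2}_{+}$. By the corollary preceding the lemma, $L^{n/2}_{+}\in V$, so $P'\in V$, and by monicity the subtraction kills the leading term, giving $\operatorname{ord}P'<n$. Induction gives $P'$ as a combination of $L^{k/2}_{+}$ with $k<n$. The base case $n=0$ reduces to $P\in\mathcal{F}$ with $[P,L]=-2D(P)D-D^2(P)\in\mathcal{F}$, forcing $D(P)=0$, i.e.\ $P\in\mathbb{C}\cdot 1=\mathbb{C}\cdot L^{0}_{+}$. The only nontrivial point is the leading-coefficient computation; once that is in hand, everything else is bookkeeping.
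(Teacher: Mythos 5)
Your proof is correct and follows essentially the same route as the paper: the paper also equates coefficients of $D^{p+1}$ in $[P,L]$ to conclude that the leading coefficient is constant, subtracts $\sigma L^{\frac{p}{2}}_{+}$, and inducts on the order. The only differences are cosmetic — you spell out the computation giving $-2D(p_n)$ and add the (easy) linear-independence half, which the paper leaves implicit.
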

 \begin{proof} Suppose that $P=\sigma D^p+\cdots$ satisfies  \eqref{AAA}. Equating the coefficients of $D^{p+1},$ we get 
 $D(\sigma)=0$ and therefore $\sigma={\rm const}.$ Since the operator $\ds \sigma L^{\frac{p}{2}}_{+}$ has the same leading coefficient as $P$, the operator $\ds P-\sigma L^{\frac{p}{2}}_{+}$ has strictly lower order than $P$. The induction over $p$ completes the proof. 
 \end{proof} 
 
 Let
 $$
 [L^{\frac{n}{2}}_{+},\, L]=f_n.
 $$
 For even $n$ we have $f_n=0$ and the   evolution equation $u_t=f_n$ that is equivalent to 
\begin{equation}\label{hier}
 L_t= [L^{\frac{n}{2}}_{+},\, L]
 \end{equation}
 is trivial. Denote
 $$
 A_n=L^{\frac{2 n+1}{2}}_{+}.
 $$
It can be easily verified that $A=4 A_1$, where $A$ is defined by  \eqref{LAkdv}. The evolution equation corresponding to $n=0$ is just
$u_t=u_x.$

\begin{theorem} \label{Hier} For any $n,m\in \N$ the evolution equations $u_{\tau}=f_{2 m+1}$ and 
$u_t=f_{2 n+1}$ are compatible\footnote{In other words, the first equation is a higher symmetry for the second one and vice versa}.
\end{theorem}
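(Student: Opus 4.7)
The plan is to show that the two flows on the scalar function $u$ commute, by checking that the corresponding flows on the pseudo-differential operator $L = D^2 + u$ commute. Set $A = L^{(2n+1)/2}_+$ and $B = L^{(2m+1)/2}_+$, so that the two evolutions are $L_t = [A,L]$ and $L_\tau = [B,L]$. Compatibility means $(L_t)_\tau = (L_\tau)_t$. Expanding and using the Jacobi identity gives
\begin{equation*}
(L_t)_\tau - (L_\tau)_t = [A_\tau - B_t + [A,B],\, L],
\end{equation*}
so it suffices to prove the stronger identity $M := A_\tau - B_t + [A,B] = 0$ in the algebra of pseudo-differential series.

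The first step is to extend the flows from $L$ to arbitrary (half-integer) powers of $L$. I will show that $(L^s)_t = [A, L^s]$ and $(L^s)_\tau = [B, L^s]$ for every $s \in \tfrac12\mathbb{Z}$. For positive integer $s$ this is Lemma 1.1(i); the negative case follows by differentiating $L \circ L^{-1} = 1$; and the half-integer case follows by differentiating $L^{1/2} \circ L^{1/2} = L$ and observing that the resulting linear equation $X L^{1/2} + L^{1/2} X = 0$ has only the trivial solution, by induction on leading order. This is the most delicate technical step but essentially routine.

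Setting $\mathcal{A} = L^{(2n+1)/2}$ and $\mathcal{B} = L^{(2m+1)/2}$, the extension gives $\mathcal{A}_\tau = [B,\mathcal{A}]$ and $\mathcal{B}_t = [A,\mathcal{B}]$. Since the projection $S \mapsto S_+$ commutes with the partial derivatives $\partial_t,\partial_\tau$ (both act coefficient-wise on $D$-expansions), we obtain
\begin{equation*}
A_\tau = [B, \mathcal{A}]_+, \qquad B_t = [A, \mathcal{B}]_+ .
\end{equation*}

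Finally, I will exploit the identity $[\mathcal{A},\mathcal{B}] = 0$, which holds because both are powers of $L$. Expanding $[\mathcal{A}_+ + \mathcal{A}_-,\, \mathcal{B}_+ + \mathcal{B}_-] = 0$ and taking the nonnegative part, the term $[\mathcal{A}_-, \mathcal{B}_-]_+$ vanishes because, by Remark~1.5, $[\mathcal{A}_-, \mathcal{B}_-]$ has order at most $-3$. This yields
\begin{equation*}
[A,B] = [B, \mathcal{A}_-]_+ - [A, \mathcal{B}_-]_+ .
\end{equation*}
Substituting into the formula for $M$ and using $[B,\mathcal{A}]_+ = [B,A] + [B,\mathcal{A}_-]_+$ and the analogous expansion for $B_t$ makes every term cancel, giving $M = 0$. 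The main obstacle is the lift to half-integer powers of $L$; once that is in hand, the compatibility reduces to the algebraic manipulation above combined with the key fact that $\mathcal{A}$ and $\mathcal{B}$ commute.
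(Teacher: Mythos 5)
Your proposal is correct and follows essentially the same route as the paper: reduce compatibility via the Jacobi identity to the zero-curvature condition $(A_n)_\tau-(A_m)_t+[A_n,A_m]=0$, use $(L^s)_t=[A,L^s]$ to express $A_\tau$ and $B_t$ as projections of commutators with full powers of $L$, and collapse everything to the vanishing of $\bigl[\mathcal{A}_-,\mathcal{B}_-\bigr]_+$. The only difference is that you spell out the (correct) justification of the lift of the flow to half-integer powers of $L$, which the paper asserts without proof.
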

\begin{proof}  Let us rewrite these equations in the Lax form:
$$
L_t=[A_n,\, L], \qquad L_{\tau}=[A_m,\, L].
$$
We have
$$
(L_{t})_{\tau}-(L_{\tau})_{t}=[(A_n)_{\tau},\, L]-[(A_m)_{t},\,
L] +
[A_n,\, [A_m,\, L]]-[A_m,\, [A_n,\, L]].
$$
Due to the Jacobi identity it suffices to prove that
$$
(A_n)_{\tau}-(A_m)_{t}+[A_n,\, A_m]=0.
$$
Since $(L^p)_t=[A_n, L^p]$ and  $(L^p)_{\tau}=[A_m, L^p]$ for any
$p$, we get
$$
(A_n)_{\tau}=\left([A_m, L^{\frac{2n+1}{2}}]\right)_{+}, \qquad 
(A_m)_{t}=\left([A_n, L^{\frac{2m+1}{2}}]\right)_{+}.
$$
Therefore, we need to verify that
$$
\left([A_m, L^{\frac{2n+1}{2}}]-[A_n, L^{\frac{2m+1}{2}}]+[A_n,\,
A_m]\right)_{+}=0.
$$
Substituting $$A_n=L^{\frac{2n+1}{2}}-\left(
L^{\frac{2n+1}{2}}\right)_{-}$$ and
$$A_m=L^{\frac{2m+1}{2}}-\left(
L^{\frac{2m+1}{2}}\right)_{-}$$ to the latter identity, we obtain
$$
\left[\left(L^{\frac{2n+1}{2}}\right)_{-},\,
\left(L^{\frac{2m+1}{2}}\right)_{-} \right]_{+}=0,
$$
which is obviously true.  
\end{proof}
\begin{corollary}
Any two evolution equations defined by  \eqref{Lax} with different $A$-operators of the form 
$$
A=\sum_{i\ge 0}  c_i L^{\frac{2 i+1}{2}}_{+}, \qquad c_i\in \C,
$$
are compatible. 
\end{corollary}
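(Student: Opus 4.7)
The plan is to reduce the corollary to Theorem \ref{Hier} by exploiting the bilinearity of the compatibility condition in the right-hand sides of the evolution equations.

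First I would recall the algebraic framework set up earlier: any evolution equation $u_t = F$ determines an evolutionary vector field $D_F$ via \eqref{Dt}, and compatibility of $u_t = F$ and $u_\tau = G$ is equivalent to $[D_F, D_G] = 0$, as noted in the discussion of \eqref{comev}. The crucial observation is that the assignment $F \mapsto D_F$ is $\mathbb{C}$-linear (since \eqref{Dt} depends linearly on $F$), so the commutator $[D_F, D_G]$ is $\mathbb{C}$-bilinear in $(F, G)$.

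Next I would set up notation: given two operators $A = \sum_i c_i L^{(2i+1)/2}_+$ and $\tilde{A} = \sum_j \tilde{c}_j L^{(2j+1)/2}_+$ (the sums being finite, as is implicit for the Lax equation to define an ordinary evolution equation), the corresponding evolution equations have right-hand sides $F = \sum_i c_i f_{2i+1}$ and $G = \sum_j \tilde{c}_j f_{2j+1}$, where $f_{2n+1} = [A_n, L]$ is identified with the coefficient of the zero-order differential operator produced by the Lax bracket. By Theorem \ref{Hier}, each pair $u_t = f_{2n+1}$, $u_\tau = f_{2m+1}$ is compatible, so $[D_{f_{2n+1}}, D_{f_{2m+1}}] = 0$ for every $n, m \geq 0$.

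Then I would conclude by bilinearity:
\[
[D_F, D_G] = \sum_{i,j} c_i \tilde{c}_j \, [D_{f_{2i+1}}, D_{f_{2j+1}}] = 0,
\]
which is precisely the compatibility of the two equations defined by $A$ and $\tilde{A}$.

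I do not expect a serious obstacle: the content of the corollary is essentially that Theorem \ref{Hier} produces pairwise commuting generators of a hierarchy, and passing to $\mathbb{C}$-linear combinations is automatic from the Lie-algebra structure on evolutionary vector fields recorded in \eqref{comev}. The only thing that needs a line of care is the identification of the right-hand side of the Lax equation $L_t = [A, L]$ as a $\mathbb{C}$-linear function of $A$, so that $A = \sum c_i A_i$ indeed yields $F = \sum c_i f_{2i+1}$; but this is immediate from the linearity of the commutator $[\,\cdot\,, L]$ in its first argument.
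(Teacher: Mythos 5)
Your proof is correct and follows the same route the paper intends: the corollary is stated immediately after Theorem \ref{Hier} with no separate argument, precisely because it reduces to the theorem via the $\C$-bilinearity of $[D_F,D_G]$ and the linearity of $A\mapsto[A,L]$, which is exactly what you spell out. The only cosmetic point is that Theorem \ref{Hier} is stated for $n,m\in\N$ while the corollary allows $i=0$ (giving $f_1=u_x$); the theorem's proof applies verbatim in that case, so nothing is lost.
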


This infinite-dimensional vector space of compatible  evolution equations is called the {\it KdV hierarchy} \cite{geldik1}. Any two equations of the hierarchy  are higher symmetries for each other.

Thus, the symmetries of the KdV equation are generated by the same $L$-operator but by different $A$-operators.

\subsubsection{Recursion operator for KdV equation}

 Now we are going to find \cite[Section 2A]{GKS} a recursion relation between $f_{2 n+1}$ and
$f_{2 n+3}.$  A similar method was applied for the first time in \cite{sokkn} to find a recursion operator for the Krichever-Novikov equation.

Since $L^{2 n+3 \over 2}=L\, L^{2 n+1 \over 2},$ we have
$$
A_{n+1}=(L\, L^{2 n+1 \over 2})_{+}=L\,( L^{2 n+1 \over 2})_{+}+
(L \, (L^{2 n+1 \over 2})_{-})_{+},
$$
or
$$
A_{n+1}=L A_n+R_n,
$$
where $R_{n}=a_{n}D+b_{n}$ is a differential operator of first order. Hence
$$
 f_{2n+3}=[A_{n+1}, \ L]= L\circ f_{2n+1}+[R_n,\, L].
$$
Now if we equate to zero coefficients of $D^2$, $D$
and $D^0$ in the above equation, we obtain
\[a_{n}={1 \over 2}D^{-1}(f_{2n+1}), \qquad  b_{n}={3 \over 4} f_{2n+1}\]
and \[f_{2n+3}=\Big({1 \over 4}D^{2} +u+{1 \over 2}u_{x}D^{-1}\Big)\, f_{2n+1},\]\\
which gives the standard recursion operator
\begin{equation}\label{recop}
{\cal{R}} = {1 \over 4}D^{2} +u+{1 \over 2}u_{x}D^{-1}
\end{equation}
 for the KdV equation
\begin{equation}\label{kdv1}
 u_t=\frac{1}{4}\Big(u_{xxx}+6 u u_x \Big).
\end{equation}
The factor $\ds \frac{1}{4}$ appears due to the fact that we take for $A$-operator $L^{\frac{3}{2}}_{+}$ instead of $4 L^{\frac{3}{2}}_{+}.$ Of course, this coefficient can be removed by the scaling $t\to 4\,t.$

As it was shown in \cite{GKS} this method for finding a recursion operator can be generalized to  Lax pairs of different type.

 \begin{exercise} Check that the recursion operator \eqref{recop} satisfies the operator identity
$$
{\cal{R}}_t=[F_{*},\, {\cal{R}}],
$$
where
$$
F_{*}=\frac{1}{4}\left(D^3+6 u D+6 u_x\right)
$$
is the Frech{\' e}t derivative of the right hand side of the KdV equation \eqref{kdv1}.
\end{exercise}

 \begin{exercise} (see \cite[Appendix A]{GKS})
Find the recursion operator for the Boussinesq system
$$
u_{t}  = v_{x} ,\qquad 
v_{t}  = - {1 \over 3}(u_{xxx}+8uu_{x}). $$
A Lax pair for this system is given by
$$
L=D^{3}+2uD+u_{x}+v, \qquad A=\left(L^{\frac{2}{3}}\right)_{+}.
$$
\end{exercise}

\subsubsection{Conservation laws}

\begin{proposition} For any $n\in \N$ the function 
\begin{equation}\label{conkdv}
\rho_n={\rm res}\,(L^{\frac{2 n-1}{2}}),
\end{equation}
where $L$ is defined by \eqref{LAkdv}, is a conserved density for the KdV equation.
\end{proposition}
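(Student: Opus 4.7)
My plan is to reduce the statement to Adler's theorem (Theorem 1.2 in the text). The key idea is that if $P$ is any pseudo-differential series that satisfies the Lax-type equation $P_t = [A,P]$, then $\mathrm{res}(P)_t = \mathrm{res}[A,P] = D(\sigma)$ lies in $\mathrm{Im}\,D$, which is precisely what ``conserved density'' means. So I need to verify that $P := L^{(2n-1)/2}$ satisfies such a Lax equation with $A = 4L^{3/2}_{+}$.

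First, $L^{1/2}$ exists and is unique up to sign as a formal series starting with $D$ (Example~1.10); pick the branch with leading term $D$. Call it $R$. Then $R^{2n-1}$ is a well-defined pseudo-differential series and $\rho_n = \mathrm{res}(R^{2n-1})$. I would then prove the following claim: $R_t = [A,R]$. Indeed, differentiating $R^2 = L$ gives $R_t R + R R_t = L_t = [A,L] = [A,R]\,R + R\,[A,R]$, so $X := R_t - [A,R]$ satisfies $XR + RX = 0$. Writing $X = x_k D^k + x_{k-1}D^{k-1}+\cdots$ and $R = D + O(D^{-1})$ and expanding, the leading term of $XR + RX$ is $2x_k D^{k+1}$; this forces $x_k = 0$, and by induction on the order $X = 0$. (This is essentially the same trick used in the proof of Lemma~1.2.) Consequently $(R^{2n-1})_t = [A, R^{2n-1}]$ by the Leibniz rule, since $M \mapsto M_t - [A,M]$ is a derivation on the skew field of pseudo-differential series.

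Taking residues of both sides of $(L^{(2n-1)/2})_t = [A,\, L^{(2n-1)/2}]$ and applying Theorem~1.2 (Adler) to the right-hand side yields
\[
D_t(\rho_n) \;=\; \mathrm{res}\bigl[A,\, L^{(2n-1)/2}\bigr] \;=\; D\bigl(\sigma_n\bigr)
\]
for an explicit $\sigma_n = \sigma\bigl(A, L^{(2n-1)/2}\bigr)$ built from Adler's formula. By the definition of a local conservation law (equation \eqref{conlaw}), $\rho_n$ is a conserved density for the KdV equation, with flux $\sigma_n$.

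The only subtle step is the claim $R_t = [A,R]$: one must justify differentiating the fractional power $L^{1/2}$ in a way compatible with the Lax flow. The rest is a mechanical application of Adler's residue theorem. As a sanity check, for small $n$ one recovers the known KdV densities: $\rho_1 = \mathrm{res}(R) = u/2$ (up to normalization), $\rho_2 = \mathrm{res}(R^3) \sim u^2$, and $\rho_3$ recovers $-u_x^2/2 + u^3$ up to equivalence, matching Example~1.7.
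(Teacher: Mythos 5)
Your proof is correct and follows essentially the same route as the paper: establish the Lax equation $(L^{\frac{2n-1}{2}})_t=[A,\,L^{\frac{2n-1}{2}}]$ and then take residues using Theorem~\ref{adler}. The only difference is that you spell out the step the paper dismisses with ``it is easy to prove (cf.\ Lemma~\ref{tr})'', namely that $L^{1/2}$ itself obeys the Lax flow (via the uniqueness argument $XR+RX=0\Rightarrow X=0$, the same trick as in Lemma~\ref{lem2}), after which the odd powers follow from the derivation property.
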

\begin{proof} It is easy to prove (cf. Lemma \ref{tr}) that 
$$
(L^{\frac{2 n-1}{2}})_t=[A,\,(L^{\frac{2 n-1}{2}})].
$$
Finding residue of both sides of this identity and taking into account Theorem \ref{adler}, we arrive at the statement of the proposition.
\end{proof}
It can be verified that formula  \eqref{conkdv} with $n=1,2,3$ produces conserved densities equivalent to the ones presented in Example \ref{Example1.7}.

\subsubsection{Darboux transformation}

The Darboux transformation for the KdV equation is defined by the following relation
\begin{equation}\label{tiL}
\tilde L=T L T^{-1},
\end{equation}
where
$$
T=D^n+a_{n-1} D^{n-1}+\cdots+a_0
$$
is a differential operator. 
In the generic case $\tilde L$ is a  pseudo-differential series, but for special $T$  this series could be a differential operator of the form $\tilde L=D^2+\tilde u.$ 
In this case we have
\begin{equation}\label{Tiden}
(D^n+a_{n-1} D^{n-1}+\cdots+a_0)(D^2+u)\,=\,(D^2+\tilde
u)(D^n+a_{n-1} D^{n-1}+\cdots+a_0).
\end{equation}
Comparing the coefficients of $D^{n},$ we get
$$
\tilde u=u-2 \frac{\partial}{\partial x}  a_{n-1}.
$$
This formula allows us to construct a new solution $\tilde u$ of the  KdV equation starting from a given solution $u$.
 
It follows from  \eqref{Tiden} that
\begin{equation}\label{TT}
L(\hbox{Ker}\, T)\subset \hbox{Ker}\, T.
\end{equation}
The existence of the Euclidean algorithm in the ring of differential operators \cite{ore} guarantees that  \eqref{TT} is a sufficient condition for $\tilde L$ to be a differential operator.

Suppose that the Jordan form of the operator $L$ acting on the finite-dimensional space 
$\hbox{Ker}\,T$ is diagonal. Then a basis of $\hbox{Ker}\,T$ is
given by  some functions $\Psi_1, \dots, \Psi_n$, such that
\begin{equation}\label{xd}
\frac{\partial^2}{\partial x^2}\Psi_i+u \Psi_i=\lambda_i^2 \Psi_i.
\end{equation}
If the functions $\Psi_1, \dots, \Psi_n$ are fixed, then, up to a left factor,  the equation $T(Y)=0$ is given by  the formula
$$
{\bf W}(\Psi_1, \dots, \Psi_n, Y)=0,
$$
where ${\bf W}$ is the Wronskian.  This implies
$$ a_{n-1}=-\frac{\partial}{\partial x} \log {{\bf W}(\Psi_1, \dots, \Psi_n)}.$$
and therefore the Darboux transformation has the following form:
\begin{equation}\label{newu}
\tilde u=u+2 \frac{\partial^2}{\partial x^2} \log {{\bf W}(\Psi_1,
\dots, \Psi_n)} .
\end{equation}
 
As usual, the $t$-dynamics of $\Psi_i$ is defined by the $A$-operator:
\begin{equation}\label{td}
\frac{\partial}{\partial t}(\Psi_i)=A(\Psi_i)\equiv \frac{\partial^3}{\partial
x^3}\Psi_i+\frac{3}{2} u \frac{\partial}{\partial
x}(\Psi_i)+\frac{3}{4} u_x \Psi_i.
\end{equation}

\begin{theorem}
Let $u(x,t)$ be any solution of the KdV equation  \eqref{kdv1}. If functions $\Psi_i$ satisfy  \eqref{xd} and   \eqref{td}, then the function $\tilde u(x,t)$ defined by  \eqref{newu} satisfies the KdV equation.
\end{theorem}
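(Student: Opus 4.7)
The strategy is to lift the spatial intertwining $\tilde L\,T = T\,L$, built into the construction of the Darboux transformation, to a compatible time intertwining of the form $T_t + T\,A = \tilde A\,T$ in which $\tilde A$ turns out to be the KdV $A$-operator with $u$ replaced by $\tilde u$. Once both intertwinings are in place, the Lax equation $\tilde L_t = [\tilde A, \tilde L]$, equivalent to KdV for $\tilde u$, will drop out of a short formal manipulation.

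First I produce $\tilde A$. The key observation is that $T_t + TA$ annihilates $\ker T$: since $T\Psi_i = 0$ holds identically in $t$, differentiation yields $T_t\Psi_i + T\Psi_{i,t} = 0$, and substituting \eqref{td} gives $(T_t + TA)\Psi_i = 0$ for $i = 1,\dots,n$. Because $T$ is monic of order $n$, $T_t$ has order at most $n-1$, so $T_t + TA$ has order at most $n+3$. By the Euclidean algorithm in the ring of differential operators (the same tool invoked around \eqref{TT}), write $T_t + TA = \tilde A\,T + R$ with $\mathrm{ord}(R) < n$; then $R$ annihilates the $n$-dimensional space $\ker T$ and therefore vanishes, so $T_t + TA = \tilde A\,T$ for a differential operator $\tilde A$ of order at most $3$.

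Next I identify $\tilde A$ explicitly by matching coefficients of $D^{n+3},\,D^{n+2},\,D^{n+1},\,D^n$ in $T_t + TA = \tilde A\,T$. The top two force $\tilde A = D^3 + \alpha_1 D + \alpha_0$. Expanding $D^3\circ a_{n-1}D^{n-1} = a_{n-1}D^{n+2} + 3D(a_{n-1})D^{n+1} + \cdots$ on the right, and $D^n\circ\tfrac{3}{2}uD = \tfrac{3}{2}u\,D^{n+1} + \cdots$ on the left, the coefficient of $D^{n+1}$ yields $\alpha_1 = \tfrac{3}{2}\bigl(u - 2D(a_{n-1})\bigr) = \tfrac{3}{2}\tilde u$ by the Darboux formula \eqref{newu}; the analogous match at $D^n$ forces $\alpha_0 = \tfrac{3}{4}\tilde u_x$. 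Hence $\tilde A = D^3 + \tfrac{3}{2}\tilde u\,D + \tfrac{3}{4}\tilde u_x$, precisely the KdV $A$-operator for $\tilde u$ (up to the overall factor $4$ noted after \eqref{LAkdv}).

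To close, differentiate the intertwining $\tilde L\,T = T\,L$ in $t$ and substitute $L_t = [A,L]$ together with $T_t = \tilde A T - TA$:
\begin{equation*}
\tilde L_t\,T = T_t L + T L_t - \tilde L\,T_t = (\tilde A T - TA)L + T[A,L] - \tilde L(\tilde A T - TA) = [\tilde A,\tilde L]\,T,
\end{equation*}
the remaining cross-terms cancelling via $TL = \tilde L T$. Since $T$ is invertible in the skew field of pseudo-differential series, cancellation on the right gives $\tilde L_t = [\tilde A,\tilde L]$, i.e.\ KdV for $\tilde u$. The main obstacle is the coefficient bookkeeping in the third paragraph: verifying that the specific combination $u - 2D(a_{n-1})$ defining $\tilde u$ in \eqref{newu} is exactly what the composition with the particular KdV $A$-operator produces at the $D^{n+1}$ level, together with the analogous match at $D^n$; this is where the choice of $A$ within the KdV hierarchy is essential.
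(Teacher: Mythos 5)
Your proof is correct and follows essentially the same route as the paper: both arguments hinge on showing that $T_t + TA$ annihilates $\ker T$ (by differentiating $T\Psi_i=0$ and using \eqref{td}), so that $\tilde A=(T_t+TA)T^{-1}$ is a differential operator, and then conjugating the Lax relation by $T$. The one genuine addition is your explicit coefficient matching identifying $\tilde A = D^3+\tfrac{3}{2}\tilde u\,D+\tfrac{3}{4}\tilde u_x$ — the paper leaves this step implicit, and without it one only knows that $\tilde A$ lies in the span of the $\tilde L^{k/2}_{+}$, so a stray multiple of $\tilde u_x$ could in principle contaminate the evolution; your check at the $D^{n+1}$ and $D^{n}$ levels (the latter resting on the $D^{n-1}$-coefficient of $TL=\tilde L T$) closes that loophole.
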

\begin{proof}
The Lax equation  \eqref{Lax} can be rewritten in the commutator form
$\ds [\frac{\partial}{\partial t}-A, \, L]=0.$ This implies
$\ds [\frac{\partial}{\partial t}-\tilde A, \, \tilde L]=0,$ where the differential operator $\tilde L$ is defined by  \eqref{tiL} and 
\begin{equation}\label{tilT}
\tilde A= T A T^{-1}+T_t T^{-1}.
\end{equation}
It suffices to check that the ratio of differential operators 
$T A+T_t$ and $T$ is a differential operator. This is equivalent to the fact that
\begin{equation}\label{id1}
T A (\Psi_i)+T_t (\Psi_i)=0
\end{equation}
for any $i$. We have $0=(T \Psi_i)_t=T_t \Psi_i+T(\Psi_i)_t.$ Substituting $-T(\Psi_i)_t$ for $T_t (\Psi_i)$ into  \eqref{id1} and using  \eqref{td}, we complete the proof.
\end{proof}
\begin{remark}
The numbers $\lambda_i$ from  \eqref{xd} are arbitrary parameters in the solution  \eqref{newu}.
\end{remark}

\begin{exercise} Prove that for any Jordan form of the operator $L$ acting on the finite-dimensional space $\hbox{Ker}\,T$ the condition 
\begin{equation}\label{ATcon}
\Big(\frac{\partial}{\partial t} - A\Big) \hbox{Ker}\,T \subset \hbox{Ker}\,T
\end{equation}
provides the fact that $\tilde A$, defined by \eqref{tilT}, is a differential operator.
\end{exercise}

\subsubsection{Solitons and rational solutions for KdV equation}
Let us start from the trivial solution $u(x,t)=0$ of the KdV
equation.
In this case condition \eqref{TT} means that $\hbox{Ker}\,T$ is any finite-dimensional vector space $\bf V$ of functions invariant with respect to the differential operator $\ds \frac{\partial^2}{\partial x^2}.$  The $t$-dynamics of ${\bf V}$ is defined by the condition 
$$
\Big(\frac{\partial}{\partial t} - \frac{\partial^3}{\partial x^3}\Big) {\bf V} \subset 
{\bf V}.
$$

In the generic case, when the Jordan form of the operator $\ds \frac{\partial^2}{\partial x^2}$ is diagonal, a basis of such a vector space is given by
$$
\Psi_i(x,t)=\exp{(\eta_i)}+c_i \exp{(-\eta_i)}, \qquad {\rm where} \qquad
\eta_i=\lambda_i\, x+\lambda_i^3\, t, \qquad i=1,\dots, n.
$$
The function
$$
\tilde u=2 \frac{\partial^2}{\partial x^2} \log {{\bf W}(\Psi_1,
\dots, \Psi_n)} 
$$
is called $n$-{\it soliton} solution of the KdV equation.

\begin{example} If $n=1$, we have
$$
\tilde u(x,t)=\frac{8 c_1 \lambda_1^2} {(e^{\lambda_1\,
x+\lambda_1^3\,t} +c_1 e^{-\lambda_1\, x-\lambda_1^3 \,t})^2 }
$$
\end{example}
 
\begin{example} The $2$-soliton solution for the KdV equation is given by
$$\tilde u(x,t)=\displaystyle (\lambda_2^2-\lambda_1^2) \frac{\ds\frac{8 c_1
\lambda_1^2} {(e^{\eta_1} +c_1 e^{-\eta_1})^2 }-\frac{8 c_2
\lambda_2^2} {(e^{\eta_2} +c_2 e^{-\eta_2})^2 }}{\ds \left(\lambda_1
\frac{c_1-e^{2 \eta_1}}{c_1+e^{2 \eta_1}}- \lambda_2
\frac{c_2-e^{2 \eta_2}}{c_2+e^{2 \eta_2}}\right)^2}.
$$
\end{example}

If the vector space $\bf V$ consists of polynomials in $x$, we get rational solutions of the KdV equation.   In the simplest case ${\rm dim}\,{\bf V}=1$ we have $\Psi_1=x$ and formula \eqref{newu} produces a stationary rational solution
$$
\ds \tilde u(x,t)= - \frac{2}{x^2}.
$$

\subsection{Gelfand-Dikii hierarchy and generalizations} 

Let 
\begin{equation}\label{AAAa}
L=D^n+\sum_{i=0}^{n-2} u_i D^i, \qquad A=\sum_{i=0}^m c_i L^{\frac{i}{n}}_{+}, \qquad c_i\in \C .
\end{equation}
In the same way as in Lemmas \ref{lem1} and \ref{lem3} it can be proved that the Lax equation  \eqref{Lax} is equivalent to a system of $n-1$ evolution equations for unknown functions $u_{n-2},\dots, u_0.$ Moreover, the systems generated by the same $L$-operator and $A$-operators of the form  \eqref{AAAa} with different $m$ and $c_i$ are higher symmetries for each other. This infinite set of compatible evolution systems  is called the {\it Gelfand-Dikii hierarchy} \cite{geldik}. If $n=2$, we get the KdV hierarchy described above. 

\subsubsection{Factorization of $L$-operator}

Relations between factorization of scalar differential Lax operators, Miura-type transformations and modified KdV-type systems were discussed, for example, in \cite{SokSha80, forgib, drsok}.  We are concerned here with the case of two factors only. 

Consider the following system of the Lax type equations
\begin{equation}\label{drsok}
M_t= A N-M B, \qquad N_t=B N-N A,
\end{equation}
where 
$$
M=D^r+w D^{r-1}+\sum_{i=0}^{r-2} u_i D^i, \qquad N=D^s-w D^{s-1}+\sum_{i=0}^{s-2} v_i D^i,
$$
$$
A=\sum_{i=0}^{m} c_i \Big((M N)^{\frac{i}{r+s}}\Big)_{+}, \qquad B=(M^{-1} A M)_{+}.
$$
System  \eqref{drsok} is related to Lax equation  \eqref{Lax}. Namely, if $M$ and $N$ satisfy   \eqref{drsok}, then $L=M N$ satisfies  \eqref{Lax}. 
\begin{proposition}
Relations  \eqref{drsok} are equivalent to a system of $r+s-1$ evolution equations in $w, u_i, v_i.$
\end{proposition}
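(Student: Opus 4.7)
The plan is to count the unknown functions, establish order estimates for the two Lax-type equations in~\eqref{drsok}, and identify the single redundancy that reduces the resulting $r+s$ scalar relations to exactly $r+s-1$ independent evolution equations.

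First I would enumerate the unknowns. The monic operator $M$ contributes the coefficients $w,u_0,\ldots,u_{r-2}$ ($r$ functions), while $N$ is forced to have $-w$ as its coefficient of $D^{s-1}$ and contributes the new coefficients $v_0,\ldots,v_{s-2}$ ($s-1$ functions). So there are $r+s-1$ scalar unknowns. Correspondingly, the left-hand sides read $M_t = w_t D^{r-1} + \sum_{i=0}^{r-2}(u_i)_t D^i$ (a differential operator of order $\leq r-1$) and $N_t = -w_t D^{s-1} + \sum_{j=0}^{s-2}(v_j)_t D^j$ (of order $\leq s-1$); the coincidence of the top coefficients (up to sign) already foreshadows exactly one redundancy.

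Second, I would verify the matching orders on the right-hand sides. From $B=(M^{-1}AM)_+$, a direct rearrangement yields $AM-MB = M\circ(M^{-1}AM)_-$, the product of an order-$r$ operator with a series of order $\leq -1$; this has order $\leq r-1$ and, being the difference of two differential operators, is itself a differential operator. For the $N$-equation I would use $L=MN$ to write $(MN)_t = M_tN + MN_t = [A,L]$, so $MN_t = [A,L] - M_tN$. Here the Gelfand--Dikii-type identity $[A,L] = -\sum_i c_i\,[(L^{i/(r+s)})_-,\,L]$ (valid because $L$ commutes with its own fractional powers) gives $\mathrm{ord}\,[A,L]\leq r+s-2$, while $\mathrm{ord}(M_tN)\leq r+s-1$; dividing on the left by $M$ then forces $\mathrm{ord}\,N_t\leq s-1$.

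Third, I would finish by counting. The two equations \eqref{drsok} yield $r+s$ scalar relations among the $r+s-1$ unknowns $\{w,u_i,v_j\}$. The one compatibility condition is that the top coefficient of $M_t$ and the top coefficient of $N_t$ both determine $w_t$ with opposite signs; this agreement is exactly the vanishing of the coefficient of $D^{r+s-1}$ in $M_tN + MN_t - [A,L]$, which holds automatically by the preceding order estimate. What remains is $r+s-1$ independent evolution equations whose right-hand sides are differential polynomials in $w,u_i,v_j$, since $A$ and $B$ are built from $M,N$ by algebraic and differential operations. The main obstacle is the order drop $\mathrm{ord}\,(BN-NA)\leq s-1$, far below the naive estimate $m+s$; this drop relies essentially on the Gelfand--Dikii cancellation reducing $\mathrm{ord}\,[A,L]$ below $r+s-1$, and the rest is coefficient bookkeeping.
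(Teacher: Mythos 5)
Your argument is correct and complete: the estimates $\mathrm{ord}(AM-MB)=\mathrm{ord}\bigl(M\,(M^{-1}AM)_-\bigr)\le r-1$ and $\mathrm{ord}\,[A,L]\le r+s-2$, combined with the algebraic identity $(AM-MB)N+M(BN-NA)=[A,MN]$, do force $\mathrm{ord}(BN-NA)\le s-1$ and identify the coefficient of $D^{s-1}$ there as minus the coefficient of $D^{r-1}$ in $AM-MB$, so the $r+s$ scalar relations collapse to exactly $r+s-1$ evolution equations for $w,u_i,v_j$. The paper states this proposition without proof, so there is nothing to compare against; your route is the natural order-counting argument already used for Lemma \ref{lem1} and alluded to for the Gelfand--Dikii hierarchy. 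One caveat: you tacitly read the first relation of \eqref{drsok} as $M_t=AM-MB$, whereas the paper literally prints $M_t=AN-MB$; the printed version is incompatible with the claim that $L=MN$ satisfies the Lax equation (it would give $(MN)_t=AN^2-MNA$), so your reading is surely the intended one, but you should say explicitly that you are correcting the formula.
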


\subsubsection{Reductions in differential $L$-operators}

Let us introduce the following notation:
$$
Q_1(n)\stackrel{def}{=}D^{2 n+1}+\sum_{i=0}^{n-1} u_i D^{2 i+1}+ D^{2 i+1} u_i, \qquad 
Q_2(n)\stackrel{def}{=}D^{2 n}+\sum_{i=0}^{n-1} u_i D^{2 i}+ D^{2 i} u_i,
$$
$$
Q_3(n)\stackrel{def}{=}D^{2 n-1}+\sum_{i=1}^{n-1} u_i D^{2 i-1}+ D^{2 i-1} u_i+u_0 D^{-1} u_0.
$$
We call $u_{n-1},\dots, u_0$ {\it functional parameters} of $Q_i(n)$.
By definition, we put
$$
Q_1(0)\stackrel{def}{=}D, \qquad Q_2(0)\stackrel{def}{=}1, \qquad Q_3(0)\stackrel{def}{=}D^{-1}.
$$
Notice that the operators $Q_1(n)$ and $Q_3(n)$ are skew-symmetric:  $Q_1(n)^{+}=-Q_1(n)$ and  $Q_3(n)^{+}=-Q_3(n)$. The operators  $Q_2(n)$ are symmetric: $Q_2(n)^{+}=Q_2(n).$  

There are deep relations between such operators and classical simple Lie algebras \cite[Section 7]{drsok}. The algebra $B_n$  corresponds to an operator of $Q_1(n)$-type,
while the algebras $C_n$ and $D_n$ correspond to operators of types $Q_2(n)$ and $Q_3(n)$.

\begin{theorem} \label{scala} {\rm (see \cite[Section 7]{drsok}, \cite{DriSok81b})} 
Suppose the operators $M=Q_i(r)$ and $N=Q_j(s)$, where $i,j \in \{1,2,3 \},$ have functional parameters $u_{r-1}, \dots, u_0$ and $v_{s-1}, \dots, v_0$, respectively. Then relations  \eqref{drsok}, where
$$
A=\sum_{i=0}^{m} c_i \Big(L^{\frac{2 i+1}{n}}\Big)_{+}, \qquad B=\Big(M^{-1} A M\Big)_{+}, 
$$
$L=M N$ and $n={\rm ord}\,L,$ are equivalent to a system of $r+s$ evolution equations in $u_{r-1},\dots,u_0$, $v_{s-1},\dots, v_0.$
\end{theorem}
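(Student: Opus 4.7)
The plan is to reduce the operator identities in \eqref{drsok} to a finite evolutionary system on the functional parameters, and count the degrees of freedom on both sides.

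First, I would verify the auxiliary consistency with the ordinary Lax equation. Setting $L = MN$ and differentiating, we get $L_t = M_t N + M N_t = (AN-MB)N + M(BN-NA) = ANN - MBN + MBN - MNA = [A,L]$, so the system \eqref{drsok} sits inside the Lax equation for $L$ and is self-consistent.

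Second, I would check that $AN-MB$ and $BN-NA$ are in fact differential operators, of order strictly less than $\mathrm{ord}(M)$ and $\mathrm{ord}(N)$ respectively, so that each equation is a genuine evolution of the leading-coefficient-normalised operators $M$ and $N$. The choice $B=(M^{-1}AM)_+$ is designed precisely so that $MB-AM = -M (M^{-1}AM)_-$, and by Remark \ref{remcom} the right-hand side has order at most $\mathrm{ord}(M)-1$. Rewriting $M_t = A(N-M) + (AM-MB)$ and tracking the leading terms — using that $A=\sum c_i (L^{(2i+1)/n})_+$ has unit leading coefficient — shows that the $D^{\mathrm{ord}(M)}$ and higher coefficients cancel. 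An analogous argument (swapping the roles of $M$ and $N$) applies to $N_t$.

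Third, and this is where the main obstacle lies, one must show that $M_t$ and $N_t$ preserve the specific structural form of $Q_i(r)$ and $Q_j(s)$: skew-adjointness for $i=1,3$ and self-adjointness for $i=2$. The guiding idea is that $A$ is assembled from \emph{odd} fractional powers $L^{(2i+1)/n}$, which is exactly the parity that makes $[A,L]$ preserve the self-adjointness type of $L$. Taking the formal adjoint of $M_t = AN-MB$, using $L^+ = N^+ M^+ = \delta_j \epsilon_i NM$ and the symmetry $Q_k^+ = \epsilon_k Q_k$, one checks that $B^+$ is related to $B$ (after commuting with $M$) in precisely the way needed to make $(AN-MB)^+ = \epsilon_i (AN-MB)$. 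The delicate case is $Q_3$, whose $D^{-1}u_0$ tail forces one to track how $(\cdot)_+$ interacts with the formal adjoint on the boundary between positive and negative orders; here one exploits that $(M^{-1}AM)_+$ and $-(M^+(A^+)(M^+)^{-1})_-$ agree modulo terms that are killed by subsequent multiplication by $M$.

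Finally, I would count the free functions. The operator $Q_i(r)$ depends on exactly $r$ functional parameters $u_0,\dots,u_{r-1}$, entering at prescribed orders with fixed leading coefficients, so the space of allowed $M_t$ of the correct symmetry type and degree is $r$-dimensional; likewise for $N$ with $s$ parameters. Matching coefficients of the $r+s$ independent powers of $D$ on both sides of \eqref{drsok} yields exactly $r+s$ scalar evolution equations on $(u_{r-1},\dots,u_0,v_{s-1},\dots,v_0)$, which is the claim.
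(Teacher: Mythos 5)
The survey itself gives no proof of Theorem \ref{scala} --- it defers entirely to \cite[Section 7]{drsok} and \cite{DriSok81b} --- so your proposal has to stand on its own, and as written it contains one concrete inconsistency and one genuine gap. The inconsistency: your first display computes $(AN-MB)N+M(BN-NA)=ANN-MNA$ and calls it $[A,L]$, but $[A,MN]=AMN-MNA$, so the identity $L_t=[A,L]$ holds only if the first equation of \eqref{drsok} reads $M_t=AM-MB$ (almost certainly the intended formula; with $M_t=AN-MB$ the orders of $AN$ and $M$ do not even match unless $r=s$). You half-correct for this later --- the identity $AM-MB=M\,(M^{-1}AM)_-$ and the order bound via Remark \ref{remcom} are exactly right --- but your decomposition $M_t=A(N-M)+(AM-MB)$ reintroduces the uncontrolled high-order term $A(N-M)$ and proves nothing. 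You should state the corrected form of \eqref{drsok} once and use it consistently throughout.

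The genuine gap is your third step, which is where the entire content of the theorem lives. It is not enough to observe that the odd powers $L^{(2i+1)/n}$ have the right parity: one must prove that the flow preserves the precise shape of $Q_i(r)$ and $Q_j(s)$, i.e.\ that $M_t$ lies in the tangent space to the (for $Q_3$, genuinely nonlinear) variety of operators of the prescribed form. For $Q_1$ and $Q_2$ the form is just monic skew- or self-adjointness and your parity argument can be completed, but for $Q_3$ the tail $u_0D^{-1}u_0$ means one must show that the $D^{-1}$-part of $AM-MB$ has the special shape $(u_0)_t\,D^{-1}u_0+u_0D^{-1}(u_0)_t$ rather than a general $a\,D^{-1}b$; note also that in this case $M$ is not even a differential operator, so your ``differential operator of order $<\mathrm{ord}\,M$'' argument needs modification. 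This is precisely the reduction argument of \cite[Section 7]{drsok} (the $B_n$, $C_n$, $D_n$ correspondence mentioned right after the definition of the $Q_i$), and it is not supplied by the phrase ``one exploits that \dots agree modulo terms that are killed.'' Without it your final count is circular: a generic monic skew-adjoint series with a first-order integral tail has more than $r$ free coefficient functions, so the conclusion that exactly $r+s$ evolution equations result presupposes the form-preservation you have not established.
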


Here we present several examples \cite{DriSok81b} with $r+s\le 2,$ where the operator $A$ has a minimal possible order. In the corresponding differential equations we perform some scalings of independent variables and unknown functions to reduce equations to a simple form. If any different transformations were applied, then we point out their form up to constants, which can be easily reconstructed by reader. 

\begin{example} In the cases 
\begin{itemize}
\item[a)] \qquad $L=D^2+u$, $\qquad A=\Big(L^{\frac{3}{2}}\Big)_+$, 
\item[b)] \qquad $L=(D^2+u) D^{-1}$, $\qquad A=\Big(L^{3}\Big)_+$, 
\item[c)] \qquad $L=(D^3+2 u D+u_x) D^{-1}$, $\qquad A=\Big(L^{\frac{3}{2}}\Big)_+$, 
\item[d)] \qquad $L=(D^3+2 u D+u_x) D$, $\qquad A=\Big(L^{\frac{3}{4}}\Big)_+$ 
\end{itemize}
we get the KdV equation  \eqref{kdv}.
\end{example}
 \begin{example} The cases 
\begin{itemize}
\item[a)] \qquad $L=D+u D^{-1} u$, $\qquad A=\Big(L^{3}\Big)_+$, 
\item[b)] \qquad $L=(D+u D^{-1} u) D$, $\qquad A=\Big(L^{\frac{3}{2}}\Big)_+$ 
\end{itemize}
give rise to the modified KdV equation 
$$
u_t=u_{xxx}+6 u^2 u_x.
$$
\end{example}
\begin{example} The operators 
 $$L=(D^2+u) D,  \qquad A=\Big(L^{\frac{5}{3}}\Big)_+$$  
 produce the Savada-Kotera  equation \cite{SawKot74}
 $$
u_t=u_5+5 u u_3+ 5 u_1 u_2+5 u^2 u_1.
 $$
\end{example}
\begin{example} In the case 
 $$L=D^3+2 u D+u_x,  \qquad A=\Big(L^{\frac{5}{3}}\Big)_+$$  
 we obtain the Kaup-Kupershmidt equation \cite{Kau80}
 $$
 u_t=u_5+10 u u_3+ 25 u_1 u_2+20 u^2 u_1.
 $$ 
\end{example}
 
\begin{example} The system
$$
u_t=v v_x, \qquad v_t=v_{xxx} +2 u v_x+v u_x
$$
corresponds to
 $$L=D^3+2 u D+u_x+v D^{-1} v,  \qquad A=L_{+}.$$  
\end{example}
\begin{example} The operators 
$$L=(D^4+ u D^2+D^2 u+v) D^{-1},  \qquad A=L_{+}$$
yield
$$
u_t=w_x, \qquad w_t=w_{xxx} + w u_x+u w_x, 
$$
where $w=v+\alpha \, v_{xx}$ for some constant $\alpha$.  
\end{example}
\begin{example} For 
$$L=(D^5+ u D^3+D^3 u+v D+D v) D^{-1},  \qquad A=\Big(L^{\frac{3}{4}}\Big)_{+}$$
we obtain
$$
u_t=-u_{xxx}+w_x-u u_x, \qquad w_t=2 w_{xxx} +u w_x, 
$$
where $w=v+\beta u_{xx}$ for some constant $\beta$.  
\end{example}
\begin{example} The operators 
$$L=(D^3+ 2 u D+u_x u+v D^{-1} v)\, D,  \qquad A=\Big(L^{\frac{3}{4}}\Big)_{+}$$
correspond to
$$
u_t=u_{xxx}+u u_x-v v_x, \qquad v_t=-2 v_{xxx} - u v_x. 
$$
\end{example}

Several more examples can be found in \cite{drsok,DriSok81b}.
 
\section{Matrix Lax pairs} 
\label{SectionMatrixLaxPairs}

\subsection{The NLS hierarchy }

The nonlinear Schr\"odinger equation (NLS equation) has the form $Z_t=i Z_{xx}+\vert Z\vert^2\, Z.$ After a (complex) scaling of $t$ and $Z$ the equation can be written as a system of two equations
\begin{equation}\label{nls1}
u_t=\frac{1}{2} \left(u_{xx}- 2\,u^2\,v\right),\qquad
v_t=\frac{1}{2} \left(-v_{xx}+ 2\,v^2\,u\right),
\end{equation}
where $u=Z, \, v=\bar Z.$ The Lax representation  \eqref{Lax} for  \eqref{nls1} is defined \cite{ZS} by 
$$
L=D + \left(%
\begin{array}{cc}
  1 & 0 \\
  0 & -1
\end{array}%
\right)\lambda+\left(%
\begin{array}{cc}
  0 & u  \\
  v & 0
\end{array}%
\right),
$$
$$
A = \left(%
\begin{array}{cc}
  1 & 0 \\
  0 & -1
\end{array}%
\right)\lambda^2+\left(%
\begin{array}{cc}
  0 & u  \\
  v & 0
\end{array}%
\right) \lambda+
\frac{1}{2} \left(%
\begin{array}{cc}
  -u v\, & -u_x \\
  v_x\, & u v
\end{array}%
\right).
$$
Notice that all matrix coefficients of $L$ and $A$ belong to the Lie algebra $\mathfrak{sl}_2.$
 
One can verify that the matrix $A$ obeys the following properties:
\begin{itemize}
\item[a)] \quad The commutator $[A,\,L]$ does not depend on $\lambda;$
\item[b)] \quad It has the
following matrix structure:
$$
[A,\,L]=\left(%
\begin{array}{cc}
  0\, & \ast \\
  \ast \, & 0
\end{array}%
\right).
$$
\end{itemize}
It is clear that if these properties hold for a matrix polynomial 
\begin{equation}\label{An}
A_n=\sum_{i=0}^{n} a_i \lambda^i, \qquad a_i\in \mathfrak{sl}_2,
\end{equation}
then the Lax equation $L_t=[A_n,\,L]$ is equivalent to a system of two evolution equations for $u$ and $v$.
 
\begin{problem}\label{problem2} How to describe all matrix
polynomials  \eqref{An} that satisfy the above two properties?
\end{problem}

\subsubsection{Formal diagonalization} 
 
\begin{theorem} {\rm \cite[Section 1]{drsok}}. There exists a unique series
$$
T=\IdentityMatrix+\left( \begin{array}{cc}
  0\, & \alpha_1 \\
  \beta_1 \, & 0
\end{array}%
\right) \frac{1}{\lambda}+\left( \begin{array}{cc}
  0\, & \alpha_2 \\
  \beta_2 \, & 0
\end{array}%
\right) \frac{1}{\lambda^2}+\cdots
$$
such that
$$
T^{-1} L T= L_0,
$$
where
$$
L_0=
D+ \left(%
\begin{array}{cc}
  1 & 0 \\
  0 & -1
\end{array}%
\right)\lambda+\left(%
\begin{array}{cc}
  \rho_0 & 0  \\
  0 & -\rho_0
\end{array}%
\right)+
\left( \begin{array}{cc}
  \rho_1\, & 0 \\
  0 \, & -\rho_1
\end{array}%
\right) \frac{1}{\lambda}+\left( \begin{array}{cc}
  \rho_2\, & 0 \\
  0 \, & -\rho_2
\end{array}%
\right) \frac{1}{\lambda^2}+\cdots
$$
\end{theorem}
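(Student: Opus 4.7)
The plan is to convert the intertwining condition $T^{-1} L T = L_0$ into the equivalent form $L T = T L_0$ and to solve it recursively, power by power in $\lambda^{-1}$. Writing $T = \IdentityMatrix + \sum_{k \geq 1} T_k \lambda^{-k}$ with off-diagonal $T_k$, and $L_0 = D + \Lambda \lambda + \sum_{n \geq 0} R_n \lambda^{-n}$ with $\Lambda = \mathrm{diag}(1, -1)$ and $R_n$ diagonal, and using the Leibniz rule $D\circ T = T\circ D + D(T)$, the ansatz cancels the top-order terms $T D$ and $\lambda \Lambda T$ and reduces $L T = T L_0$ to the master equation
$$\lambda\, [\Lambda, \, T] \;=\; T R - U T - D(T), \qquad U = \begin{pmatrix} 0 & u \\ v & 0 \end{pmatrix}, \quad R = \sum_{n \geq 0} R_n \lambda^{-n}.$$
Extracting the coefficient of $\lambda^{-n}$ for each $n \geq 0$ yields the recursive identity
$$[\Lambda, \, T_{n+1}] \;=\; R_n - U T_n - D(T_n) + \sum_{k=1}^{n-1} T_k R_{n-k},$$
with the conventions $T_0 = \IdentityMatrix$ and empty sum for $n \leq 1$.

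The decisive algebraic fact is that the adjoint action $\mathrm{ad}_\Lambda$ annihilates the subspace of diagonal matrices and acts as multiplication by $\pm 2$ on the off-diagonal subspace, hence is invertible on the latter. Since the left-hand side of the recursion is purely off-diagonal, the diagonal part of the right-hand side must vanish; this condition uniquely determines the new unknown $R_n$ (and so $\rho_n$) in terms of the previously fixed data $u, v, T_1, \ldots, T_n, R_0, \ldots, R_{n-1}$. The off-diagonal part of the right-hand side then uniquely determines $T_{n+1}$ via $\mathrm{ad}_\Lambda^{-1}$. An induction on $n$ simultaneously establishes both existence and uniqueness. The base case $n = 0$ reads $[\Lambda, T_1] = R_0 - U$: splitting into diagonal and off-diagonal parts gives $\rho_0 = 0$ and $\alpha_1 = -u/2$, $\beta_1 = v/2$, matching the announced leading coefficients.

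The main subtlety requiring care is the internal consistency of the recursion with the stated form $R_n = \mathrm{diag}(\rho_n, -\rho_n)$, namely that the diagonal matrix extracted at each order is automatically traceless. This reflects the $\mathfrak{sl}_2$-character of the construction: one has $\mathrm{tr}\, L = 2D$, and using the normalization $\det T = 1$ (which pins down the diagonal parts of $T_k$ for $k \geq 2$, absent from the pure commutator equations since $\mathrm{ad}_\Lambda$ kills the diagonal direction) one sees that $\mathrm{tr}\, L_0 = 2D$ as well. Taking the trace of the master equation and exploiting $\mathrm{tr}\, [\Lambda, T] = 0$ together with $\mathrm{tr}\, D(T_k) = D(\mathrm{tr}\, T_k)$ produces, at each order, a total $x$-derivative that is absorbed by the normalization, forcing $\mathrm{tr}\, R_n = 0$ for every $n$. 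With this bookkeeping in place, the recursion runs without obstruction and produces the unique series $T$ and $L_0$ of the stated form.
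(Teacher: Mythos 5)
Your recursion is exactly the one used in the paper's proof: equate coefficients of $\lambda^{-n}$ in $LT=TL_0$, use that $\mathrm{ad}_\Lambda$ vanishes on diagonal matrices and is invertible on off-diagonal ones, and read off $R_n$ from the diagonal part and $T_{n+1}$ from the off-diagonal part. That part is correct as far as it goes (the omission of the term $T_nR_0$ from your sum is harmless only because $R_0=0$, which is established at the base step), and you were right to single out the tracelessness of the diagonal part as the one point that actually needs an argument --- the paper simply asserts that $P_k\in\mathfrak{sl}_2$.

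The genuine gap is precisely in your resolution of that point. With the normalization you have adopted --- every $T_k$ strictly off-diagonal, as in the statement --- the diagonal part of the right-hand side of the recursion is exactly $R_n-UT_n$ (the terms $T_kR_{n-k}$ and $D(T_n)$ are off-diagonal), so the recursion forces $R_n=UT_n=\mathrm{diag}(u\beta_n,\,v\alpha_n)$, whose trace is $u\beta_n+v\alpha_n$. This vanishes at $n=1$ (where $\alpha_1=-u/2$, $\beta_1=v/2$), but at $n=2$ one finds $\alpha_2=u_x/4$, $\beta_2=v_x/4$, hence $\mathrm{tr}\,R_2=(uv)_x/4\neq0$: the tracelessness you are trying to prove actually fails under this normalization. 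Your proposed fix via $\det T=1$ is not available here, because once the $T_k$ are declared off-diagonal there is no freedom left and in fact $\det T=1+\tfrac{uv}{4}\lambda^{-2}+\cdots\neq1$. The two normalizations (off-diagonal $T_k$ versus traceless $R_n$) are incompatible beyond first order, and a correct proof must commit to one of them: either weaken the conclusion to ``$R_n$ diagonal'' (its trace part is then a total $x$-derivative and is irrelevant for the hierarchy and for the conserved densities $\rho_n$), or allow scalar diagonal corrections to the $T_k$, determined by integrating the trace of the order-$n$ equation --- which then requires a separate argument that the quantity being integrated is a total derivative, plus a choice of integration constants to restore uniqueness. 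As written, your last paragraph gestures at both options simultaneously and closes neither.
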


\begin{proof} Equating the coefficients of $\lambda^0$ in $L T=T
L_0$, we get
$$
\left[\left(%
\begin{array}{cc}
  1 & 0 \\
  0 & -1
\end{array}%
\right),\, \left( \begin{array}{cc}
  0 & \alpha_1 \\
  \beta_1  & 0
\end{array}%
\right)\right]-\left(%
\begin{array}{cc}
  \rho_0 & 0  \\
  0 & -\rho_0
\end{array}%
\right)=\left(%
\begin{array}{cc}
  0 & u  \\
  v & 0
\end{array}%
\right).
$$
Hence $ \rho_0=0,$ $\ds \alpha_1=\frac{1}{2} u$ and
$\ds \beta_1=-\frac{1}{2} v$.
At each step we have a similar relation of the form
$$
\left[\left(%
\begin{array}{cc}
  1 & 0 \\
  0 & -1
\end{array}%
\right),\, \left( \begin{array}{cc}
  0 & \alpha_k \\
  \beta_k  & 0
\end{array}%
\right)\right]-\left(%
\begin{array}{cc}
  \rho_{k-1} & 0  \\
  0 & -\rho_{k-1}
\end{array}%
\right)= P_k, $$ where $P_k\in \mathfrak{sl}_2$ is a already known matrix. The
functions $\alpha_k, \beta_k, \rho_{k-1}$ are thus uniquely defined.
\end{proof}

\medskip

\begin{proposition} Let 
\begin{equation}\label{Bn}
B_n=T \left(%
\begin{array}{cc}
  1 & 0 \\
  0 & -1
\end{array}%
\right) T^{-1}\, \lambda^n,
\end{equation}
\begin{equation}\label{ABn}
A_n=(B_n)_{+}, 
\end{equation}
where by definition
$$
\left(\sum_{i=-\infty}^m a_i \lambda^i \right)_{+}\stackrel{def}{=}\sum_{i=0}^m
a_i \lambda^i, \qquad \quad
\left(\sum_{i=-\infty}^m a_i \lambda^i
\right)_{-}\stackrel{def}{=} \sum_{i=-\infty}^{-1} a_i \lambda^i.
$$
Then $A_n$ satisfies properties a) and b).
\end{proposition}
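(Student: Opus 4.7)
The plan is to show that $B_n$ itself commutes with $L$, and then use the splitting $B_n = A_n + (B_n)_-$ together with degree considerations in $\lambda$ to pin down $[A_n,L]$.

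First I would establish $[L,B_n]=0$. Because $T^{-1}LT = L_0$ one has, for any operator $X$, the identity $[L,\,TXT^{-1}] = T[L_0,X]T^{-1}$. Taking $X=\sigma_3\lambda^n$ with $\sigma_3 = \operatorname{diag}(1,-1)$, every term of $L_0$ other than $D$ is proportional to $\sigma_3$ and therefore commutes with $X$, while $[D,\sigma_3\lambda^n]=0$ since $\sigma_3\lambda^n$ is independent of $x$. Hence $[L_0,X]=0$ and $[L,B_n]=0$. This is the conceptual heart of the argument; the only delicate point is remembering that $D$ does not act on the spectral parameter $\lambda$, so the fact that $B_n$ contains $\lambda^n$ poses no problem.

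Second, from $B_n = A_n + (B_n)_-$ and $[L,B_n]=0$ I get
$$[L,A_n] = -[L,(B_n)_-].$$
Now I compare the $\lambda$-degrees. Writing $L = D + \sigma_3\lambda + U_0$ with $U_0$ a constant off-diagonal matrix, the left side is a polynomial in $\lambda$ (non-negative powers) of degree at most $n+1$, whereas the right side is a series in non-positive powers of $\lambda$: indeed $(B_n)_-$ starts at $\lambda^{-1}$, so $[D,(B_n)_-]$ and $[U_0,(B_n)_-]$ have powers $\le -1$, while $[\sigma_3\lambda,(B_n)_-]$ shifts up by one and therefore has powers $\le 0$. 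The only common degree is $\lambda^0$, so both sides reduce to this constant term; this is property (a).

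Finally, for property (b), I would read off the $\lambda^0$ coefficient of $-[L,(B_n)_-]$. The contributions from $[D,(B_n)_-]$ and $[U_0,(B_n)_-]$ vanish at order zero (since $(B_n)_-$ has no $\lambda^0$ term), leaving only
$$[L,A_n] \;=\; -[\sigma_3,\,c_{-1}],$$
where $c_{-1}$ is the coefficient of $\lambda^{-1}$ in $B_n$. A direct computation shows that for any $2\times 2$ matrix $M$ the commutator $[\sigma_3,M]$ kills the diagonal entries of $M$ and is purely off-diagonal, which yields (b). The main obstacle in this scheme is really just the careful bookkeeping of how orders in $\lambda$ behave under commutation with the piece $\sigma_3\lambda$ of $L$; everything else follows mechanically once $[L,B_n]=0$ is in hand.
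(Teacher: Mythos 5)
Your proposal is correct and follows essentially the same route as the paper: both rest on $[L,B_n]=0$, the splitting $[A_n,L]=-[(B_n)_-,L]$, and the observation that the left side is polynomial in $\lambda$ while the right side consists of negative powers plus an off-diagonal $\lambda^0$ term coming from $[\sigma_3\lambda,\,c_{-1}\lambda^{-1}]$. The only difference is that you spell out the justification of $[L,B_n]=0$ via conjugation to $L_0$ (which the paper leaves implicit, as it follows from the preceding diagonalization theorem), and your degree bookkeeping is accurate.
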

\begin{proof}
Since $[L,\,B_n]=0,$ we have
$$
[A_n, \,L]=-[\left(B_n \right)_{-},\,L].
$$
The left hand side is a polynomial in $\lambda$ whereas the right
hand side has the form
$$
\left(%
\begin{array}{cc}
  0\, & \ast \\
  \ast \, & 0
\end{array}%
\right)+\sum_{i=-\infty}^{-1} b_i \lambda^i.
$$
Hence
\begin{equation}\label{fg}
[A_n, \,L]=\left(%
\begin{array}{cc}
  0\, & f_n \\
  g_n \, & 0
\end{array}%
\right).
\end{equation}
\end{proof}
\begin{proposition} For any $n$ and $m$ the system of equations $$u_{\tau}=f_m, \qquad
v_{\tau}=g_m,$$ where $f_i$ and $g_i$ are defined by  \eqref{fg}, is a higher symmetry for the system $u_t=f_n, \quad
v_t=g_n.$
\end{proposition}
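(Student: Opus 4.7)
The plan is to mirror the proof of Theorem \ref{Hier} for the KdV hierarchy, with the role of $L^{\frac{2k+1}{2}}$ now played by the matrix series $B_k$. As in that proof, if $t=t_n$ and $\tau=t_m$ denote the two times, then the Jacobi identity reduces the compatibility $(L_t)_\tau=(L_\tau)_t$ to the zero-curvature condition
\begin{equation*}
(A_n)_\tau-(A_m)_t+[A_n,A_m]=0,
\end{equation*}
so the whole task is to verify this identity.

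The first step, which I expect to be the main obstacle, is to prove the matrix analog of $(L^p)_t=[A_n,L^p]$, namely
\begin{equation*}
(B_n)_\tau=[A_m,B_n],\qquad (B_m)_t=[A_n,B_m].
\end{equation*}
Unlike the scalar case these do not follow from the Leibniz rule alone, because $B_n$ is defined through the dressing $T$. The idea is to differentiate $T^{-1}LT=L_0$ with respect to $\tau$ and use $L_\tau=[A_m,L]$ to deduce that $Q:=T_\tau-A_m T$ satisfies $[L_0,T^{-1}Q]=(L_0)_\tau$. Because $L_0$ is diagonal (in $\Lambda$) while $[L_0,\cdot]$ sends off-diagonal series to off-diagonal series and kills diagonal ones up to $D$-action, matching diagonal/off-diagonal parts forces $(L_0)_\tau=0$ (so the $\rho_k$ are conserved) and $T^{-1}Q=R_m$ with $R_m$ a diagonal series commuting with $\Lambda$. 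Hence $T_\tau=A_m T+T R_m$, and a direct computation using $\Lambda^2=\IdentityMatrix$ and $[R_m,\Lambda]=0$ gives $(B_n)_\tau=[A_m,B_n]+T[R_m,\Lambda\lambda^n]T^{-1}=[A_m,B_n]$.

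Applying $(\cdot)_+$ to these identities yields $(A_n)_\tau=([A_m,B_n])_+$ and $(A_m)_t=([A_n,B_m])_+$. At this point only a pure projection-algebra identity remains. Note that $[B_m,B_n]=0$ because $B_mB_n=T\Lambda\lambda^m T^{-1}T\Lambda\lambda^n T^{-1}=T\Lambda^2\lambda^{m+n}T^{-1}$ is symmetric in $m,n$. Splitting $B_k=A_k+(B_k)_-$ in the identity $[B_m,B_n]=0$ and taking the $(\cdot)_+$ part gives
\begin{equation*}
[A_m,A_n]+\bigl([A_m,(B_n)_-]\bigr)_++\bigl([(B_m)_-,A_n]\bigr)_+=0,
\end{equation*}
since $[(B_m)_-,(B_n)_-]$ has only negative powers of $\lambda$ and drops out.

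Finally, combine everything:
\begin{equation*}
(A_n)_\tau-(A_m)_t=\bigl([A_m,B_n]\bigr)_+-\bigl([A_n,B_m]\bigr)_+=2[A_m,A_n]+\bigl([A_m,(B_n)_-]\bigr)_+-\bigl([A_n,(B_m)_-]\bigr)_+.
\end{equation*}
The previous display rewrites the last two projection terms as $-[A_m,A_n]$, so $(A_n)_\tau-(A_m)_t=[A_m,A_n]=-[A_n,A_m]$, which is exactly the zero-curvature identity. Taking the commutator with $L$ and invoking the Jacobi identity then yields $(L_t)_\tau=(L_\tau)_t$, so the system $u_\tau=f_m$, $v_\tau=g_m$ is a higher symmetry of $u_t=f_n$, $v_t=g_n$.
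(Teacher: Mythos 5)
Your proof is correct and follows exactly the route the paper intends: it repeats the argument of Theorem \ref{Hier}, reducing compatibility via the Jacobi identity to the zero-curvature relation $(A_n)_\tau-(A_m)_t+[A_n,A_m]=0$, and supplying the one genuinely new ingredient, $(B_n)_\tau=[A_m,B_n]$, through the structure of the diagonalizing series $T$. The only blemish is the claim $(L_0)_\tau=0$: matching diagonal parts actually gives $(L_0)_\tau=-\partial_x R_m$ (a total $x$-derivative, so the $\rho_k$ are conserved \emph{densities}, not constants), but since your computation only uses $T_\tau=A_mT+TR_m$ with $R_m$ diagonal and $[R_m,\Lambda]=0$, this does not affect the argument.
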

The proof is similar to the proof of Theorem \ref{Hier} for the KdV hierarchy. 
\begin{exercise} Prove the proposition.
\end{exercise}

The $A$-operator for the NLS equation is given by  \eqref{ABn} with $n=2.$ Formulas  \eqref{Bn},  \eqref{ABn} for arbitrary $n$ define the 
{\it NLS hierarchy}.
The next member of the NLS hierarchy  $$u_t=-\frac{1}{4} u_{xxx}+\frac{3}{2} v u u_x, \qquad 
v_t=-\frac{1}{4} v_{xxx}+\frac{3}{2} u v v_x
$$
corresponds to
$$
A_3=A_2 \lambda+\frac{1}{4} \left(%
\begin{array}{cc}
  v u_x-u v_x\, & u_{xx}-2 u^2 v \\
  v_{xx}-2 v^2 u\, & u v_x-v u_x
\end{array}%
\right).
$$
The reduction $v=u$ leads to the modified Korteweg-de Vries equation
$$\quad u_t=-\frac{1}{4} u_{xxx}+\frac{3}{2} u^2 u_x. $$

\subsubsection{Recursion operator for NLS equation}
In this section we follow the paper \cite[Section 3A]{GKS}. 
Since $$B_{n+1}=\lambda\, B_{n},$$ we have
$$
A_{n+1}=(\lambda\, B_{n})_{+}=\lambda\, (B_{n})_{+}+ (\lambda\,
(B_{n})_{-})_{+}. $$
The latter formula shows that
$$
A_{n+1}=\lambda\, A_{n}+R_{n},
$$
where $R_{n}$ does not depend on $\lambda$. Substituting this into
the Lax equation $L_{t_{n+1}}=[A_{n+1},\,L],$ we get
$$
L_{t_{n+1}}=\lambda \, L_{t_{n}}+ [R_{n},L]. 
$$
If $$ R_{n}=\left(
\begin{array}{cc}
  a_n\, & b_n \\
  c_n \, & -a_n
\end{array}%
\right),$$
then we find that
$$
b_{n}={1 \over 2}\,f_{n}, \qquad c_{n}=-{1 \over 2}\,g_{n}, \qquad 
a_{n}={ 1 \over 2}\,D^{-1}\,(v f_{n}+u g_{n}).
$$  
Therefore the recursion operator
$$
\cal{R}\left(
\begin{array}{cc}
  f_n \\
  g_n \,
\end{array}%
\right)=\left(
\begin{array}{cc}
  f_{n+1} \\
  g_{n+1} \,
\end{array}%
\right), \qquad \quad \left(
\begin{array}{cc}
  f_{1} \\
  g_{1} \,
\end{array}%
\right)=\left(
\begin{array}{cc}
  u_x \\
  v_x \,
\end{array}%
\right)$$
is given by
$$
\cal{R}=\left(
\begin{array}{cc}
  -{1 \over 2}\,D+u\, D^{-1}v\, &  u\,D^{-1}u \\
   -v\,D^{-1} v \, &  {1 \over 2}\, D-v\,D^{-1}u 
\end{array}%
\right). $$

The operator ${\cal R}^2$ gives rise to a recursion operator for the
mKdV equation by the reduction $v=u.$

\subsection{Generalizations}

Consider the operator
\begin{equation}\label{LL}
L=D+\lambda a+q(x,t),
\end{equation}
where $q$ and $a$ belong to a Lie algebra ${\cal G}$ and $\lambda$
is the spectral parameter. The constant element $a$  is supposed to
be such that
$$
{\cal G}={\rm Ker} \,(ad_{a}) \oplus {\rm Im} \,(ad_{a}).
$$
\begin{theorem}\label{DiaP} There exist unique series
\begin{eqnarray*}
u&=&u_{-1}\, \lambda^{-1}+ u_{-2}\, \lambda^{-2}+\cdots , \qquad \qquad 
u_{i} \in \,{\rm Im} \,(ad_{a}), \label{la2}  \\[3mm]
h&=&h_{0}+h_{-1}\, \lambda^{-1}+ h_{-2}\, \lambda^{-2}+\cdots , \qquad 
h_{i} \in \,{\rm Ker} \,(ad_{a}),
\end{eqnarray*}
such that
$$
e^{ad_{u}}\,(L) \stackrel{def}{=}  L+[u,L]+{1 \over 2}\,[u,[u,L]]+\cdots
=D_{x}+a\lambda+h.
$$
\end{theorem}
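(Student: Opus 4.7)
The plan is to construct $u$ and $h$ term by term by expanding $e^{\mathrm{ad}_u}(L)$ in powers of $\lambda$ and solving order by order, using the direct sum decomposition $\mathcal{G} = \ker(\mathrm{ad}_a) \oplus \mathrm{Im}(\mathrm{ad}_a)$ to split each equation into two pieces that separately determine the next $h_{-k}$ and the next $u_{-k-1}$. Since $\mathrm{ad}_a$ restricted to $\mathrm{Im}(\mathrm{ad}_a)$ has trivial kernel (the intersection with $\ker(\mathrm{ad}_a)$ is zero by the direct sum) and image equal to $\mathrm{Im}(\mathrm{ad}_a)$, it is a bijection on $\mathrm{Im}(\mathrm{ad}_a)$; I will denote its inverse by $\mathrm{ad}_a^{-1}$.

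First I would compute the expansion. Writing $L = D + \lambda a + q$ and using $[u,D] = -D(u)$ together with $[u,\lambda a] = -\lambda [a,u]$, the coefficient of $\lambda^{1}$ on the left of $e^{\mathrm{ad}_u}(L) = D + a\lambda + h$ is just $a$, matching the right side automatically. At the coefficient of $\lambda^0$, only $q$ and $[u_{-1},a]$ contribute (higher nested commutators are of order $\lambda^{-1}$ or lower since $u = O(\lambda^{-1})$), yielding $[a,u_{-1}] = q - h_0$. Decompose $q = q_K + q_I$ according to $\mathcal{G} = \ker(\mathrm{ad}_a) \oplus \mathrm{Im}(\mathrm{ad}_a)$. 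Since $[a,u_{-1}] \in \mathrm{Im}(\mathrm{ad}_a)$ for any $u_{-1}$, the $\ker$-component forces $h_0 = q_K$, and then $u_{-1} = \mathrm{ad}_a^{-1}(q_I) \in \mathrm{Im}(\mathrm{ad}_a)$ is the unique solution. This exhibits the mechanism: the decomposition forces a unique split of each equation.

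For the inductive step I would look at the coefficient of $\lambda^{-k}$, $k \geq 1$. All contributions involving $u_{-k-1}$ come only from the single commutator $[u, \lambda a]$ (since higher nested commutators $[u,[u,\cdots]]$ push each additional $u$-factor to a more negative order), and this contribution is exactly $[u_{-k-1}, a] = -[a, u_{-k-1}]$. All remaining terms — coming from $D(u_{-j})$, from $[u_{-j}, q]$, and from iterated commutators — depend only on $u_{-1},\ldots,u_{-k}$ and $q$ and their $x$-derivatives, so they form a known element $F_k \in \mathcal{G}$. The recursion is thus $-[a,u_{-k-1}] + F_k = h_{-k}$, and projecting onto the two summands of $\mathcal{G}$ gives $h_{-k} = (F_k)_K \in \ker(\mathrm{ad}_a)$ and $u_{-k-1} = \mathrm{ad}_a^{-1}\bigl((F_k)_I\bigr) \in \mathrm{Im}(\mathrm{ad}_a)$, both uniquely determined. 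Induction on $k$ then delivers existence and uniqueness of the series simultaneously.

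The only delicate point, and the place I would spend the most care, is verifying the claim that no new $u_{-k-1}$ hides in the higher nested commutators at order $\lambda^{-k}$: one needs a clean bookkeeping argument that if $u = \sum_{j \geq 1} u_{-j}\lambda^{-j}$ then every iterated bracket $[u,[u,\cdots[u,L]\cdots]]$ of depth $r \geq 2$ has order at most $\lambda^{1-r}$, so that its $\lambda^{-k}$-coefficient depends only on $u_{-1},\ldots,u_{-k}$. Granting that, the whole argument is a direct recursive solve.
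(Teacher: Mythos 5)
Your proposal is correct and is essentially the paper's own argument: the paper proves the $\mathfrak{sl}_2$ special case (the NLS diagonalization theorem) by exactly this order-by-order recursion in $\lambda$, where at each step the relation $[a,\,\cdot\,]-h_{-k}=(\hbox{known})$ is split via ${\cal G}={\rm Ker}\,(ad_a)\oplus{\rm Im}\,(ad_a)$ to determine $h_{-k}$ and $u_{-k-1}$ uniquely, and states Theorem \ref{DiaP} as the direct generalization. Your careful remark that nested brackets of depth $r\ge 2$ contribute only previously determined coefficients at order $\lambda^{-k}$ is the right point to check and is handled correctly.
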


Let $b$ be a constant element of ${\cal G}$ such that $$\qquad [b,\,
{\rm Ker}\,(ad_{a})]=\{0\}.\qquad $$
  Since $$[b\, \lambda^{n}, D_{x}+a \lambda+h]=0,$$ we have
$[B_{b,n},\, L]=0$, where
$$
B_{b,n}= e^{-ad_{u}}\,(b\, \lambda^{n}). $$ Then the corresponding
$A$-operator of the form
$$
A_{b,n}=b\, \lambda^{n}+a_{n-1}\, \lambda^{n-1}+\cdots+a_{0}
$$
is defined by the formula
\begin{equation}\label{Bnn}
A_{b,n}=(B_{b,n})_{+}. 
\end{equation}

For the Lie algebra ${\cal G}=\mathfrak{sl}_2$ and $a={\rm diag}\,(1,-1)$ we get the NLS hierarchy. 

\begin{example} Let ${\cal G}=\mathfrak{gl}_m$, the $L$-operator has the form \eqref{LL}, where
$$
a=\left(\begin{array}{cc}
  {\bf 1}_{m-1} & 0 \\
  0 & -1 \\
\end{array}\right),  \qquad 
q=\left(\begin{array}{cc}
  0 & {\bf u} \\
  {\bf v}^t & 0 \\
\end{array}\right).
$$
Here ${\bf u}$ and ${\bf v}$ are column vectors. In this case 
$$
{\rm Ker} \,(ad_{a})=\Big\{\left(\begin{array}{cc}
  {\bf S} & 0 \\
  0 & s \\
\end{array}\right)  \Big\}, \qquad  {\rm Im} \,(ad_{a})=\Big\{\left(\begin{array}{cc}
  0 & {\bf u}_1 \\
  {\bf u}_2^t & 0 \\
\end{array}\right) \Big\},
$$
where ${\bf S}$ is an $(m-1)\times (m-1)$-matrix, $s$ is a scalar and ${\bf u}_i$ are column vectors.
Following the above diagonalization procedure, we find that the coefficients of the operator
$$
A_{a,2} = a \lambda^2 + s_1 \lambda + s_2
$$
are given by 
$$
s_1=\left(%
\begin{array}{cc}
  0 & {\bf u}  \\
  {\bf v}^t & 0
\end{array}%
\right), \qquad s_2=
\frac{1}{2} \left(%
\begin{array}{cc}
  -{\bf u} {\bf v}^t\, & -{\bf u}_x \\
 {\bf v}_x^t\, & \langle {\bf u}, \bf{v} \rangle
\end{array}%
\right).
$$
If $m=2$ the Lax pair coincides with the Lax pair for the NLS equation from Subsection 2.2.1. 
The corresponding non-linear integrable system is (up to scalings of $t,{\bf u}, {\bf v}$) the vector NLS equation \cite{Man}
\be \begin{array}{l}
{\bf u}_t={\bf u}_{xx}+2\langle {\bf u},{\bf v}\rangle \, {\bf u}, \qquad 
{\bf v}_t=-{\bf v}_{xx}-2\langle {\bf v},{\bf u}\rangle \,  {\bf v}.                    \label{ex1}
\end{array} \ee
\end{example}

\begin{example} Let  ${\cal G}=\mathfrak{gl}_m$, $a={\rm diag}\,(a_1,\dots,a_m)$,  $b={\rm diag}\,(b_1,\dots,b_m)$, where $a_i\ne a_j$ for $i\ne j$.  The equation corresponding to $A_1$ given by \eqref{Bnn} is called $m$-{\it wave equation}. It has the form
\begin{equation}\label{Nwave}
{\bf Q}_t={\bf P}_x+[{\bf Q},\,{\bf P}], 
\end{equation}
where ${\bf Q}$ and ${\bf P}$ are $m\times m$-matrices whose entries are related by 
$$
p_{ij}=\frac{b_i-b_j}{a_j-a_i}\, q_{ij}.
$$
\end{example}

Solutions of \eqref{Nwave} that do not depend on $x$ describe the dynamics of an $m$-dimensional rigid body \cite{man}.

\subsubsection{Relations between scalar to matrix Lax pairs}

The Gelfand-Dikii hierarchy (see Subsection 2.1.3) is defined by a scalar linear differential operator of order $n.$
Of course, it is not difficult to replace this operator by a matrix first order differential operator of the form 
\begin{equation}\label{calL}
{\cal L}=D+\Lambda +q,
\end{equation}
where
\begin{equation}\label{Lq}
\Lambda=\left(%
\begin{array}{cccc}
  0 & 0 & \cdots & \lambda\\
  1 & 0 & \cdots & 0\\
  \cdot & \cdot & \cdots & \cdot\\
    0 & \cdots & 1 & 0
\end{array}%
\right), \qquad 
q=\left(%
\begin{array}{cccc}
  0 & 0 & \cdots & u_1\\
  0 & 0 & \cdots & u_2\\
  \cdot & \cdot & \cdots & \cdot\\
   0 & 0 & \cdots & u_n
\end{array}%
\right).
\end{equation}
Let us consider operators  \eqref{calL}, where $q$ is an arbitrary upper-diagonal matrix. Any gauge transformation $\bar {\cal L}=N {\cal L} N^{-1},$ where $N$ is a function with values in the group of upper triangular matrices with ones on the diagonal, preserves the class of such operators. It turns out that the matrix $q$ defined by  \eqref{Lq} is one of the possible canonical forms with respect to this gauge action. 

The approach \cite[Section 6]{drsok} based on this observation allows one to construct an analog of the Gelfand-Dikii hierarchy for any Kac-Moody algebra $G$.  

Let $e_i, f_i, h_i,$ where $i=0,\dots,r,$ be the canonical generators of a Kac-Moody algebra $G$ with the commutator relations
$$
[h_i, \,h_j]=0, \qquad [e_i,\, f_j]=\delta_{ij}\,h_i, \qquad 
[h_i,\, e_j]=A_{ij}\, e_j, \qquad [h_i,\, f_j]=-A_{ij}\, f_j,
$$
where $A$ is the Cartan matrix of the algebra $G.$ 

Let us take $\sum_{i=0}^{r} e_i$ for the element $\Lambda$ in  \eqref{calL}. The potential $q$ depends on a choice of a vertex $c_m$ for the Dynkin diagram of $G.$ We consider the gradation $G=\oplus G_i$ such that $e_m\in G_1,$ $f_m\in G_{-1}$ and the remaining canonical generators belong to $G_0.$ It is well-known that ${\cal G}=G_0$ is a semi-simple finite-dimensional Lie algebra. The potential $q$ is a generic element of the Borel subalgebra ${\cal B}\subset {\cal G}$ generated by $f_i, h_i$, where $i \ne m$. 

If ${\cal L}$ is an operator of the form  \eqref{calL} and $S$ belongs to the corresponding nilpotent subalgebra ${\cal N} \subset  {\cal B}$, then the operator 
$$
\bar {\cal L}=e^{ad\,S} ({\cal L})
$$
has the same form  \eqref{calL} (with different $q$). This follows from the fact that $\,\,[{\cal N}, {\cal B}] \subset {\cal N},\,\,$ $[{\cal N}, e_m]=\{0\}$, \,\, $[{\cal N}, e_i]\subset {\cal B}$.

Any canonical form under these gauge transformations gives rise to a system of $r$ evolution equations. The systems corresponding to different canonical forms are related by invertible polynomial transformations of unknown functions. 

Moreover, any ${\cal L}$-operator  \eqref{calL} generates a commutative hierarchy
of integrable systems. The corresponding $A$-operators can be constructed by a formal diagonalization procedure, which generalizes the construction from Theorem \ref{DiaP}. 

It was proved in \cite{drsok} that the systems related to $L$-operators of the form  \eqref{calL} include the systems from Theorem \ref{scala}. 

For further generalizations see \cite{GHM,GolSok39,kac}.

\section{Decomposition of loop algeras and  Lax pairs}

In all classes of Lax representations described above, $L$-operators are polynomials in the spectral parameter $\lambda$. However, there exist important examples, where $\lambda$ is a parameter on an elliptic curve or on its degenerations \cite{skl, ZaMik}.

An algebraic curve of genus $g > 1$ appears in the following
\begin{example} \cite{golsok} Consider the vector equation 
\begin{equation}\label{lanlif}
{\bf u}_t=\Big({\bf u}_{xx}+{3\over 2} \langle{\bf u}_x, \ {\bf u}_x\rangle\, {\bf u} \Big)_x + 
{3\over 2} \langle{\bf u}, \, {\bf R}\, {\bf u}\rangle\, {\bf u}_x, \qquad \vert {\bf u}\vert=1,  
\end{equation}
where ${\bf u}=(u^1,\dots, u^N)$, ${\bf R}={\rm diag}\,(r_1,\dots, r_N),$  and $\langle\cdot, \ \cdot\rangle$ is the standard scalar product. In the case $N=3$ this equation is a higher symmetry of the famous integrable Landau-Lifshitz equation.
\begin{equation}\label{ll}
{\bf u}_t= {\bf u} \times  {\bf u}_{xx}+ {\bf R} \,{\bf u} \times {\bf u}, \qquad \vert {\bf u}\vert=1. 
\end{equation} 
Here $\times$ stands for the cross product.  It is interesting that for $N\ne 3$ all symmetries of equation  \eqref{lanlif}  have odd orders. In particular, the equation has no symmetry of order 2. 

Equation  \eqref{lanlif} possesses a Lax representation with 
\begin{equation}\label{LAlif}
L=D + \left(\begin{array}{cc}
0\,&\Lambda {\bf u} \\ 
{\bf u}^T \Lambda\,&0  
\end{array} \right), \quad 
\end{equation}
Here   $$\Lambda={\rm diag}(\lambda_{1},\lambda_{2},\cdots,\lambda_{N})$$ is a matrix defined by
$$
\Lambda^2=\frac{{\bf 1}}{\lambda^2} - {\bf R}.
$$
It is clear that  
$$
\lambda_1^2+r_1=\lambda_2^2+r_2=\cdots=\lambda_N^2+r_N.
$$
For generic $\lambda_i, r_i$ this algebraic curve has genus $g=1+(N-3)\, 2^{N-2}.$
 In the case $N=3$ such a form of the elliptic spectral curve has been used in \cite{skl,FadTah}. 
\end{example}

A class of Lax operators related to algebraic curves of genus $g > 1$ was introduced in \cite{Kri, Kri1}.

\subsubsection{Factoring subalgebras}

If we don't want to fix {\it a priori} the $\lambda$-dependence in Lax operators, we may assume that $L$ is a Laurent series in $\lambda$ with coefficients being elements of a finite-dimensional Lie algebra  ${\cal G}$.

The Lie algebra   ${\cal G}((\lambda))$ of formal series of the form
$$
{\cal G}((\lambda)) = \Big\{ \sum^{\infty}_{i=-n} g_{i} \lambda^{i}\quad
\vert \quad g_{i}\in {\cal G}, \quad n\in {\mathbb Z} \Big\}
$$
is called the (extended) {\it loop algebra} over  ${\cal G}$.

If ${\cal G}$ is semi-simple, then the formula 
\begin{equation} \label{forma}
\langle X(\lambda),\, Y(\lambda)\rangle = {\rm res}
\Big(X(\lambda),\,Y(\lambda)\Big), \qquad X(\lambda),Y(\lambda)
\in {\cal G}((\lambda))
\end{equation}
defines an invariant non-degenerate bi-linear form on  ${\cal G}((\lambda))$. 
Here $(\cdot,\cdot)$ is the non-degenerate invariant Killing form on  ${\cal G}$,  $\,{\rm res}\,P$ stands for the coefficient of $\lambda^{-1}$
in a (scalar) Laurent series $P$. The invariance of the form means that
$$
\langle[a,\,b],\, c\rangle= -\langle b,\, [a,\,c]\rangle
$$
for any $a,b,c \in {\cal G}((\lambda)).$

If we assume that $L$ and $A$ in Lax equation ({\ref{Lax}}) are elements of  ${\cal G}((\lambda))$, then  ({\ref{Lax}}) is equivalent to an infinite set of evolution equations. To get a finite system of PDEs we need some additional assumptions on the structure of $L$ and $A$. 

The basic ingredient for constructing of  Lax pairs in ${\cal G}((\lambda))$ is a vector space decomposition  (see \cite{chered,skryp})
\begin{equation}\label{decompgen}
{\cal G}((\lambda))={\cal G}[[\lambda]]\oplus {\cal U},
\end{equation}
 where ${\cal G}[[\lambda]]$ is the subalgebra of all Taylor series and ${\cal U}$ is a so called {\it factoring}, or {\it complementary}, Lie subalgebra. Obviously, the subalgebra ${\cal G}[[\lambda]]$ is isotropic with respect to the form  \eqref{forma}. 
 
 Let us denote by $\pi_{+}$ and $\pi_{-}$ the projection operators onto $\cal U$ and  ${\cal G}[[\lambda]],$ respectively.

The following statement is evident: 
 \begin{lemma}\label{prope} Let  ${\cal U}$ be a factoring subalgera. Than
   for any principle part $P=\sum_{i=-n}^{-1} g_{i}\lambda^{i},$ where $g_{i}\in {\cal G}$, there exists  a unique element $\bar P\in {\cal U}$ of the form $\bar P=P+O(1).$
 \end{lemma}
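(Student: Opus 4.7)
The plan is to extract $\bar P$ directly from the direct sum decomposition \eqref{decompgen}, which has already done all the work for us.

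First I would observe that the given principal part $P = \sum_{i=-n}^{-1} g_i \lambda^i$ is a legitimate element of $\mathcal{G}((\lambda))$, so the decomposition \eqref{decompgen} applies to it. Using the projection operators introduced just before the lemma, write
$$
P = \pi_{-}(P) + \pi_{+}(P), \qquad \pi_{-}(P)\in\mathcal{G}[[\lambda]],\quad \pi_{+}(P)\in \mathcal{U}.
$$
I would then set $\bar P := \pi_{+}(P)$. By construction $\bar P\in\mathcal{U}$, and the identity $\bar P - P = -\pi_{-}(P)\in\mathcal{G}[[\lambda]]$ means precisely that the negative-power part of $\bar P$ coincides with that of $P$, while the difference lies in $\mathcal{G}[[\lambda]]$; i.e.\ $\bar P = P + O(1)$.

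For uniqueness, I would take two candidates $\bar P_1,\bar P_2\in\mathcal{U}$ with $\bar P_i = P + O(1)$ and consider $\bar P_1 - \bar P_2$. This difference lies in $\mathcal{U}$ (a subalgebra, in particular a subspace), and it also lies in $\mathcal{G}[[\lambda]]$ because $(\bar P_1 - P) - (\bar P_2 - P)$ is a sum of two Taylor series. The directness of the sum in \eqref{decompgen} forces $\mathcal{U}\cap\mathcal{G}[[\lambda]]=\{0\}$, hence $\bar P_1 = \bar P_2$.

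There is essentially no obstacle: the content of the lemma is just a reformulation of the fact that \eqref{decompgen} is a direct sum decomposition. The only (mild) point to check is that the phrase "$P + O(1)$" is correctly interpreted as "differing from $P$ by an element of $\mathcal{G}[[\lambda]]$," which matches exactly the image of $\pi_{-}$.
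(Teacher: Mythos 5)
Your proof is correct and is exactly the argument the paper has in mind: the paper simply declares the lemma ``evident'' because it is an immediate reformulation of the directness of the decomposition \eqref{decompgen}, which is precisely what you spell out (existence via $\bar P=\pi_{+}(P)$, uniqueness via ${\cal U}\cap{\cal G}[[\lambda]]=\{0\}$).
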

 \begin{example} The simplest factoring subalgebra is given by
 \begin{equation}\label{polyn}
{\cal U}^{st}=\Big\{ \sum^{n}_{i=1} g_{i} \lambda^{-i}\quad \vert
\quad g_{i}\in {\cal G}, \quad n \in {\mathbb N} \Big\}.
\end{equation}
This subalgebra is called {\it standard}.
 
 \end{example}
 
 Two factoring subalgebras are called {\it equivalent} if they are related by a transformation of the parameter $\lambda$ of the form
 \begin{equation}\label{replam}
\lambda \rightarrow \lambda +k_{2} \lambda^{2}+k_{3}
\lambda^{3}+\cdots , \qquad k_i\in \C,
\end{equation}
or by an automorphism of the form 
\begin{equation}\label{avt}
\exp{(\hbox{ad}_{g_{1} \lambda+g_{2} \lambda^{2}+\cdots})}, \qquad
g_{i}\in
 {\cal G}.
\end{equation}
It is clear the  \eqref{replam} and  \eqref{avt} preserve the subalgebra ${\cal G}[[\lambda]]$.  

Suppose that $r$-dimensional Lie algebra ${\cal G}$ is semi-simple. Let ${\bf e}_1,\dots,{\bf e}_r$ be a basis in  ${\cal G}$. 
According  Lemma \ref{prope} for any $i$ there exists a unique element ${\bf E}_i \in {\cal U}$ such that 
\begin{equation}\label{ei}
{\bf E}_i=\frac{{\bf e}_i}{\lambda}+O(1).
\end{equation}
\begin{proposition}
The elements ${\bf E}_i$ generate ${\cal U}$.
\end{proposition}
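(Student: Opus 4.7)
Write $\mathcal{S}$ for the Lie subalgebra of $\mathcal{G}((\lambda))$ generated by $\mathbf{E}_1,\dots,\mathbf{E}_r$. Since each $\mathbf{E}_i\in\mathcal{U}$ and $\mathcal{U}$ is a subalgebra, $\mathcal{S}\subseteq\mathcal{U}$ is automatic, so the task is to prove the reverse inclusion. By Lemma \ref{prope}, every element of $\mathcal{U}$ has the form $\bar P$, where $P=\sum_{i=-n}^{-1}g_i\lambda^i$ is its (finite) principal part; moreover uniqueness forces the map $P\mapsto\bar P$ to be linear. Hence it is enough to prove that $\overline{g\lambda^{-n}}\in\mathcal{S}$ for every $n\ge 1$ and every $g\in\mathcal{G}$.

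My plan is to do this by induction on $n$. The base case $n=1$ is immediate: by definition $\mathbf{E}_i=\overline{\mathbf{e}_i\lambda^{-1}}$, and linearity of the bar operation gives $\overline{g\lambda^{-1}}=\sum_i c_i\mathbf{E}_i\in\mathcal{S}$ whenever $g=\sum c_i\mathbf{e}_i$. For the inductive step, assume the claim holds for all orders strictly less than $n$. Since $\mathcal{G}$ is semi-simple, $[\mathcal{G},\mathcal{G}]=\mathcal{G}$, so I may write $g=\sum_k[a_k,b_k]$ with $a_k,b_k\in\mathcal{G}$. Then the element
\[
X\;:=\;\sum_k\bigl[\,\overline{a_k\lambda^{-1}},\ \overline{b_k\lambda^{-(n-1)}}\,\bigr]\ \in\ \mathcal{S}
\]
lies in $\mathcal{U}$ (subalgebra property), and a direct expansion using $\overline{a_k\lambda^{-1}}=a_k\lambda^{-1}+T_k$, $\overline{b_k\lambda^{-(n-1)}}=b_k\lambda^{-(n-1)}+T_k'$ with $T_k,T_k'\in\mathcal{G}[[\lambda]]$ shows that
\[
X\;=\;g\,\lambda^{-n}\;+\;R\;+\;(\text{element of }\mathcal{G}[[\lambda]]),
\]
where $R$ is a Laurent polynomial in $\lambda^{-1}$ whose orders lie strictly between $-n$ and $0$. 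The principal part of $X$ is therefore $g\lambda^{-n}+R$, and by the uniqueness clause of Lemma \ref{prope} we must have $X=\overline{g\lambda^{-n}}+\bar R$. Since $\bar R$ is a linear combination of terms $\overline{h\lambda^{-m}}$ with $m<n$, the inductive hypothesis gives $\bar R\in\mathcal{S}$, and hence $\overline{g\lambda^{-n}}=X-\bar R\in\mathcal{S}$, closing the induction.

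The only delicate point is the bookkeeping in the expansion of the commutator $X$: one must be sure that the cross terms between principal and Taylor parts produce a Laurent polynomial $R$ whose negative powers are strictly less severe than $\lambda^{-n}$, so that the inductive hypothesis applies. This is automatic because at most one principal part enters each cross product, and the orders add as $-1+0$ or $0+(-(n-1))$, both of which exceed $-n$. The use of semi-simplicity of $\mathcal{G}$ is essential and enters exactly at the step $g\in[\mathcal{G},\mathcal{G}]$; without it the argument would only generate the derived subalgebra of the principal-part data, and the proposition could fail. Everything else is formal manipulation of Laurent series in the loop algebra.
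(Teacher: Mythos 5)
Your argument is correct and is essentially the paper's own proof: induction on the pole order, with the step supplied by $[{\cal G},{\cal G}]={\cal G}$ for semi-simple ${\cal G}$ and the uniqueness clause of Lemma \ref{prope} used to identify the commutator with $\overline{g\lambda^{-n}}$ modulo lower-order terms. You have merely made explicit the bookkeeping of the cross terms that the paper leaves to the reader.
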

\begin{proof}
We have to show that for any $i,k$ an element ${\bf E}_{ik}$ of the form 
$${\bf E}_{ik}=\frac{{\bf e}_i}{\lambda^k}+O(-k+1)$$
can be obtained as a commutator of length $k$ of the elements ${\bf E}_j$, where $j=1,\dots,r.$ The proof is by induction on $k$. The induction step follows from the well-known property of the semi-simple Lie algebras: 
$[{\cal G},\,{\cal G}]={\cal G}.$
\end{proof}

If we take generic elements of the form  \eqref{ei}, the Lie subalgebra they generate will contain Taylor series. All of them should be equal to zero. This imposes  strong restrictions on generators  \eqref{ei}.

\begin{lemma}\label{lem255} A subalgebra ${\cal U}$ is factoring iff for any $k$ the dimension $d_k$ of the vector space $V_k$ of all elements from ${\cal U}$ of the form  
$$ \sum^{\infty}_{i=-n} g_{i} \lambda^{i}, \qquad  n \le k$$
 is the same as for the standard subalgebra \eqref{polyn}.
\end{lemma}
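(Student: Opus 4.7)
The plan is to reduce the statement to a dimension count via the ``take the principal part'' map. Set $r=\dim{\cal G}$ and let $F_k=\{\sum_{i=-k}^{-1} g_i \lambda^i : g_i\in{\cal G}\}$ be the space of principal parts of pole order at most $k$; then $\dim F_k = kr$, and the standard subalgebra satisfies $V_k^{st}=F_k$, so the claim amounts to showing $\dim V_k = kr$ for every $k$. The key object throughout will be the linear map
\[
\phi_k : V_k \longrightarrow F_k
\]
sending $u\in{\cal U}$ of pole order at most $k$ to its principal part. Since $\mathcal{G}[[\lambda]]$ is exactly the kernel of ``principal part,'' one has $\ker\phi_k = V_k\cap{\cal G}[[\lambda]]$.

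For the forward direction I would assume ${\cal G}((\lambda))={\cal G}[[\lambda]]\oplus{\cal U}$. Then $\ker\phi_k\subseteq{\cal U}\cap{\cal G}[[\lambda]]=\{0\}$, so $\phi_k$ is injective; surjectivity is immediate either from Lemma \ref{prope} or by decomposing any $P\in F_k\subset{\cal G}((\lambda))$ as $P=t+u$ with $t\in{\cal G}[[\lambda]]$ and $u\in{\cal U}$, which forces $u=P-t\in V_k$ and $\phi_k(u)=P$. Hence $\phi_k$ is an isomorphism and $d_k=\dim F_k=kr$.

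For the reverse direction, the case $k=0$ of the hypothesis reads $d_0=0$, i.e.\ $V_0={\cal U}\cap{\cal G}[[\lambda]]=\{0\}$. This one identity simultaneously secures the directness of ${\cal G}[[\lambda]]+{\cal U}$ and forces $\ker\phi_k = V_k\cap{\cal G}[[\lambda]]=\{0\}$ for every $k$, so $\phi_k$ is injective; the dimension hypothesis then upgrades $\phi_k$ to an isomorphism. To conclude that ${\cal G}[[\lambda]]+{\cal U}={\cal G}((\lambda))$, take an arbitrary $S\in{\cal G}((\lambda))$ of pole order at most $k$, pick its principal part $P_S\in F_k$, use surjectivity of $\phi_k$ to produce $u\in V_k\subseteq{\cal U}$ with $\phi_k(u)=P_S$, and note that $S-u\in{\cal G}[[\lambda]]$; then $S=(S-u)+u$ is the required decomposition.

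The argument is essentially mechanical once $\phi_k$ is in place. The one subtlety to watch is that in the reverse direction Lemma \ref{prope} cannot be invoked, since it was derived assuming the factoring property; the lift of a principal part to an element of ${\cal U}$ must instead be extracted from the dimension count itself. Everything else is bookkeeping between the filtration by pole order and the definition of $V_k$.
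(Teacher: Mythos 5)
Your proof is correct and rests on the same idea as the paper's: the paper exhibits the elements ${\bf F}_{ij}=\frac{{\bf e}_i}{\lambda^j}+O(1)$ supplied by Lemma \ref{prope} and observes that they form a basis of $V_k$, which is precisely your statement that the principal-part map $\phi_k$ is an isomorphism onto $F_k$. The only difference is completeness: the paper's one-line argument really only covers the forward implication, whereas you also carry out the converse, correctly noting that the $k=0$ case of the hypothesis yields both directness and injectivity of every $\phi_k$, and that the surjectivity needed to decompose an arbitrary Laurent series must be extracted from the dimension count rather than from Lemma \ref{prope}.
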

\begin{proof} Let ${\bf F}_{ij},$ where $i\le r,\, j\le k,$ be elements of ${\cal U}$ such that 
$$
{\bf F}_{ij}=\frac{{\bf e}_i}{\lambda^j}+O(1).
$$
It is clear that  ${\bf F}_{ij}$ form a basis of $V_k.$ 
\end{proof}

\begin{conjecture} Let $\cal G$ be a simple Lie algebra not isomorphic to $\mathfrak{sl}_2$. Then elements ${\bf E}_i \in {\cal G}((\lambda))$ of the form \eqref{ei} generate a factoring subalgebra iff the dimension of the vector space spanned by $[{\bf E}_{j},\, {\bf E}_k]$ and ${\bf E}_i$ is equal to $2\, {\rm dim}\,{\cal G}.$

\end{conjecture}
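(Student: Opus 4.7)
The approach is to reformulate the factoring property via the projection $\pi\colon {\cal G}((\lambda)) \to \bigoplus_{i\ge 1}\lambda^{-i}{\cal G}$ onto principal parts, whose kernel is ${\cal G}[[\lambda]]$. Because ${\cal G}$ is simple and hence perfect, an easy induction on $k$ shows that the leading $\lambda^{-k}$-coefficients of length-$k$ nested brackets of the ${\bf E}_i$ span all of ${\cal G}$, so that $\pi({\cal U}) = \bigoplus_{i\ge 1}\lambda^{-i}{\cal G}$ automatically. By Lemma \ref{lem255}, ${\cal U}$ is therefore factoring precisely when $\pi|_{\cal U}$ is injective, equivalently when $V_k \cap {\cal G}[[\lambda]] = 0$ for every $k$.

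The ``only if'' direction follows at once: if ${\cal U}$ is factoring then $\dim V_2 = 2\dim {\cal G}$. The subspace $W := {\rm span}({\bf E}_i, [{\bf E}_j, {\bf E}_k])$ lies in $V_2$, and $\pi(W)$ contains both $\lambda^{-1}{\bf e}_i$ and $\lambda^{-2}[{\bf e}_j,{\bf e}_k]$, which together span $\lambda^{-1}{\cal G}\oplus \lambda^{-2}{\cal G}$ by simplicity. Hence $\dim \pi(W) = 2\dim {\cal G}$, and injectivity of $\pi|_{V_2}$ forces $W = V_2$ and $\dim W = 2\dim {\cal G}$.

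For the ``if'' direction, the same projection argument shows that the hypothesis $\dim W = 2\dim{\cal G}$ is equivalent to $W\cap {\cal G}[[\lambda]] = 0$: no nontrivial linear combination of the ${\bf E}_i$ and $[{\bf E}_j, {\bf E}_k]$ is a Taylor series. The plan is then induction on $k$ to establish $V_k\cap {\cal G}[[\lambda]] = 0$ for all $k$. Assuming the statement for pole orders up to $k$, one takes a putative Taylor series $z \in V_{k+1}\cap {\cal G}[[\lambda]]$, expands it as a $\C$-linear combination of iterated brackets of length $k+1$, and uses the Jacobi identity to rewrite it as a sum $\sum_j [{\bf E}_j, y_j]$ with $y_j \in V_k$. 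The inductive hypothesis controls the $y_j$, and the remaining task is to reduce $z$ step by step to a linear combination lying inside $W$, where the hypothesis forces it to vanish.

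The heart of the difficulty --- and presumably the reason this is posed as a conjecture --- is precisely that reduction: one must track how Jacobi relations intertwine brackets of different pole orders and check that no new obstruction to being Taylor can appear at higher levels. This is where the exclusion of $\mathfrak{sl}_2$ should enter. For $\cal G$ of rank at least $2$ the weight-space decomposition and the Serre relations provide enough identities among the $[{\bf E}_i,{\bf E}_j]$ to transport cancellations from higher to lower pole orders; in rank one these relations degenerate, and one expects explicit examples in which the degree-$2$ hypothesis holds but the generated subalgebra contains a nontrivial Taylor series. A concrete route toward a proof would be first to normalize the generators by transformations \eqref{replam} and \eqref{avt} so that the constant terms of the ${\bf E}_i$ realize a Chevalley basis, and then exploit the resulting root-space grading to carry out the inductive step level by level.
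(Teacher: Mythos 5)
This statement is posed in the paper as a \emph{conjecture}; the paper contains no proof of it, so there is nothing to compare your argument against except the statement itself. Your write-up correctly identifies the right reformulation (via Lemma \ref{lem255}, the subalgebra generated by the ${\bf E}_i$ is factoring iff $V_k\cap{\cal G}[[\lambda]]=\{0\}$ for all $k$, surjectivity of the principal-part projection being automatic from $[{\cal G},{\cal G}]={\cal G}$), and your ``only if'' direction is sound and essentially complete: for a factoring subalgebra $\dim V_2=2\dim{\cal G}$, the span $W$ of the ${\bf E}_i$ and $[{\bf E}_j,{\bf E}_k]$ projects onto all of $\lambda^{-1}{\cal G}\oplus\lambda^{-2}{\cal G}$, and injectivity of the projection forces $W=V_2$. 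Note that this direction does not use the exclusion of $\mathfrak{sl}_2$ at all, which already signals that the entire content of the conjecture sits in the converse.

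For the ``if'' direction you do not have a proof, and you say so yourself. The inductive step --- taking a Taylor series $z\in V_{k+1}$, writing it as $\sum_j[{\bf E}_j,y_j]$ with $y_j\in V_k$, and ``reducing it step by step to a linear combination lying inside $W$'' --- is not a routine manipulation whose details are omitted; it is precisely the open problem. The inductive hypothesis tells you the $y_j$ have well-controlled principal parts, but it gives you no mechanism for concluding that a commutator $[{\bf E}_j,y_j]$ whose principal part cancels must actually vanish: the degree-$2$ hypothesis constrains only elements of $V_2$, and there is no a priori reason a Taylor-series obstruction at level $k+1$ descends to one at level $2$. Your remarks about Serre relations and a Chevalley-basis normalization are plausible heuristics for why rank $\ge 2$ should help, but they are not an argument, and the claimed $\mathfrak{sl}_2$ counterexample (a subalgebra satisfying the dimension condition at level $2$ yet containing a nonzero Taylor series) is asserted rather than exhibited. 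As it stands, your text proves one implication of an open conjecture and restates the other; it should not be presented as a proof of the statement.
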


When the factoring subalgebra ${\cal U}$ is isotropic with respect to  \eqref{forma}, the description of factoring subalgebras is closely related to a  classification of the Yang-Baxter $r$-matrices \cite{beldr}.  Without this assumption the problem has not been deeply considered yet.   
In Subsection 2.3.1 we solve it for ${\cal G}=\mathfrak{so}_3$ \cite{sokol1}.

\subsubsection{Multiplicands}

\begin{definition}
A (scalar) Laurent series
$$
{\bf m}=\sum_{i=-n}^{\infty} c_{i} \lambda^{i}, \qquad c_{i}\in \C,
$$
is called a {\it   multiplicand} of $\cal U$ if ${\bf m}\,{\cal U}\subset {\cal
U}$. The number $n$ is called {\it the order} of the multiplicand
${\bf m}$. If ${\cal U}$ admits a multiplicand {\bf m} of order $n=1,$ then
${\cal U}$ is called {\it homogeneous}.
\end{definition}

Let ${\cal G}$ be a simple Lie
algebra.
The following construction allows us to associate an algebraic curve with any factoring subalgebra ${\cal U}.$

\begin{theorem} {\rm \cite{ostap}}  For any factoring subalgebra the following
statements are fulfilled:
\begin{itemize}
\item[ i)] multiplicands of negative orders do not exist;
\item[ ii)] the complement of the set of all multiplicand orders with respect to
the set of natural numbers is finite.
\end{itemize}
\end{theorem}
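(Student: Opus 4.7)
I argue by contradiction. Suppose $\mathbf{m}$ is a multiplicand of order $n<0$, so that $\mathbf{m}=c_{-n}\lambda^{-n}+c_{-n+1}\lambda^{-n+1}+\cdots$ is a nonzero Taylor series vanishing to order $-n\ge 1$ at $\lambda=0$. By Lemma~\ref{prope}, the element $\mathbf{E}_1=\mathbf{e}_1/\lambda+O(1)$ lies in $\mathcal{U}$ and has a simple pole at $\lambda=0$. The product $\mathbf{m}\,\mathbf{E}_1$ then lies in $\lambda^{-n-1}\mathcal{G}[[\lambda]]\subset\mathcal{G}[[\lambda]]$, because $-n-1\ge 0$. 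Simultaneously $\mathbf{m}\,\mathbf{E}_1\in\mathcal{U}$ by the multiplicand property. The factoring decomposition $\mathcal{G}((\lambda))=\mathcal{G}[[\lambda]]\oplus\mathcal{U}$ therefore forces $\mathbf{m}\,\mathbf{E}_1=0$. Since $\C((\lambda))$ is a field acting freely on the free $\C((\lambda))$-module $\mathcal{G}((\lambda))$, $\mathbf{m}\ne 0$ implies $\mathbf{E}_1=0$, contradicting Lemma~\ref{prope}.

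\textbf{Part (ii).} Denote by $\mathcal{M}$ the set of multiplicands. First I note that $\mathcal{M}$ is a unital commutative $\C$-subalgebra of $\C((\lambda))$: it contains the constants, is $\C$-linear, and the identity $(\mathbf{m}_1\mathbf{m}_2)\mathcal{U}=\mathbf{m}_1(\mathbf{m}_2\mathcal{U})\subset\mathbf{m}_1\mathcal{U}\subset\mathcal{U}$ shows closure under products. Since the order of a product of nonzero Laurent series is the sum of orders, and by (i) orders of multiplicands are nonnegative, the set $S=\{{\rm ord}(\mathbf{m}):0\ne\mathbf{m}\in\mathcal{M}\}$ is an additive submonoid of $\N\cup\{0\}$ containing $0$. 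By the standard theory of numerical semigroups, $\N\setminus S$ is finite if and only if $\gcd(S)=1$, so (ii) reduces to producing multiplicands of coprime orders.

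For this I would use a dimension count based on Lemma~\ref{lem255}. Given a principal part $f=\sum_{i=-n}^{-1}c_i\lambda^i$ (an $n$-parameter family), seek a Taylor correction $h\in\C[[\lambda]]$ such that $\mathbf{m}:=f+h$ satisfies $\mathbf{m}\,\mathbf{E}_j\in\mathcal{U}$ for every generator $\mathbf{E}_j=\mathbf{e}_j/\lambda+O(1)$ of $\mathcal{U}$ furnished by Lemma~\ref{prope}. Applying the projection $\pi_-$ onto $\mathcal{G}[[\lambda]]$, each such equation becomes a linear condition on the coefficients of $f$ and of $h$ whose failure lives in a quotient of $\mathcal{G}[[\lambda]]$ by the image of the Taylor-correction map. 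Because $\mathbf{E}_j$ has only a simple pole, the Taylor-correction map absorbs all but a fixed finite number of coefficients per generator, so the total obstruction dimension is bounded by a constant $g$ independent of $n$. For $n>g$ the $n$-parameter family of tails surjects onto the obstruction quotient and produces multiplicands of order exactly $n$; the existence of multiplicands of any two consecutive large orders gives $\gcd(S)=1$.

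The main obstacle is justifying the uniform bound on the obstruction dimension. This uses essentially that $\mathcal{U}$ is a Lie subalgebra and not merely a complementary subspace: the commutator brackets among the generators $\mathbf{E}_i$ of Lemma~\ref{prope}, which by semisimplicity of $\mathcal{G}$ generate all higher-pole elements of $\mathcal{U}$, force strong compatibility relations that keep the obstruction space finite-dimensional. Once this bound is in place, the numerical-semigroup conclusion is immediate.
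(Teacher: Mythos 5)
The paper gives no proof of this theorem (it only cites Ostapenko), so your argument must stand on its own. Part (i) does: a multiplicand of negative order sends $\mathbf{E}_1$ into $\mathcal{U}\cap\mathcal{G}[[\lambda]]=\{0\}$, and since $\mathcal{G}((\lambda))$ is torsion-free over the field $\C((\lambda))$ this forces $\mathbf{E}_1=0$, contradicting Lemma \ref{prope}; this is complete and correct. The opening reduction in part (ii) is also sound: the multiplicands form a unital subalgebra of the domain $\C((\lambda))$, so by (i) their orders form a numerical semigroup $S\subset\N\cup\{0\}$ and it suffices to realize all sufficiently large orders. (You should also record the small but necessary point that testing $\mathbf{m}\,\mathcal{U}\subset\mathcal{U}$ on the generators $\mathbf{E}_j$ suffices, because $\mathbf{m}[u,v]=[\mathbf{m}u,v]$ and the $\mathbf{E}_j$ generate $\mathcal{U}$ as a Lie algebra.)

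The gap is the ``uniform bound on the obstruction dimension,'' which you flag but do not close, and it is not a technical detail: it \emph{is} statement (ii). Set $\Psi(m)=\bigl(\pi_{-}(m\mathbf{E}_1),\dots,\pi_{-}(m\mathbf{E}_r)\bigr)$, $r=\dim\mathcal{G}$. For $h\in\C[[\lambda]]$ one has $\pi_{-}(h\mathbf{E}_j)=(h-h(0))\,\mathbf{E}_j$, so the image of your Taylor-correction map grows by one dimension per power of $\lambda$, while $\mathcal{G}[[\lambda]]^r$ grows by $r^2$ dimensions per power; the cokernel in which your obstruction lives is infinite-dimensional, and nothing in your argument explains why the principal parts $\lambda^{-1},\dots,\lambda^{-n}$ land in a uniformly finite-dimensional piece of it. Worse, if $\mathcal{M}_n$ denotes the multiplicands of order $\le n$ and $\bar\Psi_n$ the induced map $\lambda^{-n}\C[[\lambda]]/\C[[\lambda]]\to\mathcal{G}[[\lambda]]^r/\Psi(\C[[\lambda]])$, then $\dim\mathcal{M}_n=n+1-\mathrm{rank}\,\bar\Psi_n$ and $\sup_n\mathrm{rank}\,\bar\Psi_n$ equals exactly the cardinality of $\N\setminus S$; so the bound $g$ you postulate is logically equivalent to, not a step toward, the conclusion. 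The appeal to ``strong compatibility relations forced by the commutators'' is precisely where a real argument using simplicity of $\mathcal{G}$ and the relations among the $\mathbf{E}_i$ would have to go; as written, part (ii) is a correct reformulation followed by an unproved claim.
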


It follows from the statement ii) that any two multiplicands are related by an algebraic relation. So, the set of all multiplicands is isomorphic to a coordinate ring 
of some algebraic curve.  Examples are given in  Subsection 2.3.1. 

This canonical relation between factoring subalgebras and algebraic curves allows one to use methods of algebraic geometry for the  investigation of factoring subalgebras. 

\subsection{Factoring subalgebras for ${\cal G}=\mathfrak{so}_3$}

In this section we follow the paper \cite{sokol1}.

Consider the standard basis  
$$
{\bf e_{1}}=\left(%
\begin{array}{ccc}
  0 & 1 & 0 \\
  -1 & 0 & 0 \\
  0 & 0 & 0
\end{array}%
\right), \qquad
{\bf e_{2}}=\left(%
\begin{array}{ccc}
  0 & 0 & 1 \\
  0 & 0 & 0 \\
  -1 & 0 & 0
\end{array}%
\right), \qquad
{\bf e_{3}}=\left(%
\begin{array}{ccc}
  0 & 0 & 0 \\
  0 & 0 & 1 \\
  0 & -1 & 0
\end{array}%
\right)
$$
in $\mathfrak{so}_3.$ Let ${\cal U}$ be a factoring subalgebra.  Define elements ${\bf E}_i\in {\cal U} $ by  \eqref{ei}.

Automorphisms  \eqref{avt} are 
orthogonal transformations, which are Taylor series in $\lambda.$
The functions
$$
\vert {\bf E}_{1}\vert^{2}, \,\quad \vert {\bf E}_{2}\vert^{2},
\quad \,\vert {\bf E}_{3}\vert^{2}, \,\quad ({\bf E}_{1},\,{\bf
E}_{2}), \,\quad ({\bf E}_{1},\,{\bf E}_{3}), \,\quad ({\bf
E}_{2},\,{\bf E}_{3}),
$$
where
$$
\left(\sum_i x_{i} {\bf e_{i}}, \, \sum_j y_{j} {\bf e_{j}}
\right)=\sum_i x_{i} y_{i},
$$
are invariants for the transformations  \eqref{avt}. 

\begin{proposition}\label{prop266}  For 
any factoring subalgebra the following relations hold:
\begin{equation}
\begin{array}{l}
\label{VSmaincond1}
\left(%
\begin{array}{c}
  \left[{\bf E}_1,\,\left[{\bf E}_2,\,{\bf E}_3\right]\right]\\
  \left[{\bf E}_3,\,\left[{\bf E}_1,\,{\bf E}_2\right]\right]\\
  \left[{\bf E}_2,\,\left[{\bf E}_3,\,{\bf E}_1\right]\right]
\end{array}%
\right)={\bf A}\, \left(
\begin{array}{c}
  \left[{\bf E}_{3},\, {\bf E}_{1}\right]\\
  \left[{\bf E}_{1},\, {\bf E}_{2}\right]\\
  \left[{\bf E}_{2},\, {\bf E}_{3}\right]
\end{array}
\right)+{\bf B}\,\left(
\begin{array}{c}
  {\bf E}_{2}\\
  {\bf E}_{3}\\
  {\bf E}_{1}
\end{array}
\right),\\[10mm]
\left(
\begin{array}{c}
  \left[{\bf E}_3,\,\left[{\bf E}_2,\,{\bf E}_3\right]\right]+
  \left[{\bf E}_1,\,\left[{\bf E}_1,\,{\bf E}_2\right]\right] \\
  \left[{\bf E}_1,\,\left[{\bf E}_3,\,{\bf E}_1\right]\right]+
  \left[{\bf E}_2,\,\left[{\bf E}_2,\,{\bf E}_3\right]\right]\\
  \left[{\bf E}_2,\,\left[{\bf E}_1,\,{\bf E}_2\right]\right]+
  \left[{\bf E}_3,\,\left[{\bf E}_3,\,{\bf E}_1\right]\right]
\end{array}
\right)={\bf C}\, \left(
\begin{array}{c}
  \left[{\bf E}_{3},\, {\bf E}_{1}\right] \\
  \left[{\bf E}_{1},\, {\bf E}_{2}\right]\\
  \left[{\bf E}_{2},\, {\bf E}_{3}\right]
\end{array}
\right)+{\bf D}\,\left(
\begin{array}{c}
  {\bf E}_{2} \\
  {\bf E}_{3}\\
  {\bf E}_{1}
\end{array}
\right),
\end{array}
\end{equation}
where 
\begin{equation}
\label{VSmatcond1}
\begin{array}{c}
{\bf A}=\left(%
\begin{array}{ccc}
  -u & w & 0 \\
  u & 0 & -v \\
  0 & -w & v
\end{array}%
\right), \qquad {\bf B}=\left(%
\begin{array}{ccc}
  -\alpha & \beta & 0 \\
  \alpha & 0 & -\gamma \\
  0 & -\beta & \gamma
\end{array}%
\right),\\[8mm]
{\bf C}=\left(%
\begin{array}{ccc}
  x & v & -w \\
  -v & y & u \\
  w & -u & z
\end{array}%
\right), \qquad {\bf D}=\left(%
\begin{array}{ccc}
  \varepsilon & \gamma & -\beta \\
  -\gamma & \tau & \alpha \\
  \beta & -\alpha & \delta
\end{array}%
\right)
\end{array}
\end{equation}
are constant matrices. Moreover  ${\rm tr}\,\, {\bf
C}={\rm tr}\,\, {\bf D}=0$ and 
\begin{equation}
\label{VSlincomb} c_{1} {\bf A}+c_{2} {\bf B}=0, \qquad  c_{1}
{\bf C}+c_{2} {\bf D}=0 
\end{equation}
for some constants $c_1,c_2.$
\end{proposition}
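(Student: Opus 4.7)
My plan is to reduce the pole order of each triple commutator so that it lies inside the finite-dimensional space $V_2$ of elements of $\mathcal{U}$ with pole order at most $2$, then expand in the natural basis $\{\mathbf{E}_1,\mathbf{E}_2,\mathbf{E}_3,[\mathbf{E}_2,\mathbf{E}_3],[\mathbf{E}_3,\mathbf{E}_1],[\mathbf{E}_1,\mathbf{E}_2]\}$ of $V_2$, which has dimension $6$ by Lemma~\ref{lem255} (the six elements are linearly independent because their leading coefficients at $\lambda^{-1}$ and $\lambda^{-2}$ are). For each $T_a := [\mathbf{E}_a,[\mathbf{E}_b,\mathbf{E}_c]]$ on the left side of \eqref{VSmaincond1}, the a priori $\lambda^{-3}$ coefficient is $[\mathbf{e}_a,[\mathbf{e}_b,\mathbf{e}_c]] = \delta_{ac}\mathbf{e}_b - \delta_{ab}\mathbf{e}_c$ (using $[\mathbf{e}_j,\mathbf{e}_k]=-\varepsilon_{jk\ell}\mathbf{e}_\ell$ and the contraction identity for two $\varepsilon$'s), which vanishes because $(a,b,c)$ is a permutation of $(1,2,3)$. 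The same check on each summand of $S_a$ gives two contributions of opposite sign that cancel, so $T_a,S_a\in V_2$, and the expansions produce the matrices $\mathbf{A},\mathbf{B},\mathbf{C},\mathbf{D}$ of \eqref{VSmatcond1}.

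Two ingredients now pin down their precise structure. First, the $\lambda^{-2}$ part of any bracket $[\mathbf{E}_a,X]$ lies in $\mathrm{Im}(\mathrm{ad}_{\mathbf{e}_a})$, which is the plane of $\mathfrak{so}_3$ complementary to $\mathbf{e}_a$. Since the unique basis element of $V_2$ whose leading $\lambda^{-2}$ coefficient points in the direction of $\mathbf{e}_a$ is $[\mathbf{E}_b,\mathbf{E}_c]$ with $(a,b,c)$ cyclic, its coefficient in $T_a$ must vanish, yielding the zeros at positions $(1,3),(2,2),(3,1)$ of $\mathbf{A}$; a parallel argument fixes the off-diagonal pattern of $\mathbf{C}$. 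Second, the Jacobi identity $T_1+T_2+T_3=0$ forces the column sums of both $\mathbf{A}$ and $\mathbf{B}$ to vanish, which, combined with the zero pattern, reduces $\mathbf{A}$ and $\mathbf{B}$ to the stated three-parameter shape. A direct computation of the $\lambda^{-2}$ coefficient of each $S_a$ then identifies the off-diagonal entries of $\mathbf{C}$ (and of $\mathbf{D}$ at $\lambda^{-1}$) with those of $\mathbf{A}$ (respectively $\mathbf{B}$) up to the indicated signs. The trace conditions $\mathrm{tr}\,\mathbf{C}=\mathrm{tr}\,\mathbf{D}=0$ drop out by summing $S_1+S_2+S_3$ and invoking the Casimir identity $\sum_i \mathrm{ad}_{\mathbf{e}_i}^2=-2\,\mathrm{Id}$ on $\mathfrak{so}_3$, which makes the diagonal $\mathbf{E}$-contributions cancel.

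The step I expect to be hardest is the last assertion \eqref{VSlincomb}: that the same pair $(c_1,c_2)$ simultaneously annihilates $(\mathbf{A},\mathbf{B})$ and $(\mathbf{C},\mathbf{D})$. Beyond the trivial solution $(0,0)$, the nonvacuous content requires tracking how the Taylor coefficient $g_i^{(1)}$ in $\mathbf{E}_i=\mathbf{e}_i/\lambda+g_i^{(0)}+g_i^{(1)}\lambda+\cdots$ is constrained by $g_i^{(0)}$ via the closure requirement $[\mathbf{E}_i,\mathbf{E}_j]\in\mathcal{U}$. Concretely, from the $\lambda^{-2}$ matching one reads off $u,v,w$ as the symmetric combinations $(g_i^{(0)})_j+(g_j^{(0)})_i$ ($i\ne j$), and $\alpha,\beta,\gamma$ as the analogous combinations for $g_i^{(1)}$; the proportionality between the two triples then follows from the linear system expressing the $V_3$-level closure of $\mathcal{U}$ in terms of $V_2$-level data, and this same proportionality automatically propagates to $\mathbf{C},\mathbf{D}$ because their off-diagonal entries are built from the same combinations. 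Unwinding this closure bookkeeping is the technical core.
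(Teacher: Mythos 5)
Your core reduction is exactly the paper's: the $\lambda^{-3}$ coefficients of the left-hand sides of \eqref{VSmaincond1} vanish (by the $\mathfrak{so}_3$ identity for the $T_a$, and by cancellation of the two summands for the $S_a$), so by Lemma \ref{lem255} these elements lie in the six-dimensional space $V_2$ spanned by ${\bf E}_i$ and $[{\bf E}_i,\,{\bf E}_j]$, whence the coefficient matrices exist and are uniquely determined. Your derivation of the shape of ${\bf A}$ is also sound: the zeros at positions $(1,3)$, $(2,2)$, $(3,1)$ do follow from the $\lambda^{-2}$ coefficient of $T_a=[{\bf E}_a,[{\bf E}_b,{\bf E}_c]]$ landing in ${\rm Im}\,({\rm ad}_{{\bf e}_a})$ --- note that your blanket claim about ``any bracket $[{\bf E}_a,X]$'' is false in general, but it holds here because the $\lambda^{-2}$ coefficient of $[{\bf E}_b,{\bf E}_c]$ is proportional to ${\bf e}_a$, so the extra term coming from the constant term of ${\bf E}_a$ also lies in ${\rm Im}\,({\rm ad}_{{\bf e}_a})$ --- and the Jacobi identity does kill the column sums.

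The gaps are in the finer structure. First, nothing you wrote forces the zero pattern of ${\bf B}$: the $\lambda^{-2}$ analysis constrains only the $[{\bf E}_i,{\bf E}_j]$-coefficients, i.e.\ ${\bf A}$, and zero column sums alone leave ${\bf B}$ a six-parameter matrix rather than the stated three-parameter one; establishing $B_{13}=B_{22}=B_{31}=0$ requires matching the ${\bf e}_a$-component of the $\lambda^{-1}$ coefficient of $T_a$, which you never compute. Second, the ``parallel argument'' for ${\bf C}$ is vacuous: each $S_a$ is a sum of two brackets with \emph{different} outer factors ${\bf E}_b$, ${\bf E}_c$, so its $\lambda^{-2}$ coefficient is only constrained to lie in ${\rm Im}\,({\rm ad}_{{\bf e}_b})+{\rm Im}\,({\rm ad}_{{\bf e}_c})$, which is all of $\mathfrak{so}_3$ and constrains nothing --- consistently, ${\bf C}$ has no zero entries. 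The genuinely nontrivial assertions (that the antisymmetric part of ${\bf C}$ is built from the same $u,v,w$ as ${\bf A}$, the trace conditions, and above all the proportionality \eqref{VSlincomb} with a common pair $(c_1,c_2)$) are exactly the ones your sketch defers to ``a direct computation'' or to a plan. To be fair, the paper's own proof is equally laconic here (``the relations between the coefficients follow from Lemma \ref{prope}'', with details in \cite{sokol1}); but your write-up presents as established several claims that the arguments you give do not in fact yield.
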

\begin{proof}
The coefficients of $\lambda^{-3}$ in the expressions from the left hand side of   \eqref{VSmaincond1} are equal to zero. Therefore, the expressions  should be linear combinations of 
$${\bf E}_1,\quad {\bf E}_2,\quad {\bf E}_3, \quad  [{\bf E}_1,\,{\bf E}_2], \quad [{\bf E}_3,\,{\bf E}_1], \quad  [{\bf E}_2,\,{\bf E}_3].$$
The relations between the coefficients of these linear combinations follow from  Lemma \ref{prope}.
\end{proof}
\begin{remark} 
Proposition \ref{prop266} means that $d_3=9$ {\rm (}see Lemma \ref{lem255}{\rm )}.
\end{remark}

\begin{example}\label{example24}
For the standard factoring subalgebra  \eqref{polyn} conditions  \eqref{VSmaincond1} are fulfilled with ${\bf A}={\bf B}={\bf
C}={\bf D}=0.$ In this case, 
$$
{\bf E}_{i}=\frac{{\bf e}_{i}}{\lambda}, \qquad i=1,2,3.
$$
It is clear that the algebra of multiplicands is generated by 
$\ds x=\frac{1}{\lambda}$ and therefore the corresponding algebraic curve is a straight line. 
\end{example}

\begin{example}\label{example25}
Suppose that
$$
({\bf E}_{3},{\bf E}_{1})=-\alpha, \qquad ({\bf E}_{1},{\bf
E}_{2})=-\beta, \qquad ({\bf E}_{2},{\bf E}_{3})=-\gamma,
$$
\begin{equation}\label{VSskl2}
\vert {\bf E}_{3}\vert^{2}-\vert {\bf E}_{1}\vert^{2}=\varepsilon,
\qquad \vert {\bf E}_{1}\vert^{2}-\vert {\bf E}_{2}\vert^{2}=\tau,
\qquad \vert {\bf E}_{2}\vert^{2}- \vert {\bf
E}_{3}\vert^{2}=\delta,
\end{equation}
 where $\alpha,\beta,\gamma,\delta, \varepsilon,\tau$ are fixed constants such that $\varepsilon+\tau+\delta=0$. It follows from \eqref{VSskl2} that we may implement the spectral parameter $\lambda$ by formulas
 $$
\vert{\bf E}_{1}\vert=\frac{\sqrt{1-p \lambda^{2}}}{\lambda},
\qquad \vert{\bf E}_{2}\vert=
\frac{\sqrt{1-q\lambda^{2}}}{\lambda}, \qquad \vert{\bf
E}_{3}\vert= \frac{\sqrt{1-r\lambda^{2}}}{\lambda},
$$
where $\varepsilon=p-r$,   $\tau=q-p$,  $\delta=r-q$. The elements ${\bf E}_i$ of the form 
$$
{\bf E}_1=c_1 {\bf e}_1, \qquad {\bf E}_2=c_2 {\bf e}_1+c_3 {\bf e}_2, \qquad 
{\bf E}_3=c_4 {\bf e}_1+c_5 {\bf e}_2+c_6 {\bf e}_3, \qquad c_i\in \C((\lambda)),
$$
 can be easily reconstructed. 
 
 One can verify that such 
elements ${\bf E}_i$ satisfy  \eqref{VSmaincond1},  \eqref{VSmatcond1} with ${\bf A}={\bf C}=0$ and generate a factoring subalgebra. 
This factoring subalgebra is isotropic with respect to the form \eqref{forma}. 

The expressions $X_{i}(\lambda)=\vert{\bf E}_{i}\vert$
are functions 
on the elliptic curve
\begin{equation}\label{VSellip}
X_{1}^{2}+p=X_{2}^{2}+q=X_{3}^{2}+r.
\end{equation}

The functions
$$
x=\frac{1}{\lambda^2}, \qquad y=\frac{\sqrt{(1-p \lambda^2)(1-q
\lambda^2) (1-r \lambda^2)}}{\lambda^3} 
$$
are multiplicands of $\cal U$ of order 2 and 3, respectively. 
For example, in the special case $\alpha=\beta=\gamma=0$ we have 
\begin{equation}\label{EE1}
{\bf E}_{1} =\frac{\sqrt{1-p \lambda^{2}}}{\lambda}\,{\bf e}_{1}, \qquad {\bf E}_{2} =\frac{\sqrt{1-q\lambda^{2}}}{\lambda}\,{\bf e}_{2}, \qquad {\bf E}_{3} =\ \frac{\sqrt{1-r\lambda^{2}}}{\lambda}\,{\bf e}_{3}, 
\end{equation}
and 
$$
x \, {\bf E_2}=[[ {\bf E_1}, {\bf E_2}],\,  {\bf E_1}]+p {\bf E_2}, \qquad 
y \, {\bf E_2}=[[ {\bf E_2}, {\bf E_3}],\,[ {\bf E_1}, {\bf E_2}]] 
$$
and so on. 
The corresponding algebraic curve is elliptic:
$$
y^2=(x-p) (x-q) (x-r).
$$

\end{example}

\begin{example}\label{example26} Let
\begin{equation}\label{VScasec0}
\begin{array}{l}
\vert {\bf E}_{1}\vert^{2}=(\mu-r)(\mu-q)-u^{2}, \qquad
\vert {\bf E}_{2}\vert^{2}=(\mu-r)(\mu-p)-v^{2}, \\[3mm]
\vert {\bf E}_{3}\vert^{2}=(\mu-q)(\mu-p)-w^{2}, \qquad
({\bf E}_{1},\,{\bf E}_{2})=w (\mu-r)+u v, \\[3mm]
({\bf E}_{1},\,{\bf E}_{3})=v (\mu-q)+u w, \qquad \quad ({\bf
E}_{2},\,{\bf E}_{3})=u (\mu-p)+v w,
\end{array}
\end{equation}
 where $\mu=\lambda^{-1}$, $p,q,r,u,v,w$ are arbitrary parameters. The elements ${\bf E}_i,\, i=1,2,3$ satisfy  \eqref{VSmaincond1},  \eqref{VSmatcond1} with $x=p-r, y=q-p$, $z=r-q,$ ${\bf B}={\bf D}=0$ and generate a factoring subalgebra.  
 
If $u=v=w=0$, then ${\bf E}_i$ are given by
\begin{equation}\label{EEE1}
\begin{array}{c}
\ds {\bf E_{1}}= \frac{\sqrt{(1-r
\lambda)(1-q\lambda)}}{\lambda}\,\,{\bf e_{1}},
\qquad
{\bf E_{2}}=
\frac{\sqrt{(1-r\lambda)(1-p\lambda)}}{\lambda}\,\,{\bf e_{2}},
\\[6mm]
\ds {\bf E_{3}}=
\frac{\sqrt{(1-q\lambda)(1-p\lambda)}}{\lambda}\,\,{\bf e_{3}}\,.
\end{array}
\end{equation}
At first glance, we deal with the functions 
$$
X_{1}=\sqrt{\frac{(1-r \lambda)}{\lambda}}, \qquad
X_{2}=\sqrt{\frac{(1-q \lambda)}{\lambda}}, \qquad
X_{3}=\sqrt{\frac{(1-p \lambda)}{\lambda}}
$$
on the elliptic curve  \eqref{VSellip}, but 
in fact ${\bf E}_i$ depend on the products
$$
Z_1=X_2 X_3, \qquad Z_2=X_1 X_3, \qquad Z_3=X_1 X_2
$$
only. The corresponding algebraic curve can be written as 
$$
\frac{Z_1 Z_2}{Z_3}+p=\frac{Z_1 Z_3}{Z_2}+q=\frac{Z_2 Z_3}{Z_1}+r.
$$
This curve is rational. Indeed,
substituting 
$$
Z_{3}=\frac{(q-r)Z_{2} Z_{1}}{Z_{2}^{2}-Z_{1}^{2}}
$$
into the curve, we get
$$
(Z_{2}^{2}-Z_{1}^{2})^{2}+a^{2} Z_{2}^{2}-b^{2}Z_{1}^{2}=0,
$$
where $ a^{2}=(r-q)(q-p), \quad b^{2}=(r-p)(q-p).$ The latter curve admits the rational parameterization
$$
Z_{1}=\frac{a(t^{3}+S t)}{t^{4}+K t^{2}+S^{2}}, \qquad Z_{2}=\frac{b(t^{3}-S t)}{t^{4}+K t^{2}+S^{2}}, 
$$
where 
$$
S=\frac{(a^{2}-b^{2})^{2}}{4 a^{2} b^{2}}, \qquad
K=\frac{a^{4}-b^{4}}{2 a^{2} b^{2}}.
$$
The algebra of multiplicands of the factoring subalgebra is generated by $\ds x=\frac{1}{\lambda}.$ For example,
$$x {\bf E}_1=[{\bf E}_3, \,{\bf E}_2]+p {\bf E}_1. $$

\end{example}
\begin{example}\label{example27} Let
$$
{\bf E}_{i}=\frac{{\bf e}_{i}}{\lambda}+\nu\, [{\bf V},\,{\bf e}_{i}]+
\frac{1}{2}[{\bf V},\,[{\bf V},\,{\bf e}_{i}]],
$$
where
$$
{\bf V}=v_{1} {\bf e}_{1}+v_{2} {\bf e}_{2}+v_{3} {\bf e}_{3},
$$
$\nu$ and $v_i$ are parameters. 
The constants in
 \eqref{VSmaincond1}-- \eqref{VSlincomb} are given by 
$$
u=v_{1} v_{3}, \quad v=v_{2}v_{3}, \quad w=v_{1}v_{2}, \quad
x=v_{1}^{2}-v_{3}^{2}, \quad y=v_{2}^{2}-v_{1}^{2}, \quad
z=v_{3}^{2}-v_{2}^{2},
$$
$$
c_{1}=-1, \qquad
c_{2}=\nu^{2}+\frac{\Delta}{4}, \qquad  \Delta=v_{1}^{2}+v_{2}^{2}+v_{3}^{2}.
$$
The elements ${\bf E}_i$ generate a factoring subalgebra. The multiplicands of second and third order are given by
$$
x=\frac{1}{\lambda^2} - \frac{\Delta}{\lambda}, \qquad y=\frac{1}{\lambda^3} + \frac{\Delta (4\nu^2-3 \Delta)}{4 \lambda}.
$$
\begin{exercise}  
 Verify that $x$ and $y$ are related by a degenerate elliptic curve with canonical form
$$\bar y^2=4 (\bar x - a)(\bar x - b)^2,$$
where
$$
a = \frac{2}{3}\nu^2 \Delta, \qquad  b = -\frac{1}{3}\nu^2 \Delta.
$$
\end{exercise} 
\end{example}

\begin{theorem}
Any factoring subalgebra for ${\cal G}=\mathfrak{so}_3$ is equivalent to one from Examples \ref{example24}--\ref{example27}.  
\end{theorem}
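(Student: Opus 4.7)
The approach is to parametrize all factoring subalgebras by the constants $\mathbf{A}, \mathbf{B}, \mathbf{C}, \mathbf{D}$ that appear in Proposition~\ref{prop266}, and then classify these parameter tuples modulo the action of the reparametrization group~\eqref{replam} and the orthogonal automorphism group~\eqref{avt}. The key invariant is the projective point $[c_1 : c_2]\in \mathbb{P}^1$ singled out by the linear relation~\eqref{VSlincomb}, and the proof naturally splits into cases according to this invariant.

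First I would dispose of the trivial case $\mathbf{A}=\mathbf{B}=\mathbf{C}=\mathbf{D}=0$: here~\eqref{VSmaincond1} rigidifies the higher-order Taylor coefficients of $\mathbf{E}_i$, and a suitable automorphism~\eqref{avt} brings the generators to $\mathbf{E}_i=\mathbf{e}_i/\lambda$, yielding Example~\ref{example24}. Next, when $[c_1:c_2]=[0:1]$ the matrices $\mathbf{A}$ and $\mathbf{C}$ vanish; this kills certain cubic brackets and, combined with the trace-zero conditions on $\mathbf{C},\mathbf{D}$, forces the pairings $(\mathbf{E}_i,\mathbf{E}_j)$ to be constants while the squares $|\mathbf{E}_i|^2$ satisfy~\eqref{VSskl2}. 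Realizing $\lambda$ via the resulting elliptic curve~\eqref{VSellip} and using the remaining orthogonal rotation to diagonalize the Gram matrix recovers Example~\ref{example25}. Dually, for $[c_1:c_2]=[1:0]$ we have $\mathbf{B}=\mathbf{D}=0$, the invariants~\eqref{VScasec0} become quadratic in $\mu=1/\lambda$, and the same reconstruction yields Example~\ref{example26}, whose curve of multiplicands is rational.

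The substantive case is the generic one where both $c_1$ and $c_2$ are nonzero. Here the reparametrization~\eqref{replam} acts by scaling on the ratio $c_1/c_2$, so I would fix a normalization first. Working out the algebra of multiplicands from the structure relations~\eqref{VSmaincond1} directly, one should find it generated by a quadratic $x$ and a cubic $y$ satisfying a cuspidal-type cubic $\bar y^2=4(\bar x-a)(\bar x-b)^2$. Introducing an auxiliary vector $\mathbf{V}=\sum v_i \mathbf{e}_i$ and scalar $\nu$ whose bilinear invariants reproduce the structure matrices in~\eqref{VSmatcond1}, I expect the explicit ansatz $\mathbf{E}_i=\mathbf{e}_i/\lambda+\nu[\mathbf{V},\mathbf{e}_i]+\tfrac{1}{2}[\mathbf{V},[\mathbf{V},\mathbf{e}_i]]$ of Example~\ref{example27} to exhaust this case modulo the combined gauge action.

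The main obstacle will be proving exhaustiveness in the generic case: one must show that the higher-order Taylor coefficients of the $\mathbf{E}_i$ are uniquely pinned down by the leading data $(\mathbf{A},\mathbf{B},\mathbf{C},\mathbf{D})$ modulo~\eqref{avt}, with no room for exceptional families. Concretely, this reduces to verifying that the infinite sequence of linear systems imposed by $[\mathcal{U},\mathcal{U}]\subset \mathcal{U}$ has solution space of the exact dimension predicted by Lemma~\ref{lem255}, and that consistency of Jacobi identities at every order follows from the finite-dimensional conditions of Proposition~\ref{prop266}. A secondary difficulty is disentangling the combined action of the two pro-algebraic gauge groups~\eqref{replam} and~\eqref{avt} on the parameter space, so as to confirm that the four geometric types of multiplicand curves (line, elliptic, rational, and cuspidal cubic) constitute disjoint canonical orbits.
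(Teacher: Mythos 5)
Your outline follows the route the paper itself points to (the text gives no in-line proof, deferring entirely to \cite{sokol1}, but Proposition \ref{prop266} and the relation \eqref{VSlincomb} are clearly the intended starting point), and your case division according to which of ${\bf A},{\bf B},{\bf C},{\bf D}$ vanish does match the four examples. As written, however, this is a plan rather than a proof: the two steps you defer are precisely the substance of the theorem. First, you never establish that the leading data $({\bf A},{\bf B},{\bf C},{\bf D})$ --- equivalently the Gram invariants $|{\bf E}_i|^2$, $({\bf E}_i,{\bf E}_j)$ --- determine the factoring subalgebra up to the transformations \eqref{replam} and \eqref{avt}; without that, exhibiting one example in each parameter class says nothing about exhaustiveness. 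Second, and more seriously, you assert that consistency of the closure relations at every order "follows from the finite-dimensional conditions of Proposition \ref{prop266}". Note that $\mathfrak{so}_3\cong\mathfrak{sl}_2$ is exactly the case excluded from the paper's Conjecture on low-order criteria for generating a factoring subalgebra, so for this algebra one genuinely cannot take for granted that the order-three conditions ($d_3=9$) propagate to all $d_k$; this closure argument has to be made, e.g.\ by an induction based on Lemmas \ref{prope} and \ref{lem255}, and it is the hard part of \cite{sokol1}.

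A further gap sits in the generic case: the constraints of Proposition \ref{prop266} leave a priori many free entries in ${\bf A},{\bf B},{\bf C},{\bf D}$ (subject only to the trace and linear-dependence conditions), whereas Example \ref{example27} realizes only the special values $u=v_1v_3$, $v=v_2v_3$, $w=v_1v_2$, $x=v_1^2-v_3^2$, and so on. You "expect" the ansatz built from a vector ${\bf V}$ and a scalar $\nu$ to exhaust this stratum, but showing that the higher-order relations force the structure matrices into this rank-one form is the actual computation, and it is absent. Two smaller slips: the labels in your dichotomy are swapped, since $c_1=0$ in \eqref{VSlincomb} forces ${\bf B}={\bf D}=0$, which is Example \ref{example26}, while $c_2=0$ forces ${\bf A}={\bf C}=0$, which is Example \ref{example25}; and the degenerate cubic $\bar y^2=4(\bar x-a)(\bar x-b)^2$ of Example \ref{example27} is nodal for $a\ne b$, not cuspidal.
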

\begin{proof}
For a proof see  \cite{sokol1}. 
\end{proof}

A classification of factoring subalgebras for the semi-simple Lie algebra  ${\cal G}=\mathfrak{so}_4$ is important for applications.  A class of factoring subalgebras was constructed in \cite{efim}.
\begin{op}
Describe all factoring subalgebras for ${\cal G}=\mathfrak{so}_4$.
\end{op}
 
 \subsection{Integrable top-like systems} 
 
As it was mentioned in Remark \ref{rem11}, one may assume that  the $A$-operator in  \eqref{Lax} belongs to  $\cal G$ while $L$ belongs to a module over $\cal G.$ In this section we assume that ${\cal G}$ is semi-simple.
 
For integrable top-like systems the 
 $A$-operator in  \eqref{Lax} belongs to $\cal U$ and $L$ belongs to the orthogonal complement ${\cal U}^{\perp}$ with respect to the scalar product  \eqref{forma}. It follows from its invariance that   ${\cal U}^{\perp}$ is a module over  ${\cal U}$.
\begin{exercise} Prove that  ${\cal U}^{\perp}$ does not contain non-zero Taylor series.
\end{exercise}
 
It was shown in \cite[Theorem 2.3]{sokolgok1} that in this case natural Hamiltonian structures arise.
 
We say that an $L$-operator has order $k$ if $L \in {\cal O}_k \stackrel{def}{=} \lambda^{-k}{\cal G}[[\lambda]]\bigcap {\cal U}^{\perp}.$ To construct $A$-operators we generalize the scheme of Subsections 2.1.2, 2.2.1. Namely, we find elements of ${\cal G}((\lambda))$ that commute with $L$ and project them onto $\cal U.$   

For the sake of simplicity we assume that  ${\cal G}$ is embedded  into a matrix algebra. Suppose that  $B_{ij}=\lambda^i L^j$ belongs to ${\cal G}((\lambda))$.

\begin{proposition} Suppose that $L \in {\cal O}_k$. Then \begin{itemize}
\item[ i)] $[\pi_{+}(B_{ij}),\, L] \in {\cal O}_k,$
\item[ii)] for any $i,j,p,q$ the Lax equations 
$$
L_t=[\pi_{+}(B_{ij}),\, L]
$$
and 
$$
L_{\tau}=[\pi_{+}(B_{pq}),\, L]
$$
are infinitesimal symmetries for each other.
\end{itemize}
Here we denote by  $\pi_{+}$ the projection operator onto  ${\cal U}$ parallel to ${\cal G}[[\lambda]]$.  
\end{proposition}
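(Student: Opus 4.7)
The plan is to mimic the argument of Theorem~\ref{Hier}, with the decomposition ${\cal G}((\lambda)) = {\cal G}[[\lambda]] \oplus {\cal U}$ playing the role of the splitting of pseudo-differential series into differential operators and negative-order tails. Throughout, $\pi_{+}$ is the projection onto $\cal U$ and $\pi_{-}$ onto ${\cal G}[[\lambda]]$.

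For part (i), I would begin from the observation that $B_{ij}=\lambda^{i}L^{j}$ commutes with $L$, since $\lambda$ is a scalar and $L^{j}$ commutes with $L$ in the associative envelope. Writing $B_{ij}=\pi_{+}(B_{ij})+\pi_{-}(B_{ij})$ immediately gives $[\pi_{+}(B_{ij}),L]=-[\pi_{-}(B_{ij}),L]$, and the right-hand side lies in $\lambda^{-k}{\cal G}[[\lambda]]$ because ${\cal G}[[\lambda]]$ is a subalgebra and $L\in\lambda^{-k}{\cal G}[[\lambda]]$. Orthogonality to ${\cal U}$ then follows from the invariance of the pairing \eqref{forma}: for any $u\in{\cal U}$ one has $\langle[\pi_{+}(B_{ij}),L],u\rangle=-\langle L,[\pi_{+}(B_{ij}),u]\rangle=0$, since $[\pi_{+}(B_{ij}),u]\in{\cal U}$ ($\cal U$ being a subalgebra) and $L\in{\cal U}^{\perp}$. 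Together these give $[\pi_{+}(B_{ij}),L]\in{\cal O}_k$.

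For part (ii), I would set $A=\pi_{+}(B_{ij})$ and $A'=\pi_{+}(B_{pq})$ and perform the standard zero-curvature manipulation. By the Jacobi identity, $L_{t\tau}-L_{\tau t}=[A_{\tau}-A'_{t}+[A,A'],\,L]$, so it is enough to show that $A_{\tau}-A'_{t}+[A,A']=0$. From $L_{\tau}=[A',L]$ one gets $(B_{ij})_{\tau}=\lambda^{i}(L^{j})_{\tau}=[A',B_{ij}]$, hence $A_{\tau}=\pi_{+}([A',B_{ij}])$, and similarly $A'_{t}=\pi_{+}([A,B_{pq}])$. Since ${\cal U}$ is a subalgebra, $[A,A']=\pi_{+}([A,A'])$, so the target quantity equals $\pi_{+}\!\bigl([A',B_{ij}]-[A,B_{pq}]+[A,A']\bigr)$.

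To finish, I would substitute $B_{ij}=A+\pi_{-}(B_{ij})$ and $B_{pq}=A'+\pi_{-}(B_{pq})$ and exploit the fact that $B_{ij}$ and $B_{pq}$ commute (both are scalar-coefficient polynomials in $L$). Expanding $0=[B_{ij},B_{pq}]$ and substituting into the expression inside $\pi_{+}$, everything cancels except a single bracket $[\pi_{-}(B_{ij}),\pi_{-}(B_{pq})]$, which lies in ${\cal G}[[\lambda]]$ and is therefore killed by $\pi_{+}$. The delicate point is precisely this final cancellation: it depends on both summands of \eqref{decompgen} being Lie subalgebras, which is the defining property of a factoring subalgebra.
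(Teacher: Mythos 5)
Your proof is correct and follows exactly the route the paper intends: the text leaves this proposition as an exercise with the hint to adapt the proof of Theorem~\ref{Hier}, and your argument is precisely that adaptation, reducing everything to $\pi_{+}\bigl([\pi_{-}(B_{ij}),\pi_{-}(B_{pq})]\bigr)=0$ via the subalgebra property of ${\cal G}[[\lambda]]$, with the orthogonality part of (i) handled by the invariance of the form \eqref{forma} and the subalgebra property of ${\cal U}$. No gaps.
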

\begin{exercise}
Prove the proposition (see proof of Theorem  \ref{Hier}).
\end{exercise}

 A general theory of Lax pairs of such kind and of the corresponding  Hamiltonian structures for any semi-simple Lie algebra ${\cal G}$ was presented
in \cite{sokolgok1}.  Below we consider the case ${\cal G}=\mathfrak{so}_3$.

\subsection{$\mathfrak{so}_3$ classical spinning tops} 

In this section we demonstrate  \cite[Section 4]{sokolgok1} that the factoring subalgebras in $\mathfrak{so}_3$ described in Subsection 2.3.1 are in one-to-one correspondence with classical integrable cases for the Kirchhoff problem of the motion of a rigid body in an ideal fluid \cite{bormam}. The equations of motion are given by 
\begin{equation}\label{Kirgh}
\frac{d\, {\bf \Gamma}}{d t}= {\bf \Gamma}\times \frac{\partial H}{\partial {\bf M}}, \qquad \quad \frac{d\, {\bf M}}{d t}= {\bf M}\times \frac{\partial H}{\partial {\bf M}}+{\bf \Gamma}\times \frac{\partial H}{\partial {\bf \Gamma}},
\end{equation}
where ${\bf M}=(M_1,M_2,M_3)$ is the total angular momentum, ${\bf \Gamma}=(\gamma_1,\gamma_2,\gamma_3)$ is the gravitational vector, $\times$ stands for the cross product,
$$
\frac{\partial H}{\partial {\bf M}}=
\Big(\frac{\partial H}{\partial M_1}, 
\frac{\partial H}{\partial M_2},
\frac{\partial H}{\partial M_3}\Big), \qquad \frac{\partial H}{\partial {\bf \Gamma}}=
\Big(\frac{\partial H}{\partial \gamma_1}, 
\frac{\partial H}{\partial \gamma_2},
\frac{\partial H}{\partial \gamma_3}\Big) ,
$$
and the quadratic form $H(\bf M, \bf \Gamma)$ is a Hamiltonian.  
\subsubsection{Structure of the orthogonal complement to $\cal U$}
\begin{proposition}\label{prop28} The orthogonal complement to $\cal U$ can be described as follows:
\begin{itemize}
\item[ i)] There exist unique elements ${\bf R}_i\in {\cal U}^{\perp}$ of the form 
\begin{equation}\label{RRR}
{\bf R}_{i}=\frac{{\bf e}_{i}}{\lambda}+O(1), \qquad i=1,2,3.
\end{equation}
They generate ${\cal U}^{\perp}$ as a ${\cal U}$-module. 
\item[ ii)] The following commutator relations hold:
$$
\begin{array}{l}
\left(%
\begin{array}{c}
  \left[ {\bf E}_{1}, {\bf R}_{1}\right] \\[2pt]
  \left[ {\bf E}_{3}, {\bf R}_{3}\right]\\[2pt]
  \left[ {\bf E}_{2}, {\bf R}_{2}\right]
\end{array}%
\right)={\bf A}\, \left(%
\begin{array}{c}
  {\bf R}_2 \\
  {\bf R}_{3}\\
   {\bf R}_{1}
\end{array}%
\right),\qquad
\left(%
\begin{array}{c}
  \left[{\bf E}_3,\,{\bf R}_1\right]+
  \left[{\bf E}_1,\,{\bf R}_3\right] \\
  \left[{\bf E}_1,\,{\bf R}_2\right]+
  \left[{\bf E}_2,\,{\bf R}_1\right]\\
  \left[{\bf E}_2,\,{\bf R}_3\right]+
  \left[{\bf E}_3,\,{\bf R}_2\right]
\end{array}%
\right)={\bf C}\,\left(%
\begin{array}{c}
  {\bf R}_2 \\
  {\bf R}_{3}\\
   {\bf R}_{1}
\end{array}%
\right),
\end{array}
$$
where ${\bf A}$ and ${\bf C}$ are matrices defined by  \eqref{VSmaincond1},  \eqref{VSmatcond1}.
\end{itemize}
\end{proposition}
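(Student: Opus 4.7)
The proof splits naturally into the two parts of the statement.

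For part (i), I would first dispose of uniqueness: if two candidates $\mathbf R_i,\mathbf R_i'$ both had the principal part $\mathbf e_i/\lambda$, their difference would be a non-zero Taylor series lying in $\mathcal U^{\perp}$, contradicting the exercise immediately preceding the proposition. For existence, I would use the non-degenerate invariant form \eqref{forma}: since $\mathcal G[[\lambda]]$ is isotropic (the residue of a product of two Taylor series is zero) and $\mathcal U$ is complementary to it, the form induces an isomorphism between $\mathcal U^{\perp}$ and the space of principal parts $\mathcal G((\lambda))/\mathcal G[[\lambda]]$. In particular every prescribed principal part is realised by a unique element of $\mathcal U^{\perp}$, giving existence of $\mathbf R_i$.

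The generation claim in (i) would be proved by induction on pole order. The ``order one'' slice $\lambda^{-1}\mathcal G[[\lambda]]\cap\mathcal U^{\perp}$ is three-dimensional (parametrised by its principal part) and spanned by $\mathbf R_1,\mathbf R_2,\mathbf R_3$ by the previous paragraph. If $R\in\mathcal U^{\perp}$ has pole order $k\ge 2$ with leading coefficient $g\in\mathcal G$, then using $[\mathfrak{so}_3,\mathfrak{so}_3]=\mathfrak{so}_3$ I can write $g$ as a sum of brackets $[\mathbf e_{j_1},\mathbf e_{j_2}]$. Correspondingly, an iterated commutator $[\mathbf E_{j_1},[\mathbf E_{j_2},\dots[\mathbf E_{j_{k-1}},\mathbf R_{j_k}]\dots]]$ lies in $\mathcal U^{\perp}$ (by the $\mathcal U$-module property) and has the same leading term as $R$. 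Subtracting produces an element of strictly lower pole order, and induction concludes.

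For part (ii), the decisive observation is a pole-order estimate on the left-hand side. For $[\mathbf E_i,\mathbf R_i]$, the coefficient of $\lambda^{-2}$ is $[\mathbf e_i,\mathbf e_i]=0$, so the commutator has order $\le 1$. For each symmetric sum $[\mathbf E_i,\mathbf R_j]+[\mathbf E_j,\mathbf R_i]$ the $\lambda^{-2}$ contributions $[\mathbf e_i,\mathbf e_j]$ and $[\mathbf e_j,\mathbf e_i]$ cancel by antisymmetry, and the same conclusion follows. Since each left-hand side lies in $\mathcal U^{\perp}$ (by the $\mathcal U$-module structure) and has pole order $\le 1$, part (i) together with uniqueness of principal parts forces each expression to be a constant linear combination of $\mathbf R_1,\mathbf R_2,\mathbf R_3$.

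It remains to identify the resulting scalar matrices with $\mathbf A$ and $\mathbf C$ of \eqref{VSmatcond1}. This is the main obstacle: the coefficients depend on the $\lambda^{-1}$ part of $[\mathbf E_i,\mathbf R_j]$, i.e.\ on the constant terms of both $\mathbf E_j$ and $\mathbf R_j$, and these Taylor corrections are not available in closed form. The route I would take is to write $\mathbf R_j=\mathbf E_j+T_j$ with $T_j\in\mathcal G[[\lambda]]$ uniquely determined by $\mathbf R_j\in\mathcal U^{\perp}$, and then mirror the argument of Proposition~\ref{prop266}: the coefficient of $\lambda^{-1}$ in $[\mathbf E_i,\mathbf R_i]=[\mathbf E_i,T_i]$ and in $[\mathbf E_i,\mathbf R_j]+[\mathbf E_j,\mathbf R_i]=[\mathbf E_i,\mathbf E_j]+[\mathbf E_j,\mathbf E_i]+[\mathbf E_i,T_j]+[\mathbf E_j,T_i]$ can be expressed through the same structural data $u,v,w,x,y,z$ that control $\mathbf A$ and $\mathbf C$, because in each case the only ingredients are the principal parts of $\mathbf E_i$ and the requirement that a certain Taylor-order-3 combination vanish. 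Concretely, one applies Lemma~\ref{prope} to the identity
\[
[\mathbf E_i,\mathbf R_j]+[\mathbf E_j,\mathbf R_i]\in\lambda^{-1}\mathcal G[[\lambda]]\cap\mathcal U^{\perp},
\]
and compares the result with the defining $\mathcal U$-relation in Proposition~\ref{prop266}, whose derivation used the very same pole-order cancellation. The matching of constants then reduces to a bookkeeping check in the basis $(\mathbf e_1,\mathbf e_2,\mathbf e_3)$ that I would leave to the reader or verify directly on the four canonical forms of Examples~\ref{example24}--\ref{example27}.
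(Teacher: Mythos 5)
The paper itself states Proposition~\ref{prop28} without proof (the section defers to \cite[Section 4]{sokolgok1}), so your argument can only be measured against the methods of the surrounding text. Measured that way, your skeleton is sound and uses exactly the paper's toolkit: uniqueness of the ${\bf R}_i$ from the exercise that ${\cal U}^{\perp}$ contains no non-zero Taylor series; existence from the dual decomposition ${\cal G}((\lambda))={\cal G}[[\lambda]]\oplus{\cal U}^{\perp}$ induced by the non-degenerate form (the analogue of Lemma~\ref{prope} for ${\cal U}^{\perp}$); generation by induction on pole order using $[{\cal G},{\cal G}]={\cal G}$, mirroring the proof that the ${\bf E}_i$ generate ${\cal U}$; and, for part (ii), cancellation of the $\lambda^{-2}$ coefficients together with the module property, which forces each left-hand side to be the unique element of ${\cal U}^{\perp}$ with its principal part, hence a constant linear combination of ${\bf R}_1,{\bf R}_2,{\bf R}_3$. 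For the existence step you should really invoke the order-by-order dimension count in the spirit of Lemma~\ref{lem255} rather than asserting that "the form induces an isomorphism", but that is a presentational point.

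The genuine gap is the final step: you never prove that the resulting $3\times 3$ coefficient matrices are precisely the ${\bf A}$ and ${\bf C}$ of \eqref{VSmaincond1}, \eqref{VSmatcond1}, and that identification is the actual content of part (ii) --- what you have established is only that \emph{some} constant matrices exist. The coefficients are $d_k=\langle[{\bf E}_i,{\bf R}_j],\,{\bf e}_k\rangle$ and depend on the constant Taylor coefficients of both ${\bf E}_j$ and ${\bf R}_j$; the latter are pinned down by the orthogonality conditions $\langle{\bf R}_i,{\bf E}_j\rangle=0$, which express them through the constant terms of the ${\bf E}_j$ --- the same data that determine $u,v,w,x,y,z$ in Proposition~\ref{prop266}. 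Carrying out and comparing these two computations is finite and elementary, but it is the whole point, and you explicitly defer it. Your fallback of verifying the identity on the four canonical Examples~\ref{example24}--\ref{example27} is not a substitute: it presupposes the classification theorem (which appears later and is itself only cited), and the matrices ${\bf A},{\bf C}$ are not invariant under the equivalences \eqref{replam}, \eqref{avt} used to reach those canonical forms, so a check on representatives does not transfer to an arbitrary factoring subalgebra.
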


\begin{remark}\label{rem23} Elements ${\bf R}_i\in \mathfrak{so}_3((\lambda))$ of the form \eqref{RRR} are determined by the commutator relations up to a summand of the form 
$S(\lambda)\, {\bf e}_i,$ where $S$ is a scalar Taylor series. 
\end{remark}

\begin{op} Prove that for any $S(\lambda)$ the ${\cal U}$-module generated by ${\bf R}_i, \, i=1,2,3$ is the orthogonal complement to ${\cal U}$ with respect to the form 
$$
\langle X(\lambda),\, Y(\lambda)\rangle_P = {\rm res} \,P(\lambda)
\Big(X(\lambda),\,Y(\lambda)\Big), \qquad X(\lambda),Y(\lambda)
\in \mathfrak{so}_3((\lambda))
$$
with a proper scalar Taylor series $P$.
\end{op}

The simplest option $\,L\in {\cal O}_1 ,\quad A=\pi_{+}(L)\,$ corresponds to integrable models of Euler type. In this case we have  
\begin{equation}\label{eule}
L=M_{1}{\bf R_{1}}+M_{2}{\bf R_{2}}+M_{3}{\bf R_{3}}, \qquad 
 A=M_{1}{\bf E_{1}}+M_{2}{\bf E_{2}}+M_{3}{\bf E_{3}},
\end{equation}

The Lax pairs for integrable Kirchhoff type systems have the following form:
\begin{equation}\label{LKir}
\begin{array}{c}
L=\gamma_1 [{\bf R}_3,\,{\bf E}_2 ]+
\gamma_2 [{\bf R}_1,\,{\bf E}_3 ]+
\gamma_3 [{\bf R}_2,\,{\bf E}_1 ]+m_1 {\bf R}_1+m_2 {\bf R}_2+m_3 {\bf R}_3, \\[4mm]
A=\pi_{+}\Big(\lambda L\Big)=\gamma_1 {\bf E}_1+\gamma_2 {\bf E}_2+\gamma_3 {\bf E}_3,
\end{array}
\end{equation}
where $m_i=M_i+c_i \gamma_i$ for some constants $c_i.$
It follows from Proposition \ref{prop28} that $L$ is a generic element of ${\cal O}_2.$ A unique non-trivial higher symmetry for the corresponding  ODE system corresponds to $A=\pi_{+}(L).$

\subsubsection{Clebsch integrable
case}

The factoring subalgebra ${\cal U}$ from Example \ref{example25} generated by elements \eqref{EE1} is isotropic and therefore ${\bf R}_i={\bf E}_i, \,\, i=1,2,3.$ 
The Lax equation  \eqref{Lax}, ({\ref{LKir}), where and $m_i=M_i,$ is equivalent to  \eqref{Kirgh}, where
$$
H=-\frac{1}{2}\Big(M_1^2+M_2^2+M_3^2-(q+r) \gamma_1^2-(p+r) \gamma_2^2-(p+q) \gamma_3^2\Big).
$$
This coincides with the Clebsch integrable
case in the Kirchhoff problem of the motion of a rigid body in an ideal fluid. 

Since the subalgebra ${\cal U}$ is isotropic, the Lax pair \eqref{eule} gives nothing.

\subsubsection{Euler and Steklov--Lyapunov cases}

For the factoring subalgebra from Example {\ref{example26}} generated by \eqref{EEE1} we have 
$$
{\bf R}_{1}={\bf e}_{1}\, \frac{1}{\sqrt{(1-r
\lambda)(1-q\lambda)}\, \lambda},
\qquad
{\bf R}_{2}={\bf e}_{2}\,
\frac{1}{\sqrt{(1-r\lambda)(1-p\lambda)}\,\lambda},
$$
$$
{\bf R}_{3}={\bf e}_{3}\,
\frac{1}{\sqrt{(1-q\lambda)(1-p\lambda)}\, \lambda}.
$$
The Lax pair \eqref{eule} yields the Euler equation (see also Example \ref{manak})
 $$
{\bf M}_{t}={\bf M}\times {\bf V} {\bf M},
$$
where
${\bf V}={\rm diag}\,(p,q,r).$ This Lax pair differs from the one considered in Example \ref{manak}.

The Lax equation  \eqref{Lax}, ({\ref{LKir}) is equivalent to  \eqref{Kirgh}, where
$$
H=-\frac{1}{2}\Big(M_1^2+M_2^2+M_3^2+(r+q) M_1 \gamma_1 +(r+p) M_2 \gamma_2+(q+p) M_3 \gamma_3\Big)-
$$
$$
\frac{1}{8}\Big((r-q)^2 \gamma_1^2 +(p-r)^2 \gamma_2^2 +(q-p)^2 \gamma_3^2\Big),
$$
and
$$
m_1=M_1+\frac{r-q}{2}\, \gamma_1, \qquad 
m_2=M_2+\frac{p-r}{2}\, \gamma_2,\qquad 
m_3=M_3+\frac{q-p}{2}\, \gamma_3.
$$
This is just the integrable Steklov--Lyapunov case. 

\subsubsection{Kirchhoff integrable case}
For the factoring subalgebra described in Example \ref{example27}  the elements
$$
{\bf R}_{i}=\frac{{\bf e}_{i}}{\lambda}+\nu\, [{\bf V},{\bf e}_{i}]-
\frac{1}{2}[{\bf V},\,[{\bf V},{\bf e}_{i}]]+({\bf V},\, {\bf V})\, {\bf e}_{i}, \qquad i=1,2,3
$$
satisfy the commutator relations from Proposition \ref{prop28} and therefore generate a ${\cal U}$-module that does not contain non-zero Taylor series. Any such module can be used to construct Lax pairs.
\begin{remark} This module is not ${\cal U}^{\perp}$ (see Remark \ref{rem23}). 
\end{remark}

\begin{exercise} Find the system of ODEs that corresponds to Lax pair \eqref{eule}.
\end{exercise}

\begin{exercise} Check that the Lax pair \eqref{LKir}  gives rise to the Kirchhoff integrable case (see \cite{sokolgok1}).
\end{exercise}

\begin{op}
Find the elements ${\bf R}_{i}$ for ${\cal U}^{\perp}.$
\end{op}

\subsection{Generalization of Euler and Steklov--Lyapunov cases to the  $\mathfrak{so}_n$-case}

The factorizing subalgebra from Example \ref{example26} can be described by the formula
\begin{equation}\label{Uplus}
{\cal U}=({\bf 1}+\lambda {\bf V})^{1/2}\,{\cal U}^{st}\, ({\bf 1}+\lambda {\bf V})^{1/2},
\end{equation}
where  ${\bf V}={\rm diag}\,(p,q,r),$
$$
({\bf 1}+\lambda {\bf V})^{1/2} = {\bf 1}+\frac{1}{2} {\bf V}\, \lambda - \frac{1}{8} {\bf V}^2\, \lambda^2 + \cdots ,
$$
and ${\cal U}^{st}$ is defined by \eqref{polyn}. 
According to Lemma \ref{lem255} the formula  \eqref{Uplus}, where ${\bf V}$ is arbitrary diagonal matrix,
defines a factorizing subalgebra for
${\cal G}=\mathfrak{so}_n$ as well.
The orthogonal complement to ${\cal U}$ is given by
$$
{\cal U}^{\perp}=(1+\lambda {\bf V})^{-1/2}\,{\cal U}^{st}\, (1+\lambda {\bf V})^{-1/2}.
$$
The simplest possibility $\,L\in {\cal O}_1 ,\quad A=\pi_{+}(L)\,$ corresponds to 
 $$
L=(1+\lambda {\bf V})^{-1/2}\,\frac{{\bf M}}{\lambda}\, (1+\lambda {\bf V})^{-1/2},
\qquad A=(1+\lambda {\bf V})^{1/2}\,\frac{{\bf M}}{\lambda}\, (1+\lambda {\bf V})^{1/2},
$$
where $ {\bf M}\in \mathfrak{so}_n$.
This Lax pair produces the Euler equation on $\mathfrak{so}_n$:
$$
{\bf M}_{t}=[{\bf V}, \, {\bf M}^{2}]. 
$$
 
The system of equations
$$
{\bf M}_{t}=[{\bf V}, \, {\bf M}^{2}]+[{\bf M},\,{\bf \Gamma}],
\qquad {\bf \Gamma}_{t}={\bf V} {\bf M}  {\bf \Gamma} -{\bf \Gamma} {\bf M} {\bf V},
\qquad {\bf M},{\bf \Gamma}\in \mathfrak{so}_n
$$
 possesses the Lax pair
$$
L=(1+\lambda {\bf V})^{-1/2}\,\left(\frac{{\bf M}}{\lambda^{2}}+
\frac{{\bf \Gamma}}{\lambda}\right)\, (1+\lambda {\bf V})^{-1/2},
\qquad A=(1+\lambda {\bf V})^{1/2}\,\frac{{\bf M}}{\lambda}\, (1+\lambda {\bf V})^{1/2}
$$
corresponding to the orbit ${\cal O}_2$.
One can regard this equation as an  $\mathfrak{so}_n$-generalization of the Steklov--Lyapunov top \cite{sokolgok1}.

\subsection{Factoring subalgebras for Kac--Moody algebras}

The Clebsch, Steklov--Lyapunov and Kirchhoff \footnote{For the Kirchhoff case there exists also an integral of first degree.} cases possess
additional integrals of second degree. To get trickier examples
like Kowalevsky top, one can consider a decomposition problem
for Kac-Moody algebras
(see \cite{sokolgok1}).

Let ${\cal G}$ be a semi-simple Lie algebra and $\phi$ be an automorphism of ${\cal G}$ of a finite order $k$. Let
$$
{\cal G}_i = \{a \in {\cal G} \, \vert \quad \phi(a) = \varepsilon^i\,a\}\, \lambda^i,\qquad i\in \Z,
$$
where $\varepsilon$ is a primitive root of $1$ of degree $k$. In this case, the Lie algebra 
$${\cal G}((\lambda, \phi)) = \Big\{ \sum_{i=-n}^{\infty} g_i \quad \vert \quad g_i\in {\cal G}_i, \quad n\in \Z\Big\}$$ 
is $\Z$-graded. It is called an (extended) {\it twisted loop algebra} or a {\it Kac-Moody algebra}.  

Several interesting integrable systems are related to the following Kac-Moody algebra.  
Let
$${\cal G}=\{{\bf A}\in {\rm Mat}_{n+m}\, \vert\, {\bf A}^{t}=-{\bf S} {\bf A} {\bf S} \},$$
where
$$
{\bf S} = \left(\begin{array}{cc}
  {\bf 1}_{n} & 0 \\
  0 & - {\bf 1}_{m} \\
\end{array}\right).
$$
It is clear that the Lie algebra ${\cal G}$ is isomorphic over $\C$ to  $\mathfrak{so}_{n+m}.$

Consider the subalgebra ${\cal A}$ of the loop algebra over ${\cal G}$ consisting of Laurent series such
that the coefficients of even (respectively, odd) powers of $\lambda$  belong to ${\cal G}_{1}$
(respectively, ${\cal G}_{-1}$). Here by ${\cal G}_{\pm 1}$ we denote the eigenspaces of the inner second
order automorphism  $\phi: \, {\cal G}\rightarrow {\bf S}{\cal G} {\bf S}^{-1},$ corresponding to eigenvalues  $\pm 1$.
Actually, this means that the coefficients of even powers of
$\lambda$ have the following block structure
$$
\left(\begin{array}{cc}
  v_{1} & 0 \\
  0 & v_{2} \\
\end{array}\right),
$$ where $v_1\in \mathfrak{so}_n, \, v_2\in \mathfrak{so}_m$, and the coefficients of odd powers
are of the form
$$ \left(\begin{array}{cc}
  0 & w \\
  w^t & 0 \\
\end{array}\right),
$$
where $w\in {\rm Mat}_{n,m}$.

We choose $\hbox{res} (\lambda^{-1}\hbox {tr}(X\,Y))$ for the non-degenerate invariant form
on ${\cal A}$. Note that in this case the form $\hbox{res} (\hbox {tr}(X\,Y))$ is degenerate.

Let ${\cal T}$ be the set of all Taylor series from ${\cal A}$,
$$
{\cal U} =({\bf 1}+\lambda r)^{1/2}\,{\cal U}^{st} \,({\bf 1}+\lambda r)^{1/2},
$$
where ${\cal U}^{st}$ is the set of polynomials in
$\lambda^{-1}$ from ${\cal A}$ and $r$ is arbitrary constant matrix of the form
$$
r=\left(\begin{array}{cc}
  0 & r_1 \\
  -r_1^t & 0 \\
\end{array}\right), \qquad r_1\in  {\rm Mat}_{n,m}.
$$
According to Lemma \ref{lem255}, ${\cal U}$ is a factoring subalgebra and the sum
$
{\cal A}={\cal T} + {\cal U}
$
is direct. The subalgebra $\cal U$ is a natural generalization of  \eqref{Uplus} to the
case when the structure of coefficients of series from ${\cal U}$ are
defined by an additional automorphism of second order.
The orthogonal complement to ${\cal U}$ with respect of the form $$\langle X,\, Y \rangle={\rm res}\,\lambda^{-1} {\rm tr}\,(X Y)$$ is given by 
$$
{\cal U}^{\perp}=({\bf 1}+\lambda r)^{-1/2}{\cal U}^{st} ({\bf 1}+\lambda r)^{-1/2}.
$$

The Lax equation $L_t=[\pi_{+}(L),\, L]$ corresponding to
$$
 L= ({\bf 1}+\lambda
r)^{-1/2}(\lambda^{-1} w+v+\lambda u)({\bf 1}+\lambda r)^{-1/2}
$$
is equivalent to the following system of equations
\begin{equation}\label{uvw}
w_{t}=[w,\, wr+rw-v], \qquad v_{t}=[u,\,w]+vwr-rwv, \qquad
 u_{t}=uwr-rwu.
\end{equation}
It is easy to see that this system admits the reduction
$$
u=\left(\begin{array}{cc}
  0 & r_1 \\
  r_1^t & 0 \\
\end{array}\right),
$$
which leads to the model found in \cite{soktsig}.  In the case
$n=3, m=2$ under further reductions we arrive at the Lax representation for the  
integrable case in the Kirchhoff problem  \cite{sokol2} with the Hamiltonian
$$
H=\frac{1}{2} \vert {\bf u}\vert ^{2} \vert {\bf M}\vert
^{2}+\frac{1}{2} \Big({\bf u},\,{\bf M}\Big)^{2} + \Big({\bf
u}\times {\bf v} ,\, {\bf M}\times {\bf \Gamma}\Big),
$$
where ${\bf u}$ and ${\bf v}$ are arbitrary constant vectors such
that $({\bf u},{\bf v})=0.$ The additional integral of motion in this case is of degree four. 

\subsection{Integrable PDEs of the  Landau-Lifshitz type} 

\subsubsection{Landau-Lifshitz equations related to $\mathfrak{so}_3$}

Any factoring subalgebra $\cal U$ for $\mathfrak{so}_3$ yields the following Lax pair
$$
L=\frac{d}{dx}+U, \qquad U=\sum_{i=1}^3 s_i\, {\bf E}_i, \qquad \qquad s_1^2+s_2^2+s_3^2=1,
$$
$$
A=\sum s_i\, [{\bf E}_j,\,{\bf E}_k]+\sum t_i\, {\bf E}_i
$$
for an integrable PDE of the Landau-Lifshitz type.
In this case the Lax equation has the form
\begin{equation}\label{AU}
U_{t}-A_x+[U,\, A]=0.
\end{equation}
The Laurent expansion of the left hand side of \eqref{AU} contains terms with $\lambda^k,$ 
where $k\ge -2.$ If coefficients at $\lambda^{-2}$ and at $\lambda^{-1}$ vanish, then the left hand side of  \eqref{AU} identically equals zero. Indeed, the subalgebra
$\cal U$ does not contain any non-zero Taylor series. 

To find the corresponding non-linear system of the form $\,\, {\bf s}_{t}=\vec F({\bf s},{\bf s}_x,{\bf
s}_{xx}),$ where ${\bf s}=(s_1,s_2,s_3),$ one can use the following straightforward computation. 
 Comparing the coefficients of $\lambda^{-2}$, we express $t_i$ in terms of ${\bf s}, {\bf s}_{x}$. Equating the
coefficients of $\lambda^{-1},$ we get an evolution system for ${\bf s}$. 
  Using the symmetry approach to integrability \cite{MikShaSok91} these systems were found in \cite{mikshab}. 

A standard way of finding all $A$-operators of the hierarchy defined by a given $L$-operator of the Landau-Lifshitz type based on a diagonalization procedure (cf. with  Theorem \ref{DiaP}) was proposed in \cite{golsok44}. Here we don't discuss it.  

Consider the case of Example \ref{example25}. Equating the coefficients of $\lambda^{-2}$ in   \eqref{AU} to zero, we get ${\bf s}_x={\bf s}\times {\bf t},$ where ${\bf t}=(t_1,t_2,t_3).$ Since ${\bf s}^2=1$ we find ${\bf t}={\bf s}_x\times {\bf s}+\mu\, {\bf s}.$ Comparing the coefficients of $\lambda^{-1},$ we get 
${\bf s}_t={\bf t}_x-{\bf s}\times {\bf V} {\bf s}$ or
$$
{\bf s}_t={\bf s}_{xx}\times {\bf s}+\mu_x\, {\bf s}+\mu\, {\bf s}_x-{\bf s}\times {\bf V} {\bf s}, $$
where ${\bf V}={\rm diag}\,(p,q,r).$ Since the scalar product $({\bf s},\,{\bf s}_t )$
has to be zero, we find that $\mu={\rm const}.$ The resulting equation coincides with  \eqref{ll} up to the involution $t \to -t$, the additional term of the form ${\rm const}\,  {\bf s}_x $ and a change of notation.  

The factoring subalgebra from Example \ref{example26} yields the equation
$$
{\bf s}_{t}={\bf s}\times {\bf s}_{xx}+({\bf s},\, {\bf V}\, {\bf s})\,
{\bf s}_{x}+ 2 {\bf s}\times ({\bf s}\times {\bf V}\, {\bf s}_{x}).
$$

The subalgebra from Example \ref{example27} corresponds to equation
$$
{\bf s}_{t}={\bf s}\times {\bf s}_{xx}+({\bf s},\, {\bf Z}\, {\bf s})\,
{\bf s}_{x}+ 2 {\bf s}\times ({\bf s}\times {\bf Z}\, {\bf s}_{x})+c\,
{\bf s} \times {\bf Z}\, {\bf s},
$$
where
$$ {\bf Z}= \left(
\begin{array}{ccc}
 r_{1}^{2}&r_{1}r_{2}&r_{1}r_{3} \\
r_{1}r_{2}&r_{2}^{2}&r_{2}r_{3}
 \\ r_{1}r_{3}&r_{2}r_{3}&r_{3}^{2}
\end{array}%
\right), \qquad \qquad
 c=\nu^{2}+\frac{r_{1}^{2}+r_{2}^{2}+r_{3}^{2}}{4}.
$$
In this equation ${\bf Z}$ is an arbitrary symmetric matrix of rank one and $c$ is an arbitrary constant.

\subsubsection{Perelomov model and vector Landau-Lifshitz equation}

Consider a special case $n=N, m=1$ of the Kac-Moody algebra from Subsection 2.3.5. Let us take  
$$
{\cal U} =\Big\{\, \sum_{i=-n}^{0} \lambda^{2 i}
\left(\begin{array}{cc}
  \Lambda {\bf A}_{i} \Lambda & \Lambda {\bf u}_i \\
  {\bf u}_i^t \Lambda & 0 \\
\end{array}\right), \quad n\in \N\,
\Big\}
$$
for the factoring subalgebra.
Here
$$
\Lambda=\frac{1}{\lambda}\sqrt{{\bf 1}-\lambda^2 {\bf R}}=\frac{{\bf 1}}{\lambda}
-\frac{{\bf R}}{2}\lambda-\frac{{\bf R}^2}{8 }\lambda^3+\cdots \, ,
$$
${\bf R}={\rm diag}\,(r_1,\dots, r_N), $ ${\bf A}_i$ are skew-symmetric $N\times N$-matrices, and ${\bf u}_i$ are column vectors. The orthogonal complement to ${\cal U}$ with respect to the form ${\rm res}(\lambda^{-1} (X,\,Y))$ has the form
\begin{equation}\label{ortog}
{\cal U}^{\perp}=\Big\{ \sum_{i=-n}^{-1} \lambda^{2 i}
\left(\begin{array}{cc}
  \Lambda^{-1} {\bf A}_{i} \Lambda^{-1} & \Lambda^{-1} {\bf u}_i \\
  {\bf u}_i^t \Lambda^{-1} & 0 \\
\end{array}\right), \quad n\in \N
\Big\}.
\end{equation}
The simplest $L$-operator
$$
L= \frac{1}{\lambda^{2}}\left(\begin{array}{cc}
  \Lambda^{-1} {\bf V} \Lambda^{-1} & \Lambda^{-1} {\bf u} \\
  {\bf u}^t \Lambda^{-1} & 0 \\
\end{array}\right)
$$
corresponds to $n=1$ in \eqref{ortog}. The Lax equation \eqref{Lax} with 
$$
A=\pi_{+}(\lambda^{-2} L)=\frac{1}{\lambda^{2}}\left(\begin{array}{cc}
  0 & \Lambda {\bf u} \\
  {\bf u}^t \Lambda & 0 \\
\end{array}\right)+\left(\begin{array}{cc}
  \Lambda {\bf V} \Lambda & \Lambda {\bf R} {\bf u} \\
  {\bf u}^t {\bf R} \Lambda & 0 \\
\end{array}\right)
$$
gives rise to Perelomov's $\mathfrak{so}_N$ generalization  
$$
{\bf V}_{t}=[{\bf V}^{2},\,{\bf R}]+[ {\bf u} {\bf u}^{t},\,{\bf R}^{2}], \qquad {\bf u}_{t}+({\bf V} {\bf R}+{\bf R} {\bf V})\, {\bf u}=0
$$
of the Clebsch top system.

For the Landau-Lifshitz equation \eqref{lanlif} the $L$-operator is given by \eqref{LAlif} and 
$$
A=\frac{1}{\lambda^2} \left(\begin{array}{cc}
0\,&\Lambda {\bf u} \\ 
{\bf u}^T \Lambda\,&0  
\end{array} \right)+\left(\begin{array}{cc}
  \Lambda {\bf V} \Lambda & \Lambda {\bf y} \\
  {\bf y}^t \Lambda & 0 \\
\end{array}\right),
$$
where the entries of ${\bf V}$ are given by $v_{i,j}=u_i (u_j)_x -u_j (u_i)_x $ and $${\bf y}={\bf u}_{xx}+\Big(\frac{3}{2}\langle {\bf u}_x,{\bf u}_x  \rangle+\frac{1}{2}\langle {\bf u},{\bf R}{\bf u}\rangle\Big)\, {\bf u}.$$

\subsection{Hyperbolic models of chiral type}

A class of factoring subalgebras for ${\cal G}=\mathfrak{so}_4$ and their relations with integrable $\mathfrak{so}_4$ spinning tops were investigated in \cite{efim}.  These subalgebras also generate \cite{sokolgok2, efsok} integrable 
hyperbolic PDEs of the form
$$
{\bf u}_{\xi}= {\bf A} {\bf v} \times {\bf u}, \qquad {\bf
v}_{\eta}=\bar {\bf A}\, {\bf u} \times{\bf v} ,
$$
where
$$
\qquad {\bf A}=\hbox {diag}(a_{1},a_{2},a_{3}),
\qquad  \bar {\bf A}=\hbox {diag}(\bar a_{1},\bar
a_{2},\bar a_{3}),
$$
${\bf u}, {\bf v}$ are three-dimensional 
s,  and the constants $a_i, \bar a_j$ obey the following relations. 
$$
a_{1} \bar a_{1}
(a_{3}^{2}-a_{2}^{2})+a_{2} \bar a_{2}
(a_{1}^{2}-a_{3}^{2})+a_{3} \bar a_{3}
(a_{2}^{2}-a_{1}^{2})=0,
$$$$
a_{1} \bar a_{1} (\bar a_{3}^{2}-\bar
a_{2}^{2})+ a_{2} \bar a_{2} (\bar
a_{1}^{2}-\bar a_{3}^{2})+a_{3} \bar a_{3}
(\bar a_{2}^{2}-\bar a_{1}^{2})=0.
$$
The Cherednik model \cite{cher} corresponds to $a_i=\bar a_i, \quad i=1,2,3.$ For the Golubchik-Sokolov case \cite{sokolgok2} we have $a_i=\bar a_i^{-1}, \quad i=1,2,3.$

The case when $a_{1}=a_{2}$
and then $\bar a_{1}=\bar a_{2},$ 
$a_{3},\,\bar a_{3}$ are arbitrary, was found in \cite{efsok}.

\subsubsection{Resume}

The description of factoring subalgebras is a fundamental problem of the theory. Each factoring subalgebra generates several different integrable PDEs and ODEs.

\section{Factorization method and non-associative algebras} 

Let ${\cal A}$ be an $N$-dimensional algebra  with a multiplication operation  $\circ$ defined by the
structural constants $C^i_{jk}.$ We associate to $\cal A$
a top-like ODE-system of the form
\begin{equation}\label{utop}
u^i_t=\sum_{j,k}C^i_{jk}\, u^j u^k, \qquad 
  i,j,k = 1, . . . , N.  
\end{equation}
Let ${\bf e}_1,\dots,\bf{e}_N$ be a basis in ${\cal A}$  and 
$$
U=\sum^N_{i=1} u_i {\bf e}_i.
$$
The system \eqref{utop} can be written in a short form
\begin{equation}\label{algtop}
U_t = U\circ U. 
\end{equation}
The system \eqref{algtop} is called {\it the ${\cal A}$-top}.

\begin{definition}
Algebras with the identity 
$[X, Y, Z]=0$ are called {\it left-symmetric} \cite{vinb}.
\end{definition}
Hereinafter we use the notation \eqref{as} and \eqref{br}. 

\begin{definition} Algebras with  the identity
\begin{equation}\label{SS}
[V, X, Y \circ Z] - [V, X, Y] \circ Z - Y \circ [V, X, Z] = 0.
\end{equation}
are called {\it SS-algebras} \cite{ss, gss}. 
\end{definition}
\begin{remark}
It follows from \eqref{SS} that for any $SS$-algebra ${\cal A}$ the operator
$$
K_{YZ} = [L_Y , L_Z] - L_{Y\circ Z} + L_{Z\circ Y}
$$
is a derivation of ${\cal A}$ for any $Y,Z$. As usual, $L_X$ denotes the operator of left multiplication by $X$.
\end{remark}

\begin{definition} An algebra with identities
\begin{equation}\label{t1}
[X, Y, Z] + [Y, Z, X] + [Z, X, Y] = 0, 
\end{equation}
and
\begin{equation}\label{t2}
V\circ [X, Y, Z] = [V\circ X, Y,Z] + [X, V\circ Y,Z] + [X, Y, V \circ Z]
\end{equation}
is called $G$-{\it algebra} \cite{golsokNon}. 
\end{definition}
\begin{remark}
Identity \eqref{t1} means that the operation $X \circ Y - Y \circ X$ is a Lie bracket.
\end{remark}

\subsection{Factorization method}

The factorization method (or, the same, AKS-scheme \cite{AKS}) is a finite--dimensional analog of the Riemann-Hilbert problem \cite{ZakShab79}, which can be used as a basis for the inverse scattering method. 

Similarly to Section 2.3, we deal with a vector space decomposition of a Lie algebra into a direct sum of its subalgebras.

Let ${\cal G}$ be a finite-dimensional Lie algebra, ${\cal G}_{+}$ and ${\cal G}_{-}$ be subalgebras in  ${\cal G}$ such that
\begin{equation}\label{decomp}
{\cal G}={\cal G}_{+} \oplus {\cal G}_{-}.
\end{equation}
The simplest example is the Gauss  decomposition of the matrix algebra into the sum of upper and law triangular matrices.  

The standard factorization method is used to integrate the following very special systems of the form \eqref{utop}:
\begin{equation}\label{decsys}
X_t = [\pi_{+}(X) ,\, X], \qquad X(0) = x_0. 
\end{equation}
Here $X(t) \in {\cal G}$, $\pi_{+}$ is the projector onto ${\cal G}_{+}$ parallel to ${\cal G}_{-}$.   Very often we denote by $X_{+}$ and $X_{-}$ the projections of $X$ onto ${\cal G}_{+}$ and ${\cal G}_{-}$, 
respectively. For
simplicity we assume that ${\cal G}$ is embedded into a matrix algebra. 
\begin{remark}
It follows from Lemma \ref{tr} that for any $k$ the function ${\rm tr}\, X^k$ is an integral of motion for \eqref{decsys}.
\end{remark}

\begin{proposition}
The solution of  Cauchy problem \eqref{decsys} is given by the formula
\begin{equation}\label{ans}
X(t) = A(t)\,x_0\,A^{-1}(t), 
\end{equation}
where function $A(t)$ is defined as a solution of the following factorization problem
\begin{equation}\label{fac}
 A^{-1}\,B = {\rm exp}\,(- x_0 \,t), \qquad A \in G_{+}, \quad B \in G_{-}, 
\end{equation}
where $G_{+}$ and $G_{−}$ are the Lie groups of  ${\cal G}_{+}$ and ${\cal G}_{-}$, respectively. 

\end{proposition}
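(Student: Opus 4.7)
The plan is to verify directly that the formula $X(t)=A(t)\,x_0\,A^{-1}(t)$ satisfies both the initial condition and the ODE, by differentiating the factorization relation \eqref{fac} and reading off the $\pm$-components of $X$ from the resulting identity. I would first note that at $t=0$ the relation \eqref{fac} reduces to $A(0)^{-1}B(0)=\IdentityMatrix$, i.e.\ $A(0)=B(0)$; since $A(0)\in G_{+}$ and $B(0)\in G_{-}$, and since the direct-sum decomposition \eqref{decomp} forces the factorization $G_+\cdot G_-$ to be locally unique in a neighborhood of the identity (inverse function theorem applied to the multiplication map $G_+\times G_-\to G$, whose differential at $(e,e)$ is the isomorphism realizing \eqref{decomp}), we get $A(0)=B(0)=\IdentityMatrix$ and hence $X(0)=x_0$.

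Next I would differentiate \eqref{fac} in the form $B=A\exp(-x_0 t)$, or equivalently the form $A^{-1}B=\exp(-x_0 t)$. Differentiating the latter and using $(A^{-1})_t=-A^{-1}A_t A^{-1}$, one gets
\[
-A^{-1}A_t A^{-1}B + A^{-1}B_t = -x_0\, A^{-1}B.
\]
Multiplying on the left by $A$ and on the right by $B^{-1}$ and rearranging yields the key identity
\[
X \;=\; A\,x_0\,A^{-1} \;=\; A_t A^{-1} \;-\; B_t B^{-1}.
\]
Since $A(t)\in G_+$ and $B(t)\in G_-$, the logarithmic derivatives satisfy $A_t A^{-1}\in{\cal G}_+$ and $B_t B^{-1}\in{\cal G}_-$. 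By uniqueness of the decomposition \eqref{decomp}, this is precisely the splitting of $X$, so $\pi_+(X)=A_t A^{-1}$.

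Finally, differentiating $X=A x_0 A^{-1}$ directly,
\[
X_t = A_t x_0 A^{-1} - A x_0 A^{-1} A_t A^{-1} = [A_t A^{-1},\,X] = [\pi_+(X),\,X],
\]
which is the desired equation \eqref{decsys}.

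The only nontrivial step is justifying the existence and uniqueness of the factorization \eqref{fac} itself, and this is where the main obstacle lies: globally the product map $G_+\times G_-\to G$ need not be a diffeomorphism, so the formula $X(t)=A(t)x_0A^{-1}(t)$ is only guaranteed to make sense for $t$ in some maximal interval where $\exp(-x_0 t)$ remains in the open ``big cell'' $G_+\cdot G_-$. For the purpose of this proposition it suffices to work on a small neighborhood of $t=0$, where the inverse function theorem supplies smooth $A(t)\in G_+$ and $B(t)\in G_-$; the computation above then proves the claim on that interval, and the solution extends by analytic continuation (or by the ODE \eqref{decsys} itself) as long as the factorization persists.
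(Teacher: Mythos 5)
Your proof is correct and follows essentially the same route as the paper: differentiate the factorization relation to obtain $-A_tA^{-1}+B_tB^{-1}=-Ax_0A^{-1}$, project onto ${\cal G}_{+}$ to identify $A_tA^{-1}=\pi_{+}(X)$, and then differentiate $X=Ax_0A^{-1}$ to recover \eqref{decsys}. The extra remarks on the initial condition $A(0)=B(0)=\IdentityMatrix$ and on local solvability of the factorization are sound and only supplement what the paper treats separately (in the discussion of algebraic groups and in Proposition~\ref{prlin}).
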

\begin{proof}
Differentiating \eqref{ans},  we obtain
$$
X_t=A_t x_0 A^{-1}-A x_0 A^{-1} A_t A^{-1} = [A_t A^{-1},\, X].
$$
It follows from \eqref{fac} that
$$
-A^{-1} A_t A^{-1} B + A^{-1} B_t = -x_0 A^{-1} B. 
$$
The latter relation is equivalent to
$$
-A_t A^{-1} + B_t B^{-1}=- A x_0 A^{-1}. 
$$
Projecting it onto ${\cal G}_{+}$, we get $A_t A^{-1}=X_{+}$ which proves \eqref{decsys}.

\end{proof}

If the groups  $G_{+}$ and $G_{−}$  are
algebraic, then the conditions
$$
A \in G_{+},\qquad  A\, {\rm exp}(-x_0\,t) \in G_{-}
$$
are equivalent to a system of algebraic equations from which (for small $t$) the matrix
$A(t)$ is uniquely determined. 

The factorization problem \eqref{fac} can also be reduced to a system of linear differential equations with variable coefficients for $A(t)$. Define a linear operator $L(t): \,{\cal G}_{+} \to {\cal G}_{+}$ by the formula 
$$L(t)(v)=\Big({\rm exp}(x_0\,t)\, v\, {\rm exp}(-x_0\,t)  \Big)_{+}.
$$
Since $L(0)$ is the identity operator, $L(t)$ is invertible for small $t$. 

\begin{proposition}\label{prlin}
Let $A(t)$ be the solution of the initial problem
$$
A_t= A\, L(t)^{-1}\Big( (x_0)_{+}\Big), \qquad A(0)=I.
$$
Define $B$ by the formula $B=A\, {\rm exp}(-x_{0}\,t)$. Then the pair $(A,\,B)$ is the solution of the factorization problem \eqref{fac}. 
\end{proposition}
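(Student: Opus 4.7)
My plan is to verify the three requirements of the factorization problem \eqref{fac} one at a time, namely that $A^{-1}B=\exp(-x_0 t)$, that $A(t)\in G_{+}$, and that $B(t)\in G_{-}$. The first is immediate from the very definition $B=A\exp(-x_0 t)$, so no work is required there.

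For the second claim I read the defining ODE as giving the logarithmic derivative
$$
A^{-1}A_t \,=\, L(t)^{-1}\bigl((x_0)_{+}\bigr) \,\in\, {\cal G}_{+},
$$
which lies in ${\cal G}_{+}$ because $L(t)$ is by construction an endomorphism of ${\cal G}_{+}$. Since $A(0)=I\in G_{+}$, and the integral curve of a left-invariant vector field generated by an element of ${\cal G}_{+}$ starting at the identity remains in $G_{+}$, we obtain $A(t)\in G_{+}$ on the interval of existence. (The ODE itself is well posed because $L(0)={\rm id}$, so $L(t)$ is invertible for small $t$.)

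The crux of the argument is the third claim $B(t)\in G_{-}$; for this I will compute $B^{-1}B_t$ explicitly and check that it lies in ${\cal G}_{-}$. Differentiating $B=A\exp(-x_0 t)$, and using that $x_0$ commutes with $\exp(x_0 t)$, gives
$$
B^{-1}B_t \,=\, \exp(x_0 t)\,(A^{-1}A_t)\,\exp(-x_0 t) \,-\, x_0
\,=\, \exp(x_0 t)\,L(t)^{-1}\bigl((x_0)_{+}\bigr)\,\exp(-x_0 t) \,-\, x_0.
$$
Setting $u(t)=L(t)^{-1}((x_0)_{+})\in {\cal G}_{+}$, the defining identity $L(t)(u)=\bigl(\exp(x_0 t)\,u\,\exp(-x_0 t)\bigr)_{+}=(x_0)_{+}$ says precisely that the ${\cal G}_{+}$-component of $\exp(x_0 t)\,u\,\exp(-x_0 t)$ equals $(x_0)_{+}$; hence that conjugate equals $(x_0)_{+}+\zeta(t)$ for some $\zeta(t)\in {\cal G}_{-}$. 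Substituting back yields $B^{-1}B_t=\zeta(t)-(x_0)_{-}\in {\cal G}_{-}$, and since $B(0)=I\in G_{-}$ the same left-invariance argument as for $A$ shows $B(t)\in G_{-}$.

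The only real obstacle is recognizing that the somewhat mysterious operator $L(t)^{-1}$ in the ODE for $A$ has been engineered exactly so that the ${\cal G}_{+}$-part of $\exp(x_0 t)(A^{-1}A_t)\exp(-x_0 t)$ cancels the ${\cal G}_{+}$-part of $x_0$, leaving a residue in ${\cal G}_{-}$; everything else is a routine verification.
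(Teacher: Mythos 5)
Your proof is correct and follows essentially the same route as the paper: you verify $A\in G_{+}$ from $A^{-1}A_t=L(t)^{-1}((x_0)_{+})\in{\cal G}_{+}$, then compute $B^{-1}B_t=\exp(x_0 t)\,L(t)^{-1}((x_0)_{+})\,\exp(-x_0 t)-x_0$ and use the definition of $L(t)$ to see its ${\cal G}_{+}$-projection vanishes. Your write-up merely makes explicit the final projection step that the paper states in one line.
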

\begin{proof} Since $A^{-1}\,A_t \in {\cal G}_{+}$ and $A(0)=I$, we have $A\in G_{+}.$ It suffices to verify that $B^{-1} B_t \in {\cal G}_{-}$. We have 
$$
B^{-1} B_t= {\rm exp}(x_{0}\,t) \,A^{-1}\,\Big(A_t  {\rm exp}(-x_{0}\,t) -A x_{0}\,  {\rm exp}(-x_{0}\,t)  \Big)=
$$
$$
 {\rm exp}(x_{0}\,t)\,\Big(L(t)^{-1} (x_0)_{+}   \Big)\,{\rm exp}(-x_{0}\,t) - x_0.
$$
Projecting this identity onto ${\cal G}_{+}$ and using the definition of the operator $L(t)$, we obtain $(B^{-1} B_t)_{+}=0.$
\end{proof}

\subsection{Reductions}
It follows from \eqref{ans} that if the initial data $x_0$ for \eqref{decsys} belongs to a ${\cal G}_{+}$-module ${\cal M}$,
then $X(t)\in {\cal M}$ for any $t$. Such a specialization of \eqref{decsys} can be written as
\begin{equation}\label{mm}
M_t=[\pi_{+}(M),\,M], \qquad  M\in {\cal M}. 
\end{equation} 
Introducing the product 
\begin{equation}\label{mprod}
M_1\circ M_2=[\pi_{+}(M_1),\,M_2], \qquad  M_i\in {\cal M}, 
\end{equation} 
we equip ${\cal M}$ with a structure of algebra. The system \eqref{mm} is called ${\cal M}$-{\it reduction} and the operation \eqref{mprod} is called ${\cal M}$-{\it product}. 

Some classes of modules ${\cal M}$ correspond to interesting non-associative  algebras defined by \eqref{mprod}.

\subsubsection{Reductions for $\Z_2$--graded Lie algebras}
Let \begin{equation}\label{Zgrad}
{\cal G}={\cal G}_0\oplus {\cal G}_1
\end{equation}
 be a $\Z_2$-graded Lie algebra:
$$
[{\cal G}_0, {\cal G}_0] \subset {\cal G}_0, \qquad  [{\cal G}_0, {\cal G}_1] \subset {\cal G}_1, \qquad  [{\cal G}_1, {\cal G}_1] \subset {\cal G}_0.
$$

Suppose that we have a decomposition \eqref{decomp}, where ${\cal G}_{+} = {\cal G}_0$.
Let us consider the ${\cal G}_1$-reduction. 

\begin{example}
Let ${\cal G}_0={\cal G}_{+}, \,\, {\cal G}_1={\cal M},\,\, {\cal G}_{-},$ and ${\cal G}$ be the sets of skew-symmetric, symmetric,
upper-triangular and all matrices, respectively. Then the formula \eqref{mprod} defines the structure of a $G$-algebra on the set of symmetric matrices.  
\end{example}

It is clear that
\begin{equation}\label{Gmin}
{\cal G}_{-} = \{m - R(m)\,\vert\, m \in {\cal G}_1\},
\end{equation}
where $R=\pi_{+}$ is the projection onto ${\cal G}_{+} = {\cal G}_0$ parallel to ${\cal G}_{-}$.
\begin{theorem} {\rm \cite{golsokNon}}
 The vector space \eqref{Gmin} is a Lie subalgebra in ${\cal G}$ iff $R:\, {\cal G}_1\to {\cal G}_0 $ satisfies the modified Yang-Baxter equation
$$
R\Big([R(X),\,Y] - [R(Y),\,X]\Big) - [R(X),\, R(Y)] - [X,\, Y] = 0, \qquad X, Y \in {\cal G}_1.
$$
\end{theorem}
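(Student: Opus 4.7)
The plan is to prove both directions simultaneously by directly computing the Lie bracket of two generic elements of $\mathcal{G}_{-}$ and comparing the grading decomposition with the condition for the result to remain in $\mathcal{G}_{-}$. A generic element of $\mathcal{G}_{-}$ has the form $X - R(X)$ with $X \in \mathcal{G}_1$, and the map $m \mapsto m - R(m)$ from $\mathcal{G}_1$ to $\mathcal{G}_{-}$ is a linear bijection: this is because $\mathcal{G} = \mathcal{G}_0 \oplus \mathcal{G}_{-}$ with $\mathcal{G}_{+} = \mathcal{G}_0$ forces $\mathcal{G}_{-} \cap \mathcal{G}_0 = \{0\}$, so the restriction of the projection $\mathcal{G} \to \mathcal{G}_1$ to $\mathcal{G}_{-}$ is an isomorphism. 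Consequently an element $g \in \mathcal{G}$ lies in $\mathcal{G}_{-}$ if and only if, writing $g = g_0 + g_1$ with $g_i \in \mathcal{G}_i$, one has $g_0 = -R(g_1)$.

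Next I would expand, for $X, Y \in \mathcal{G}_1$,
$$[X - R(X),\, Y - R(Y)] = [X, Y] + [R(X), R(Y)] - [X, R(Y)] - [R(X), Y].$$
Using the $\mathbb{Z}_2$-grading $[\mathcal{G}_i, \mathcal{G}_j] \subseteq \mathcal{G}_{i+j \bmod 2}$ together with $R(X), R(Y) \in \mathcal{G}_0$ and $X, Y \in \mathcal{G}_1$, this bracket splits uniquely as a $\mathcal{G}_0$-part equal to $[X, Y] + [R(X), R(Y)]$ and a $\mathcal{G}_1$-part equal to $-[X, R(Y)] - [R(X), Y]$. Applying the criterion from the previous paragraph, the bracket lies in $\mathcal{G}_{-}$ if and only if
$$[X, Y] + [R(X), R(Y)] = -R\bigl(-[X, R(Y)] - [R(X), Y]\bigr) = R\bigl([R(X), Y] + [X, R(Y)]\bigr).$$
Rewriting $[X, R(Y)] = -[R(Y), X]$ converts this into the stated modified Yang-Baxter equation
$$R\bigl([R(X), Y] - [R(Y), X]\bigr) - [R(X), R(Y)] - [X, Y] = 0.$$

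Since this equivalence holds for every pair $X, Y \in \mathcal{G}_1$, and since closure of $\mathcal{G}_{-}$ under the bracket is precisely the condition for it to be a Lie subalgebra (linearity being automatic), both implications are established. There is no real obstacle: the only thing to watch is bookkeeping signs to reconcile $[X, R(Y)]$ with $-[R(Y), X]$ and to match conventions with the form of the mYBE as written in the theorem; the grading makes the decomposition of the bracket rigid, so the whole argument reduces to this one computation.
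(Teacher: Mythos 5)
Your proof is correct and complete: the identification of $\mathcal{G}_{-}$ with the graph $\{m-R(m)\}$, the criterion $g\in\mathcal{G}_{-}\Leftrightarrow g_0=-R(g_1)$, and the splitting of $[X-R(X),\,Y-R(Y)]$ into its $\mathcal{G}_0$-part $[X,Y]+[R(X),R(Y)]$ and $\mathcal{G}_1$-part $-[X,R(Y)]-[R(X),Y]$ are all handled properly, and the sign bookkeeping matches the stated form of the modified Yang--Baxter equation. The paper itself gives no proof of this theorem (it only cites the reference), and your direct computation is the standard argument one would find there, so there is nothing further to reconcile.
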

\begin{remark}
It is important to note that in our case $R$ is an operator defined on ${\cal G}_1$ and acting from
${\cal G}_1$ to ${\cal G}_0$, whereas usually {\rm (see \cite{semen})} $R$ is assumed to be an operator on ${\cal G}$.
\end{remark}

\begin{proposition}\label{lefts} If $[{\cal G}_1,\, {\cal G}_1]=\{ 0 \},$ then ${\cal G}_1$ is a
left-symmetric algebra with respect to the product \eqref{mprod}.
\end{proposition}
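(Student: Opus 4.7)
The plan is to unpack the definition of left-symmetry, namely that the associator is symmetric in the first two arguments, and verify this directly from the Jacobi identity together with the modified Yang-Baxter equation satisfied by $R = \pi_{+}\colon {\cal G}_1\to {\cal G}_0$.

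First I would compute the associator of the product $X\circ Y = [R(X),Y]$ on three elements $X,Y,Z\in {\cal G}_1$. Observe that $X\circ Y = [R(X),Y]$ lies in ${\cal G}_1$ by the grading, so the iterated products make sense inside ${\cal G}_1$. Expanding,
\begin{equation*}
As(X,Y,Z)=[R(X\circ Y),Z]-[R(X),[R(Y),Z]],
\end{equation*}
and similarly for $As(Y,X,Z)$. Therefore
\begin{equation*}
[X,Y,Z]=\bigl[R(X\circ Y)-R(Y\circ X),\,Z\bigr]-[R(X),[R(Y),Z]]+[R(Y),[R(X),Z]].
\end{equation*}

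Next I would apply the Jacobi identity in ${\cal G}$ to combine the last two terms:
\begin{equation*}
-[R(X),[R(Y),Z]]+[R(Y),[R(X),Z]]=-\bigl[[R(X),R(Y)],\,Z\bigr].
\end{equation*}
Hence
\begin{equation*}
[X,Y,Z]=\Bigl[R(X\circ Y)-R(Y\circ X)-[R(X),R(Y)],\;Z\Bigr].
\end{equation*}

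Finally I would invoke the hypothesis $[{\cal G}_1,{\cal G}_1]=\{0\}$. With this, the modified Yang-Baxter equation for $R$ from the preceding theorem reduces to
\begin{equation*}
R\bigl([R(X),Y]-[R(Y),X]\bigr)=[R(X),R(Y)],
\end{equation*}
which is precisely $R(X\circ Y)-R(Y\circ X)=[R(X),R(Y)]$. Substituting into the expression above yields $[X,Y,Z]=0$, i.e.\ ${\cal G}_1$ is left-symmetric. The only mild subtlety is making sure the modified Yang-Baxter equation is available for $R$ with source ${\cal G}_1$ and target ${\cal G}_0$, which is exactly the framework of the cited theorem; no serious obstacle arises, since once $[{\cal G}_1,{\cal G}_1]=0$ kills the inhomogeneous term, everything collapses to a one-line identification.
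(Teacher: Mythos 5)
Your proof is correct, and it reorganizes the paper's argument around a different key lemma. The paper works directly with the projections $X_{\pm}$: it expands $[X,Y,Z]$ from the definition of the product, uses that ${\cal G}_{-}$ is a subalgebra (so that $[X_{-},Y_{-}]_{+}=0$) together with the Jacobi identity for $X_{+},Y_{+},Z$, and arrives at the intermediate identity $[X,Y,Z]=[[X,Y],Z]$, which vanishes under the hypothesis $[{\cal G}_1,{\cal G}_1]=0$. You instead reduce the associator difference to $\bigl[R(X\circ Y)-R(Y\circ X)-[R(X),R(Y)],\,Z\bigr]$ and then quote the modified Yang--Baxter equation for $R$ from the preceding theorem. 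The two routes are mathematically equivalent: by that theorem the modified Yang--Baxter equation is precisely the statement that ${\cal G}_{-}=\{m-R(m)\}$ is closed under the bracket, which is the fact the paper uses by hand. Note also that your computation recovers the paper's identity in full generality, since without the hypothesis the Yang--Baxter equation gives $R(X\circ Y)-R(Y\circ X)-[R(X),R(Y)]=[X,Y]$ and hence $[X,Y,Z]=[[X,Y],Z]$. What your packaging buys is a cleaner calculation with no bookkeeping of $\pm$ components; the price is leaning on the Yang--Baxter theorem as a black box, and you correctly flag the one point that needs checking, namely that the theorem applies to the present $R$ --- it does, because ${\cal G}_{-}$ is a subalgebra by assumption and has the graph form \eqref{Gmin}.
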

\begin{proof} Let $X,Y,Z \in {\cal G}_1$. Let us verify that
\begin{equation}\label{iden}
[X,\, Y, \,Z] = [[X, Y ], Z],
\end{equation}
where the left hand side is defined by \eqref{br}. According to \eqref{mprod}, it has the form
$$
[[X_{+},Y]_{+},\, Z] - [[Y_{+},X]_{+},\, Z] + [Y_{+}, \,[X_{+}, Z]]-  [X_{+}, \,[Y_{+}, Z]].
$$ 
 We have
$$
[[X_{+},Y]_{+},\, Z] - [[Y_{+},X]_{+},\, Z]=[[X,\,Y]_{+},\, Z]-[[X_{-},\,Y]_{+},\, Z]+ [[X, Y_{+}]_{+},\, Z]=
$$
$$
[[X,\,Y],\, Z]+[[X_{+}, Y_{+}]_{+},\, Z]=[[X,\,Y],\, Z]+[[X_{+}, Y_{+}],\, Z].
$$
Now \eqref{iden} follows from the Jacobi identity for $X_{+}, Y_{+}, Z$.  Since $[{\cal G}_1,\, {\cal G}_1]=\{ 0 \}$ the proposition statement is a consequence of \eqref{iden}.
\end{proof}

In the general case we arrive at $G$-algebras.

\begin{theorem} \label{TG} \begin{itemize}
\item[i)]  The vector space ${\cal G}_1$ is a
$G$-algebra with respect to the operation \eqref{mprod}.
\item[ ii)] Any $G$-algebra can be obtained from a suitable $\Z_2$-graded Lie algebra by the above construction.
\end{itemize}
\end{theorem}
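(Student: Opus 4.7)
The plan for part (i) is to reduce both axioms (t1) and (t2) to the Jacobi identity in ${\cal G}$ via a single key formula. Writing $R := \pi_{+}|_{{\cal G}_1}: {\cal G}_1 \to {\cal G}_0$, so that $X \circ Y = [R(X), Y]$, I would expand
\[
[X,Y,Z] = [R([R(X),Y]), Z] - [R([R(Y),X]), Z] - [R(X), [R(Y), Z]] + [R(Y), [R(X), Z]],
\]
collect the last two terms into $[[R(X), R(Y)], Z]$ by the Jacobi identity in ${\cal G}_0$, and finally invoke the modified Yang--Baxter equation (which characterizes ${\cal G}_-$ as a subalgebra) to collapse the prefactor, obtaining the clean identity
\[
[X, Y, Z] \;=\; [[X, Y]_{\cal G},\; Z],
\]
in which the inner bracket is the Lie bracket in ${\cal G}$ and takes values in $[{\cal G}_1, {\cal G}_1] \subset {\cal G}_0$.

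With this formula in hand, axiom (t1) is just the cyclic Jacobi identity in ${\cal G}$ applied to the elements $[X,Y]_{\cal G}, [Y,Z]_{\cal G}, [Z,X]_{\cal G}$, while axiom (t2) follows from Jacobi applied to $R(V), [X,Y]_{\cal G}, Z$ together with the fact that ${\rm ad}_{R(V)}$ is a derivation of the Lie bracket: this turns $V \circ [X,Y,Z] = [R(V), [[X,Y]_{\cal G}, Z]]$ into the required sum $[V \circ X, Y, Z] + [X, V \circ Y, Z] + [X, Y, V \circ Z]$.

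For part (ii), given a $G$-algebra $(A, \circ)$, I would build the required Lie algebra explicitly. Set ${\cal G}_1 := A$, let ${\cal G}_0$ be the Lie subalgebra of ${\rm End}(A)$ generated by the left multiplications $L_X$, $X \in A$, and define a bracket on ${\cal G} := {\cal G}_0 \oplus {\cal G}_1$ by taking the commutator inside ${\cal G}_0$, the evaluation $[S, X] := S(X)$ across ${\cal G}_0 \times {\cal G}_1$, and
\[
[X, Y]_{\cal G} \;:=\; L_{X \circ Y - Y \circ X} - [L_X, L_Y] \;\in\; {\cal G}_0.
\]
A one-line computation (using the identity $[L_X, L_Y] - L_{X\circ Y} + L_{Y\circ X} = -[X,Y,\cdot]$, which follows straight from the definition of the associator) shows that this operator acts on $Z \in A$ precisely as $[X, Y, Z]$, reversing the formula of part (i). The $\Z_2$-grading is manifest; Jacobi on triples inside ${\cal G}_0$ is automatic; on $({\cal G}_0, {\cal G}_0, {\cal G}_1)$ it reduces to associativity of composition; on $({\cal G}_0, {\cal G}_1, {\cal G}_1)$, evaluation at $Z \in A$ turns it into the condition that $V \in {\cal G}_0$ acts as a derivation of $[\cdot,\cdot,\cdot]$; and on $({\cal G}_1)^3$ it is exactly (t1). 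Finally, setting ${\cal G}_- := \{X - L_X : X \in A\}$ produces a complement to ${\cal G}_0$, and the short calculation
\[
[X - L_X,\; Y - L_Y] \;=\; -(W - L_W), \qquad W := X \circ Y - Y \circ X,
\]
shows ${\cal G}_-$ is a subalgebra; since $R(X) = L_X$ for this decomposition, the reconstructed product $M \circ N = [R(M), N]$ reproduces the original operation on $A$.

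The main obstacle is the derivation condition required in the mixed Jacobi identity for part (ii): for $V = L_W$ it is precisely (t2), but for a general iterated bracket in ${\cal G}_0$ one must observe that the space of derivations of the trilinear operation $[\cdot,\cdot,\cdot]$ is itself a Lie subalgebra of ${\rm End}(A)$, which then contains every generator $L_W$ by (t2) and hence all of ${\cal G}_0$. Once this is granted, the verification that the construction returns a bona fide $\Z_2$-graded Lie algebra with ${\cal G}_+ = {\cal G}_0$ is routine bookkeeping.
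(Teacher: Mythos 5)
Your proof is correct and follows essentially the same route as the paper: part (i) rests on the identity $[X,Y,Z]=[[X,Y]_{\cal G},Z]$ (the paper's \eqref{iden}) reducing \eqref{t1} and \eqref{t2} to the Jacobi identity, and part (ii) uses the same reconstruction via left-multiplication operators and the bracket \eqref{313}, with ${\cal G}_-=\{X-L_X\}$. You are in fact slightly more careful than the paper at one point it dismisses as "easily verified": extending the derivation property from the generators $L_W$ (where it is exactly \eqref{t2}) to all of ${\cal G}_0$ via the observation that derivations of the trilinear operation form a Lie subalgebra of ${\rm End}({\cal A})$; the only blemish is a phrasing slip in part (i), where \eqref{t1} is the Jacobi identity applied to $X,Y,Z$ themselves, not to the brackets $[X,Y]_{\cal G}$, etc.
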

\begin{proof}
To prove identity \eqref{t1} it suffices to project the Jacobi identity for $X_{-}, Y_{-}, Z_{-}$ onto ${\cal G}_1.$ Rewriting \eqref{t2} in terms of the ${\cal G}$-bracket with the help of \eqref{iden}, we see that \eqref{t2} follows from
the Jacobi identity for ${\cal G}$.

It remains to prove the second part of the theorem. Let ${\cal G}_1$ be a $G$-algebra. Define ${\cal G}$ by formula \eqref{Zgrad}, where ${\cal} G_0$ is the Lie algebra generated by all operators of left multiplication of ${\cal G}_1$.
Recall
that the left multiplication operator $L_X$ is defined as follows: $L_X(Y)=X\circ Y$. The bracket on ${\cal G}$ is defined by
\begin{equation}\label{313}
[(A, X), (B, Y )] =\Big([A,B] - [L_X, L_Y ] + L_{X\circ Y} - L_{Y\circ X}, \,\, A(Y ) - B(X)\Big).
\end{equation}
The skew-symmetry is obvious. One can easily verify that the identities
\eqref{t1}, \eqref{t2} are equivalent to the Jacobi identity for \eqref{313}. It follows from \eqref{313} that the
decomposition \eqref{Zgrad} defines a $\Z_2$-gradation. To define a decomposition \eqref{decomp} we take for ${\cal G}_{-}$
the set $ \{(-L_X, \,X) \}$ and ${\cal G}_0$ for ${\cal G}_{+}$. Formula \eqref{313} implies that ${\cal G}_{-}$ is a subalgebra in ${\cal G}$.
For ${\cal G}_{\pm}$ thus defined, \eqref{mprod} has the form $(0,X)\circ (0,Y) = [(L_X,\,0),\, (0,\,Y)]$. This
relation is fulfilled according to \eqref{313}.
\end{proof}
The part i) of the theorem means that any $G$-top \eqref{utop} is integrable by the factorization method. 

\begin{example} Putting 
$$  {\cal G}_+ = \left\{\begin{pmatrix}
a\,&c\,&0 \\ 
d\,&b\,&0 \\ 
0\,&0\,&-a-b \end{pmatrix} \right\}, \qquad {\cal G}_1 =
\left\{ \begin{pmatrix}
0\,&0\,&P \\ 
0\,&0\,&Q \\ 
R\,&S\,&0 \end{pmatrix} \right\},
$$
we take $\mathfrak{sl}_3$ for $\cal G$.   Let us choose a complementary subalgebra ${\cal G}_-$ as follows: 
$$ {\cal G}_- = \left\{  \begin{pmatrix}
-Y+X+\alpha U\,&X\,&Y-X \\ 
-Z+(2-3\alpha )U\,&(1-2\alpha )U\,&Z+(3\alpha -2)U \\ 
-Y\,&X\,&Y-X+(\alpha -1)U \end{pmatrix} \right\},$$
where $\alpha$ is a  parameter.

The operation \eqref{mprod} turns the vector space ${\cal M}={\cal G}_1$ into  a 
$G$-algebra. The corresponding ${\cal M}$-top is the following system  of  
differential equations 
\begin{equation}\label{syst}\begin{cases} 
P_t = P^2 - R P - Q S, \\
Q_t = (\beta -2) R Q + \beta P Q, \\
R_t = R^2 - R P - Q S, \\
S_t = (3-\beta )R S + (1-\beta )P S, 
\end{cases}
\end{equation}
where $\beta = 3\alpha$, for the entries of the matrix 
$${\bf M} =   \begin{pmatrix}
0\,&0\,&P \\ 
0\,&0\,&Q \\ 
R\,&S\,&0 \end{pmatrix} .$$  
From \eqref{mm} it follows that $I_1 = {\rm tr} \,{\bf M}^2 = R P+Q S$ is a first integral 
for the system. Other integrals of the form ${\rm tr}\, {\bf M}^k$ are trivial. 
Nevertheless it is not hard to integrate \eqref{syst} by quadratures. 
The auxiliary two first integrals are of the form 
$$I_2 = \frac{P-R}{QS}, \qquad I_3= Q^{1-\beta }S^{-\beta }(R^2-RP-QS).$$
For generic $\beta$ the integral $I_3$ is a multi-valued function. It shows  that \eqref{syst} is not integrable 
from the view-point of the Painlev\'e approach (see for example, \cite{Newell}). 
\end{example}

\subsection{Generalized factorization method} 
Suppose that 
\begin{equation}\label{VV}
{\cal G}=V_1 \oplus V_2,
\end{equation} 
where the $V_i$ are vector spaces. Let 
\begin{equation}\label{calH}
{\cal H}= [V_1,\,V_1]_{-} +  [V_2,\,V_2]_{+} .
\end{equation}
Here $+$ and $-$ symbolize the projections onto $V_1$ and $V_2$, respectively. 
 Assume that $V_1$ and $V_2$ satisfy the following conditions 
\begin{equation}\label{HHH}
[{\cal H}, \, V_1] \subset V_1, \qquad [{\cal H}, \, V_2] \subset V_2.
\end{equation}
If $V_i$ are subalgebras, then ${\cal H}=\{0\}$ and conditions \eqref{HHH} are trivial.

It turns out \cite{golsok4} that equation \eqref{decsys}, where 
  $\pi_{+}$ is the projection onto $V_{1}$ parallel to $V_{2}$, can be reduced to solving a system of linear equations with variable coefficients (cf. Proposition \ref{prlin}).
  
\begin{remark}
 If conditions \eqref{HHH} hold, then 
 ${\cal H}$, ${\cal G}_{+}=V_1 + [V_1,\, V_1]_{-}$ and 
 ${\cal G}_{-}= V_2 + 
 [V_2,\, V_2]_{+}$ are Lie subalgebras in ${\cal G}.$ Moreover, ${\cal H}={\cal G}_{+}\bigcap {\cal G}_{-}$
\end{remark} 
  
\begin{theorem}\begin{itemize}
\item[ i)] Let ${\cal G} = {\cal G}_0 \oplus {\cal G}_1$ be a $\Z_2$-graded Lie algebra, such that $[{\cal G}_1, \,{\cal G}_1] = 0$. Given a
vector space decomposition \eqref{VV} with $V_1 = {\cal G}_0$ and a vector space $V_2$ satisfying conditions \eqref{HHH}, we equip $V_2$ with an algebraic structure by
formula \eqref{mprod}. Then $V_2$ is a $SS$-algebra with respect to the operation $\circ$.
\item[ ii)] Any $SS$-algebra ${\cal A}$ can be obtained from a suitable $\Z_2$-graded Lie algebra by the above construction
\end{itemize}
\end{theorem}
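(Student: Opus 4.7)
The plan is to follow the two-part strategy of Theorem~\ref{TG}, adapted to the more general $V_1 \oplus V_2$ decomposition in which $V_2$ satisfies only the weaker conditions~\eqref{HHH} rather than being a Lie subalgebra.

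For Part~(i), observe first that since $V_1 = {\cal G}_0$ is a subalgebra one has $[V_1, V_1]_{-} = 0$, so ${\cal H} = [V_2, V_2]_{+} \subset V_1$. The transversality of $V_2$ to $V_1 = {\cal G}_0$ lets me write $V_2 = \{x + \phi(x) : x \in {\cal G}_1\}$ for a unique linear map $\phi: {\cal G}_1 \to {\cal G}_0$. Using $[{\cal G}_1, {\cal G}_1] = 0$, I would compute $[M_1, M_2]_{\cal G}$ for $M_1, M_2 \in V_2$ and decompose it along $V_1 \oplus V_2$, thereby expressing the product $M_1 \circ M_2$ as a Lie bracket in ${\cal G}$ modulo an ${\cal H}$-correction. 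This is the analogue of identity~\eqref{iden} from Proposition~\ref{lefts}. The $SS$-identity~\eqref{SS} asserts that the operator $K_{V, X}: Y \mapsto [V, X, Y]$ is a derivation of $(V_2, \circ)$; substituting the Lie-bracket formula and expanding, the claim reduces to the Jacobi identity in ${\cal G}$, with the ${\cal H}$-correction terms being absorbed thanks to~\eqref{HHH}.

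For Part~(ii), the construction mirrors that of Theorem~\ref{TG}(ii). Given an $SS$-algebra ${\cal A}$, set ${\cal G}_1 = {\cal A}$ with zero bracket, and take ${\cal G}_0$ to be the Lie subalgebra of $\mathrm{End}({\cal A})$ generated by the left-multiplication operators $L_X$ together with the operators $K_{Y, Z} = [L_Y, L_Z] - L_{Y\circ Z} + L_{Z \circ Y}$, each of which is a derivation of ${\cal A}$ by~\eqref{SS}. Define a bracket on ${\cal G} = {\cal G}_0 \oplus {\cal G}_1$ modeled on the one used in Theorem~\ref{TG}(ii),
\[
[(A, X), (B, Y)] = \bigl([A, B] - K_{X, Y},\, A(Y) - B(X)\bigr),
\]
and verify the Jacobi identity. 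Then set $V_1 = {\cal G}_0$ and $V_2 = \{(L_X, X) : X \in {\cal A}\}$; show that ${\cal H} = [V_2, V_2]_{+}$ is generated by the $K_{X, Y}$, verify~\eqref{HHH} using the derivation property of $K_{X,Y}$, and check that the induced product on $V_2$ recovers the original operation on ${\cal A}$.

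The principal obstacle lies in Part~(ii): ensuring Jacobi for the constructed bracket. The critical case is one ${\cal G}_0$-element together with two ${\cal G}_1$-elements, which reduces to $[A, K_{X, Y}] = K_{A(X), Y} + K_{X, A(Y)}$ for every $A \in {\cal G}_0$; this identity must be verified for the generators $A = L_Z$ and $A = K_{P, Q}$ and ultimately traced back to~\eqref{SS}. A secondary delicate point in Part~(i) is the systematic bookkeeping of the ${\cal H}$-correction terms, which are absent in the simpler setting of Proposition~\ref{lefts} (where $V_2 = {\cal G}_1$ is itself a Lie subalgebra) and must here be shown to cancel in the final reduction by invoking the hypotheses~\eqref{HHH}.
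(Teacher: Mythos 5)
Your part (i) is sound and is exactly the route the paper intends (the paper itself only says it goes ``in the same manner as the first part of Theorem~\ref{TG}''): writing $V_2$ as a graph over ${\cal G}_1$, showing $[V,X,\cdot]={\rm ad}_h$ for some $h\in{\cal H}$, and using \eqref{HHH} to make ${\rm ad}_h$ a derivation of $\circ$ is the right bookkeeping. The genuine gap is in part (ii). Your bracket $[(A,X),(B,Y)]=\bigl([A,B]-K_{X,Y},\,A(Y)-B(X)\bigr)$ is the one from formula \eqref{313} in Theorem~\ref{TG}(ii), and it cannot be used here for three reasons. First, it gives $[(0,X),(0,Y)]=(-K_{X,Y},0)$, so $[{\cal G}_1,{\cal G}_1]\neq 0$ in general, violating the hypothesis of part (i) that the construction must feed into. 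Second, the Jacobi identity for three ${\cal G}_1$-elements reduces precisely to the cyclic identity \eqref{t1}, which characterizes $G$-algebras and is not a consequence of \eqref{SS}; for a general $SS$-algebra your bracket is therefore not a Lie bracket at all. Third, even when Jacobi holds one computes $[(L_X,X),(L_Y,Y)]=(L_{X\circ Y-Y\circ X},\,X\circ Y-Y\circ X)\in V_2$, so $V_2$ is a subalgebra, ${\cal H}=0$ by \eqref{calH}, and the construction degenerates to the ordinary factorization of Theorem~\ref{TG}, which yields $G$-algebras rather than arbitrary $SS$-algebras.

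The point of the generalized construction is that the operators $K_{X,Y}$ must appear in ${\cal H}=[V_2,V_2]_{+}$, not in the bracket of ${\cal G}$. The paper takes the plain semidirect product ${\cal G}=({\rm End}\,{\cal A})\oplus{\cal A}$ with $[(A,X),(B,Y)]=\bigl([A,B],\,A(Y)-B(X)\bigr)$, so that Jacobi is automatic and ${\cal G}_1={\cal A}$ is abelian, and sets $V_1={\cal G}_0={\rm End}\,{\cal A}$, $V_2=\{(-L_X,X)\}$. Then $[(-L_X,X),(-L_Y,Y)]_{+}=(K_{X,Y},0)$ spans ${\cal H}$, and the identity \eqref{SS} enters exactly where it should: it says each $K_{X,Y}$ is a derivation of $\circ$, which is equivalent to $[K_{X,Y},L_Z]=L_{K_{X,Y}(Z)}$ and hence to $[{\cal H},V_2]\subset V_2$, while $[{\cal H},V_1]\subset V_1$ is automatic. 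Replace your construction in part (ii) by this one; your verification strategy for the remaining conditions then goes through.
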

\begin{proof} The first part can be proved in the
same manner as the first part of Theorem \ref{TG}. We explain only how to construct ${\cal G}, {\cal G}_+, V_2.$ for a given $SS$-algebra. We take for  ${\cal G}_+$ the Lie algebra ${\rm End}\, {\cal A}$ of all linear endomorphisms
of ${\cal A}$ . The vector space
$${\cal G} = ({\rm End}\, {\cal A}) \oplus {\cal A}$$
becomes a $\Z_2$-graded Lie algebra if we define the bracket by
$$
[(A, X), (B, Y)] = \Big([A,\,B],\, A(Y) - B(X)\Big).
$$
It is not difficult to show that \eqref{SS} implies that
a) the vector space ${\cal H}$ generated by all elements of the form
$$
\Big([L_Y,\, L_Z] - L_{Y\circ Z} + L_{Z\circ Y},\, 0\Big)
$$
is a Lie subalgebra in ${\cal G}$, and
b) the vector spaces $V_2 = \{(-   L_X,\, X)\},$  $V_1={\cal G}_{+}$ and the subalgebra ${\cal H}$ satisfy conditions \eqref{calH} and \eqref{HHH}. 

\end{proof}

\begin{example}
 Let us take 
$$ {\cal G} = \left\{ \begin{pmatrix}
*\,&*\,&*\,&* \\ 
*\,&*\,&*\,&* \\ 
*\,&*\,&*\,&* \\
0\,&0\,&0\,&0 \end{pmatrix}   \right\}$$
for $\Bbb Z_2$-graded Lie algebra. It is clear that ${\cal G} = {\cal G}_{0} 
\oplus {\cal G}_1$, where 
$$ {\cal G}_0 = \left\{   \begin{pmatrix}
*\,&*\,&*\,&0 \\ 
*\,&*\,&*\,&0 \\ 
*\,&*\,&*\,&0 \\
0\,&0\,&0\,&0 \end{pmatrix}   \right\}, \qquad {\cal G}_{1} =
\left\{  \begin{pmatrix}
0\,&0\,&0\,&P \\ 
0\,&0\,&0\,&Q \\ 
0\,&0\,&0\,&R \\
0\,&0\,&0\,&0 
\end{pmatrix} \right\}. 
$$
Let ${\cal G}_{+}={\cal G}_0$
and
$$ {\cal G}_{-} = \left\{ \begin{pmatrix}
c\,&\lambda c\,&a\,&a \\ 
-\lambda c\,&c\,&b\,&b \\ 
a\,&b\,&c\,&c \\
0\,&0\,&0\,&0 \end{pmatrix} \right\}, 
$$
where $\lambda$ is a parameter. 
Since ${\cal G}_{-}$ is not a subalgebra, we have to find the vector space $\cal H$ using \eqref{calH}. A simple
calculation shows that
$$
\qquad {\cal H} =
\left\{ \begin{pmatrix}
0\,&-d\,&0\,&0 \\ 
d\,&0\,&0\,&0 \\ 
0\,&0\,&0\,&0 \\
0\,&0\,&0\,&0 
\end{pmatrix} \right\}
$$
and that the conditions \eqref{HHH} are fulfilled.  The corresponding $SS$-top (up to a 
scaling) is given by
$$\begin{cases} 
P_t = 2 P R + \lambda Q R, \\
Q_t = 2 Q R - \lambda P R, \\
R_t = P^2+Q^2+R^2.
\end{cases} $$ 
\end{example}

\chapter{Algebraic structures in bi-Hamiltonian approach}

\section{Polynomial forms for elliptic Calogero-Moser systems}

\subsection{Calogero-Moser Hamiltonians}

Consider quantum integrable Hamiltonians of the form
\begin{equation}\label{genHam}
H=-\Delta+U(x_1,...,x_n), \qquad {\rm where} \qquad  \Delta =\sum_{i=1}^n \frac{\partial^2}{\partial x_i^2}
\end{equation}
related to simple Lie algebras \cite{Olshanetsky:1983}.  For such Hamiltonians the potential $U$ is a rational, trigonometric or elliptic function. 
\begin{observation} \rm{(}A.Turbiner{\rm )}. For many of these Hamiltonians there exists a change of variables and a  gauge transformation that bring the Hamiltonian to a differential operator with polynomial coefficients. 
\end{observation}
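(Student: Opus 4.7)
The plan is to demonstrate the observation constructively for each family of simple Lie algebras, by combining a specific gauge transformation with a carefully chosen change of independent variables. Since the observation is about a phenomenon appearing across many examples rather than a single universal theorem, I would organize the argument as a uniform recipe that one then verifies case by case.

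First, I would gauge-conjugate the Hamiltonian by its ground-state wavefunction. Concretely, for a Calogero--Moser operator $H=-\Delta+U$ attached to a root system, there is a distinguished zero-energy (or lowest-energy) function $\Psi_0(x_1,\dots,x_n)$ built as a product $\prod_{\alpha \in R_+} f_\alpha(\langle \alpha,x\rangle)^{k_\alpha}$, where $f_\alpha$ is $\alpha$-coordinate itself in the rational case, $\sinh$ in the trigonometric case, or the Weierstrass $\sigma$-function in the elliptic case. Setting $\tilde H = \Psi_0^{-1}\circ H\circ \Psi_0$, the second-order part remains $-\Delta$, while the potential $U$ is cancelled (up to an additive constant) against $\Psi_0^{-1}\Delta \Psi_0$ thanks to the defining identity for $\Psi_0$ as an eigenfunction; the price is the appearance of a first-order drift term $-2\sum_i (\partial_i \log \Psi_0)\,\partial_i$. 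So $\tilde H$ is a Schrödinger operator with \emph{no} potential but with Weyl-invariant meromorphic coefficients in front of the first derivatives.

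Second, I would pass to Weyl-invariant coordinates $y_1,\dots,y_n$, chosen as a free system of generators of the algebra of Weyl-invariant functions of the $x_i$: for $A_n$ one takes the elementary symmetric polynomials of $e^{x_i}$ (trigonometric case) or of appropriate elliptic functions; for the other classical algebras one uses the analogous fundamental invariants; in the elliptic case one uses the flat coordinates on the corresponding orbit space. Because $\tilde H$ commutes with the Weyl group action (it comes from a Weyl-invariant problem), it preserves the subalgebra of Weyl-invariant functions and therefore descends to a differential operator in the $y_i$. The claim to verify is then that, in these coordinates, the coefficients of $\partial/\partial y_i$ and $\partial^2/\partial y_i\partial y_j$ are polynomials in $y_1,\dots,y_n$.

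The main obstacle is this last polynomiality statement, especially in the elliptic case. The gauge-transformed operator $\tilde H$ contains logarithmic derivatives of $\sigma$-functions and Weierstrass $\wp$-values, none of which are obviously polynomial in the invariant coordinates. One has to show that, after expressing the basic building blocks $\wp(\langle \alpha,x\rangle)$ and $\zeta$-sums through the Weyl-invariants $y_j$, the resulting rational functions have no poles and thus are polynomial. For the rational and trigonometric cases, this reduces to classical Newton-type identities expressing power sums through the elementary symmetric functions, which is essentially a combinatorial check. In the elliptic case, the proof typically proceeds by exhibiting a finite-dimensional invariant subspace of polynomials preserved by $\tilde H$ (a quasi-exact-solvability argument in Turbiner's sense) and using its density to force polynomiality of the coefficients; establishing this invariant flag is the technically hardest step and is what underlies the observation.
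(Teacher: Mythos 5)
You are trying to prove something that the paper itself does not prove: the statement is an \emph{Observation}, supported in the text only by explicit constructions in particular cases (the change of variables \eqref{tran} with the gauge factor $D_N^{\beta/2}$, $D_N$ being the Jacobian of \eqref{tran}, which is Conjecture \ref{Conjecture1} and is verified only for $N=2,3$ in \cite{matsok}; the $N=2$ and $G_2$ elliptic cases in \cite{sokturb}; and the rational/trigonometric cases in \cite{RT:1995}). Your first two steps are the standard R\"uhl--Turbiner route and are sound for the rational and trigonometric cases, where the polynomiality check indeed reduces to Newton-type identities among symmetric functions. Note, however, that the paper's elliptic construction is genuinely different from yours: the gauge factor is a power of the Jacobian of the transformation \eqref{tran} (built from the matrix of $\wp^{(k)}(y_i)$), not the ground-state product of $\sigma$-functions, and the new coordinates $u_i$ are defined by the linear system \eqref{tran} rather than by Weyl-invariant theta/elliptic generators.

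The genuine gap is in your last paragraph, and it sits exactly where the whole difficulty lies. Preserving a single finite-dimensional subspace of polynomials does not force an operator to have polynomial coefficients, and such a subspace is not dense in any function space in which a density argument could run; the logical order in the quasi-exact-solvability literature is the reverse (one first establishes the polynomial form, then exhibits the invariant flag for special couplings, cf.\ the remark after Conjecture \ref{Conjecture2} that the operators \eqref{opera} preserve polynomials of degree $\le k$ only when $k=-\beta(N+1)$ is a positive integer). So your argument does not close the elliptic case; for general $N$ that case is precisely the open content of Conjecture \ref{Conjecture1}, and no proof of it --- by your route or the paper's --- is currently available.
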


The elliptic Calogero-Moser Hamiltonian is given by
\begin{equation}\label{CAL}
H_N=- \Delta+ \beta(\beta-1)\sum_{i\neq j}^{N+1}\wp(x_i-x_j). 
\end{equation}
Here  $\beta$ is a parameter, and  $\wp(x)$ is the Weierstrass $\wp$-function with the invariants $g_2,\ g_3$, i.e., a solution of the ODE  $\,\,\wp'(x)^2=4 \wp(x)^3-g_2 \wp(x)-g_3.\,$ In the coordinates 
$$
X=\frac{1}{N+1}\sum_{i=1}^{N+1} x_i,\qquad y_i=x_i-X
$$
the operator  \eqref{CAL} takes the form
$$
H_{N}=-\frac{1}{N+1}\frac{\partial^2}{\partial X^2}+{\cal H}_N\left(y_1,y_2,\dots y_{N}\right),
$$
where
\begin{equation}\label{CM}
{\cal H}_N=-\frac{N}{N+1}\sum_{i=1}^{N}\frac{\partial^2}{\partial y_i^2}+\frac{1}{N+1}\sum_{i\neq j}^{N}\frac{\partial^2}{\partial y_i\partial y_j}+\beta(\beta-1)\sum_{i\neq j}^{N+1}\wp(y_i-y_j).
\end{equation}
In the last term we have to substitute $- \sum_{i=1}^N y_i$ for $y_{N+1}$. 

In \cite{matsok} the following transformation 
  $(y_1,\dots ,y_N)\to (u_1,\dots ,u_N)$ defined by
 \begin{equation}\label{tran}
  \begin{pmatrix}\wp(y_1)&\wp'(y_1)&\dots&\wp^{(N-2)}(y_1)&\wp^{(N-1)}(y_1)\\\wp(y_2)&\wp'(y_2)&\dots&\wp^{(N-2)}(y_2)&\wp^{(N-1)}(y_2)\\ \vdots&\vdots&\dots&\vdots&\vdots\\\wp(y_{N})&\wp'(y_{N})&\dots&\wp^{(N-2)}(y_{N})&\wp^{(N-1)}(y_{N})\\\end{pmatrix}
\begin{pmatrix} u_1\\u_2\\\vdots\\u_{N}\\\end{pmatrix}=\begin{pmatrix} 1\\1\\\vdots\\1\\\end{pmatrix}
 \end{equation}
 was considered. Denote by 
 $D_N(y_1,\dots, y_N)$ the Jacobian of the transformation  \eqref{tran}. 
 
 \begin{conjecture}\label{Conjecture1} The gauge transform  ${\cal H}_N \rightarrow D_N^{-\frac{\beta}{2}}{\cal H}_N D_N^{\frac{\beta}{2}}$ and subsequent change of variables  \eqref{tran} bring   \eqref{CM} to a differential operator  $P_N$ with polynomial coefficients.
\end{conjecture}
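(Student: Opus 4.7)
My strategy is to verify the conjecture by tracking algebraic dependencies and singularities under the two operations. The central geometric observation is that, from $(\wp')^{2} = 4\wp^{3} - g_{2}\wp - g_{3}$ and its derivatives, every $\wp^{(k)}(y_{i})$ is a polynomial in $\wp(y_{i})$ and $\wp'(y_{i})$. Hence the coefficient matrix in \eqref{tran} has entries polynomial in the variables $(\wp(y_{j}),\wp'(y_{j}))_{j=1}^{N}$, and Cramer's rule presents each $u_{k}$ as a rational function of these quantities. Moreover, the addition formula
\[
\wp(y_{i} - y_{j}) = \frac{1}{4}\left(\frac{\wp'(y_{i}) + \wp'(y_{j})}{\wp(y_{i}) - \wp(y_{j})}\right)^{2} - \wp(y_{i}) - \wp(y_{j})
\]
shows that the Calogero potential is rational in the same variables, so after both operations every coefficient of $P_{N}$ will be an $S_{N}$-symmetric rational function of $(\wp(y_{j}),\wp'(y_{j}))$.

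The computation proceeds in four steps. \emph{(i)} Differentiate \eqref{tran} implicitly and solve for $\partial u_{k}/\partial y_{i}$, obtaining explicit rational expressions in the $\wp(y_{j}),\wp'(y_{j})$. \emph{(ii)} Apply the standard change-of-variables rule for a second-order operator to rewrite the kinetic part of \eqref{CM} in the new coordinates. \emph{(iii)} Compute the contribution of the gauge factor $D_{N}^{\beta/2}$: conjugation adds a first-order piece of the form $-\beta \sum_{i}(\partial_{y_{i}}\log D_{N})\,\partial_{y_{i}}$ (modified by the non-diagonal kinetic cross terms in \eqref{CM}) and a zero-order piece of the form $\tfrac{\beta(\beta-1)}{4}|\nabla\log D_{N}|^{2} + \tfrac{\beta}{2}\,\Delta\log D_{N}$. \emph{(iv)} Combine this zero-order piece with $\beta(\beta-1)\sum_{i\neq j}\wp(y_{i}-y_{j})$ and check that all double poles along the diagonals $y_{i}=y_{j}$ cancel. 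This is the mechanism that selects the specific Jacobian $D_{N}$: the matrix in \eqref{tran} has a Vandermonde-type structure in the points $\wp(y_{j})$, so $D_{N}$ carries a factor $\prod_{i<j}(\wp(y_{i})-\wp(y_{j}))$ which generates precisely the $(y_{i}-y_{j})^{-2}$ singularity demanded by the potential.

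The main obstacle lies in the last algebraic step: proving that the resulting $S_{N}$-symmetric rational function is in fact \emph{polynomial} in $u_{1},\ldots,u_{N}$, not merely rational. Symmetry in the $y_{j}$'s forces polynomiality in the elementary symmetric functions of the points $(\wp(y_{j}),\wp'(y_{j}))$ on the elliptic curve, and these symmetric functions are rationally expressible in the $u_{k}$; what must be excluded are poles along the ramification divisor of the map $(y_{1},\ldots,y_{N}) \mapsto (u_{1},\ldots,u_{N})$ and along coordinate hyperplanes in $u$-space. I would attack this by verifying the base cases $N=1,2$ directly (where the formulas are short enough to manipulate by hand), and then using a homogeneity argument based on the grading $u_{k} \mapsto t^{2k}\,u_{k}$ induced by $y_{i} \mapsto t\,y_{i}$ in the rational limit $g_{2}=g_{3}=0$. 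The rational limit controls the leading behavior of the coefficients, while the elliptic correction should contribute only bounded lower-order terms; the residue analysis along each discriminant component then reduces, via the permutation action, to a one-variable check. Confirming this last point uniformly in $N$ is where I expect the genuine work to lie, because the ramification of \eqref{tran} is invisible in the rational limit and is controlled by the geometry of the underlying elliptic curve.
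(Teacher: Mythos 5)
The statement you are trying to prove is stated in the paper as a \emph{conjecture}: the text offers no proof at all, only the remark that it has been verified for $N=2,3$ in the cited work of Matushko and Sokolov, where the operators $P_2$ and $P_3$ are computed explicitly. So there is no ``paper proof'' to compare against; any complete argument you supply would be new.

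Your proposal is a reasonable plan of attack, and several of its ingredients are sound: each $\wp^{(k)}(y_i)$ is indeed polynomial in $\wp(y_i),\wp'(y_i)$ by the Weierstrass ODE, the addition formula does express the potential rationally in the same quantities, and the standard conjugation formulas for $D_N^{-\beta/2}\,\mathcal{H}_N\,D_N^{\beta/2}$ (adapted to the non-diagonal kinetic form of \eqref{CM}) are correctly described. But the argument has a genuine gap, and you have in fact located it yourself: step \emph{(iv)} together with the final paragraph is not a proof but a restatement of the conjecture. Showing that the double poles along $y_i=y_j$ cancel, that no poles survive along the ramification divisor of \eqref{tran}, and that the resulting symmetric rational function of the points on the elliptic curve is genuinely \emph{polynomial} in $u_1,\dots,u_N$ --- uniformly in $N$ --- is precisely the assertion to be established; the homogeneity/rational-limit argument you sketch controls only leading degrees and, as you note, is blind to the ramification structure that governs the elliptic case. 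Two further points need care: $D_N$ is the Jacobian of the map $(y)\mapsto(u)$, not the determinant of the coefficient matrix in \eqref{tran}, so your claimed Vandermonde factor $\prod_{i<j}(\wp(y_i)-\wp(y_j))$ requires justification rather than inspection; and the exponent $\beta/2$ (rather than the $\beta$ one might expect from a naive pole count) must come out of that computation. As it stands, your proposal reduces the conjecture to a sharper-looking but equivalent open problem; to make progress you would need either the explicit structure theory of quasi-solvable operators (cf.\ the bound $\deg a_{i_1,\dots,i_N}\le m+i_1+\dots+i_N$ quoted in the paper) or the representation-theoretic route through $U(\mathfrak{gl}_{N+1})$ suggested by Conjecture 3.2 of the text.
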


In the case $N=2$ the transformation  \eqref{tran} coincides with the transformation 
$$
u_1=\frac{\wp'(y_2)-\wp'(y_1)}{\wp(y_1)\wp'(y_2)-\wp(y_2)\wp'(y_1)}, \qquad \, 
u_2=\frac{\wp(y_1)- \wp(y_2)}{\wp(y_1)\wp'(y_2)-\wp(y_2)\wp'(y_1)},
$$
found in \cite{sokturb}. In addition to explicit form of $P_2$, in this paper a polynomial form for the elliptic $G_2$-model was found. Polynomial forms for rational and trigonometric Calogero-Moser Hamiltonians in the case of arbitrary $N$ were described in  \cite{RT:1995}. 

\begin{remark}\label{rem31} Obviously, for any polynomial form $P$ of Hamiltonian  \eqref{genHam}  
\begin{itemize}
\item[ 1:] the contravariant metric g defining by the symbol of P is flat;
\item[ 2:] $P$ can be reduced to a self-adjoint operator by a gauge transformation $P\to f P f^{-1},$ where $f$ is a function. 
\end{itemize}
\end{remark}
  Besides evident properties 1,2  we have in mind the following  
\begin{observation} {\rm (}A. Turbiner{\rm )}. For all known integrable cases the polynomial form
 $P$ preserves some nontrivial finite-dimensional vector space of polynomials. 
\end{observation}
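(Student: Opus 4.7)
The statement is an empirical observation rather than a theorem with a single uniform argument, so my proof proposal is an organized verification scheme supported by a general mechanism. The plan is to attach to each known polynomial form $P$ a finite-dimensional Lie algebra $\mathfrak{g}_P$ of first-order differential operators in the variables $u_1,\dots,u_N$ that jointly preserve a space of polynomials $V_n$ of total (or weighted) degree at most $n$, and then exhibit $P$ as an element of the associative algebra generated by $\mathfrak{g}_P$. Once this is done, $P(V_n)\subset V_n$ follows automatically, giving the desired invariant subspace.

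First I would fix the framework. Let $V_n\subset\C[u_1,\dots,u_N]$ be the space of polynomials of degree at most $n$ with respect to some weight, and let $\mathrm{Diff}(V_n)$ be the Lie algebra of first-order differential operators sending $V_n$ to itself. Basic examples of subalgebras of $\mathrm{Diff}(V_n)$ include the realization of $\mathfrak{gl}_{N+1}$ by the operators $u_i\partial_{u_j}$, $\partial_{u_i}$, $u_i(\sum_k u_k\partial_{u_k}-n)$ and the scalar $n$, as well as the $\mathfrak{sl}_2$-triple $\{\partial,\,u\partial-n/2,\,u^2\partial-nu\}$ in one variable. Any element of the enveloping algebra of such a subalgebra is a polynomial differential operator preserving $V_n$, and this is precisely the quasi-exact-solvability criterion underlying Turbiner's construction.

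Next I would go case-by-case through the integrable polynomial forms for which Conjecture \ref{Conjecture1} has been verified. For the rational and trigonometric $A_N$ Calogero–Moser operators in the variables \eqref{tran} one uses the explicit formulas of \cite{RT:1995}; I would reproduce their expression of $P_N$ as a quadratic polynomial in the $\mathfrak{gl}_{N+1}$-generators listed above, at which point the invariance of $V_n$ for every nonnegative integer $n$ is immediate. For the elliptic $N=2$ form of \cite{sokturb} and the elliptic $G_2$ form obtained there, I would read off the corresponding $\mathfrak{sl}_3$- and $G_2$-realizations from the same paper and check by direct substitution that each generator preserves the natural filtration on $\C[u_1,u_2]$; the parameters $g_2,g_3,\beta$ enter only as coefficients and do not spoil the filtration. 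For the elliptic $A_N$ case with $N\ge3$, where the polynomial form itself is conjectural, the verification must be conditional on Conjecture \ref{Conjecture1}: one inspects the symbol, uses Remark \ref{rem31} together with the explicit weights coming from \eqref{tran} (namely $\deg u_i=i$) to single out the filtration, and then checks that each monomial appearing in $P_N$ is a product of generators of a finite-dimensional Lie subalgebra of $\mathrm{Diff}(V_n)$.

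The main obstacle, and the reason the statement is phrased as an observation rather than a theorem, is precisely this last step: there is no a priori guarantee that a polynomial differential operator with flat symbol and a gauge to self-adjoint form lies in the enveloping algebra of a finite-dimensional Lie algebra of first-order operators. In the verified cases this follows from the underlying root-system symmetry — the Weyl group action in the variables $u_i$ lifts to a hidden $\mathfrak{g}$-action — but to promote the observation to a theorem one would need a conceptual argument deriving this hidden algebra from the integrability of $H$ in \eqref{genHam}, for instance via the commuting quantum integrals restricting to polynomial operators on $V_n$. I would close the proposal by stating this as the essential gap and flagging that the elliptic case beyond $N=2$ is doubly open, depending both on Conjecture \ref{Conjecture1} and on the construction of the Lie-algebraic realization.
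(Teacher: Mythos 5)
Your proposal is correct and follows essentially the same route the paper takes: the paper offers no formal proof of this Observation, but the surrounding text supplies exactly your mechanism, namely the $\mathfrak{gl}_{N+1}$ realization \eqref{opera}, which preserves polynomials of degree $\le k$ when $k=-\beta(N+1)$ is a positive integer, together with Conjecture \ref{Conjecture2} that $P_N$ lies in the associative algebra generated by these operators. Your case-by-case verification plan and your flagging of the conditional status of the elliptic case for $N\ge 3$ match the paper's own framing via Conjectures \ref{Conjecture1} and \ref{Conjecture2} and the cited verifications in \cite{RT:1995}, \cite{sokturb}, \cite{matsok}.
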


For Hamiltonians \eqref{CM} the situation can be described as follows.
Consider the differential operators 
 $e_{i,j} = E_{i-1,j-1}$, where
\begin{equation}
\begin{array}{c}\label{opera}
\displaystyle E_{ij}=y_i\frac{\partial}{\partial y_j},\qquad E_{0i}=\frac{\partial}{\partial y_i},\\[5mm] 
\displaystyle E_{00}=-\sum_{j=1}^{N}  y_j \frac{\partial}{\partial y_j}+\beta\,(N+1), \qquad E_{i0}=y_i E_{00}.
\end{array}
\end{equation}
It is easy to verify that they
satisfy the commutator relations  
\begin{equation}\label{univen}
e_{ij} e_{kl}-e_{kl} e_{ij}=\delta_{j,k}e_{il}-\delta_{i,l}e_{kj}, \qquad i,j=1,\dots,N+1,
\end{equation}
and, therefore, define representations of the Lie algebra $\mathfrak{gl}_{N+1}$ and of the universal enveloping algebra $U(\mathfrak{gl}_{N+1}).$ 
The latter representation is not exact.  
\begin{conjecture}\label{Conjecture2} The differential operator $P_N$ from Conjecture \ref{Conjecture1} can be written as a linear combination of anti-commutators
of the operators $E_{ij}$.
\end{conjecture}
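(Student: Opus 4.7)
The plan is to split the task into a principal-symbol match and a lower-order correction, following the same scheme that works in the $N=2$ case treated in \cite{sokturb}. Granting Conjecture~\ref{Conjecture1}, I would write
\[
P_N=\sum g^{ij}(u)\,\partial_{u_i}\partial_{u_j}+\sum b^i(u)\,\partial_{u_i}+c(u),
\]
with $g^{ij}, b^i, c$ polynomial in $u=(u_1,\dots,u_N)$, and by Remark~\ref{rem31} the metric $g^{ij}$ is flat. Computing the symbols of the operators in \eqref{opera} gives $\sigma(E_{ij})=y_i\xi_j$, $\sigma(E_{0i})=\xi_i$, $\sigma(E_{00})=-\sum_k y_k\xi_k$ and $\sigma(E_{i0})=-y_i\sum_k y_k\xi_k$; after the substitution $y\mapsto u$ dictated by \eqref{tran}, the $\sigma(E_{\alpha\beta})$ become a concrete family of polynomials in $(u,\xi)$, linear in $\xi$, and the symbol of an anticommutator is $\sigma(\{E_{\alpha\beta},E_{\gamma\delta}\})=2\sigma(E_{\alpha\beta})\sigma(E_{\gamma\delta})$.

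The first substantive step is to express the polynomial quadratic form $g^{ij}(u)\xi_i\xi_j$ as a linear combination of such products. This is a purely algebraic problem: the symbols $\sigma(E_{\alpha\beta})$ carry the standard $\mathfrak{gl}_{N+1}$-action on $\mathbb{C}[u]$ after the transformation \eqref{tran}, and I would argue (mirroring the trigonometric/rational case of \cite{RT:1995}) that the metric $g^{ij}$ is $\mathfrak{gl}_{N+1}$-equivariant in a precise sense, so that it lies in the image of the symmetric square $S^2(\mathfrak{gl}_{N+1})\to\mathrm{Pol}^2(T^*\mathbb{C}^N)$ generated by these symbols. Concretely this amounts to solving a finite linear system for the coefficients $c_{\alpha\beta,\gamma\delta}$.

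Once the top symbol is matched by some quadratic expression $Q=\tfrac12\sum c_{\alpha\beta,\gamma\delta}\{E_{\alpha\beta},E_{\gamma\delta}\}$, the difference $P_N-Q$ is a first-order polynomial operator plus a constant. I would absorb this residue into additional anticommutators $\{E_{\alpha\beta},E_{\gamma 0}\}$ and $\{E_{00},E_{\alpha\beta}\}$, whose symbols include the required linear-in-$\xi$ polynomials in $u$; the freedom is ample because the space of anticommutators is of dimension $\tfrac12(N+1)^2((N+1)^2+1)$, while the residue is constrained to match only $N+1$ polynomial coefficients. The commutation relations \eqref{univen} guarantee that rearranging ordered products into anticommutators only modifies the linear-in-$\xi$ part, so the fit is consistent.

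The main obstacle is producing the explicit polynomial expressions for $g^{ij}$, $b^i$ and $c$ for general $N$, since the change of variables \eqref{tran} is given implicitly via an $N\times N$ matrix whose entries are $\wp^{(k)}(y_i)$. Inverting this matrix and pushing the Laplacian and the gauge factor $D_N^{\beta/2}$ through it requires repeated use of the $\wp$-ODE $\wp'^2=4\wp^3-g_2\wp-g_3$ to replace higher derivatives of $\wp$ by polynomials in $\wp,\wp'$; in effect, establishing polynomiality (and hence also Conjecture~\ref{Conjecture1}) is the technical heart of the argument. My proposal is to proceed by induction on $N$: assuming the polynomial form for $\mathcal H_{N-1}$ and expressing $\mathcal H_N$ as a deformation that adjoins the coordinate $u_N$, use the $\wp$-addition formula on $y_{N+1}=-\sum_{i\le N} y_i$ to reduce the new terms to rational expressions in the known $u_1,\dots,u_{N-1}$ and in $\wp(y_N)$, then verify that these are polynomial in $u_N$ as well. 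Once this is in place, the representation-theoretic argument of the preceding paragraph supplies the desired quadratic expression in the $E_{ij}$.
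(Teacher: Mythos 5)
The statement you are addressing is not proved in the paper: it is stated as a conjecture, and the text says explicitly that Conjectures~\ref{Conjecture1} and~\ref{Conjecture2} have only been \emph{verified} for $N=2,3$ in \cite{matsok}, by exhibiting the operators explicitly. So there is no paper proof to compare against, and your proposal must be judged as an attempted proof of an open statement.

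As such it has genuine gaps. The decisive step --- that the flat polynomial metric $g^{ij}(u)$ produced by the change of variables \eqref{tran} lies in the span of the symmetrized products $\sigma(E_{\alpha\beta})\,\sigma(E_{\gamma\delta})$ --- is exactly the content of the conjecture at the level of principal symbols, and you do not prove it: you ``would argue'' it from an equivariance property of $g^{ij}$ that is neither formulated precisely nor established. The symbols $\sigma(E_{\alpha\beta})$ span a specific $(N+1)^2$-dimensional space of functions linear in $\xi$, and its symmetric square is a proper subspace of all quadratic-in-$\xi$ polynomials of the admissible degrees, so membership is a nontrivial constraint rather than a formality. The argument for the lower-order part is also not valid: comparing the dimension $\tfrac12(N+1)^2\bigl((N+1)^2+1\bigr)$ of the space of anticommutators with ``$N+1$ polynomial coefficients'' conflates the number of coefficient functions with the dimension of the space of polynomials in which they live, and in any case a large ambient space does not imply that the particular residue $P_N-Q$ lies in the span of the remaining anticommutators, especially after most of that span has been spent matching the principal symbol. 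Finally, you acknowledge that polynomiality itself (Conjecture~\ref{Conjecture1}) is the technical heart and offer only an induction sketch relying on unverified uses of the addition formula for $\wp$. What the paper actually supplies in place of a proof is direct computation for $N=2,3$, together with the explicit commutative subalgebras of $U(\mathfrak{gl}_3)$ and $U(\mathfrak{gl}_4)$ of Subsection 3.1.3 that realize $P_2$ and $P_3$; your program, if completed, would be a genuine advance, but as written none of its substantive steps is closed.
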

The conjectures \ref{Conjecture1} and \ref{Conjecture2}  have been verified  in \cite{matsok} for $N=2,3$. Moreover, differential operators with polynomial 
coefficients that commute with $P_2$ and with $P_3$ were found. These operators can also be written as non-commutative polynomials in the $E_{ij}$. 

\begin{remark} If $k=-\beta(N+1)$ is a positive integer, then  
the operators  \eqref{opera} preserve the vector space of all polynomials in $y_1,\dots, y_{N},$ whose degrees are not greater than $k$ {\rm \cite{RT:1995}}.
\end{remark}

 \subsection{Quasi-solvable differential operators}
 
 \begin{definition} A linear differential operator 
\begin{equation}\label{gendifop} 
Q=\sum_{i_1+\cdots+i_N\le m} a_{i_1,...,i_N}  \p_{y_1}^{i_1} \cdots \p_{y_N}^{i_N} 
\end{equation}
of order $m$ with polynomial coefficients
is called {\it quasi-solvable} if it preserves the vector space of all polynomials in $y_1,...,y_N$ of degree $\le k$ for some $k\ge m.$

 \end{definition}
 
\begin{theorem} {\rm \cite{sok}} For any quasi-solvable differential operator \eqref{gendifop} $${\rm deg}\, (a_{i_1,...,i_N})\le m+i_1+\cdots+i_N.$$
\end{theorem}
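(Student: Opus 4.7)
The plan is to exploit the Bernstein-like grading on the Weyl algebra $\mathcal{W}_N$ defined by $\deg(y_i)=1$, $\deg(\partial_{y_i})=-1$, so that each monomial $y^{\gamma}\partial^{\alpha}$ has grade $|\gamma|-|\alpha|$ and shifts the homogeneous degree of a polynomial by exactly that amount. I decompose $Q=\sum_{s}Q_{s}$, where $Q_{s}$ collects all terms of $Q$ of grade $s$. The desired bound $\deg(a_{\alpha})\le m+|\alpha|$ is equivalent to $Q_{s}=0$ for every $s>m$, so it suffices to show that the largest grade $g$ with $Q_{g}\ne 0$ satisfies $g\le m$.

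Write each $f\in V_k$ as $f=f_{0}+\cdots+f_{k}$ with $f_{j}$ homogeneous of degree $j$, and denote the latter space by $H_{j}$. The degree-$r$ homogeneous component of $Q(f)$ equals $\sum_{j+s=r,\ 0\le j\le k}Q_{s}(f_{j})$. Since $Q(f)\in V_k$ and the pieces $f_{j}$ can be chosen independently in their respective $H_{j}$, each summand must vanish: $Q_{s}|_{H_{j}}=0$ whenever $s\ge 1$, $0\le j\le k$, and $j+s>k$. Assume for contradiction that $g\ge m+1$. Then the set of indices $j$ on which $Q_{g}$ is forced to vanish has cardinality at least $m+1$; in particular, because $k\ge m$, the $m+1$ consecutive spaces $H_{k-m},H_{k-m+1},\ldots,H_{k}$ are all killed by $Q_{g}$.

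The heart of the argument is the following lemma, proved by induction on $m$: any $T\in\mathcal{W}_N$ of order $\le m$ that annihilates $m+1$ consecutive spaces $H_{k_{0}},H_{k_{0}+1},\ldots,H_{k_{0}+m}$ with $k_{0}\ge 0$ must vanish. The base case $m=0$ is immediate: $T$ is multiplication by a polynomial $b$, and $b\cdot y^{\beta}=0$ for $|\beta|=k_{0}$ forces $b=0$. For the inductive step consider $[T,y_{i}]$, which has order $\le m-1$. For $f\in H_{j}$ with $k_{0}\le j\le k_{0}+m-1$, both $T(y_{i}f)=0$ (since $y_{i}f\in H_{j+1}$ with $j+1\le k_{0}+m$) and $T(f)=0$, so $[T,y_{i}]$ annihilates $H_{k_{0}},\ldots,H_{k_{0}+m-1}$, i.e., $m$ consecutive homogeneous spaces. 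The inductive hypothesis yields $[T,y_{i}]=0$ for every $i$; a short normal-form computation in $\mathcal{W}_N$ shows that only polynomials in $y$ commute with every $y_{i}$, so $T$ has order zero, and the base case forces $T=0$.

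Applying the lemma to $T=Q_{g}$ with $k_{0}=k-m\ge 0$ yields $Q_{g}=0$, contradicting the maximality of $g$; hence $g\le m$ and the theorem follows. The main obstacle is the lemma itself: a direct principal-symbol argument fails because terms of different orders or different grades may cancel when $Q$ is applied to a specific polynomial, concealing the individual coefficients $a_{\alpha}$. The commutator trick bypasses this by stripping the order one at a time while preserving the annihilation property on a shrinking but still long enough range of homogeneous degrees.
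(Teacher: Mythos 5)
Your argument is correct and complete. Note that the survey itself gives no proof of this theorem --- it is stated with a bare citation to \cite{sok} --- so there is no in-paper argument to compare against; your write-up therefore has to stand on its own, and it does. The reduction of the degree bound to the vanishing of the graded components $Q_s$ with $s>m$ is exactly right (the normal-form monomials $y^{\gamma}\partial^{\alpha}$ are linearly independent, so $\deg a_{\alpha}\le m+|\alpha|$ for all $\alpha$ is equivalent to the absence of monomials of grade $>m$), the deduction that $Q_s$ kills $H_j$ whenever $j+s>k$ is immediate from applying $Q$ to a single homogeneous $f\in H_j$, and the hypothesis $k\ge m$ enters precisely where you use it, to guarantee $k-m\ge 0$. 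The key lemma is the real content, and your induction via $[T,y_i]$ is sound: $[T,y_i]$ drops the order by one while losing only the top one of the consecutive annihilated spaces, and the centralizer computation $[T,y_i]=\sum_{\alpha}\alpha_i b_{\alpha}\partial^{\alpha-e_i}$ correctly forces $T$ to be a multiplication operator, after which the base case finishes. (For what it is worth, the lemma also admits a jet-theoretic proof: since $\bigoplus_{j=k_0}^{k_0+m}H_j$ contains $h\,q$ for any fixed $h\in H_{k_0}$ and any $q$ of degree $\le m$, and the $m$-jets of $h\,q$ at a point where $h\ne 0$ exhaust all $m$-jets, all coefficients of $T$ vanish at a generic point. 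But your commutator induction is equally rigorous and arguably more elementary.)
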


\begin{op}
Prove that any quasi-solvable operator can be represented as a {\rm (}non-commutative{\rm )} polynomial in the variables \eqref{opera}, where $k=-\beta (N+1)$.  
\end{op}
\begin{remark}
Such a representation is not unique.
\end{remark} 
 
\subsubsection{ODE case}
Consider the case $N=1$. 
\begin{lemma} Any quasi-solvable operator $P$  of second order  has the following structure:
$$
P= (a_4 x^4+a_3 x^3+a_2 x^2+a_1 x +a_0)  \frac{ d^2}{dx ^2} +(b_3 x^3+b_2 x^2+b_1 x +b_0)  \frac{ d}{dx}+c_2 x^2+c_1 x+c_0, 
$$
where the coefficients are related by the following identities
$$
b_3=2 (1-k)\, a_4 ,\qquad c_2= k (k-1) \, a_4, \qquad c_1=k (a_3-k a_3-b_2). 
$$
\end{lemma}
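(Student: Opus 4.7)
My approach is to combine the cited degree bound with direct monomial matching. First I would invoke the theorem of \cite{sok} quoted immediately before the lemma: in dimension $N=1$ with $m=2$ it forces $\deg a_2 \le 4$, $\deg a_1 \le 3$, $\deg a_0\le 2$, which is precisely the general form
\[
P = (a_4 x^4 + a_3 x^3 + a_2 x^2 + a_1 x + a_0)\frac{d^2}{dx^2} + (b_3 x^3 + b_2 x^2 + b_1 x + b_0)\frac{d}{dx} + c_2 x^2 + c_1 x + c_0
\]
displayed in the lemma. What remains is to determine which such $P$ actually preserve the $(k+1)$-dimensional space $V_k=\langle 1,x,\dots,x^k\rangle$.

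Next I would exploit the fact that this invariance is a purely ``leading coefficient'' condition. Applied to $x^j$, the operator $P$ produces a polynomial of degree at most $j+2$; hence for $j\le k-2$ the image automatically lies in $V_k$ and imposes no constraint. Only $j=k-1$ and $j=k$ are relevant: from $P(x^{k-1})$ the coefficient of $x^{k+1}$ must vanish, and from $P(x^k)$ both the coefficients of $x^{k+1}$ and $x^{k+2}$ must vanish. A direct expansion gives the three linear relations
\[
(k-1)(k-2)\,a_4 + (k-1)\,b_3 + c_2 = 0, \qquad k(k-1)\,a_4 + k\,b_3 + c_2 = 0,
\]
\[
k(k-1)\,a_3 + k\,b_2 + c_1 = 0.
\]
Subtracting the first two immediately yields $b_3 = 2(1-k)\,a_4$; substituting back gives $c_2 = k(k-1)\,a_4$; and the last equation rearranges to $c_1 = k\bigl(a_3 - k a_3 - b_2\bigr)$, which are precisely the three identities claimed.

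Conversely, I would note that if these three relations hold then $P$ annihilates the coefficients of $x^{k+1}$ and $x^{k+2}$ in $P(V_k)$, so $P(V_k)\subset V_k$, confirming that the relations are both necessary and sufficient. The main ``obstacle'' is really only bookkeeping: one has to recognize that the middle coefficients $a_2,a_1,a_0,b_1,b_0,c_0$ are entirely free because $P$ can only fail to preserve $V_k$ through the top two degrees it produces, so no further relations appear. Everything else is a two-line linear algebra computation in the three unknowns $b_3,c_2,c_1$.
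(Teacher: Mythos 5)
Your proof is correct. The paper states this lemma without proof, and your argument supplies exactly the intended routine verification: the degree bounds $\deg a\le 4$, $\deg b\le 3$, $\deg c\le 2$ come from the theorem of \cite{sok} quoted just above, and the only nontrivial invariance conditions are the vanishing of the coefficient of $x^{k+1}$ in $P(x^{k-1})$ and of the coefficients of $x^{k+1}$, $x^{k+2}$ in $P(x^k)$, which yield precisely the three stated identities (your linear algebra checks out, including the converse showing no further relations arise).
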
  
The transformation group ${\rm GL}_2$
acts on the nine-dimensional vector space of such operators as
\begin{equation}\label{flin}
x\to \frac{s_1 x+s_2}{s_3 x+s_4}, \qquad P\to  (s_3 x+s_4)^{-k} P (s_3 x+s_4)^k.
\end{equation}
The coefficient $a(x)$  of the second derivative is a fourth order 
polynomial, which transforms as follows
$$
a(x)\to (s_3 x+s_4)^4 a\Big(\frac{s_1 x+s_2}{s_3 x+s_4}\Big).
$$
If $a(x)$ has four distinct roots, we call the operator $P$ {\it elliptic}. In the elliptic case using transformations  \eqref{flin}, 
we may reduce $a$ to
$$
a(x)=4\, x(x-1)(x-\kappa),
$$ 
where $\kappa$ is the elliptic parameter.

Define parameters $n_1,...,n_5$ by the following identities:
$$
b_0=2 (1+2 n_1), \quad b_1=-4 \Big((\kappa+1)(n_1+1)+\kappa n_2+n_3 \Big),$$$$ b_2=-2\,(3+2 n_1+2 n_2+2 n_3),
$$
$$
k=- \frac{1}{2}(n_1+n_2+n_3+n_4), $$$$ n_5=c_0+ n_2(1-n_2)+\kappa n_3 (1-n_3)+(n_1+n_3)^2+\kappa (n_1+n_2)^2.
$$
Then the operator $H=h P h^{-1},$ where 
$$\displaystyle  h=x^{\frac{n_{1}}{2}}(x-1)^{\frac{n_{2}}{2}}(x-\kappa)^{\frac{n_{3}}{2}},
$$
has the form
$$
H=a(x) \frac{ d^2}{dx ^2}+\frac{a'(x)}{2} \frac{ d}{dx}+n_5+n_4 (1-n_4)\,x+
\frac{n_1 (1-n_1) \kappa}{x}+$$$$\frac{n_2 (1-n_2) (1-\kappa)}{x-1}+\frac{n_3 (1-n_3) \kappa (\kappa-1)}{x-\kappa}.
$$
Now after the transformation $y=f(x)$, where 
$$ f'^2=4 f (f-1) (f-\kappa)$$ 
we arrive at
$$
H= \frac{ d^2}{dy ^2}+n_5+n_4 (1-n_4)\,f+
\frac{n_1 (1-n_1) \kappa}{f}+\frac{n_2 (1-n_2) (1-\kappa)}{f-1}+\frac{n_3 (1-n_3) \kappa (\kappa-1)}{f-\kappa}.
$$ 
In general here  $n_i$ are arbitrary parameters.

Another form of this Hamiltonian (up to a constant) is given by
$$
H= \frac{ d^2}{dy ^2}+n_4 (1-n_4)\,\wp(y)+
 n_1 (1-n_1) \,\wp(y+\omega_1)+
 n_2 (1-n_2) \,\wp(y+\omega_2)+
 n_3 (1-n_3) \,\wp(y+\omega_1+\omega_2),
$$ 
where $\omega_i$ are half-periods of the Weierstrass function $\wp(x)$. If $n_1=n_2=n_3=0,$ we get the Lame operator. In general, it is the  Darboux-Treibich-Verdier operator \cite{veselov}.

When $$k=-\frac{1}{2}(n_1+n_2+n_3+n_4)$$  is a natural number, this operator $H$ preserves the finite-dimensional vector space of elliptic functions, which corresponds to polynomials for the initial operator $P.$ 

\subsubsection{Two-dimensional operators}

Consider second order differential operators of the form
\begin{equation}
\label{oper}
  P\ =\ a(x,y) \frac{\pa^2}{\pa x^2}+2 b(x,y) \frac{\pa^2}{\pa x \pa y}+c(x,y) \frac{\pa^2}{\pa y^2}
 + d(x,y) \frac{\pa}{\pa x}\ +
   e(x,y) \frac{\pa}{\pa y}+f(x,y)
\end{equation}
with polynomial coefficients. Denote by $D(x,y)$ the determinant $a(x,y) c(x,y)-b(x,y)^2.$ We assume that $D\ne 0.$

\begin{lemma}\label{lem32} The operator \eqref{oper} is quasi-solvable iff the coefficients have the following structure
$$
a=q_1 x^4+q_2 x^3 y+q_3 x^2 y^2+z_1 x^3+z_2 x^2 y+ z_3 x y^2+a_1 x^2+a_2 x y+ a_3 y^2+a_4 x+a_5 y+a_6;
$$
$$
b=q_1 x^3 y+q_2 x^2 y^2+q_3 x y^3+\frac{1}{2} \Big(z_4 x^3+(z_1+z_5) x^2 y+(z_2+z_6) x y^2+z_3 y^3  \Big)
$$$$
+b_1 x^2+ b_2 x y+ b_3 y^2+b_4 x+b_5 y+b_6;
$$
$$
c=q_1 x^2 y^2+q_2 x y^3+q_3 y^4+z_4 x^2 y+z_5 x y^2+ z_6 y^3+c_1 x^2+c_2 x y+ c_3 y^2+c_4 x+c_5 y+c_6;
$$ 
$$
d=(1-k) \Big(2 (q_1 x^3+q_2 x^2 y+q_3 x y^2)+z_7 x^2+ (z_2+z_8-z_6) x y+z_3 y^2\Big)+d_1 x+d_2 y+d_3;
$$
$$
e=(1-k) \Big(2 (q_1 x^2 y+q_2 x y^2+q_3 y^3)+z_4 x^2+(z_5+z_7-z_1) x y+
z_8 y^2\Big)+e_1 x+e_2 y+e_3;
$$
$$
f=k (k-1) \Big(q_1 x^2+q_2 x y+q_3 y^2+(z_7-z_1) x +(z_8-z_6) y\Big)+f_1.
$$
\end{lemma}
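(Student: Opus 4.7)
My plan is to exploit the general degree bound theorem already stated in the text: any quasi-solvable second-order operator has $\deg(a),\deg(b),\deg(c)\le 4$, $\deg(d),\deg(e)\le 3$, and $\deg(f)\le 2$. So I start by writing each coefficient as the \emph{most general} polynomial of the allowed degree (with a priori unknown constants), turning the problem into a linear algebra exercise inside a finite-dimensional space of operator-candidates. The structure asserted in the lemma is then reached by imposing the preservation condition $P\bigl(V_k\bigr)\subset V_k$, where $V_k$ denotes polynomials in $x,y$ of total degree $\le k$.

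The key observation is that $P$ automatically sends polynomials of degree $\le k-1$ into $V_k$ (by the degree bounds above), so it suffices to examine the action of $P$ on the ``boundary'' monomials $x^iy^{k-i}$ with $0\le i\le k$. The image of each such monomial has degree $\le k+2$, and the quasi-solvability requirement is precisely that the homogeneous components of degrees $k+1$ and $k+2$ in $P(x^iy^{k-i})$ vanish for every $i$. I would organize these as two successive systems of linear equations on the unknown coefficients of $a,b,c,d,e,f$.

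First, the degree $k+2$ component of $P(x^iy^{k-i})$ depends only on the top-degree parts of the coefficients: the quartic parts of $a,b,c$, the cubic parts of $d,e$, and the quadratic part of $f$. Requiring vanishing for all $i=0,\dots,k$ produces a recursion which forces the quartic parts of $a,b,c$ to factor as $x^2\,\Phi(x,y)$, $xy\,\Phi(x,y)$, $y^2\,\Phi(x,y)$ with a \emph{common} binary quadratic $\Phi=q_1x^2+q_2xy+q_3y^2$, simultaneously fixing the matching cubic parts of $d,e$ and the quadratic part of $f$ with the factors $(1-k)$ and $k(k-1)$, which appear because differentiating $x^iy^{k-i}$ produces the weight $i$ (or $j=k-i$). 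This reproduces exactly the $q_i$-terms in the claimed formulas. Next, the degree $k+1$ vanishing condition, computed modulo the already-fixed leading parts, involves the cubic parts of $a,b,c$, the quadratic parts of $d,e$ and the linear part of $f$; the same kind of recursion in $i$ forces the $z_i$-structure, with the same ``common quadratic factor'' phenomenon occurring one degree lower, plus the specific $(1-k)$ factors in $d$ and $e$.

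Finally, once these two top-layer systems are solved, no further constraint arises: the remaining coefficients (those carrying labels $a_i,b_i,c_i,d_i,e_i,f_1$) contribute only to degrees $\le k$ in the image, so they are free. The converse direction is immediate, since we have arranged the form precisely so that $P(x^iy^{k-i})$ has degree $\le k$ for every $i$. The main difficulty I anticipate is organizational rather than conceptual: one must carefully track which homogeneous piece of which coefficient couples to which monomial $x^py^{k+1-p}$ or $x^py^{k+2-p}$, and verify that the resulting two linear systems are consistent across all $i$ and have \emph{exactly} the claimed solution space — this is where the non-obvious coupling of $(z_1,\dots,z_8)$ between $a,b,c$ and between $d,e$ emerges, and writing out a handful of representative monomials should be enough to determine the pattern unambiguously.
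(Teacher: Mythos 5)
The paper states this lemma without proof, so your proposal must stand on its own, and it contains a genuine gap at precisely the step you single out as the ``key observation''. It is \emph{not} true that $P$ automatically sends polynomials of degree $\le k-1$ into $V_k$: the degree bounds give $\deg a,\deg b,\deg c\le 4$, so the second-order part of $P$ applied to a monomial of degree $k-1$ can produce terms of degree $(k-1)-2+4=k+1$. Hence it does not suffice to test $P$ on the monomials $x^iy^{k-i}$ with $i+j=k$. Decomposing $P=P_2+P_1+P_0+\cdots$ into graded pieces, where $P_j$ raises total degree by exactly $j$, quasi-solvability is equivalent to \emph{three} conditions: (a) $P_2$ annihilates all homogeneous polynomials of degree $k$; (b) $P_2$ annihilates all homogeneous polynomials of degree $k-1$; (c) $P_1$ annihilates all homogeneous polynomials of degree $k$. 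Your plan imposes only (a) and (c).

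The omission is fatal for the leading layer. Writing $A_4,B_4,C_4,D_3,E_3,F_2$ for the top homogeneous parts of $a,b,c,d,e,f$ and applying $P_2$ to $(\alpha x+\beta y)^k$, condition (a) is equivalent to the identities $k(k-1)A_4+kxD_3+x^2F_2=0$, $\ 2k(k-1)B_4+k(xE_3+yD_3)+2xyF_2=0$, $\ k(k-1)C_4+kyE_3+y^2F_2=0$. These merely express $A_4,B_4,C_4$ through $D_3,E_3,F_2$, which remain completely free: an $11$-parameter family of admissible leading parts, not the $3$-parameter family $q_1,q_2,q_3$ claimed in the lemma, so the recursion you describe cannot force the common-factor structure. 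Only after adding condition (b), whose first consequence is $2(k-1)A_4+xD_3=0$ (and its analogues), is one driven to $F_2=k(k-1)\Phi$, $D_3=2(1-k)x\Phi$, $E_3=2(1-k)y\Phi$, $A_4=x^2\Phi$, $B_4=xy\Phi$, $C_4=y^2\Phi$ with a common quadratic $\Phi=q_1x^2+q_2xy+q_3y^2$, equivalently $P_2=\Phi\,({\cal E}-k)({\cal E}-k+1)$ with ${\cal E}=x\partial_x+y\partial_y$. The $P_1$ layer of your argument is unaffected (degree $\le k-1$ really is safe for $P_1$, and the resulting $8$-parameter space matches $z_1,\dots,z_8$), but the same omission also leaves the converse unchecked on monomials of degree $k-1$; for the corrected form it follows from the factorization of $P_2$ through $({\cal E}-k)({\cal E}-k+1)$.
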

The dimension of the vector space of such operators equals 36. The group ${\rm GL}_3$ acts on this vector space in a projective way by transformations  
\begin{equation}\begin{array}{c}
\ds \tilde x=\frac{a_1 x+a_2 y+a_3}{c_1 x+c_2 y+c_3}, \qquad \quad \bar y=\frac{b_1 x+b_2 y+b_3}{c_1 x+c_2 y+c_3}, \\[6mm] \tilde P=(c_1 x+c_2 y+c_3)^{-k} P \circ (c_1 x+c_2 y+c_3)^{k}. 
\end{array}\label{GL3}
\end{equation}
This transformation corresponds to the matrix 
$$
  \begin{pmatrix}
a_1\,&a_2\,&a_3 \\ 
b_1\,&b_2\,&b_3 \\ 
c_1\,&c_2\,&c_3 \end{pmatrix} \in {\rm GL}_3.  
$$
The representation is a sum of irreducible representations $W_1$, $W_2$ and $W_3$ of dimensions 27, 8 and 1, correspondingly. A basis of $W_2$ is given by  
$$
x_1=5 z_7-z_5-7 z_1, \qquad x_2=5 z_8-z_2-7 z_6,  \qquad  x_3= 5 d_1+2  (k-1) (2 a_1 +b_2),  $$$$
x_4= 5 e_1+2 (k-1) (2 b_1 +c_2), \qquad 
x_5=5 d_2+2 (k-1) (2 b_3+a_2), \qquad x_6=5 e_2+2 (k-1) (2 c_3+b_2),
$$
$$
x_7=5 d_3+2 (k-1) (a_4+b_5), \qquad x_8=5 e_3+2 (k-1) (b_4+c_5).
$$
The generic orbit of the group action on $W_2$ has dimension 6. There are two polynomial invariants of the action:
$$
I_1=x_3^2-x_3 x_6+x_6^2+3 x_4 x_5+3 (k-1) (x_1 x_7+x_2 x_8),
$$
and
$$
I_2=2 x_3^3-3 x_3^2 x_6-3 x_3 x_6^2+2 x_6^3+9 x_4 x_5 (x_3+x_6)+
$$
$$
9 (k-1) (x_1 x_3 x_7+x_2 x_6 x_8-2 x_1 x_6 x_7-2 x_2 x_3 x_8+3 x_2 x_4 x_7+3 x_1 x_5 x_8).
$$

\subsubsection{Flat polynomial metrics}
According to Remark \ref{rem31}, the contravariant metric
\[
g^{1,1}=a\ ,\qquad  g^{1,2}=g^{2,1}=b\ ,\qquad  g^{2,2}=c\ ,
\]
defined by the coefficients $a,b,c$ of the operator \eqref{oper} is flat (i.e. $R_{1,2,1,2}=0$) for any polynomial form $P$. 
\begin{op} Describe all flat contravariant metrics defined by the polynomials  $a,b,c$ from Lemma \ref{lem32} up to transformations 
\eqref{GL3}.
\end{op} 
Some particular results were obtained in \cite{sok}.
\begin{example} 
For any constant $\kappa$ the metric $g$ with 
$$
a=(x^2-1) (x^2-\kappa) +(x^2+\kappa)\,  y^2,
\qquad b=x y\, (x^2+y^2+1-2 \kappa),
$$
\[
c=(\kappa-1)(x^2-1)+(x^2+2-\kappa)\,y^2+y^4
\]
is flat. Moreover, this is a linear pencil of polynomial contravariant flat metrics with respect to the parameter $\kappa$ \cite{dub}.  The metric is related to a polynomial form \cite{Takemura} for the so called Inozemtsev $BC_2$ Hamiltonian
$$
H=\Delta+ 2 m (m-1) (\wp(x+y)+\wp(x-y))+\sum_{i=0}^3 n_i(n_i-1) (\wp(x+\omega_i)+\wp(y+\omega_i)),
$$
where $\omega_0=0, \omega_3=\omega_1+\omega_2$ and $\omega_1,\omega_2$ are the half-periods of 
the Weierstrass function $\wp(x)$. 
\end{example}

\subsection{Commutative subalgebras in  $U(\mathfrak{gl}_{N+1})$ and \\ quantum Calogero-Moser Hamiltonians}

A class of commutative subalgebras in   $U(\mathfrak{gl}_{n})$ was constructed in \cite{Vinberg}. These subalgebras are quantizations of commutative Poisson subalgebras generated by compatible constant and linear $\mathfrak{gl}_n$-Poisson brackets. The quantization recipe is very simple: 
any product $\prod_1^k x_i$ of commuting generators should be replaced by $\ds \frac{1}{k!}\sum_{\sigma\in S_k} \prod y_{\sigma(i)},$ where $y_i$ are non-commutative generators. 

The universal enveloping algebra $U(\mathfrak{gl}_{N+1})$ is an associative algebra generated by elements $e_{ij}$ and relations \eqref{univen}. 
 Consider the case $N=2$.   It turns out that the element  of the universal enveloping algebra  $U(\mathfrak{gl}_{3})$  
$$
H=H_0+H_1 g_2+H_2 g_2^2+H_3 g_3, 
$$
where 
$$
\begin{array}{c}
H_0=12 e_{12} e_{11}-12 e_{32} e_{13}-12 e_{33} e_{12} - e_{23}^2,\\[3mm]  H_1=-e_{21} +2 e_{21} e_{11}-e_{22} e_{21}-e_{31} e_{23}-12 e_{32}^2 - e_{33} e_{21},\\[3mm]
H_2=- e_{31}^2, \qquad \quad H_3=36 e_{32} e_{31}+3 e_{21}^2, 
\end{array}
$$
commutes with two third order elements of the form  
$$
\begin{array}{c}
K=K_0+K_1 g_2+K_2 g_3, \\[3mm] M=M_0 +M_1 g_2+M_2 g_3+M_3 g_2^2+M_4 g_2 g_3 +M_5 g_3^2+M_6 g_2^3.
\end{array}
$$
Here $g_2$ and $g_3$ are arbitrary parameters and 
$$
\begin{array}{c}
K_0=-e_{23}+ 2 e_{21} e_{13} - e_{23} e_{22} - 36 e_{32} e_{12} + e_{33} e_{23} - e_{21} e_{13} e_{11} - e_{22} e_{21} e_{13} +
e_{23} e_{11}^2 + \\[3mm] 2 e_{23} e_{21} e_{12} - e_{23} e_{22} e_{11} + 12 e_{31} e_{12}^2 - e_{31} e_{23} e_{13} - 
12 e_{32} e_{12} e_{11} - e_{32} e_{23}^2 -\\[3mm] 12 e_{32}^2 e_{13} + 2 e_{33} e_{21} e_{13} -e_{33} e_{23} e_{11} + 
e_{33} e_{23} e_{22 }+ 12 e_{33} e_{32} e_{12},\\[3mm]
\end{array}
$$

$$
\begin{array}{c}
K_1=3 e_{31} e_{11}- 3 e_{31} e_{22} - 2 e_{32} e_{21} + e_{31} e_{21} e_{12} + e_{31} e_{22} e_{11} - e_{31} e_{22}^2 + 
e_{31}^2 e_{13} - \\[3mm] 2 e_{32} e_{21} e_{11} + e_{32} e_{22} e_{21} - 2 e_{32} e_{31} e_{23} - e_{33} e_{31} e_{11} + 
e_{33} e_{31} e_{22}+e_{33} e_{32} e_{21}, 
\end{array}
$$

$$
\begin{array}{c}
K_2=3\, (2 e_{31} e_{21} + e_{31} e_{22} e_{21} + e_{31}^2 e_{23} - e_{32} e_{21}^2 - e_{33} e_{31} e_{21});
\end{array}
$$

$$
\begin{array}{c}
M_0= 2 \Big(12 e_{13} e_{11} - 6 e_{22} e_{13} - 6 e_{33} e_{13} - 12 e_{13} e_{11}^2 - 6 e_{22} e_{13} e_{11} + 
   6 e_{22}^2 e_{13} +  18 e_{23} e_{12} e_{11} -  \\[3mm] 18 e_{23} e_{22} e_{12} + e_{23}^3 -
   216 e_{32} e_{12}^2 + 18 e_{32} e_{23} e_{13} + 30 e_{33} e_{13} e_{11} - 6 e_{33} e_{22} e_{13} - 
   12 e_{33}^2 e_{13}\Big),
\end{array}
$$

$$
\begin{array}{c}
M_1= -3 \Big(2 e_{23} e_{21} - 36 e_{31} e_{12} + 20 e_{32} e_{11} - 28 e_{32} e_{22} + 8 e_{33} e_{32} - 4 e_{23} e_{21} e_{11} + 
   2 e_{23} e_{22} e_{21} - \\[3mm]  12 e_{31} e_{12} e_{11} -
   e_{31} e_{23}^2  + 8 e_{32} e_{11}^2 + 36 e_{32} e_{21} e_{12} + 4 e_{32} e_{22} e_{11} - 4 e_{32} e_{22}^2 -
   24 e_{32} e_{31} e_{13} - \\[3mm] 12 e_{32}^2 e_{23} + 2 e_{33} e_{23} e_{21} + 12 e_{33} e_{31} e_{12} - 
   20 e_{33} e_{32} e_{11} + 4 e_{33} e_{32} e_{22} + 8 e_{33}^2 e_{32}\Big),
\end{array}
$$

$$
\begin{array}{c}
M_2= -18 \Big(4 e_{31} e_{11}  - 2 e_{31} e_{22} - 2 e_{33} e_{31} - e_{23} e_{21}^2 - 2 e_{31} e_{11}^2 - 
   6 e_{31} e_{21} e_{12} + 2 e_{31} e_{22} e_{11} - 
   2 e_{31} e_{22}^2 + \\[3mm] 6 e_{31}^2 e_{13} + 6 e_{32} e_{22} e_{21} + 24 e_{32}^3 + 
   2 e_{33} e_{31} e_{11} + 2 e_{33} e_{31} e_{22} - 
   6 e_{33} e_{32} e_{21} - 2 e_{33}^2 e_{31}\Big),\\[3mm]
\end{array}
$$

$$
\begin{array}{c}
M_3=  -3 \Big(2 e_{31} e_{21} - 2 e_{31} e_{21} e_{11} + e_{31} e_{22} e_{21} + 
   e_{31}^2 e_{23} - 24 e_{32}^2 e_{31} + e_{33} e_{31} e_{21}\Big),
\end{array}
$$
$$
\begin{array}{c}
\qquad \qquad \qquad \quad M_4= 9 \Big(e_{31} e_{21}^2  - 12 e_{32} e_{31}^2\Big), \qquad M_5= 108 e_{31}^3, \qquad M_6=-2  e_{31}^3.
\end{array}
$$
One can verify that  $[K,M]=0$.  Thus, we get a  commutative subalgebra in $U(\mathfrak{gl}_{3})$  generated by the elements $H,K,M$ and by the three central elements of  $U(\mathfrak{gl}_{3})$ of order 1,2, and 3.

This subalgebra generates ``integrable'' operators\footnote{This means that there exist ``rather many'' operators commuting with them.} by different representations of $U(\mathfrak{gl}_3)$ by differential, difference and  $q$-difference operators.

In particular, the substitution of differential operators  \eqref{opera} with two independent variables for $e_{ij}$ 
 maps the element $H$ to a polynomial form $P_2$ for the elliptic Calogero-Moser Hamiltonian \eqref{CM} with $N=2$, the element $M$ to a third order differential operator that commutes with $P_2$, and the element $K$ to zero.  The parameters $g_2$ and $g_3$ coincide with the invariants of the Weierstrass function $\wp(x)$ from \eqref{CM}.    
 
\begin{remark} The representation of $U(\mathfrak{gl}_3)$ by the matrix unities in ${\rm Mat}_3$ maps $H,K$ and $M$ to zero. 
\end{remark}

The representation defined by
$$
e_{ij}\rightarrow z_i \frac{\partial}{\partial z_j}
$$
maps $H$ to a homogeneous differential operator with 3 independent variables of the form 
${\cal H}=\sum_{i\ge j} a_{ij}\,\frac{\partial^2}{\partial z_i \partial z_j},$ where 
$$
\begin{array}{c}
a_{11}=-2 g_2 z_1 z_2-3 g_3 z_2^2+g_2^2 z_3^2, \qquad a_{22}=12 g_2 z_3^2, \qquad a_{33}=z_2^2, \\[3mm]
a_{21}=-12 z_1^2+g_2 z_2^2-36 g_3 z_3^2, \qquad a_{31}=2\, g_2 z_2 z_3, 
\qquad a_{32}=24\, z_1 z_3,
\end{array}
$$
and $M$ to an operator $\ds {\cal M}=\sum_{i\ge j\ge k} b_{ijk}\,\frac{\partial^3}{\partial z_i \partial z_j  \partial z_k}$ that commutes with $\cal H$. It is interesting 
that  the lower order terms are absent in both ${\cal H}$ and in ${\cal M}$.

A similar commutative subalgebra in $U(\mathfrak{gl}_4)$ \cite{matsok} yields a polynomial form for the elliptic Calogero-Moser Hamiltonian \eqref{CM} with $N=3$

\subsection{Bi-Hamiltonian origin of classical\\ elliptic Calogero-Moser models}

Consider the following limit procedure. Any element  $f\in U(\mathfrak{gl}_{n})$ is a polynomial in the non-commutative variables $e_{ij}$, which 
satisfy the commutator relation  \eqref{univen}. Taking all the terms of highest degree in $f$ and replacing there $e_{ij}$ by commutative variables $x_{ij}$, we get a polynomial that we call $symbol(f).$

It is known that for any elements $f$ and $g$ of $U(\mathfrak{gl}_{n})$
$$
symbol([f,g])=\{symbol(f), symbol(g)\},
$$
where  $\{ , \}$ is the linear Poisson bracket defined by
\begin{equation}\label{lin}
\{x_{ij}, x_{kl}\}=\delta_{j,k}\, x_{il}-\delta_{i,l} \, x_{kj},  \qquad i,j=1,\dots, n, 
\end{equation}
which corresponds to the Lie algebra $\mathfrak{gl}_{n}.$
In particular, if $[f,g]=0,$ then $$\{symbol(f),\,\, symbol(g)\}=0.$$

Consider polynomials in the commutative variables $x_{ij}$. We will regard $x_{ij}$ as the entries of a matrix $X$. 
Applying the limit procedure to the generators of the commutative subalgebra in $U(\mathfrak{gl}_{3})$ described in Subsection 3.1.3, we get the polynomials 
\begin{equation}\label{coma}
\begin{array}{c}
c_1=\mbox{tr}\,X, \qquad c_2=\mbox{tr}\,X^2, \qquad c_3=\mbox{tr}\,X^3, \\ [3mm]
 h=h_0+h_1 g_2+h_2 g_2^2+h_3 g_3, \qquad k=k_0+k_1 g_2+k_2 g_3,    \\ [3mm]
 m=m_0 +m_1 g_2+m_2 g_3+m_3 g_2^2+m_4 g_2 g_3 +m_5 g_3^2+m_6 g_2^3,
\end{array}
\end{equation}
where 
$$
\begin{array}{c}
h_0=12 x_{12} x_{11}-12 x_{32} x_{13}-12 x_{33} x_{12} - x_{23}^2,\\[3mm]  \qquad h_1= 2 x_{21} x_{11}-x_{22} x_{21}-x_{31} x_{23}-12 x_{32}^2 - x_{33} x_{21},\\[3mm]
h_2=- x_{31}^2, \qquad \quad h_3=36 x_{32} x_{31}+3 x_{21}^2, 
\end{array}
$$
 $$
\begin{array}{c}
k_0=  - x_{21} x_{13} x_{11} - x_{22} x_{21} x_{13} +
x_{23} x_{11}^2 +  2 x_{23} x_{21} x_{12} - \\[3mm] \qquad \qquad x_{23} x_{22} x_{11} + 12 x_{31} x_{12}^2 - x_{31} x_{23} x_{13} - 
12 x_{32} x_{12} x_{11} - x_{32} x_{23}^2 -\\[3mm] \qquad \qquad 12 x_{32}^2 x_{13} + 2 x_{33} x_{21} x_{13} -x_{33} x_{23} x_{11} + 
x_{33} x_{23} x_{22 }+ 12 x_{33} x_{32} x_{12},\\[3mm]
\end{array}
$$

$$
\begin{array}{c}
k_1=  x_{31} x_{21} x_{12} + x_{31} x_{22} x_{11} - x_{31} x_{22}^2 + 
x_{31}^2 x_{13} -  2 x_{32} x_{21} x_{11} +\\[3mm] \qquad  x_{32} x_{22} x_{21} - 2 x_{32} x_{31} x_{23} - x_{33} x_{31} x_{11} + 
x_{33} x_{31} x_{22}+x_{33} x_{32} x_{21}, 
\end{array}
$$

$$
\begin{array}{c}
k_2=3\, (  x_{31} x_{22} x_{21} + x_{31}^2 x_{23} - x_{32} x_{21}^2 - x_{33} x_{31} x_{21});
\end{array}
$$

$$
\begin{array}{c}
m_0= 2 \Big( - 12 x_{13} x_{11}^2 - 6 x_{22} x_{13} x_{11} + 
   6 x_{22}^2 x_{13} +  18 x_{23} x_{12} x_{11} -  18 x_{23} x_{22} x_{12} + \\[3mm] \qquad \qquad x_{23}^3 -
   216 x_{32} x_{12}^2 + 18 x_{32} x_{23} x_{13} + 30 x_{33} x_{13} x_{11} - 6 x_{33} x_{22} x_{13} - 
   12 x_{33}^2 x_{13}\Big),
\end{array}
$$

$$
\begin{array}{c}
m_1= -3 \Big(  - 4 x_{23} x_{21} x_{11} + 
   2 x_{23} x_{22} x_{21} -  12 x_{31} x_{12} x_{11} - 
   x_{31} x_{23}^2  + 8 x_{32} x_{11}^2 +\\[3mm]  \qquad 36 x_{32} x_{21} x_{12} + 4 x_{32} x_{22} x_{11} - 4 x_{32} x_{22}^2 -
   24 x_{32} x_{31} x_{13} -  12 x_{32}^2 x_{23} +\\[3mm] \qquad 2 x_{33} x_{23} x_{21} + 12 x_{33} x_{31} x_{12} - 
   20 x_{33} x_{32} x_{11} + 4 x_{33} x_{32} x_{22} + 8 x_{33}^2 x_{32}\Big),
\end{array}
$$

$$
\begin{array}{c}
m_2= -18 \Big( - x_{23} x_{21}^2 - 2 x_{31} x_{11}^2 - 
   6 x_{31} x_{21} x_{12} + 2 x_{31} x_{22} x_{11} - 
   2 x_{31} x_{22}^2 +  6 x_{31}^2 x_{13} + \\[3mm] \qquad \qquad 6 x_{32} x_{22} x_{21} + 24 x_{32}^3 + 
   2 x_{33} x_{31} x_{11} + 2 x_{33} x_{31} x_{22} - 
   6 x_{33} x_{32} x_{21} - 2 x_{33}^2 x_{31}\Big),\\[3mm]
\end{array}
$$

$$
\begin{array}{c}
m_3=  -3 \Big( - 2 x_{31} x_{21} x_{11} + x_{31} x_{22} x_{21} + 
   x_{31}^2 x_{23} - 24 x_{32}^2 x_{31} + x_{33} x_{31} x_{21}\Big),
\end{array}
$$
$$
\begin{array}{c}
\qquad \qquad \qquad \quad m_4= 9 \Big(x_{31} x_{21}^2  - 12 x_{32} x_{31}^2\Big), \qquad m_5= 108 x_{31}^3, \qquad m_6=-2  x_{31}^3.
\end{array}
$$
These six polynomials commute with each other with respect to the linear $\mathfrak{gl}_{3}$-Poisson bracket  \eqref{lin}.

It can be verified that elements of the universal enveloping algebra can be reconstructed from the polynomials  \eqref{coma} by the quantization procedure described at the beginning of Subsection 3.1.3. 

\subsubsection{Quadratic Poisson bracket}

Consider the following quadratic bracket  
\begin{equation}\label{qvpu}
\{f,g\}_2=\{f,g\}_a+\kappa \{f,g\}_b+\kappa^2 \{f,g\}_c,
\end{equation}
where $\kappa$ is an arbitrary parameter, 
$$
\{f,g\}_a=-3\, \mbox{tr}(X)\, \{f,g\}_1, \qquad 
\{f,g\}_c=Z_1(f) Z_2(g)-Z_1(g) Z_2(f).
$$ 
Here the bracket $\{ , \}_1$ is defined by \eqref{lin}.
The above vector fields $Z_i$ are defined as follows: 
$$
Z_1(f)=\sum_{i=1}^{3} \frac{\partial f}{\partial x_{ii}}, \qquad Z_2(f)=\{h,f\}_1,
$$
and
$$
\{f,g\}_b=Z_3(\{f,g\}_1) -\{Z_3(f),g\}_1-\{f,Z_3(g)\}_1,
$$
where
$$
Z_3(f)=\sum_{i,j=1}^{3} G_{i,j}\,\frac{\partial f}{\partial x_{ij}}.
$$
Here $\{\cdot,\cdot\}_1$ denotes the linear bracket   \eqref{lin}.  The coefficients of the vector field $Z_3$ are given by
$$
G_{1,1}=(-2 x_{11} x_{23}+x_{22} x_{23}+36 x_{12} x_{32}+x_{23} x_{33})+x_{31}(x_{11}-2 x_{22}+x_{33})\,g_2+9\, x_{21} x_{31}\, g_3,
$$

$$
G_{2,2}=-G_{1,1}, \qquad G_{3,3}=0,
$$

$$
G_{1,2}=(x_{11} x_{13} + x_{13} x_{22} - 3\, x_{12} x_{23} - 2 x_{13} x_{33})+(3 x_{12} x_{31} + 5 x_{11} x_{32} - 4 x_{22} x_{32} - 
 x_{32} x_{33})\,g_2-
$$
$$
-3 (2 x_{11} x_{31} - x_{22} x_{31} - 3 x_{21} x_{32} - x_{31} x_{33})\,g_3,
$$

$$
G_{1,3}=3 x_{13} x_{23} -(x_{11} - x_{22}) (x_{11} + x_{22} - 2 x_{33})\,g_2-3 x_{21} (x_{11} + x_{22} - 2 x_{33})\,g_3,
$$

$$
G_{2,1}=-3 (x_{21} x_{23} + 12 x_{12} x_{31} + 4 x_{11} x_{32} - 
   8 x_{22} x_{32} + 4 x_{32} x_{33})-6 x_{21} x_{31}\,g_2,
$$

$$
G_{2,3}=3 (4 x_{11} x_{12} + 4 x_{12} x_{22} + x_{23}^2 - 8 x_{12} x_{33})+x_{21} (x_{11} + x_{22} - 2 x_{33})\,g_2,
$$

$$
G_{3,1}=2 (x_{11} x_{21} + x_{21} x_{22} + 18 x_{32}^2 - 2 x_{21} x_{33})-3 x_{31}^2\,g_2,
$$

$$
G_{3,2}=-(x_{11} - x_{22}) (x_{11} + x_{22} - 2 x_{33})+6 x_{31} x_{32}\,g_2-9 x_{31}^2\, g_3.
$$
\begin{theorem} \begin{itemize}
\item[i)] Formula \eqref{qvpu} defines a Poisson bracket;
\item[ii)] This  quadratic bracket is compatible {\rm (see Section \ref{sec14})} with the linear $\mathfrak{gl}_{3}$-Poisson bracket  \eqref{lin};
\item[iii)] The Casimir function of the pencil of these two brackets generates {\rm (see Theorem \ref{biH})} the commutative Poisson subalgebra described in Subsection 3.1.4;
\end{itemize}
\end{theorem}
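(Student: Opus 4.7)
The plan is to exploit the structural decomposition of $\{\cdot,\cdot\}_2$ into three pieces and work in the Schouten--Nijenhuis formalism, so that as many identities as possible follow from general principles and only a bounded symbolic check remains.

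For part (i), I would first verify that each building block is a bivector of a recognisable type. The piece $\{\cdot,\cdot\}_a$ equals $-3\,\mathrm{tr}(X)\{\cdot,\cdot\}_1$; since $\mathrm{tr}(X)=x_{11}+x_{22}+x_{33}$ is a Casimir of the linear $\mathfrak{gl}_3$-bracket, $\mathrm{tr}(X)\cdot\pi_1$ is automatically Poisson. The piece $\{\cdot,\cdot\}_c = Z_1\wedge Z_2$ is Poisson as soon as $[Z_1,Z_2]=0$. An immediate check shows that $Z_1$ is a Poisson vector field of $\pi_1$ (it acts as $Z_1(x_{ij})=\delta_{ij}$, and $\pi_1$ is linear), so $[Z_1,Z_2]=\{Z_1(h),\cdot\}_1$; a term-by-term expansion of $h_0,h_1,h_2,h_3$ gives $Z_1(h_i)=0$, hence $[Z_1,Z_2]=0$. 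The piece $\{\cdot,\cdot\}_b = \mathcal{L}_{Z_3}\pi_1$ is not a priori Poisson on its own. Expanding $[\{\cdot,\cdot\}_2,\{\cdot,\cdot\}_2]_{SN}$ in powers of $\kappa$ gives obstructions at orders $\kappa^0,\ldots,\kappa^4$; the extreme ones vanish by the above, and the $\kappa^1$ obstruction $[\{\cdot,\cdot\}_a,\{\cdot,\cdot\}_b]$ reduces (after the standard Leibniz rule for Schouten brackets) to the assertion that $\mathrm{tr}(X)$ is a Casimir of $\{\cdot,\cdot\}_b$ as well; this in turn follows from $Z_3(\mathrm{tr}(X)) = G_{11}+G_{22}+G_{33}=0$, which is immediate from the given formulas. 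The $\kappa^2$ and $\kappa^3$ identities remain to be checked by a direct polynomial computation on the nine generators $x_{ij}$.

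Part (ii), compatibility with $\{\cdot,\cdot\}_1$, admits a fully conceptual proof. Writing $[\pi_1,\pi_2]_{SN} = [\pi_1,\pi_a]_{SN} + \kappa[\pi_1,\pi_b]_{SN} + \kappa^2[\pi_1,\pi_c]_{SN}$, each term vanishes for a structural reason: $[\pi_1,\mathrm{tr}(X)\pi_1]_{SN}=0$ because $\mathrm{tr}(X)$ is a Casimir; $[\pi_1,\mathcal{L}_{Z_3}\pi_1]_{SN} = \tfrac12\mathcal{L}_{Z_3}[\pi_1,\pi_1]_{SN}=0$ since $\pi_1$ is itself Poisson; and $[\pi_1,Z_1\wedge Z_2]_{SN}=0$ because $Z_1$ is a Poisson vector field of $\pi_1$ (verified on generators) and $Z_2 = \{h,\cdot\}_1$ is Hamiltonian, hence Poisson. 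For part (iii) I would invoke Theorem~\ref{biH}: once (i) and (ii) are established, the coefficients $C_i$ of any Casimir $C(\kappa) = \sum C_i\kappa^i$ of the pencil $\{\cdot,\cdot\}_1+\kappa\{\cdot,\cdot\}_2$ are in pairwise involution with respect to both brackets. It remains to identify the $C_i$ with the six polynomials $c_1,c_2,c_3,h,k,m$ of \eqref{coma}. Since $c_1,c_2,c_3$ are already Casimirs of $\pi_1$ and of degree $1,2,3$, and $h,k,m$ have degrees $2,3,3$ with explicit dependence on $g_2,g_3$, the natural approach is to write an ansatz for $C(\kappa)$ as a polynomial in $\kappa$ whose coefficients depend polynomially on $g_2,g_3$ and of prescribed degrees in the $x_{ij}$, impose the Casimir equations $\{\cdot,C(\kappa)\}_1+\kappa\{\cdot,C(\kappa)\}_2=0$ on the generators, and read off the $C_i$.

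The main obstacle lies in the $\kappa^2$ and $\kappa^3$ pieces of the Jacobi identity in part (i), namely $[\mathcal{L}_{Z_3}\pi_1,\mathcal{L}_{Z_3}\pi_1]_{SN} + 2[\mathrm{tr}(X)\pi_1, Z_1\wedge Z_2]_{SN}=0$ and $[\mathcal{L}_{Z_3}\pi_1, Z_1\wedge Z_2]_{SN}=0$. Using the identities $[\pi_1,Z_i]_{SN}=0$ and $\mathcal{L}_{Z_i}\mathcal{L}_{Z_3}\pi_1=\mathcal{L}_{[Z_i,Z_3]}\pi_1$, the $\kappa^3$ term reduces to checking $[Z_1,Z_3]=0$ and $[Z_2,Z_3]=0$; the first is a short verification on entries $G_{ij}$, but the second mixes $Z_3$ with the Hamiltonian of $h$ and is genuinely non-trivial. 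The $\kappa^2$ identity appears to require a direct symbolic check and is the computational crux. A more elegant route, which I would attempt only if the direct computation becomes unwieldy, would be to identify this quadratic bracket as the semiclassical limit of a known elliptic Sklyanin-type structure on $U(\mathfrak{gl}_3)$ compatible with the PBW-linear bracket, in which case Jacobi would follow from the classical Yang--Baxter equation for the corresponding $r$-matrix.
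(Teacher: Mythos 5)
The paper states this theorem without any proof at all --- it is presented as a fact verified by direct computation on the explicit formulas for $h$, $Z_1$, $Z_2$, $Z_3$ and the $G_{i,j}$. Your proposal therefore supplies more structure than the source does, and the structure is sound. Your treatment of part (ii) is genuinely complete and conceptual: $[\pi_1,\mathrm{tr}(X)\pi_1]_{SN}=0$ because $\mathrm{tr}(X)$ is a Casimir of \eqref{lin}, $[\pi_1,\mathcal{L}_{Z_3}\pi_1]_{SN}=0$ for any vector field $Z_3$ once $[\pi_1,\pi_1]_{SN}=0$ (the standard ``trivial deformation'' argument), and $[\pi_1,Z_1\wedge Z_2]_{SN}=0$ since $Z_1$ is checked to be a Poisson vector field of $\pi_1$ and $Z_2$ is Hamiltonian. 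The reduction of the $\kappa^0$, $\kappa^1$ and $\kappa^4$ obstructions in part (i) is also correct, with one dependency you should make explicit: killing $[\pi_a,\pi_b]_{SN}$ requires \emph{both} that $\mathrm{tr}(X)$ be a Casimir of $\pi_b$ (your observation $G_{1,1}+G_{2,2}+G_{3,3}=0$) \emph{and} that $[\pi_1,\pi_b]_{SN}=0$, so part (ii) must be proved before this step of part (i), not after.

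What keeps this from being a proof rather than a proof plan is that the computational core is deferred exactly where it cannot be avoided: the $\kappa^2$ obstruction $[\pi_b,\pi_b]_{SN}+2[\pi_a,\pi_c]_{SN}=0$, the verification $[Z_2,Z_3]=0$ needed for the $\kappa^3$ term, and, in part (iii), the identification of the Casimir coefficients of the pencil with the six polynomials \eqref{coma}. These are finite polynomial identities in nine variables and two parameters $g_2,g_3$, so nothing can go wrong in principle, but they are precisely the content of the theorem --- the formulas for the $G_{i,j}$ are not derived from any structure in the paper, they are the answer to these equations. Until those checks are run (by hand or by computer algebra) the statement is not established; your closing suggestion of identifying \eqref{qvpu} with a semiclassical elliptic $q_{9,2}$-bracket is exactly the paper's own (unproved) conjecture immediately following the theorem, so it cannot be used as an alternative route here.
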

\begin{conjecture} The Poisson bracket \eqref{qvpu} is the elliptic Poisson bracket of the type $q_{9,2}$ {\rm(see~\cite{feyod})} written in an unusual basis.
\end{conjecture}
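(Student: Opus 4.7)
The plan is to match the bracket \eqref{qvpu} with the Feigin--Odesskii elliptic algebra $q_{9,2}(\tau)$ by identifying its intrinsic invariants. First, I would observe that the dimensions agree ($\dim\mathfrak{gl}_3=9$) and that the family fits the pattern $q_{mn^2,\,kmn-1}$ with $m=1$, $n=3$, $k=1$, which is in fact the unique integer solution. The theorem cited at the end of Section~1.4.1 then predicts that the set of admissible vectors for \eqref{qvpu} must be a union of $n^2=9$ one-dimensional components whose direct sum recovers the whole representation space. So my starting point would be to solve the quadratic system \eqref{admiss} for the structure constants of \eqref{qvpu} explicitly, exhibit the nine distinguished lines, and check that together they form a basis of $\mathfrak{gl}_3$. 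The dependence of these lines on $\kappa$, $g_2$, $g_3$ should reveal an underlying elliptic curve.

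Next I would exploit the compatibility with the linear $\mathfrak{gl}_3$-bracket already established in part~ii) of the preceding theorem. By the shift-argument construction, each of the nine admissible lines produces a linear Poisson bracket compatible with \eqref{qvpu}. One of these linear brackets is, by construction, the standard $\mathfrak{gl}_3$-bracket; the remaining eight should together span the Heisenberg-type symmetry of level $3$ that is intrinsic to the Feigin--Odesskii algebras. Recovering this discrete symmetry from \eqref{qvpu}, and verifying that it acts on the configuration of admissible lines by the expected automorphisms, would pin down the isomorphism class up to a choice of elliptic modulus.

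The final step is to produce an explicit linear change of basis $x_{ij}\mapsto y_{\alpha\beta}$ that brings \eqref{qvpu} to the canonical form of $q_{9,2}(\tau)$ from \cite{feyod}, and to express $(g_2,g_3,\kappa)$ as modular functions of $\tau$. The natural candidate for this change of basis is the one diagonalizing the Heisenberg action of the previous step: each admissible line maps to one of the nine canonical theta-type generators, and the modulus $\tau$ is read off from the ratio of periods implicit in the Weierstrass invariants $(g_2,g_3)$.

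The main obstacle is this last matching. A term-by-term comparison would require checking $36$ independent quadratic coefficients, and the passage from a $(g_2,g_3,\kappa)$-parametrization to the $\tau$-parametrization of \cite{feyod} involves nontrivial theta-function identities. I would therefore attempt a more conceptual route: show that on the given $9$-dimensional space, a quadratic Poisson structure that (a) is compatible with the linear $\mathfrak{gl}_3$-bracket, (b) has precisely nine admissible lines forming a basis, and (c) carries a Heisenberg symmetry permuting these lines, is uniquely determined up to isomorphism by the underlying elliptic curve. Both \eqref{qvpu} and $q_{9,2}(\tau)$ should satisfy (a)--(c), so uniqueness would yield the conjecture; the remaining task, identifying the modular parameter, then reduces to computing the $j$-invariant of the elliptic curve extracted from the admissible locus of \eqref{qvpu} and equating it with $j(\tau)$.
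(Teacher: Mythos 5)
The statement you are addressing is stated in the paper as a \emph{conjecture}, and the paper supplies no proof of it; the only supporting evidence offered there is indirect (the compatibility of \eqref{qvpu} with the linear $\mathfrak{gl}_3$-bracket and the fact that the Casimir of the pencil reproduces the Calogero--Moser commutative subalgebra). So there is no argument in the paper to compare yours against, and your text should be judged as a proposed proof strategy rather than checked against an existing one. As a strategy it is sensible in outline --- the numerology $q_{mn^2,\,kmn-1}$ with $m=1$, $n=3$, $k=1$ is the right identification, and organizing the comparison around the admissible locus and the shift-argument pencil is the natural invariant-theoretic approach --- but it does not constitute a proof, because each of its load-bearing steps is asserted rather than established.

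The central gap is the rigidity claim in your last paragraph: that a quadratic Poisson structure on a nine-dimensional space satisfying (a) compatibility with the linear $\mathfrak{gl}_3$-bracket, (b) exactly nine admissible lines forming a basis, and (c) a Heisenberg symmetry permuting them, is determined up to isomorphism by the underlying elliptic curve. No such uniqueness theorem is proved in \cite{feyod} or anywhere in this paper, and proving it is essentially equivalent in difficulty to the conjecture itself; without it your ``conceptual route'' collapses back to the term-by-term matching you were trying to avoid. Two further steps are also unverified rather than merely tedious: you have not shown that the admissibility system \eqref{admiss} for the explicit structure constants of \eqref{qvpu} (which involve $g_2$, $g_3$ and the vector fields $Z_1,Z_2,Z_3$) actually has nine one-dimensional solution components spanning $\mathfrak{gl}_3$, and you have not exhibited any $\Z_3\times\Z_3$ Heisenberg action on \eqref{qvpu} --- its existence is precisely part of what the conjecture asserts, so assuming it as input (c) is circular. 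Until the nine admissible lines are computed explicitly and the discrete symmetry is constructed from them, the proposal remains a plan for attacking an open problem, not a proof.
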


To get the classical elliptic Calogero-Moser Hamiltonian, one should use the following classical limit of formulas  \eqref{opera}:
\begin{equation}\label{clastr}
 x_{i+1,j+1}=q_i \, p_j,\qquad x_{1,i+1}=p_i ,\qquad  x_{1,1}=-\sum_{j=1}^{N}  q_j p_j +\beta (N+1), \qquad x_{i+1,1}=q_i\, x_{1,1},
\end{equation}
where $p_i$ and $q_i$ are canonical variables for the standard constant Poisson bracket \eqref{stand}. 
One can verify that $p_i,\,q_i$  are Darboux coordinates on the minimal symplectic leaf of the  $\mathfrak{gl}_{N+1}$-Poisson bracket.  This leaf 
is the orbit of the diagonal matrix $\rm{diag}(\beta (N+1),0,0,...,0).$  

In the case $N=2$ after substitution  \eqref{clastr} into  \eqref{coma} we get  polynomials in the canonical variables $p_i, q_i$ commuting with respect to \eqref{stand}. The element $h$ becomes the Calogero-Moser Hamiltonian written in unusual coordinates, the element $k$ vanishes, the element $m$  converts to the integral of third degree in momenta that commutes with the Hamiltonian $h$, and the Casimir functions $c_i$ become constants. 

To bring the Hamiltonian and the cubic integral to the standard Calogero-Moser form 
\begin{equation}\label{CALst}
h_N=- \sum_{i=1}^{N+1} p_i^2+ \beta(\beta-1)\sum_{i\neq j}^{N+1}\wp(q_i-q_j). 
\end{equation}
one has to apply a canonical transformation, where the transformation rule for the coordinates is given by  \eqref{tran}.

In the case $N=3$ a quadratic bracket  exists. This bracket is compatible with the linear $\mathfrak{gl}_{4}$-Poisson bracket and generates the corresponding classical elliptic Calogero-Moser Hamiltonian in the same way as for $N=2$.

\begin{conjecture} For any $N$ the classical elliptic Calogero-Moser Hamiltonian \eqref{CALst}
can be obtained from the elliptic Poisson bracket of the $q_{(N+1)^2,N}$-type by the above procedure.
\end{conjecture}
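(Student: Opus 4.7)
The strategy is to generalize to arbitrary $N$ the bi-Hamiltonian picture that the paper has just established, by direct computation, for $N=2$ and $N=3$. The four ingredients to reproduce are: (i) an admissible vector for $q_{(N+1)^2,N}(\tau)$ whose shift-of-argument yields the linear $\mathfrak{gl}_{N+1}$-Poisson bracket; (ii) a pencil whose Casimirs provide a commutative Poisson subalgebra; (iii) a reduction to the minimal coadjoint orbit of $\mathfrak{gl}_{N+1}^*$ parametrized by canonical variables via a generalization of \eqref{clastr}; (iv) a canonical change of variables generalizing \eqref{tran} that puts the distinguished Hamiltonian into the standard Calogero--Moser form \eqref{CALst}.

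First I would invoke the Feigin--Odesskii theorem quoted in the text: the admissible vectors of $q_{mn^2,kmn-1}(\tau)$ form a union of $n^2$ components of dimension $m$. In our case $m=1$, $n=N+1$, $k=N$, so there are $(N+1)^2$ admissible directions. I would select one which, after the shift $x\mapsto x+\la\, \mathbf{a}$, makes the $\la$-linear part of $\{\cdot,\cdot\}_\la$ coincide with the standard linear bracket on $\mathfrak{gl}_{N+1}^*$; compatibility in the sense of Section~\ref{sec14} follows tautologically from the shift construction, and the $\la^2$-term automatically vanishes on the admissible vector. Theorem~\ref{biH} then produces a commutative family, namely the coefficients $C_i$ of a Casimir series $C(\la)=\sum \la^i C_i$ of the pencil, which Poisson-commute in both the elliptic quadratic and the linear $\mathfrak{gl}_{N+1}$ bracket. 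The classical limit procedure of Subsection 3.1.4, specialized to $n=N+1$, recovers this family as the symbols of the Vinberg--type commutative subalgebra in $U(\mathfrak{gl}_{N+1})$.

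The second stage is the reduction. The minimal non-trivial coadjoint orbit of $\mathfrak{gl}_{N+1}^*$, i.e.\ the orbit through $\mathrm{diag}(\beta(N+1),0,\dots,0)$, is $2N$-dimensional and is symplectomorphic to $\C^{2N}$ equipped with the standard bracket \eqref{stand} via precisely \eqref{clastr}. Substituting \eqref{clastr} into the commuting polynomials $C_i$ yields $N$ independent Poisson-commuting polynomial functions in $p_i,q_i$, with coefficients depending rationally on the moduli $g_2,g_3$ (or, for higher $N$, on the additional elliptic parameters encoded in $\tau$). A generalization of the change of variables \eqref{tran}, written in terms of the Wronskian of $N$ copies of $\wp,\wp',\dots,\wp^{(N-1)}$, is the natural candidate for the canonical transformation that re-expresses the distinguished quadratic-in-momenta Hamiltonian $C_{k+1}$ as \eqref{CALst}.

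The central obstacle is step (iv): showing that the commutative algebra extracted from the pencil matches, after this canonical transformation, the known integrable hierarchy of elliptic Calogero--Moser. For $N\le 3$ the paper verifies this by brute force; for general $N$ I would argue structurally by comparing Lax pairs. The Feigin--Odesskii algebra $q_{(N+1)^2,N}(\tau)$ is built from sections of degree-$(N+1)^2$ line bundles on the elliptic curve $E_\tau$, and the Krichever Lax representation of the elliptic CM system uses exactly the same modular data. One would show that the classical $r$-matrix obtained by linearizing $q_{(N+1)^2,N}(\tau)$ at the chosen admissible vector coincides with the dynamical elliptic $r$-matrix in Krichever's construction; the spectral invariants $\mathrm{tr}\,L(\la)^k$ of the latter then automatically coincide with the $C_i$, and \eqref{CALst} falls out after restriction to the minimal orbit. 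The combinatorial bookkeeping of this identification (matching the orbit parametrization \eqref{clastr} against Krichever's spin variables, and the Wronskian coordinate change against a flat section of the Hitchin connection) is where the real work lies; the rest of the argument is formal bi-Hamiltonian machinery that has already been deployed in the $N=2$ computation.
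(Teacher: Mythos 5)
The statement you are addressing is stated in the paper as a \emph{conjecture}, and the paper offers no proof of it for general $N$: the construction is carried out explicitly only for $N=2$ (the bracket \eqref{qvpu} with the explicitly listed coefficients $G_{i,j}$), and for $N=3$ the paper merely reports that an analogous quadratic bracket exists. Your proposal is therefore a research program rather than a proof, and while it correctly identifies the four ingredients that would be needed, the gap is not merely the ``combinatorial bookkeeping'' you defer to the end.

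Concretely, two essential inputs you treat as available are themselves open. First, the existence for general $N$ of a quadratic Poisson bracket compatible with the linear $\mathfrak{gl}_{N+1}$-bracket \eqref{lin} whose pencil Casimirs generate the Calogero--Moser hierarchy is not established; even for $N=2$ the identification of the explicitly computed bracket \eqref{qvpu} with the Feigin--Odesskii bracket $q_{9,2}$ is only conjectural in the paper, and finding a basis of $q_{(N+1)^2,N}$ in which the shifted linear bracket takes the canonical form \eqref{lin} is listed there as an open problem. So you cannot invoke the admissible-vector shift of $q_{(N+1)^2,N}$ as producing \eqref{lin}: that is the content of the conjecture, not a known fact. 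Second, your structural argument compares a non-dynamical quadratic Poisson algebra (Feigin--Odesskii) with Krichever's elliptic Calogero--Moser Lax representation, whose classical $r$-matrix is \emph{dynamical} (it depends on the positions); the claim that the spectral invariants of the latter coincide with the Casimir coefficients of the former after restriction to the minimal orbit via \eqref{clastr} and the change of variables \eqref{tran} is precisely the unproven identification. Until one of these two points is actually supplied, the argument does not close, which is why the statement remains a conjecture.
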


\begin{op} For the elliptic bracket $\{ , \}$ of the $q_{(N+1)^2,N}$-type find a basis such that the linear bracket $\{ , \}_1$ 
compatible with $\{ , \}$ has the canonical form \eqref{lin}.

\end{op}

\section{Bi-Hamiltonian formalism and compatible algebras}

\subsection{Compatible Lie algebras}

Suppose that two linear finite-dimensional Poisson brackets are compatible. As it was mentioned in the introduction, each of these brackets corresponds to a Lie algebra. Denote by $[\cdot,\cdot]_1$ and $[\cdot,\cdot]_2$ the operations of these algebras. It is clear that the Poisson brackets are compatible iff 
the operation 
$ \lambda_1 [\cdot,\cdot]_1 + \lambda_2 [\cdot,\cdot]_2 
$
is a Lie bracket for any $\lambda_i$. Without loss of generality we may put $\lambda_1=1.$

\begin{definition} Two Lie brackets $[\cdot,\cdot]$ and $[\cdot,\cdot]_1$ defined on the same vector space ${\bf V}$ are called {\it compatible} if the operation 
\begin{equation}\label{pensil}
[\cdot, \cdot]_{\lambda} = [\cdot,\cdot] + \lambda [\cdot,\cdot]_1 
\end{equation}
is a Lie bracket for any $\lambda.$
\end{definition}

Suppose that  $[\cdot, \cdot]$ corresponds to a semi-simple Lie algebra ${\cal G}.$ 
The following classification problem arises: 
\begin{op}\label{op4} Describe all possible Lie brackets $[\cdot, \cdot]_1$  compatible with a given semi-simple Lie bracket $[\cdot, \cdot].$ 
\end{op}

The Lie algebra with bracket \eqref{pensil} can be regarded as a linear deformation of the algebra ${\cal G}$. 
Since any semi-simple Lie
algebra is rigid (i.e., the cohomology $H^2[{\cal G},{\cal G}]$  of the Lie algebra ${\cal G}$ vanishes), the bracket  \eqref{pensil} is isomorphic to
$[\cdot,\cdot].$ This means that there exists a formal series of
the form
$$
A_{\lambda}=I+ R \ \lambda+ S \ \lambda^2 + \cdots \,\, , $$
where the
coefficients $R, S, \dots$ are constant linear operators on ${\cal G}$ and $I$ is the identity operator, such that
\begin{equation}
A_{\lambda}^{-1}\,[A_{\lambda}(X), \, A_{\lambda}(Y)]=[X, \, Y]+
\lambda \, [X, \, Y]_1. \label{sog}
\end{equation}
It follows from  \eqref{sog} that
\begin{equation}\label{br2}
[X, \, Y]_1=[R(X), \, Y]+[X, \, R(Y)]-R([X, \, Y]),
\end{equation}
where $R$ is the corresponding coefficient of $A_{\lambda}$.

\begin{lemma} The bracket $[\cdot,\cdot]_1$ is a Lie bracket iff
there exists a linear operator $S:\, {\cal G}\to {\cal G}$ such that
$$
R\big([R(X), \ Y] - [R(Y), \ X]\big) - [R(X), \ R(Y)] -
R^2([X, \ Y]) =[S(X), \ Y] - [S(Y), \ X] - S([X, \ Y]).
$$
\end{lemma}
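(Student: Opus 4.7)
The plan is to reinterpret both sides of the equation in terms of the Chevalley--Eilenberg (CE) differential $d$ on the complex $C^{\bullet}(\mathcal{G},\mathcal{G})$ with values in the adjoint representation, and then to exploit Whitehead's second lemma. For a 1-cochain $F:\mathcal{G}\to\mathcal{G}$ the CE differential reads $(dF)(X,Y)=[F(X),Y]+[X,F(Y)]-F([X,Y])$, so \eqref{br2} says $[X,Y]_1=(dR)(X,Y)$, the right-hand side of the stated equation is $(dS)(X,Y)$, and the left-hand side, after using $-[R(Y),X]=[X,R(Y)]$, equals $-N_R(X,Y)$ where
\[
N_R(X,Y) \;=\; [R(X),R(Y)] - R([R(X),Y]) - R([X,R(Y)]) + R^{2}([X,Y])
\]
is the Nijenhuis torsion of $R$. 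Thus the stated equation is equivalent to $N_R=-dS$, i.e.\ to $N_R$ being a 2-coboundary.

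The crux of the proof is the universal identity
\[
\mathrm{Jac}_{[\cdot,\cdot]_1}(X,Y,Z) \;=\; -(dN_R)(X,Y,Z),
\]
where $\mathrm{Jac}_{[\cdot,\cdot]_1}(X,Y,Z):=[[X,Y]_1,Z]_1+[[Y,Z]_1,X]_1+[[Z,X]_1,Y]_1$ denotes the Jacobiator. I would prove it by brute substitution of \eqref{br2} twice into the Jacobiator and repeated use of the Jacobi identity of $[\cdot,\cdot]$ to collapse the mixed terms; what remains reassembles precisely into $-(dN_R)(X,Y,Z)=-\sum_{\mathrm{cyc}}[X,N_R(Y,Z)]+\sum_{\mathrm{cyc}}N_R([X,Y],Z)$.

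Given the identity, the lemma follows at once. If there exists $S$ with $N_R=-dS$, then $dN_R=-d^{2}S=0$, the Jacobiator vanishes, and $[\cdot,\cdot]_1$ is a Lie bracket. Conversely, if $[\cdot,\cdot]_1$ is a Lie bracket, then $dN_R=0$; Whitehead's second lemma, applicable because $\mathcal{G}$ is semi-simple, gives $H^{2}(\mathcal{G},\mathcal{G})=0$, so the closed 2-cochain $N_R$ must be exact, and the desired $S$ is obtained. The main obstacle is the direct Jacobiator computation: the manipulation is elementary but heavy in bookkeeping. A useful cross-check is that the same equation $-N_R=dS$ is exactly the $\lambda^{2}$-coefficient of the expansion of \eqref{sog} with $A_{\lambda}=I+R\lambda+S\lambda^{2}+\cdots$, which makes the connection to the rigidity invoked just above the statement transparent.
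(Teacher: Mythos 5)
Your proof is correct. The paper itself offers no argument for this lemma; the intended justification is only implicit in the surrounding text, where the displayed identity is precisely the vanishing of the $\lambda^{2}$-coefficient in the expansion of \eqref{sog} with $A_{\lambda}=I+R\lambda+S\lambda^{2}+\cdots$ (your closing cross-check), so the ``if and only if'' is being read off from the rigidity statement $H^{2}({\cal G},{\cal G})=0$ without the intermediate steps being spelled out. Your route makes those steps explicit and is the cleaner one: the identification of the left-hand side with $-N_{R}$, and of both $[\cdot,\cdot]_{1}$ and the right-hand side with Chevalley--Eilenberg coboundaries $dR$ and $dS$, is exact, and the key identity $\mathrm{Jac}_{[\cdot,\cdot]_{1}}=-dN_{R}$ is a standard fact about deformed brackets. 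The heavy bookkeeping you defer can in fact be bypassed: writing $\mu$ for the bracket of ${\cal G}$ and using the graded Nijenhuis--Richardson calculus, one has $[dR,dR]=-[\mu,[dR,R]]$ together with $[dR,R]=2N_{R}+d(R^{2})$, which gives the identity up to the (irrelevant for the argument) overall sign. The two directions then follow from $d^{2}=0$ and from Whitehead's second lemma respectively. A merit of your argument worth recording is that it isolates where semi-simplicity is actually used: only in the ``only if'' direction, via $H^{2}({\cal G},{\cal G})=0$; the ``if'' direction holds for an arbitrary Lie algebra.
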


In the special case $S=0$ the relation from Lemma 3.3  takes the form
\begin{equation}\label{RR}
R\big([R(X), \ Y] - [R(Y), \ X]\big) - [R(X), \ R(Y)] -
R^2([X, \ Y])=0.
\end{equation}

We present below examples \cite{sokolgok2} of compatible Lie brackets with the
corresponding operators $A_{\lambda}$ and $R$.

\begin{example}\label{rman} Let ${\cal G}$ be the Lie algebra associated with
an associative algebra ${\cal A}$, then we can take for $R$ the
operator of left multiplication by any element $r$. In this case
$$
[X,\,Y]_1= X r Y - Y r X,  \qquad A_{\lambda}: \, g \rightarrow
g+\lambda r g.
$$
\end{example}
\begin{example} Let ${\cal A}$ be an associative algebra with an
involution $*$, ${\cal G}$ be the Lie algebra of all
skew-symmetric elements of ${\cal A}$, $r$ be an element symmetric
with respect to $*$. In this case the operator
$$
A_{\lambda}: \, g \rightarrow \sqrt{1+r \lambda} \,\, g \,\,
\sqrt{1+r \lambda},
$$
can be taken as $A_{\lambda}$, $\ds R(X)=\frac{1}{2}(r X+X r)$ and $[X,\,Y]_1= X r Y - Y r X$.
\end{example}

\begin{example} Let $\varphi$ be an automorphism of order $n$ of
a Lie algebra ${\cal G}$, $\,\,{\cal G}_i$ be the eigenspace of
the operator $\varphi$ corresponding to eigenvalue
$\varepsilon^i,$ where $\varepsilon^n=1$. Then the vector space
${\cal G}_0$ is a Lie subalgebra.
 Suppose that  $${\cal G}_0={\cal G}_{+}\oplus {\cal G}_{-}$$ with
vector spaces ${\cal G}_{+}$ and ${\cal G}_{-}$ being subalgebras
of ${\cal G}_0$. Consider the operator $A_{\lambda}$ acting on
${\cal G}_{+}$, ${\cal G}_{-},$ and ${\cal G}_{i}, \,\,\, i>0,$ by
means of multiplication by
$$
1+\alpha \lambda, \quad  1+\beta \lambda, \quad \sqrt[n]{(1+\alpha
\lambda)^i(1+\beta \lambda)^{n-i}},
$$
respectively.  The operator $R$ is given by
$$
R(g)=\alpha \,g_{+}+\beta\,g_{-}+\sum_{i=1}^{n-1}
\left(\frac{i}{n} \alpha+\frac{n-i}{n}\beta\right)\, g_i,
$$
where $g_{\pm}$  mean the projections of $g$ onto ${\cal G}_{\pm}.$
\end{example}

\begin{example}\label{Ex4} This class of compatible brackets is related to
finite dimensional $\Z_2 \times \Z_2$-graded Lie algebras.
 Recall the definition. Let $\varphi$ and $\psi$ be two
automorphisms of ${\cal G}$ commuting with each other and such that
$\varphi^{2}=\psi^{2}=Id.$  The decomposition ${\cal G}=\oplus
{\cal G}_{ij}$, $i,j=\pm 1$ into a direct sum of the following
four invariant vector spaces
$$
{\cal G}_{ij}=\{g\in {\cal G} \, \vert \, \varphi(g)=i \, g, \quad
\psi(g)=j \, g \}
$$
is called the $\Z_2 \times \Z_2$-gradation.

Define an operator $A_{\lambda}$ on the homogeneous components by
the formulas
$$
A_{\lambda}(g_{1,1})=(1+\gamma \lambda) \, g_{1,1}, \qquad 
A_{\lambda}(g_{-1,1})=\sqrt{(1+\beta \lambda)(1+\gamma \lambda)}
\, g_{-1,1}, 
$$
$$
A_{\lambda}(g_{1,-1})=\sqrt{(1+\alpha \lambda)(1+\gamma \lambda)}
\, g_{1,-1},
$$
$$
A_{\lambda}(g_{-1,-1})=\sqrt{(1+\alpha \lambda)(1+\beta \lambda)}
\, g_{-1,-1}+ \lambda \sqrt{(\gamma-\alpha)(\gamma-\beta)} \,
\rho (g_{-1,-1}).
$$
 Here $\alpha$, $\beta$, and $\gamma$ are arbitrary constants and
the operator $\rho: \ {\cal G}_{-1,-1} \rightarrow {\cal G}_{1,1}$
is any solution of the modified Yang-Baxter equation
$$
\rho \big([\rho(X), \ Y] - [\rho(Y), \ X]\big) - [\rho(X), \
\rho(Y)] - [X, \ Y]=0. $$ In other words, the Lie algebra ${\cal
G}_{1,1}\oplus {\cal G}_{-1,-1}$ is assumed to be decomposed into
a direct sum of ${\cal G}_{1,1}$ and some complementary subalgebra
${\cal B}$ and $r$ denotes the projection onto ${\cal G}_{1,1}$
parallel to ${\cal B}$.

The operator $R$ is defined by
$$
R(g)=\gamma \ g_{1,1}+\frac{1}{2} (\alpha+\gamma) \, g_{1,-1}
+\frac{1}{2} (\beta+\gamma) \, g_{-1,1}+\frac{1}{2}
(\alpha+\beta) \, g_{-1,-1} +\sqrt{(\gamma-\alpha)(\gamma-\beta)}
\, \rho(g_{-1,-1}).
$$
\end{example}
\begin{remark} The operator $A_{\lambda}$ in Example \ref{Ex4} can be parametrized by points of the elliptic curve 
$$
\lambda_1^2 - \alpha = \lambda_2^2 - \beta = \lambda_3^2 - \gamma = \frac{1}{\lambda}.
$$
\end{remark}
\begin{remark}
A wide class of $\Z_2 \times  \Z_2$-graded Lie algebras can be constructed as follows.
Let ${\cal G} =\oplus {\cal G}_i$ be $\Z$-graded Lie algebra that possesses an involution
$\psi$ such that $\psi({\cal G}_i)={\cal G}_{-i}$. In particular, such an involution exists for any standard gradation of a simple Lie algebra.   
Taking for $\varphi$ the involution $\varphi(X) = (-1)^i\,X$, where $X \in {\cal G}_i$,  we define on 
$\cal G$ a structure of  $\Z_2 \times  \Z_2$-graded algebra.
\end{remark}

\subsubsection{Applications}

For applications the operator $A_{\lambda}$ has to be written in a closed form, i.e. as an analytic operator-valued function in $\lambda$. In known examples the $\lambda$-dependence is rational, trigonometric or elliptic. 

Very often a Lax pair for the corresponding bi-Hamiltonian model can be written in terms of  $A_{\lambda}$ \cite{sokolgok1, golsokNL}.

\begin{application}
Consider the following system of ODEs:
$$
w_{t}=[w,\,v]+w*w, \qquad v_{t}=[w,\,u]+w*v, \qquad u_{t}=w*u,
$$
where  
$$
X*Y=[R(X),\, Y]-[X,\,R^{*}(Y)]+R^{*}([X,Y]),
$$
and $R^{*}$ stands for the operator
adjoint to $R$ with respect to the Killing form.
Then the operators
\begin{equation}\label{calAL}
{\cal L}=(A_{\lambda}^{-1})^{*}(\lambda u+v+\lambda^{-1} w), \qquad
{\cal A}=\lambda^{-1}\,A_{\lambda}(w)
\end{equation}
form a Lax pair for this system. As
usual, the integrals of motion are given by $\mbox{tr}\,{\cal
L}^k, \,$ $\, k=1,2...$

In the case of Example \ref{rman}  the first bracket $[\cdot,\cdot]$ is a
standard matrix commutator,   the second bracket is given by
$[x,\, y]_1=x r y-y r x$  and $X*Y=r X Y - Y X r,$ where $r$ is an arbitrary matrix. We have  
$$
R(w)=r w, \qquad A_{\lambda} (w)=(I+\lambda r)\, w, \qquad (A_{\lambda}^{-1})^{*}(w)=w \,(I+\lambda r)^{-1}\,  $$
If $u=v=0$, then
$$ \qquad {\cal L}=w (I+\lambda r)^{-1}, \qquad {\cal A}=\lambda^{-1} (I+\lambda r)\,w.
$$
The Lax equation is equivalent (up to $t\to -t$) to  equation \eqref{euler}, where ${\bf U}=x,\,a=r$. For the Lax 
pair \eqref{calAL} we arrive at \eqref{uvw}, where $u,v$ and $w$ are generic matrices. 
 
\end{application}

\begin{application} Consider the system of equations
\begin{equation}
u_{x}=[u,\ v], \qquad v_{y}=[v, \ u]_{1}, \label{twolie}
\end{equation}
where $u$ and $v$ belong to a vector space ${\bf V}$ equipped
with two Lie brackets $[\cdot,\cdot]$ and $[\cdot,\cdot]_1$. For
the well-known integrable principle chiral model
$$
u_x  = [u, \ v], \qquad v_y  = [u, \ v]
$$
the brackets $[\cdot,\cdot]$ and $[\cdot,\cdot]_1$ are identical
up to sign.
\end{application}

\begin{theorem} If the Lie brackets $[\cdot,\cdot]$ and
$[\cdot,\cdot]_1$ are compatible, then the hyperbolic system
 \eqref{twolie} possesses the following Lax pair
$$
{\cal L}=\frac{d}{dy}+\frac{1}{\lambda}\,A_{\lambda}(u), \qquad
{\cal A}=\frac{d}{dx}+A_{\lambda}(v).$$
\end{theorem}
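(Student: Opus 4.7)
The plan is to verify the zero-curvature equation $[{\cal L},{\cal A}]=0$ directly and reduce it to the defining identity \eqref{sog} of the operator $A_\lambda$. Writing ${\cal L}=\partial_y+U$ and ${\cal A}=\partial_x+V$ with $U=\lambda^{-1}A_\lambda(u)$ and $V=A_\lambda(v)$, the commutator unfolds as
\[
 [{\cal L},{\cal A}]= \partial_y(V)-\partial_x(U)+[U,V],
\]
so the claim will follow once I check
\[
 \partial_x\!\left(\tfrac{1}{\lambda}A_\lambda(u)\right)-\partial_y(A_\lambda(v))=\tfrac{1}{\lambda}[A_\lambda(u),A_\lambda(v)].
\]
Since $A_\lambda$ is a constant (i.e.\ $x,y$-independent) linear operator, it commutes with both $\partial_x$ and $\partial_y$, and this step is routine.

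Next I would substitute the system \eqref{twolie}, namely $u_x=[u,v]$ and $v_y=[v,u]_1=-[u,v]_1$, into the left-hand side to obtain
\[
 \tfrac{1}{\lambda}A_\lambda([u,v])+A_\lambda([u,v]_1).
\]
On the right-hand side I would apply the compatibility identity \eqref{sog}, which after acting by $A_\lambda$ reads
\[
 [A_\lambda(X),A_\lambda(Y)]=A_\lambda([X,Y])+\lambda\,A_\lambda([X,Y]_1).
\]
Dividing by $\lambda$ with $X=u$, $Y=v$ reproduces exactly the left-hand side, which finishes the verification.

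The main obstacle is essentially bookkeeping: being careful with the signs in $v_y=[v,u]_1=-[u,v]_1$, the orientation of the commutator $[{\cal L},{\cal A}]$, and the placement of the factor $\lambda^{-1}$. Apart from these, the proof is a one-line consequence of the fundamental relation \eqref{sog} that encodes the compatibility of the two Lie brackets, so no further nontrivial computation is required. I would conclude by noting, as a remark, that the spectral parameter $\lambda$ is genuine (it does not drop out) precisely because \eqref{sog} mixes the two brackets at order $\lambda$, which is what guarantees a nontrivial isospectral deformation family.
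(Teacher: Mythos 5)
Your verification is correct: expanding $[{\cal L},{\cal A}]$, substituting $u_x=[u,v]$ and $v_y=[v,u]_1=-[u,v]_1$, and invoking \eqref{sog} in the form $[A_\lambda(u),A_\lambda(v)]=A_\lambda([u,v])+\lambda A_\lambda([u,v]_1)$ makes the zero-curvature condition an identity, and the signs and the $\lambda^{-1}$ factor all check out. The paper states this theorem without proof, and your direct computation is exactly the intended argument (with the tacit standing assumption, made in the surrounding text, that $[\cdot,\cdot]$ is semi-simple so that $A_\lambda$ satisfying \eqref{sog} exists).
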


Compatible Lie algebras and the corresponding operator $A_{\lambda}$ are closely related to different kinds of the Yang-Baxter equation \cite{golsokTMF, odsok1}.

\begin{application}  Consider the classical Yang-Baxter equation
$$
[r^{1,2}(\lambda, \mu),\, r^{1,3}(\lambda, \nu)]+[r^{1,2}(\lambda,
\mu),\, r^{2,3}(\mu, \nu)] + [r^{1,3}(\lambda, \nu),\,
r^{2,3}(\mu, \nu)]=0,
$$
where $$r(x,y)=\sum_i a_i(x,y)\otimes b_i(x,y)$$ is a function of
two complex variables with values in $\mathfrak{gl}_N\otimes \mathfrak{gl}_N$ and  $r^{i,j}$, where $1\le i,j \le 3, \quad i\ne j,$ are the following
functions
$$
r^{1,2}(\lambda,\mu)=\sum_i a_i(\lambda,\mu)\otimes
b_i(\lambda,\mu)\otimes 1,
$$
$$
r^{1,3}(\lambda,\nu)=\sum_i a_i(\lambda,\nu)\otimes 1\otimes
b_i(\lambda,\nu),
$$
$$
r^{2,3}(\mu,\nu)=\sum_i  1\otimes a_i(\mu,\nu)\otimes b_i(\mu,\nu)
$$
with values in $\mathfrak{gl}_N\otimes \mathfrak{gl}_N\otimes \mathfrak{gl}_N$.
We suppose, as usual, that the unitary condition
$$
r^{1,2}(\lambda, \mu)=-r^{2,1}(\mu,\lambda)
$$
holds, where
$$
r^{2,1}(\mu,\lambda)=\sum_i b_i(\mu,\lambda)\otimes
a_i(\mu,\lambda)\otimes 1.
$$
\end{application}

\begin{theorem}  Let $[\cdot, \cdot]_1$ and $[\cdot,\cdot]_2$ be
two compatible Lie brackets on a $N$-dimensional vector space
${\bf V}.$ Suppose that there exists a non-degenerate symmetric
form $\omega(X,Y)$ on ${\bf V}$ invariant with respect to both
brackets $[\cdot, \cdot]_{1,2}.$ Let ${\bf e_1},\dots, {\bf e_N}$
be a basis orthonormalized  with respect to $\omega.$ Then
$$
r(x, y)=\sum_{i=1}^N \frac{ (x\, ad_1 {\bf e_i}+ad_2 {\bf
e_i})\otimes (y\, ad_1 {\bf e_i}+ad_2 {\bf e_i})}{x-y}
$$
satisfies the classical Yang-Baxter equation. Here $ad_i q$ are
linear operators defined by
$$
ad_i q (p)=[q,\,p]_i, \qquad i=1,2.
$$
\end{theorem}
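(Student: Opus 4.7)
The plan is to verify the CYBE by direct computation, reducing it to a symbol-level identity that follows cleanly from two structural facts: (i) compatibility of the brackets forces the map $X \mapsto \mathrm{ad}_x X := x\,\mathrm{ad}_1 X + \mathrm{ad}_2 X$ to be a Lie algebra homomorphism of $\mathfrak{g}_x := ({\bf V},\, x[\cdot,\cdot]_1 + [\cdot,\cdot]_2)$, and (ii) invariance of $\omega$ under both brackets renders all structure constants totally antisymmetric in the orthonormal basis. Writing $T_i(x) := x\,\mathrm{ad}_1 {\bf e_i} + \mathrm{ad}_2 {\bf e_i} = \mathrm{ad}_x {\bf e_i}$, the r-matrix becomes $r(x,y) = (x-y)^{-1}\sum_i T_i(x) \otimes T_i(y)$. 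Compatibility yields the key identity
$$[T_i(x),\, T_j(x)] = T_{[{\bf e_i},{\bf e_j}]_x}(x) = \sum_k c_{ij}^k(x)\, T_k(x),$$
where $c_{ij}^k(x) := x\, c_{ij}^{k,1} + c_{ij}^{k,2}$ is affine in $x$. Simultaneous invariance of $\omega$ makes $c_{ij}^{k,\alpha}$ totally antisymmetric in $(i,j,k)$ for each $\alpha\in\{1,2\}$, hence so is $\gamma(i,j,k;x) := c_{ij}^k(x)$.

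Next I would substitute $r$ into the CYBE, clear the common denominator $(\lambda-\mu)(\lambda-\nu)(\mu-\nu)$, and apply the bracket identity to each of the three commutators to obtain three sums, e.g.
$$S_1 = (\mu-\nu)\sum_{i,j,k} c_{ij}^k(\lambda)\, T_k(\lambda)\otimes T_i(\mu)\otimes T_j(\nu),$$
and similarly for $S_2,S_3$. After relabelling, the coefficient of a fixed tensor $T_a(\lambda)\otimes T_b(\mu)\otimes T_c(\nu)$ in $S_1+S_2+S_3$ reads
$$(\mu-\nu)\,\gamma(b,c,a;\lambda) + (\lambda-\nu)\,\gamma(a,c,b;\mu) + (\lambda-\mu)\,\gamma(a,b,c;\nu).$$
Using total antisymmetry to bring all three symbols to the common ordering $(a,b,c)$ (a cyclic shift is even, a transposition flips sign), this coefficient becomes
$$(\mu-\nu)\,\gamma(a,b,c;\lambda) - (\lambda-\nu)\,\gamma(a,b,c;\mu) + (\lambda-\mu)\,\gamma(a,b,c;\nu).$$

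Finally, since $\gamma(a,b,c;x) = x\gamma_1 + \gamma_2$ with $\gamma_1 := c_{ab}^{c,1}$ and $\gamma_2 := c_{ab}^{c,2}$ constants, the above coefficient splits into the sum $\gamma_1 \cdot [(\mu-\nu)\lambda - (\lambda-\nu)\mu + (\lambda-\mu)\nu]$ and $\gamma_2 \cdot [(\mu-\nu) - (\lambda-\nu) + (\lambda-\mu)]$, both of which vanish by elementary cancellation. Thus every coefficient in the expansion of $S_1+S_2+S_3$ is zero, proving the CYBE. The unitarity $r^{1,2}(\lambda,\mu) = -r^{2,1}(\mu,\lambda)$ is automatic: transposing the tensor factors of $\sum_i T_i(\lambda)\otimes T_i(\mu)$ gives a symmetric expression in $(\lambda,\mu)$, whereas the prefactor $(\lambda-\mu)^{-1}$ is antisymmetric.

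The main obstacle is purely bookkeeping: keeping track of which of the three tensor slots receives the commutator in each term of the CYBE, and accumulating the three permutations of $(a,b,c)$ correctly so that total antisymmetry can be applied; once this is organized, the final cancellation is the two three-term identities above and requires no further input.
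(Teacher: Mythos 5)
Your proof is correct. Note that the paper states this theorem without proof (it appears as an ``Application'' and defers to the original references \cite{golsokTMF, odsok1}), so there is no in-text argument to compare against; your computation is the natural direct verification. The two structural inputs are exactly right: compatibility makes $X\mapsto x\,ad_1X+ad_2X$ the adjoint representation of the pencil bracket $[\cdot,\cdot]_x$, giving $[T_i(x),T_j(x)]=\sum_k c^k_{ij}(x)T_k(x)$ with $c^k_{ij}(x)$ affine in $x$, and invariance of $\omega$ makes each of the two structure tensors $c^{k,\alpha}_{ij}$ separately totally antisymmetric in the orthonormal basis, so the antisymmetry passes to the affine combination. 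The bookkeeping of the three tensor slots is handled correctly, the coefficient of $T_a(\lambda)\otimes T_b(\mu)\otimes T_c(\nu)$ reduces to the two elementary three-term cancellations you display (one for the linear part in the spectral parameter, one for the constant part), and the unitarity check is as you say. One could add one sentence observing that vanishing of all these coefficients suffices even without linear independence of the tensors $T_a(\lambda)\otimes T_b(\mu)\otimes T_c(\nu)$, but this is cosmetic.
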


\begin{application}
The operator form of the classical Yang-Baxter equation \cite{semen} is given by
$$
[r(u,w)x,r(u,v)y]=r(u,v)[r(v,w)x,y]+r(u,w)[x,r(w,v)y].
$$
Here $r(u,v)\in {\rm End}({\cal G})$. The solution is called unitary if $\langle x,r(u,v)y \rangle=-\langle r(v,u)x,y\rangle$ for the Killing form of ${\cal G}.$
\end{application}

\begin{theorem} If $A_{\lambda}$  
satisfies  \eqref{sog};  then
\begin{equation}\label{ruv}
r(u,v)=\frac{1}{u-v}A_u A_v^{-1}
\end{equation}
satisfies  the Yang-Baxter equation.
\end{theorem}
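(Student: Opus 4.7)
The plan is to verify the operator Yang–Baxter identity by direct substitution, exploiting the fact that \eqref{sog} makes $A_\lambda$ an intertwiner between the pencil bracket $[\cdot,\cdot]_\lambda := [\cdot,\cdot] + \lambda[\cdot,\cdot]_1$ and the original bracket, i.e.
\[
[A_\lambda X,\,A_\lambda Y] = A_\lambda\!\left([X,Y] + \lambda [X,Y]_1\right) = A_\lambda[X,Y]_\lambda .
\]
This is the only structural input; the rest is bookkeeping.

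First I would change variables inside each of the three expressions appearing in the CYBE. For the left-hand side, set $X = A_w^{-1}x$ and $Y = A_v^{-1}y$, so that $x = A_w X$, $y = A_v Y$, and
\[
[r(u,w)x,\,r(u,v)y] = \frac{1}{(u-w)(u-v)}\,[A_u X,\,A_u Y] = \frac{A_u\bigl([X,Y] + u[X,Y]_1\bigr)}{(u-w)(u-v)} ,
\]
by \eqref{sog}. For the first term on the right, the key point is that $r(v,w)x = \frac{1}{v-w}A_v X$, whence $[r(v,w)x,y] = \frac{1}{v-w}[A_v X,A_v Y]$; applying \eqref{sog} and the cancellation $A_u A_v^{-1}\cdot A_v = A_u$ gives
\[
r(u,v)[r(v,w)x,\,y] = \frac{A_u\bigl([X,Y] + v[X,Y]_1\bigr)}{(u-v)(v-w)} .
\]
Symmetrically, $r(u,w)[x,\,r(w,v)y] = \dfrac{A_u\bigl([X,Y] + w[X,Y]_1\bigr)}{(u-w)(w-v)}$.

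Once all three terms are in this form, I would factor out $A_u$ and verify the scalar-coefficient identity
\[
\frac{P + uQ}{(u-v)(u-w)} = \frac{P + vQ}{(u-v)(v-w)} - \frac{P + wQ}{(u-w)(v-w)}
\]
for arbitrary $P,Q$ (corresponding to $[X,Y]$ and $[X,Y]_1$). Clearing denominators produces the polynomial identity $(v-w)P + u(v-w)Q = (v-w)P + u(v-w)Q$, which is trivial. This reduces the whole theorem to a one-line partial-fraction check.

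The only subtle point is consistency: the relation \eqref{sog} must hold identically in $\lambda$, so applying it with the three different values $u,v,w$ is legitimate. The one pitfall to guard against is the overall sign on the right-hand side, which is why I would keep $r(w,v)y = -r(v,w)^{-\mathrm{flip}}\dots$ carefully rewritten as $\frac{1}{w-v}A_w Y$ rather than trying to invoke unitarity. No obstacle of substance arises; the mild risk is a bookkeeping error in the denominators, which the partial-fraction step above eliminates cleanly.
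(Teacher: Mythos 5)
Your proof is correct. The paper states this theorem without proof (consistent with its stated policy of omitting technical details), so there is nothing to compare against; your direct verification --- substituting $x=A_wX$, $y=A_vY$, applying \eqref{sog} at the three values $u,v,w$ to reduce each of the three terms to $A_u\bigl([X,Y]+\lambda[X,Y]_1\bigr)$ times a rational coefficient, and finishing with the partial-fraction identity $\frac{v-w}{(u-v)(u-w)}=\frac{1}{u-v}-\frac{1}{u-w}$ applied to $P$ and $uQ$ --- is the natural and complete argument. Your closing caveat is also the right one: evaluating \eqref{sog} at specific spectral values requires $A_\lambda$ to be an actual operator-valued function rather than a formal series, which is exactly the setting the paper adopts in this Applications subsection.
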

\begin{remark}  The $r$-matrix \eqref{ruv} is unitary with respect to the form 
$\langle \cdot, \cdot\rangle$ if the operator $A_{\lambda}$ is orthogonal. In this case the formula \eqref{sog} implies that the form $\langle \cdot, \cdot\rangle$
is invariant also with respect to the bracket $[\cdot,\cdot]_1$.
\end{remark}

\begin{application} It is known that  decompositions \eqref{decompgen} of the loop algebra over a Lie algebra ${\cal G}$
into a sum of the Lie algebra of all Taylor series and a
factoring subalgebra ${\cal U}$   gives rise to Lax representations for diverse
integrable models (see Section 2.3).

A factoring subalgebra ${\cal U}$ is said to be {\it
homogeneous} if it has the multiplicand $\lambda^{-1}$. This means that 
$$
\frac{1}{\lambda} \ {\cal U} \subset {\cal U}. $$ 
It turns out \cite{golsokNL}
that for any semi-simple Lie algebra ${\cal G}$ with a bracket
$[\cdot,\cdot]$ there exists a one-to-one correspondence between
the homogeneous subalgebras and brackets $[\cdot,\cdot]_1$
compatible with $[\cdot,\cdot].$
\end{application}

\subsection{Compatible associative algebras}

While Problem \ref{op4} for Lie algebras seems to be very difficult, a similar problem for 
associative algebras is more treatable \cite{odsok1,odsok2,odsok3}. 

\begin{definition} Two associative algebras with multiplications $\star$ and $\circ$
 defined on
the same finite dimensional vector space $\bf V$ are said to be
compatible if the multiplication
$$
a \bullet b= a\star b+ \lambda \,a\circ b
$$
is associative for any constant $\lambda$.
\end{definition}
\begin{remark}
For compatible associative algebras with multiplications $\star$ and $\circ$ the Lie algebras with the brackets $[X, \, Y]_1= X \star Y - Y\star X$ 
and $[X, \, Y]_2= X \circ Y - Y\circ X$ are compatible. 
\end{remark}

\begin{example}\label{Ex1}  Let $\bf V$ be the vector space of polynomials of degree $\le k-1$
in one variable, $\mu_1$ and
$\mu_2$ be polynomials of degree $k$ without common roots. Any polynomial $Z$, where $\hbox{deg} Z\le 2 k-1,$
can be uniquely represented in the form $ Z=\mu_1 P+
\mu_2 Q, $ where $P, Q\in {\bf V}$.
Define multiplications $\circ$ and $\star$ on ${\bf V}$ by the formula
$$
X\,Y=\mu_1 (X\circ Y) +\mu_2 (X\star  Y), \qquad  X,Y\in {\bf V}.
$$
It can be verified that associative algebras with products $\circ$ and $\star$  are compatible.  
\end{example}

\begin{example} Let ${\bf e}_{1},\dots,{\bf e}_{m}$ be a basis of
$\bf V$ and let the multiplication $\star$ be given by
$$
{\bf e}_{i}\star {\bf e}_{j}=\delta^{i}_{j} {\bf e}_{i}.
$$
 Let
$$
r_{ii}=q_{0}- \sum_{k\ne i} r_{ki}, \qquad    r_{ij}=\frac{q_{i}
\lambda_{i}}{\lambda_{j}-\lambda_{i}}, \qquad i\ne j,
$$
where $\lambda_{i},  q_{j}$ are arbitrary constants. Then the product defined by the formula
$$
{\bf e}_{i}\circ {\bf e}_{j}=r_{ij} {\bf e}_{j}+r_{ji} {\bf
e}_{i}-\delta^{i}_{j}\sum_{k=1}^{m} r_{ik} {\bf e}_{k}
$$
is associative and compatible with $\star$. Since this product is linear with
respect to the parameters $q_{i},$ we have constructed a family of $m+1$
pairwise compatible associative multiplications. 
\end{example}

Suppose that the associative algebra $A$ with
multiplication $\star$ is semi-simple. Since such algebras are rigid, the associative algebra with the multiplication $\bullet$ is isomorphic to the algebra $A$ for almost all values of the parameter $\lambda$. Hence there exists a linear operator $S_{\lambda}$ on $\bf V$ such that
$$
S_{\lambda}(X) \star S_{\lambda}(Y)= S_{\lambda} \Big( X\star Y+\lambda \
X \circ Y\Big).  
$$
If
$$
S_{\lambda}={\bf 1}+ R \ \lambda+ O(\lambda^2),    \label{RS}
$$
then the multiplication $\circ $ is given by
\begin{equation} \label{mult2}
X \circ Y =R(X) \star Y+X \star R(Y)-R(X \star Y).
\end{equation}

Consider the case when the associative algebra  with
multiplication $\star$ coincides with ${\rm Mat}_m$. Then 
$R:\, {\rm Mat}_m \rightarrow {\rm Mat}_m$ is a linear operator on the space of $m \times m$-matrices. We will omit the sign  $\star$. In other words, we investigate {\it associative linear deformations of the matrix product}. 

\begin{example}\label{Ex2} (see Example \ref{rman}).  Let $c$ be an element of ${\rm Mat}_m$ and $R: X
\rightarrow c X$ be the operator of left multiplication by $c$ .
Then the corresponding multiplication $X\circ Y=X\, c \, Y$
is associative and compatible with the standard matrix product in ${\rm Mat}_m$.
\end{example}

\begin{example}\label{Ex3}  Suppose that $a,b\in {\rm Mat}_2;$ then the product
\begin{equation}\label{abprod}
X\circ Y=(a X-X a)\,(b Y-Y b)
\end{equation}
is compatible with the standard  product in ${\rm Mat}_2\,$. The
corresponding operator $R$ is given by $$R(X)=a\, (X b-b X).
$$
\end{example}

\begin{proposition} In the case of ${\rm Mat}_2$ any linear deformation of the matrix product is given
by one of these two examples.
\end{proposition}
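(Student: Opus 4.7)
The plan is to classify admissible $R$ by first eliminating the natural redundancies of the problem. The bilinear product $X\circ Y = R(X)Y + XR(Y) - R(XY)$ depends on $R$ only through its class modulo two subspaces: the inner derivations $D_d: X\mapsto [d,X]$, which contribute identically zero to $X\circ Y$, and the scalar operators $\alpha\,\mathrm{Id}$, whose contribution is $\alpha XY$ and is absorbed by rescaling the deformation parameter $\lambda$. Moreover, simultaneous conjugation $R \mapsto \mathrm{Ad}_g \circ R \circ \mathrm{Ad}_g^{-1}$ by $g\in\mathrm{GL}_2$ is a symmetry of the problem. Together these reduce the naive $16$-dimensional space of operators to an essentially $9$-dimensional moduli space of inequivalent $R$'s.

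Using the canonical isomorphism $\mathrm{End}(\mathrm{Mat}_2) \simeq \mathrm{Mat}_2 \otimes \mathrm{Mat}_2^{\mathrm{op}}$, any operator may be written as $R(X) = pX + \tilde R(X)$ with $\tilde R(X) = \sum_k a_k X b_k$. Using the derivation freedom $R \mapsto R + [d,\cdot\,]$ to eliminate the pure right-multiplication piece of $\tilde R$, we may normalize so that the tensor $T = \sum_k a_k \otimes b_k$ lies in a fixed complement to the ``mixed trace'' subspace. The associativity condition $(X\circ Y)\circ Z = X\circ(Y\circ Z)$ (the only non-trivial constraint, since the Hochschild $1$-cocycle condition is automatic for coboundaries of the form $\delta R$) becomes a polynomial identity in the entries of arbitrary $X, Y, Z$, equivalent to a system of quadratic equations in $p$ and the entries of $T$.

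The expected dichotomy is the following: either $T=0$, so that $R(X) = pX$ and consequently $X\circ Y = XpY$, recovering Example \ref{Ex2} with $c = p$; or $T$ has rank exactly one, $T = a\otimes b$, and the associativity equation additionally forces $p = -ab$ modulo symmetries, giving $R(X) = aXb - abX$ and reproducing Example \ref{Ex3}. The main obstacle is to rule out tensors $T$ of rank $\geq 2$: one must show that no genuinely higher-rank operator yields an associative $\circ$ without collapsing, under the $\mathrm{GL}_2\times\mathrm{GL}_2$ action on $\mathrm{Mat}_2\otimes\mathrm{Mat}_2^{\mathrm{op}}$, back to rank one. I would handle this by bringing $T$ to a canonical form using the orbit decomposition on $\mathrm{Mat}_2\otimes\mathrm{Mat}_2^{\mathrm{op}}$ (a finite list of rank-$2$, rank-$3$, and rank-$4$ representatives is available since $\mathrm{Mat}_2$ is four-dimensional), then verify case by case that the associativity identity forces $T$ to degenerate; alternatively, one can translate the question into the classification of compatible Lie brackets on $\mathfrak{gl}_2$ (via the remark that the commutators of compatible associative products are compatible Lie brackets) and invoke the $\mathfrak{gl}_2$ version of the analysis sketched in Subsection~3.2.1.
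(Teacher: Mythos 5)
Your setup is sound: the product $X\circ Y=R(X)Y+XR(Y)-R(XY)$ is indeed insensitive to $R\to R+\mathrm{ad}_s$ (this is \eqref{Req}) and to adding scalars, conjugation is a symmetry, and the cocycle identity (the coefficient of $\lambda$ in associativity of $\bullet$) holds automatically for a coboundary $\delta R$, so associativity of $\circ$ itself is the only constraint. (The paper states the proposition without proof, so your argument has to stand alone.) The problem is that everything after ``the expected dichotomy'' is precisely the content of the proposition, and the tool you propose to establish it does not exist. A finite, rank-indexed list of normal forms for the tensor $T$ is available only for the \emph{independent} ${\rm GL}_2\times{\rm GL}_2$ action on ${\rm Mat}_2\otimes{\rm Mat}_2^{\mathrm{op}}$, and that action is not a symmetry of the associativity equation, since it destroys the multiplicative structure of the two factors. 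The actual symmetry group of the problem --- diagonal conjugation by a single $g$ together with the affine shifts by $\mathrm{ad}_s$ and by scalars --- is only seven-dimensional acting on a sixteen-dimensional space of operators, so each rank stratum retains a positive-dimensional family of inequivalent tensors and a case-by-case check over finitely many representatives cannot be carried out. What actually remains is solving the full quadratic system obtained by evaluating $(X\circ Y)\circ Z=X\circ(Y\circ Z)$ on matrix units modulo this seven-dimensional gauge group; that is a finite computation, but it is not the one you describe, and you have not performed it. Even in the rank-one case you assert rather than derive that the left-multiplication part must equal $-ab$ up to gauge.

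Your fallback route is also not viable: the paper treats the classification of Lie brackets compatible with a given semisimple one as a difficult open problem (Subsection 3.2.1 provides examples, not a classification), and in any case antisymmetrizing $\circ$ discards its symmetric part, so even a complete list of compatible Lie brackets on $\mathfrak{gl}_2$ would not determine the associative deformations. A route that can actually close the gap, and which stays inside the paper's own machinery, starts from the lemmas of Subsection 3.2.2: if \eqref{mult2} is associative and $R$ is written in the minimal form \eqref{Rmat}, then $\langle {\bf 1},{\bf a}_1,\dots,{\bf a}_p\rangle$ and $\langle {\bf 1},{\bf b}^1,\dots,{\bf b}^p\rangle$ are unital associative subalgebras of ${\rm Mat}_2$ of the same dimension $p+1$, subject to the relations \eqref{bi1}--\eqref{bi3}; since the unital subalgebras of ${\rm Mat}_2$ form a very short list, this is the finite check the proposition requires. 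As written, your argument correctly identifies the answer but does not prove it. (A minor point: the condition you call the Hochschild $1$-cocycle condition is the $2$-cocycle condition, satisfied by the $2$-coboundary $\delta R$.)
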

\begin{remark}\label{rem39}
If $a,b\in {\rm Mat}_m, \,\, m >2,$ we need the additional assumption
$\,\,a^2=b^2={\bf 1}\,\,$ for \eqref{abprod} to be compatible with the product in ${\rm Mat}_m$.
\end{remark}

In the matrix case the operator $R$ is defined up to the transformation
\begin{equation}\label{Req}
R \to R + {\rm ad}_s,
\end{equation}
where $s\in  {\rm Mat}_m.$ For any $s$ this transformation does not change the multiplication \eqref{mult2}.

It is easy to see that any operator $R: {\rm Mat}_m\to {\rm Mat}_m$ can be written in the form 
$$
R(x)={\bf a}_1 \,x \,{\bf b}^1+...+{\bf a}_{p+1} \,x\, {\bf b}^{p+1},
$$
where $ {\bf a}_i,{\bf b}^i \in {\rm Mat}_m.$
We will assume  that $p$ is as small as possible. In this case the matrices ${\bf a}_1,\dots, {\bf a}_{p+1}$ as well as  ${\bf b}^1,\dots, {\bf b}^{p+1}$ are linearly independent. Using \eqref{Req}, we can represent $R(x)$ in the following form:
\begin{equation}\label{Rmat}
R(x)={\bf a}_1 \,x \,{\bf b}^1+...+{\bf a}_p \,x\, {\bf b}^p + {\bf c}\,x.
\end{equation}

\subsubsection{Integrable matrix ODEs related to $R$-operator}

Let $R:\, {\rm Mat}_m \rightarrow {\rm Mat}_m$  be a linear operator such that the product \eqref{mult2} is associative. Consider \cite{sokolgok1,odsok2}
  the following   matrix differential
equation:
\begin{equation}\label{difR}
\frac{d x}{dt}=[x,\,R(x)+R^*(x)], \qquad x(t) \in {\rm Mat}_m, 
\end{equation}
  and $R^{*}$ stands for the operator
adjoint to $R$ with respect to the bi-linear form $\langle x,\,y \rangle=\mbox{tr}\,(x \,y)$. For an operator written in the form \eqref{Rmat} we have 
$$
R^*(x)={\bf b}^1 \,x \,{\bf a}_1+...+ {\bf b}^p\,x\,{\bf a}_p + x\,{\bf c}.
$$
\begin{theorem}
Equation \eqref{difR} possesses the following Lax pair:
$$
L=\Big(S_{\lambda}^{-1}\Big)^{*}(x), \qquad A=\frac{1}{\lambda}\,S_{\lambda}(x).
$$
\end{theorem}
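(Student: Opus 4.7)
The plan is to reduce the Lax equation $L_t=[A,L]$ to \eqref{difR} by applying $S_\lambda^*$ to both sides. The key observation is that $L=(S_\lambda^{-1})^*(x)$ means $S_\lambda^*(L)=x$, because $S_\lambda^*\circ (S_\lambda^{-1})^*=(S_\lambda^{-1}\circ S_\lambda)^*=\mathrm{Id}$. Since $S_\lambda^*$ is $t$-independent, applying it to $L_t=[A,L]$ yields $x_t=S_\lambda^*([A,L])$, and the whole task reduces to showing that the right-hand side equals $[x,R(x)+R^*(x)]$ (up to the overall sign fixed by convention).

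To evaluate $S_\lambda^*([A,L])$, I would pair it with an arbitrary test matrix $Y$ via the trace form. By cyclicity,
\[
\mathrm{tr}\bigl(S_\lambda^*([A,L])\,Y\bigr)=\mathrm{tr}\bigl([A,L]\,S_\lambda(Y)\bigr)=\mathrm{tr}\bigl(L\,[S_\lambda(Y),A]\bigr).
\]
Now exploit the defining homomorphism property $S_\lambda(X)S_\lambda(Y)=S_\lambda(XY+\lambda\,X\circ Y)$ from \eqref{mult2}. Substituting $A=\lambda^{-1}S_\lambda(x)$ gives
\[
[S_\lambda(Y),A]=\tfrac{1}{\lambda}\bigl[S_\lambda(Y),S_\lambda(x)\bigr]=\tfrac{1}{\lambda}S_\lambda([Y,x])+S_\lambda(Y\circ x-x\circ Y).
\]
The apparent $\lambda^{-1}$ pole is innocuous once paired with $L$: indeed, $\mathrm{tr}(L\,S_\lambda(Z))=\mathrm{tr}\bigl(x\,S_\lambda^{-1}S_\lambda(Z)\bigr)=\mathrm{tr}(xZ)$ for any $Z$, so the singular piece contributes $\lambda^{-1}\mathrm{tr}(x[Y,x])=0$ by cyclicity. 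Thus the spectral parameter disappears and we are left with $\mathrm{tr}\bigl(x(Y\circ x-x\circ Y)\bigr)$.

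It remains to expand $Y\circ x-x\circ Y$ using $X\circ Y=R(X)Y+XR(Y)-R(XY)$. The antisymmetrization produces
\[
Y\circ x-x\circ Y=[R(Y),x]+[Y,R(x)]+R([x,Y]).
\]
The first bracket contributes $\mathrm{tr}(x[R(Y),x])=0$ by cyclicity; the second contributes $\mathrm{tr}(x[Y,R(x)])=-\mathrm{tr}\bigl([x,R(x)]\,Y\bigr)$. For the third term I would use the defining property of the adjoint, $\mathrm{tr}(R(A)B)=\mathrm{tr}(A\,R^*(B))$, giving $\mathrm{tr}\bigl(x\,R([x,Y])\bigr)=\mathrm{tr}\bigl([x,Y]\,R^*(x)\bigr)=-\mathrm{tr}\bigl([x,R^*(x)]\,Y\bigr)$. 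Summing, $\mathrm{tr}\bigl(S_\lambda^*([A,L])\,Y\bigr)=-\mathrm{tr}\bigl([x,R(x)+R^*(x)]\,Y\bigr)$ for every $Y$, which identifies $S_\lambda^*([A,L])$ with $[x,R(x)+R^*(x)]$ and completes the equivalence with \eqref{difR}.

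The only real obstacle is combinatorial bookkeeping: the $\lambda^{-1}$ pole of $A$ must be exactly cancelled by the vanishing trace $\mathrm{tr}(x[Y,x])=0$, and three distinct manipulations (trace cyclicity, the homomorphism relation for $S_\lambda$, and the definition of $R^*$) must be orchestrated simultaneously. Once the homomorphism property is used to move the spectral parameter out of the way, everything else is a straightforward bilinear identity.
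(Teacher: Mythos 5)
Your argument is correct and complete; note that the paper states this theorem without proof, so there is nothing to compare against — your dualization via the trace form, the homomorphism identity $S_\lambda(X)S_\lambda(Y)=S_\lambda(XY+\lambda\,X\circ Y)$, and the cancellation of the $\lambda^{-1}$ pole through $\mathrm{tr}(x[Y,x])=0$ together constitute a valid verification. The one caveat you flag is real: with the paper's convention $L_t=[A,L]$ the computation yields $x_t=-[x,R(x)+R^*(x)]$, i.e.\ equation \eqref{difR} after $t\to-t$; this is consistent with the paper's own remark in Application 3.1 that the analogous Lax equation reproduces \eqref{euler} only ``up to $t\to -t$'', so it is a matter of convention rather than an error in your reasoning.
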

\begin{op} Show that equation \eqref{difR} is bi-Hamiltonian with the Hamiltonian operators
$$
{\cal H}_1 = ad_x, \qquad {\cal H}_2 = ad_x^{\, 1}, 
$$
where $ad_x^{\,1}$ is defined by the multiplication \eqref{mult2}.
\end{op}
\begin{example}\label{EEx2}
In the case of Example \ref{Ex2} we get
$$
\frac{d x}{dt}=[x,\, x c + c x]=x^2\,c-c\, x^2
$$
for $m\times m$-matrix $x$ and any constant matrix
$c$.
Under the reductions $x^T=-x, \, c^T=c$ the equation   is a
generalized symmetry for the $n$-dimensional Euler top.
\end{example}
 
\begin{example}\label{EEx3}
In the case of Remark \ref{rem39} we have \cite{odsok2}
\begin{equation}\label{abx}
x_{t}=[x,\,\, b x a+a x b+x b a+b a x], \qquad x,a,b \in {\rm Mat}_m,
\end{equation}
where $\,\,a^2=b^2={\bf 1}_m$.
Equation  \eqref{abx} admits the following skew-symmetric
reduction
$$
x^T=-x, \qquad b=a^T.
$$
Different integrable $\mathfrak{so}_m$-models provided by this reduction
are in one-to-one correspondence with equivalence
classes  of $m\times m$
matrices $a$ such that $a^2={\bf 1}$ with respect to the $SO_m$ gauge action.
For the real matrix $a$, a canonical form for such equivalence
class can be chosen as
$$
a=\left(%
\begin{array}{cc}
  {\bf 1}_p &T \\[2mm]
  0 \, &  {\bf 1}_{m-p} 
\end{array}\right)%
$$
Here ${\bf 1}_s$ stands for the unity $s\times s$-matrix and
$T=\{t_{ij}\},$ where $t_{ij}=\delta_{ij} \alpha_i$.
This canonical form is defined by the discrete natural parameter
$p$ and continuous parameters $\alpha_1,\dots, \alpha_r$, where
$p\le m/2,\,\,$ $r=\min(p,m-p)$.

For example, in the case $m=4$ the equivalence classes with $p=2$
and $p=1$ give rise to the $\mathfrak{so}_4$ Steklov  and   Poincare
integrable models, respectively.
\end{example}

Thus, whereas Example \ref{EEx2} leads to the matrix version of the $\mathfrak{so}_4$
Schottky-Manakov top, the tops corresponding to Example \ref{EEx3}  with $p=[m/2]$ and
$p=1$ can be regarded as generalizations for the $\mathfrak{so}_4$ Steklov
and Poincar\'e models.

For Example \ref{Ex2} the operator $R$ is given by  $R(x)=c\, x$. In the case of Remark \ref{rem39} we have $R(x)=a x b+ b a x. $ 
Both of the $R$-operators are written in the form \eqref{Rmat}. 

Several classes of integrable equations \eqref{difR} can be constructed using results of the next section. 

\subsubsection{Algebraic structure associated with $R$-operator}

Consider an operator $R$ of the form \eqref{Rmat} that defines an associative product \eqref{mult2}. 
Our aim is to study \cite{odsok1,odsok3} 
the structure of the associative algebra ${\cal
M}$ generated by ${\bf a}_i, {\bf b}^i, {\bf c}$ and ${\bf 1}$.  Denote by $\cal L$ the $2 p+2$ dimensional vector space spanned by these matrices.

\begin{lemma} If the product  \eqref{mult2} is associative, then $$ 
{\bf a}_i {\bf a}_j=\sum_k \phi_{i,j}^k {\bf a}_k+\mu_{i,j} {\bf 1},\qquad
{\bf b}^i {\bf b}^j=\sum_k \psi_k^{i,j} {\bf b}^k+\lambda^{i,j} {\bf 1}
$$
for some tensors $\phi^k_{i,j},\, \mu_{i,j},\, \psi^{i,j}_k,\,
\lambda^{i,j}.$
\end{lemma}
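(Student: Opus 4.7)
The plan is to expand the associativity condition $(X\circ Y)\circ Z = X\circ(Y\circ Z)$ using $X\circ Y = R(X)Y + X\,R(Y) - R(XY)$ and the explicit form $R(x) = \sum_{i=1}^{p}\mathbf{a}_i x\mathbf{b}^i + \mathbf{c} x$. Each side becomes a sum of trilinear monomials $M_1 X M_2 Y M_3 Z M_4$ whose coefficients $M_s$ are built either from $\mathbf{a}_i,\mathbf{b}^j,\mathbf{c},\mathbf{1}$ or from one of the pairwise products $\mathbf{a}_j\mathbf{a}_i$, $\mathbf{b}^i\mathbf{b}^j$, $\mathbf{c}\mathbf{a}_k$, $\mathbf{b}^k\mathbf{c}$ that appear whenever $R$ is applied twice. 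The map ${\rm Mat}_m^{\otimes 4}\to \mathrm{Hom}({\rm Mat}_m^{\otimes 3},{\rm Mat}_m)$ sending $M_1\otimes\cdots\otimes M_4$ to $(X,Y,Z)\mapsto M_1 X M_2 Y M_3 Z M_4$ is an isomorphism (evaluating on matrix units $X=E_{ab}$, $Y=E_{cd}$, $Z=E_{ef}$ recovers every entry of the tensor), so associativity is equivalent to a single vanishing tensor in ${\rm Mat}_m^{\otimes 4}$.

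First I would project onto the sub-tensor with $M_4=\mathbf{1}$, the monomials ending in $Z$. On the left side only $R(W)Z + W\mathbf{c} Z$ survives in this sub-tensor (the $\mathbf{c} WZ$ contribution of $R(W)Z$ cancels the analogous piece of $-R(WZ)$), where $W=X\circ Y$; on the right side only the summands of $Y\circ Z$ that already end in $Z$, namely $\mathbf{a}_k Y\mathbf{b}^k Z$ and $Y\mathbf{c} Z$, contribute. After expanding $W$ in its four types and cancelling the immediate common terms $X\mathbf{c} Y\mathbf{c} Z$ and $\sum_i \mathbf{a}_i X\mathbf{b}^i Y\mathbf{c} Z$, the reduced condition is the tensor identity
\begin{equation*}
\begin{array}{l}
\displaystyle\sum_{i,j}\mathbf{a}_j\mathbf{a}_i\otimes\mathbf{b}^i\otimes\mathbf{b}^j
+\sum_{i,j}\mathbf{a}_j\otimes\mathbf{a}_i\otimes\mathbf{b}^i\mathbf{b}^j
-\sum_{i,j}\mathbf{a}_j\mathbf{a}_i\otimes\mathbf{1}\otimes\mathbf{b}^i\mathbf{b}^j \\[2pt]
\displaystyle\quad{}+\sum_j\mathbf{a}_j\otimes\mathbf{c}\otimes\mathbf{b}^j
+\sum_i\mathbf{1}\otimes\mathbf{a}_i\otimes\mathbf{b}^i\mathbf{c}
-\sum_i\mathbf{a}_i\otimes\mathbf{1}\otimes\mathbf{b}^i\mathbf{c} \\[2pt]
\displaystyle\quad{}-\sum_{j,k}\mathbf{a}_j\otimes\mathbf{b}^j\mathbf{a}_k\otimes\mathbf{b}^k
-\sum_k\mathbf{1}\otimes\mathbf{c}\mathbf{a}_k\otimes\mathbf{b}^k \;=\; 0
\end{array}
\end{equation*}
in ${\rm Mat}_m^{\otimes 3}$.

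To extract the first relation, recall that \eqref{Req} modifies the tensor $\sum\mathbf{a}_i\otimes\mathbf{b}^i + \mathbf{c}\otimes\mathbf{1}$ by $s\otimes\mathbf{1}-\mathbf{1}\otimes s$; consequently, any linear dependence of $\mathbf{1}$ on $\{\mathbf{b}^i\}$ would let us absorb $\mathbf{c}\otimes\mathbf{1}$ into $\sum\mathbf{a}_i\otimes\mathbf{b}^i$ directly, and any dependence of $\mathbf{1}$ on $\{\mathbf{a}_i\}$ would let us first swap $\mathbf{c}\otimes\mathbf{1}$ for $\mathbf{1}\otimes\mathbf{c}$ via a suitable $s$ and then absorb, in either case reducing $p$. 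Hence in the minimal \eqref{Rmat}-form both $\{\mathbf{a}_1,\dots,\mathbf{a}_p,\mathbf{1}\}$ and $\{\mathbf{b}^1,\dots,\mathbf{b}^p,\mathbf{1}\}$ are linearly independent. Now pair the first tensor factor of the reduced identity with an arbitrary functional $\phi$ vanishing on $\mathrm{span}\{\mathbf{a}_k,\mathbf{1}\}$: every term whose slot~$1$ lies in that span dies, and only the two terms carrying $\mathbf{a}_j\mathbf{a}_i$ in slot~$1$ survive, yielding
\begin{equation*}
\sum_{i,j}\phi(\mathbf{a}_j\mathbf{a}_i)\,\mathbf{b}^i\otimes\mathbf{b}^j \;=\;\sum_{i,j}\phi(\mathbf{a}_j\mathbf{a}_i)\,\mathbf{1}\otimes\mathbf{b}^i\mathbf{b}^j .
\end{equation*}
Pair the first slot of this with a functional vanishing on $\mathbf{1}$ that isolates a chosen $\mathbf{b}^{i_0}$: the right-hand side is killed and we are left with $\sum_j\phi(\mathbf{a}_j\mathbf{a}_{i_0})\mathbf{b}^j=0$, so linear independence of $\{\mathbf{b}^j\}$ forces $\phi(\mathbf{a}_j\mathbf{a}_{i_0})=0$. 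Arbitrariness of $\phi$ and $i_0$ yields $\mathbf{a}_j\mathbf{a}_i\in\mathrm{span}\{\mathbf{a}_k,\mathbf{1}\}$.

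The second relation is the mirror. Pair the third tensor factor of the reduced identity with $\psi$ vanishing on $\mathrm{span}\{\mathbf{b}^k,\mathbf{1}\}$; only the terms with $\mathbf{b}^i\mathbf{b}^j$ or $\mathbf{b}^i\mathbf{c}$ in slot~$3$ survive, giving
\begin{equation*}
\sum_{i,j}\psi(\mathbf{b}^i\mathbf{b}^j)\big(\mathbf{a}_j\otimes\mathbf{a}_i-\mathbf{a}_j\mathbf{a}_i\otimes\mathbf{1}\big) +\sum_i\psi(\mathbf{b}^i\mathbf{c})\big(\mathbf{1}\otimes\mathbf{a}_i-\mathbf{a}_i\otimes\mathbf{1}\big)=0 .
\end{equation*}
Pairing slot~$2$ against a functional that separates a chosen $\mathbf{a}_{i_0}$ from $\{\mathbf{a}_k\}_{k\ne i_0}\cup\{\mathbf{1}\}$ reduces this to $\sum_j\psi(\mathbf{b}^{i_0}\mathbf{b}^j)\mathbf{a}_j+\psi(\mathbf{b}^{i_0}\mathbf{c})\mathbf{1}=0$, and linear independence of $\{\mathbf{a}_j,\mathbf{1}\}$ forces $\psi(\mathbf{b}^{i_0}\mathbf{b}^j)=0$. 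Hence $\mathbf{b}^i\mathbf{b}^j\in\mathrm{span}\{\mathbf{b}^k,\mathbf{1}\}$. The principal obstacle is the bookkeeping at the first stage: identifying the correct reduced tensor identity among the two dozen or so monomials produced by expanding associativity and verifying the $\mathbf{c}$-cancellations precisely; once that identity is in hand, the slot-wise functional pairings needed to conclude are short and routine.
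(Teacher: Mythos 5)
You should know at the outset that the paper states this lemma without proof (it is imported from \cite{odsok1,odsok3}), so your argument has to stand on its own. Its skeleton is viable: the identification of ${\rm Mat}_m^{\otimes 4}$ with ${\rm Hom}({\rm Mat}_m^{\otimes 3},{\rm Mat}_m)$ is correct, the linear independence of $\{{\bf a}_1,\dots,{\bf a}_p,{\bf 1}\}$ and of $\{{\bf b}^1,\dots,{\bf b}^p,{\bf 1}\}$ does follow from minimality of $p$ modulo \eqref{Req}, and slot-wise pairing with annihilating functionals is the right way to extract the two relations. The flaw is in the pivotal step. Your ``reduced condition'' keeps only the monomials that literally end in $Z$ and asserts that $X{\bf a}_i(Y\circ Z){\bf b}^i$ and $-{\bf a}_iX(Y\circ Z){\bf b}^i$ contribute nothing there. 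They do: they contain monomials whose terminal factor is ${\bf b}^k{\bf b}^i$, and at this stage nothing prevents ${\bf b}^k{\bf b}^i$ from having a nonzero component along ${\bf 1}$ --- that is exactly the kind of relation the lemma is trying to establish, and it genuinely occurs (in Example \ref{Ex3}/Remark \ref{rem39} one has ${\bf b}^1=b$ with $b^2={\bf 1}$). Writing ${\bf b}^k{\bf b}^i=\lambda^{ki}\,{\bf 1}+(\hbox{terms in a complement of }\C\,{\bf 1})$, the ${\bf 1}$-component of the fourth slot acquires, besides what you display, the extra summand
$$
-\sum_{i,k}\lambda^{ki}\Bigl({\bf 1}\otimes{\bf a}_i\otimes{\bf a}_k-{\bf 1}\otimes{\bf a}_i{\bf a}_k\otimes{\bf 1}-{\bf a}_i\otimes{\bf 1}\otimes{\bf a}_k+{\bf a}_i\otimes{\bf a}_k\otimes{\bf 1}\Bigr),
$$
which does not vanish in general. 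So the displayed identity is false as stated.

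The damage turns out to be contained, but you must argue this rather than rely on it silently. Every omitted term has its first tensor factor in ${\rm span}\{{\bf a}_k,{\bf 1}\}$, hence is killed by your functional $\phi$, so the derivation of ${\bf a}_j{\bf a}_i\in{\rm span}\{{\bf a}_k,{\bf 1}\}$ survives unchanged. For the second relation the omitted terms do survive the $\psi$-pairing (their third slots contain ${\bf a}_k$, on which $\psi$ need not vanish), but after the slot-2 pairing they contribute only a multiple of ${\bf 1}$ to the final relation $\sum_j\psi({\bf b}^{i_0}{\bf b}^j)\,{\bf a}_j+(\cdots)\,{\bf 1}=0$, so independence of $\{{\bf a}_j,{\bf 1}\}$ still forces $\psi({\bf b}^{i_0}{\bf b}^j)=0$. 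The cleanest repair is to reverse the order of operations: apply the annihilating functional $\phi$ (resp.\ $\psi$) to the first (resp.\ fourth) slot of the full identity in ${\rm Mat}_m^{\otimes 4}$ before any projection; only the terms carrying ${\bf a}_j{\bf a}_i$ (resp.\ ${\bf b}^k{\bf b}^i$) survive, and the contested decomposition of the fourth slot is never needed.
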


This means that the vector spaces spanned by ${\bf 1}, {\bf a}_1,\dots {\bf a}_p$
and ${\bf 1}, {\bf b}^1,\dots {\bf b}^p$ are associative algebras. We denote them by
${\cal A}$ and ${\cal B},$ respectively. These algebras should be in some sense compatible with each other. The simplest example of such compatibility  can be described as follows. 

\begin{example}\label{ex310}
Let ${\cal A}$ and ${\cal B}$ be associative algebras
with basis $A_{1},\dots, A_{p}$ and $B^{1},\dots, B^{p}$ and
structural constants $\phi^i_{j,k}$ and
$\psi^{\alpha,\beta}_{\gamma}.$  Suppose that the structural
constants satisfy the following identities:
$$
 \phi^s_{j,k}\psi_s^{l,i}=\phi^l_{s,k}\psi_j^{s,i}+\phi^i_{j,s}\psi_k^{l,s},
 \qquad 1\le i,j,k,l \le p.
$$
Here and below we assume that the {\it summation is carried out over repeated indices}. 
Then the algebra ${\cal M}$ of dimension $2p+p^{2}$ with the basis
$A_{i}, B^{j},A_{i} B^{j}$ and relations
$$
B^i A_j=\psi_j^{k,i} A_k+\phi^i_{j,k} B^k
$$
is associative. This structure is called an {\it associative bi-algebra} \cite{Agui}.
\end{example}
In general case the compatibility of  ${\cal A}$ and ${\cal B}$ is described by
\begin{lemma} If  \eqref{mult2} is associative, then
$$\phi^s_{j,k}\psi_s^{l,i}=\phi^l_{s,k}\psi_j^{s,i}+\phi^i_{j,s}\psi_k^{l,s}+\delta^l_kt^i_j-
 \delta^i_jt_k^l-\delta^i_j\phi^l_{s,r}\psi^{r,s}_k,$$
 and
\begin{equation}\label{bi1}
{\bf b}^i {\bf a}_j=\psi_j^{k,i} \,{\bf a}_k+\phi^i_{j,k} \,
{\bf b}^k+t^i_j {\bf 1}+\delta^i_j\,{\bf c},
\end{equation}
for some tensor $t^i_j$. 
\end{lemma}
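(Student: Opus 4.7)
The plan is to expand the associativity of $\circ$ using both the formula $X \circ Y = R(X)Y + XR(Y) - R(XY)$ and the specific form $R(X) = {\bf a}_i X {\bf b}^i + {\bf c}X$, and then to read off the stated relations by treating the resulting identity as a polynomial identity in free matrix variables $X, Y, Z$ of sufficiently large size. After expansion, each side of the associator $(X \circ Y) \circ Z - X \circ (Y \circ Z)$ becomes a sum of roughly sixteen monomials of the shapes $M_1 X M_2 Y M_3 Z M_4$, $M_1 (XY) M_2 Z M_3$, $M_1 X M_2 (YZ) M_3$, and $M_1 (XYZ) M_2$, where each $M_j$ is a short product of generators from $\{{\bf a}_i, {\bf b}^j, {\bf c}, {\bf 1}\}$.

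First I would isolate the monomials containing a factor ${\bf b}^i {\bf a}_j$ wedged between two variable slots. On the left side these arise from the outer composition ${\bf a}_k W {\bf b}^k Z$ combined with the summand $X {\bf a}_i Y {\bf b}^i$ of $W = X \circ Y$, producing the interior factor ${\bf b}^i {\bf a}_k$ between $Y$ and $Z$; on the right side the mirror pattern $X {\bf a}_k V {\bf b}^k$ combined with $V = Y \circ Z \supset {\bf a}_i Y {\bf b}^i Z$ produces ${\bf b}^k {\bf a}_i$ between $X$ and $Y$. Since for $X, Y, Z$ ranging over matrices of sufficiently large size, monomials of the form $P\,X\,N\,Y\,Q$ are linearly independent over the scalars when $P, N, Q$ vary over a fixed basis of the algebra ${\cal M}$, the only way for these sandwich terms to cancel is for each product ${\bf b}^i {\bf a}_j$ to lie in the span of $\{{\bf a}_k, {\bf b}^k, {\bf 1}, {\bf c}\}$. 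This forces an ansatz
\[
{\bf b}^i {\bf a}_j = \psi_j^{k,i}\,{\bf a}_k + \phi^i_{j,k}\,{\bf b}^k + t^i_j\,{\bf 1} + s^i_j\,{\bf c},
\]
and by comparing with the subalgebra relations already identified in the previous lemma, the coefficients of ${\bf a}_k$ and ${\bf b}^k$ must indeed be the structure constants $\psi_j^{k,i}$ and $\phi^i_{j,k}$ of ${\cal A}$ and ${\cal B}$. Matching the ${\bf c}$-contribution, which originates in $R({\bf 1}) = {\bf a}_r {\bf b}^r + {\bf c}$, pins down $s^i_j = \delta^i_j$, giving \eqref{bi1}.

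Once \eqref{bi1} is in hand, I would substitute it back into the associator to eliminate every interior product ${\bf b}^i {\bf a}_j$. The residual identity then involves only monomials whose interior factors are products of ${\bf a}$'s on the left and ${\bf b}$'s on the right (with no mixed products). Picking a reference monomial, for instance the coefficient of ${\bf a}_l X Y Z {\bf b}^i$ or a related pattern that isolates a single product $\phi \cdot \psi$, and using once more the linear independence of the ${\bf a}$'s and ${\bf b}$'s, extracts the cubic identity on $\phi, \psi, t$. The final term $-\delta^i_j \phi^l_{s,r}\psi^{r,s}_k$ arises because ${\bf c} = R({\bf 1}) - {\bf a}_r {\bf b}^r$ is not an independent generator: when \eqref{bi1} is substituted recursively into expressions of the form ${\bf b}^i {\bf a}_r {\bf b}^r {\bf a}_j$, the resulting ${\bf c}$-contribution carried by the Kronecker-$\delta$ piece must itself be re-expressed via \eqref{bi1}, producing that contraction.

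The main obstacle is the sheer bookkeeping: the raw associator contains on the order of thirty monomials per side, and one must carefully identify which combinations force the ansatz \eqref{bi1} and which, after substitution, encode the cubic identity — particularly the asymmetric closing term $-\delta^i_j \phi^l_{s,r}\psi^{r,s}_k$, which traces back to re-using \eqref{bi1} to eliminate ${\bf c}$. In practice, working inside ${\cal M} \otimes V^{\otimes 3}$, with $X, Y, Z$ placed in the three tensor slots and matrix products tracked by bookkeeping indices, converts the whole argument into a purely linear-algebraic comparison of coefficients of tensor monomials.
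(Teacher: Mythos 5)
The paper itself does not prove this lemma (it is imported from \cite{odsok1,odsok3}), so there is no in-text argument to compare against; but your overall strategy is the right one and is essentially how the result is obtained there. Note first that the two ${\bf c}XY$ terms cancel, so $X\circ Y={\bf a}_\alpha X{\bf b}^\alpha Y+X{\bf a}_\alpha Y{\bf b}^\alpha+X{\bf c}Y-{\bf a}_\alpha XY{\bf b}^\alpha$, and each side of the associator is a sum of sixteen monomials of the form $P_0XP_1YP_2ZP_3$; identifying such a multilinear map with $P_0\otimes P_1\otimes P_2\otimes P_3\in{\rm Mat}_m^{\otimes 4}$ and using the linear independence of $\{{\bf 1},{\bf a}_i\}$ and $\{{\bf 1},{\bf b}^j\}$ in the outer slots, one projects the interior slots onto a complement of ${\cal L}$ to force the unknown products into ${\cal L}$, then matches the remaining components to pin down the coefficients and, after substituting \eqref{bi1} back, the quadratic relation. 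That skeleton is sound.

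However, several of your concrete identifications are wrong and would derail the computation if followed literally. First, ${\bf a}_kW{\bf b}^kZ$ with $W\supset X{\bf a}_iY{\bf b}^i$ produces the interior word ${\bf b}^i{\bf b}^k$ between $Y$ and $Z$ --- a product inside ${\cal B}$, already controlled by the preceding lemma --- not ${\bf b}^i{\bf a}_k$; the mixed products actually arise from $W{\bf a}_kZ{\bf b}^k$ (giving ${\bf b}^i{\bf a}_k$ between $Y$ and $Z$) and from ${\bf a}_kX{\bf b}^kV$ with $V\supset{\bf a}_iY{\bf b}^iZ$ (giving ${\bf b}^k{\bf a}_i$ between $X$ and $Y$); your ``mirror pattern'' $X{\bf a}_kV{\bf b}^k$ yields ${\bf a}_k{\bf a}_i$ there, not ${\bf b}^k{\bf a}_i$. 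Second, the coefficient $\delta^i_j$ of ${\bf c}$ in \eqref{bi1} does not come from $R({\bf 1})$: it comes from matching the ${\bf c}$-component of the $X$--$Y$ slot, where the only surviving contribution on the other side is ${\bf a}_\beta X{\bf c}Y{\bf b}^\beta Z-{\bf a}_\beta X{\bf c}YZ{\bf b}^\beta$, produced by the summand $X{\bf c}Y$ of $X\circ Y$. Third, no words of length four such as ${\bf b}^i{\bf a}_r{\bf b}^r{\bf a}_j$ ever occur in the expansion --- every interior factor is at most quadratic in the generators --- so the ``recursive substitution'' story for the closing term $-\delta^i_j\phi^l_{s,r}\psi^{r,s}_k$ is not what happens; that contraction emerges in the ordinary coefficient comparison once the $\delta^i_j{\bf c}$ part of \eqref{bi1} is inserted into the monomials carrying ${\bf b}^\alpha{\bf a}_\beta$. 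Finally, the unknown products ${\bf c}{\bf a}_j$ and ${\bf b}^i{\bf c}$ appear in the very same identity and must be separated from ${\bf b}^i{\bf a}_j$ simultaneously (they are distinguished by having ${\bf 1}$ rather than ${\bf a}_\beta$ in the outer slots); your sketch ignores them, although handling them is what yields \eqref{bi2}--\eqref{bi3} of the following lemma.
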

The remaining matrix ${\bf c}$ obeys the following relations:
\begin{lemma} If  \eqref{mult2} is associative, then
\begin{equation}\label{bi2}
{\bf b}^i\,{\bf c}=\lambda^{k,i}{\bf a}_k-t^i_k\,{\bf b}^k-\phi^i_{k,l}\psi^{l,k}_s\,{\bf b}^s-\phi^i_{k,l}\lambda^{l,k}\,{\bf 1},
\end{equation}
\begin{equation}\label{bi3}
{\bf c}\,{\bf a}_j=\mu_{j,k}\,{\bf b}^k-t^k_j\,{\bf a}_k-\phi^s_{k,l}\psi^{l,k}_j\,{\bf a}_s-\mu_{k,l}\psi^{l,k}_j\,{\bf 1},
\end{equation}
where
$$
\phi^s_{j,k}t^i_s=\psi_j^{s,i}\mu_{s,k}+\phi^i_{j,s}t_k^s-\delta^i_j\psi_k^{s,r}\mu_{r,s},
 \qquad
 \psi^{k,i}_st_j^s=\phi^i_{j,s}\lambda^{k,s}+\psi_j^{s,i}t_s^k-\delta^i_j\phi^k_{s,r}\lambda^{r,s}.
 $$
\end{lemma}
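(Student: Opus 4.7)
The strategy is to extract both closed multiplication rules and the accompanying compatibility conditions from the assumed associativity of the deformed product \eqref{mult2}, combined with the quadratic relations already obtained in the preceding lemmas and the linear independence of ${\bf a}_i,{\bf b}^j,{\bf c},{\bf 1}$ in $\cal M$ (which is guaranteed by the minimality of the representation \eqref{Rmat} and the fact that ${\bf c}$ cannot be absorbed into the ${\bf a}_i x {\bf b}^i$ terms via a gauge of type \eqref{Req}). The key structural observation is that the mixed rule \eqref{bi1} is the only quadratic relation whose right--hand side contains ${\bf c}$, and it does so exclusively through the term $\delta^i_j\,{\bf c}$; therefore any product of the form ${\bf b}^i{\bf c}$ or ${\bf c}{\bf a}_j$ has to be ``unlocked'' by applying \eqref{bi1} inside a suitable triple product of generators of $\mathcal{L}$.

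Concretely, I would consider the associativity identities $({\bf b}^i{\bf a}_j){\bf a}_k = {\bf b}^i({\bf a}_j{\bf a}_k)$ and $({\bf b}^i{\bf b}^j){\bf a}_k = {\bf b}^i({\bf b}^j{\bf a}_k)$, trivially valid in ${\rm Mat}_m$, and expand both sides using \eqref{bi1} together with the multiplication tables for ${\bf a}{\bf a}$ and ${\bf b}{\bf b}$. On one side a single ${\bf c}$--term (namely $\delta^i_j\,{\bf c}{\bf a}_k$, respectively $\delta^j_k\,{\bf b}^i{\bf c}$) survives, while on the other side the ${\bf c}$--contributions cancel against each other. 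Reading off coefficients in the linearly independent basis $\{{\bf a},{\bf b},{\bf 1}\}$ and separating the two cases --- the generic case, where the Kronecker delta vanishes, and the ``diagonal'' case, where it equals $1$ --- produces, from the generic case, one of the stated compatibility identities, and, from the diagonal case, the closed formula for ${\bf c}{\bf a}_k$ or ${\bf b}^i{\bf c}$ after rearrangement. The discrepancy between the two cases is exactly what accounts for the $\delta^i_j$--correction terms $-\delta^i_j\psi^{s,r}_k\mu_{r,s}$ and $-\delta^i_j\phi^k_{s,r}\lambda^{r,s}$ appearing in the compatibility identities.

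The main obstacle is the well--definedness check: the formulas so extracted must be independent of the specific value of the index set equal, so that a single expression for ${\bf c}{\bf a}_k$ emerges regardless of the choice of the common index $i=j$, and similarly for ${\bf b}^i{\bf c}$. This independence is precisely what the compatibility identities encode, and confirming that they are not just necessary but actually hold requires invoking the remaining associativity constraints in $\cal M$ --- in particular those coming from pure triple products $({\bf b}^i{\bf b}^j){\bf b}^k$ and the $\cal A$--analog --- which close the system of quadratic identities on the structure constants $(\phi,\psi,t,\mu,\lambda)$. As in the bi--algebra setting of Example~\ref{ex310}, the technical heart of the argument is verifying that this closed system of relations is exactly what the associativity of \eqref{mult2} imposes; once this is in hand, the derivation of \eqref{bi2} and \eqref{bi3} reduces to linear algebra in the basis $\{{\bf a}_i,{\bf b}^j,{\bf c},{\bf 1}\}$ of $\mathcal{L}$.
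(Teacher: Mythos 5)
The paper states this lemma without proof, deferring to \cite{odsok1,odsok3}, where the multiplication table of ${\cal M}$ is obtained by expanding the associativity condition for the product \eqref{mult2} directly in terms of $R$ and exploiting the minimality of $p$. Your route is genuinely different but viable: you take the two preceding lemmas as already established and bootstrap using only the associativity of the ordinary matrix product on triples of generators. The mechanism you identify is correct: in $({\bf b}^i{\bf a}_j){\bf a}_k={\bf b}^i({\bf a}_j{\bf a}_k)$ the pure ${\bf c}$-contributions ($\phi^i_{j,k}{\bf c}$ on each side) cancel and only $\delta^i_j\,{\bf c}{\bf a}_k$ survives; setting $i=j$ and simplifying the resulting $i$-dependent expression with the associativity relations of ${\cal A}$ (from $({\bf a}{\bf a}){\bf a}$) and with the identity $\phi^s_{j,k}\psi^{l,i}_s=\cdots$ of the preceding lemma specialized to $i=j$ reproduces exactly \eqref{bi3}; the ${\bf 1}$-component of the same triple identity then yields the first compatibility relation, and $({\bf b}^i{\bf b}^j){\bf a}_k={\bf b}^i({\bf b}^j{\bf a}_k)$ with $j=k$ gives \eqref{bi2} and the second relation in the same way. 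Two points need care. First, the independence of the answer from the chosen diagonal index is not, as you suggest, what the compatibility identities of \emph{this} lemma encode; it already follows from the identities of the \emph{preceding} lemma together with the associativity of ${\cal A}$ and ${\cal B}$, and arranging the logic in that order is what keeps the argument non-circular. Second, extracting structure constants requires the joint linear independence of all $2p+2$ elements ${\bf 1},{\bf a}_i,{\bf b}^j,{\bf c}$, which the text asserts only for the ${\bf a}$'s and the ${\bf b}$'s separately; you flag this, but it should be invoked explicitly (it is Property~1 of the weak ${\cal M}$-structure in the invariant description). What your approach buys is a short derivation that never returns to the associativity of $\circ$; what it costs is that \eqref{bi2}--\eqref{bi3} are obtained only as corollaries of the earlier lemmas rather than read off from \eqref{mult2} ab initio, so the argument is complete only once those lemmas are proved in full.
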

Relations \eqref{bi1}-\eqref{bi3} mean that the vector space ${\cal L}$ is a left ${\cal B}$-module and a right  ${\cal A}$-module. 
\subsubsection{Invariant description}
In this section we forget that the generators of ${\cal L}$ are matrices and give a purely algebraic description of the structure appeared  above. 
\begin{definition} By weak ${\cal M}$-structure on a linear space
$\cal L$ we mean a collection of the following data:
\begin{itemize}
\item  Two subspaces $\cal A\subset \cal L$ and $\cal B\subset \cal L$ and a distinguished
element $1\in\cal A\cap\cal B$. 

\item  A non-degenerate symmetric scalar product $(\cdot, \cdot)$ on the
space $\cal L$.  

\item  Two associative products $\cal A\times\cal A\to\cal A$ and
$\cal B\times\cal B\to\cal B$ with a unity $1$. 

\item  A left action $\cal B\times\cal L\to\cal L$ of the algebra
$\cal B$ and a right action $\cal L\times\cal A\to\cal L$ of the
algebra $\cal A$ on the space $\cal L,$ which  commute with each other.
\end{itemize}

This data should satisfy the following properties:
\begin{itemize}

\item[ {\bf 1.}] $\dim{\cal A\cap\cal B=\dim\cal L/(\cal A+\cal B)}= 1$.
The intersection of $\cal A$ and $\cal B$ is a one dimensional space
spanned by the unity $1$.

\item[ {\bf 2.}] The restriction of the action $\cal B\times\cal L\to\cal L$
to the subspace $\cal B\subset \cal L$ is the product in $\cal B$.
The restriction of the action $\cal L\times\cal A\to\cal L$ to
the subspace $\cal A\subset \cal L$ is the product in $\cal A$.

\item[ {\bf 3.}] $(A_1,A_2)=(B^1,B^2)=0$ and $$(B^1 B^2,\,v)=(B^1,\,B^2 v),
\qquad (v,\,A_1 A_2)=(v A_1,\,A_2)$$ for any $A_1,A_2\in\cal A$,
$B^1,B^2\in\cal B$ and $v \in\cal L$.
\end{itemize}
\end{definition}

\begin{remark} It follows from these properties that $(\cdot,\cdot)$ gives a non-degenerate pairing 
between the quotient spaces ${\cal A}/\C_1$ and ${\cal B}/\C_1$, so
$\dim\cal A=\dim\cal B$ and $\dim{\cal L}=2\dim{\cal A}$.
\end{remark}

For a given weak ${\cal M}$-structure  we can define an algebra
generated by $\cal L$ with natural compatibility and universality
conditions.

\begin{definition} By weak ${\cal M}$-algebra associated with a weak
${\cal M}$-structure on $\cal L$ we mean an associative algebra
$U(\cal L)$ with the following properties:

\begin{itemize}

\item[ {\bf 1.}] $\cal L\subset U(\cal L)$ and the actions $\cal
B\times\cal L\to\cal L$,\quad  $\cal L\times\cal A\to\cal L$ are
restrictions of the product in $U(\cal L)$.

\item[ {\bf 2.}] For any algebra $X$ with the property 1 there exists
a unique homomorphism of algebras $X\to U(\cal L)$ that is
identical on $\cal L$.
\end{itemize}
\end{definition}

\subsubsection{Explicit formulas for $U(\cal L)$}

Let
$\{\bar A_1,...,\bar A_p\}$ be a basis of ${\cal A}/\C_1$ and $\{\bar B^1,...,\bar B^p\}$
be a dual basis of ${\cal B}/\C_1$. This means that
$(\bar A_i, \bar B^j)=\delta_i^j$. It is clear, that $1,A_1,...,A_p$ and $1,B^1,...B^p,$ where $A_i\in \bar A_i, \, B_i\in \bar B_i,$ are bases in 
${\cal A}$ and ${\cal B},$ respectively. 
The element $C\in\cal L$ does not belong to the sum of $\cal A$ and $\cal
B$. Since $(\cdot , \cdot)$ is non-degenerate, we have $(1,C)\ne
0$. Without loss of generality we may assume that $(1,C)=1$,
 $(C,C)=(C,A_i)=(C,B^j)=0.$ Given basis of $\cal A$ and $\cal B$, such an element $C$ is uniquely determined.
 
\begin{proposition} The algebra $U(\cal L)$ is defined by the
following relations
$$
A_iA_j=\phi_{i,j}^kA_k+\mu_{i,j}\,1, \qquad
B^iB^j=\psi_k^{i,j}B^k+\lambda^{i,j}\,1
$$$$
B^iA_j=\psi_j^{k,i}A_k+\phi^i_{j,k}B^k+t^i_j\,1 +\delta^i_jC,
$$$$
B^iC=\lambda^{k,i}A_k+u_k^iB^k+p^i\,1, \qquad
CA_j=\mu_{j,k}B^k+u^k_jA_k+q_i\,1
$$
for certain tensors
$\phi_{i,j}^k,\psi_k^{i,j},\mu_{i,j},\lambda^{i,j},u_k^i,p^i,q_i$.
\end{proposition}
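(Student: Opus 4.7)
My plan is to establish the five stated relations by showing each product lies in $\mathcal{L}$ and then reading off the coefficients via the non-degenerate pairing; universality of $U(\mathcal{L})$ then guarantees that these relations form a complete presentation. First I set up a self-dual basis: by property 3 the form $(\cdot,\cdot)$ vanishes identically on $\mathcal{A}\times\mathcal{A}$ and on $\mathcal{B}\times\mathcal{B}$, so in particular $(1,A_i)=(1,B^j)=(1,1)=0$. Combined with the normalizations $(1,C)=1$, $(C,C)=(C,A_i)=(C,B^j)=0$, and the duality $(A_i,B^j)=\delta^i_j$, the basis $\{1,A_i,B^j,C\}$ of $\mathcal{L}$ has dual basis $\{C,B^j,A_i,1\}$, so every $v\in\mathcal{L}$ expands uniquely as
$$
v = (v,C)\,1 + (v,B^k)\,A_k + (v,A_k)\,B^k + (v,1)\,C.
$$

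Next I verify membership in $\mathcal{L}$ for each of the five products. The products $A_iA_j$ and $B^iB^j$ are handled directly by the associative algebra structures on $\mathcal{A}$ and $\mathcal{B}$, giving the first two relations with $\phi^k_{i,j},\mu_{i,j},\psi_k^{i,j},\lambda^{i,j}$ as the respective structure constants. For the mixed products $B^iA_j$, $B^iC$, and $CA_j$, axiom 2 of the weak $\mathcal{M}$-algebra identifies these products in $U(\mathcal{L})$ with the prescribed actions $\mathcal{B}\times\mathcal{L}\to\mathcal{L}$ and $\mathcal{L}\times\mathcal{A}\to\mathcal{L}$, so each result lies in $\mathcal{L}$.

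The core step is extracting the coefficients by repeated application of the invariance identities $(vA_1,A_2)=(v,A_1A_2)$ and $(B^1B^2,v)=(B^1,B^2v)$. For instance, $(B^iA_j,1)=(B^i,A_j\cdot 1)=\delta^i_j$ gives the $C$-coefficient in $B^iA_j$; $(B^iA_j,A_k)=(B^i,A_jA_k)=\phi^i_{j,k}$ gives the $B^k$-coefficient; $(B^iA_j,B^k)=(B^kB^i,A_j)=\psi_j^{k,i}$ gives the $A_k$-coefficient; and the $1$-coefficient is defined as $t^i_j:=(B^iA_j,C)$. Analogous computations yield the remaining relations: $(B^iC,1)=(1\cdot B^i,C)=(B^i,C)=0$ and $(B^iC,B^k)=(B^kB^i,C)=\lambda^{k,i}$ for $B^iC$, with the remaining coefficients named $u_k^i:=(B^iC,A_k)$ and $p^i:=(B^iC,C)$; similarly $(CA_j,1)=(C,A_j)=0$ and $(CA_j,A_k)=(C,A_jA_k)=\mu_{j,k}$ for $CA_j$, with $u^k_j:=(CA_j,B^k)$ and $q_j:=(CA_j,C)$. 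That the two $u$-symbols may be identified as a single tensor is a consistency assertion to be verified by comparing the $C$-components of the associativity identity $B^i(CA_j)=(B^iC)A_j$.

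The main obstacle is establishing completeness: one must argue that no further relations among the generators $\{1,A_i,B^j,C\}$ are imposed in $U(\mathcal{L})$. This follows from universal property 2 of the weak $\mathcal{M}$-algebra. Any associative algebra $X$ extending $\mathcal{L}$ with the prescribed subalgebra and action structures satisfies the same five relations (by the very same pairing computations, now interpreted inside $X$), so the algebra presented by these relations admits a canonical map to $X$. Specializing $X=U(\mathcal{L})$ and composing with the universal map in the reverse direction yields the identification. The subsidiary task, not essential for the presentation itself, is to record the bilinear compatibility identities among $\phi,\psi,\mu,\lambda,t,u,p,q$ that are forced by the associativity of triple products in $U(\mathcal{L})$; these identities underpin the cohomological description of weak $\mathcal{M}$-structures but are not part of the defining relations.
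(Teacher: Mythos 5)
The paper states this proposition without proof, so there is nothing to compare against; judged on its own, your argument is correct and is surely the intended one. The dual-basis observation is the right starting point: with the normalizations $(1,C)=1$, $(C,C)=(C,A_i)=(C,B^j)=0$ and $(A_i,B^j)=\delta_i^j$, every $v\in\mathcal{L}$ indeed expands as $v=(v,C)\,1+(v,B^k)A_k+(v,A_k)B^k+(v,1)C$, and your coefficient extractions via Property 3 all check out (e.g.\ $(B^iA_j,B^k)=(B^k,B^iA_j)=(B^kB^i,A_j)=\psi_j^{k,i}$, and $(B^iC,1)=(B^i,C)=0$ correctly explains the absence of a $C$-term in $B^iC$ and $CA_j$). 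Your treatment of the single tensor $u$ is also right and is the one genuinely non-routine point: pairing the commuting-actions identity $(B^iC)A_j=B^i(CA_j)$ with $1$ gives $((B^iC)A_j,1)=(B^iC,A_j)$ on one side and $(B^i(CA_j),1)=(B^i,CA_j)=(CA_j,B^i)$ on the other, which is exactly the identification of the $B^k$-coefficient of $B^iC$ with the $A_k$-coefficient of $CA_j$. Two minor remarks on the completeness step. First, as the paper's Definition is written, the universal morphism goes $X\to U(\mathcal{L})$, so taking $X=U(\mathcal{L})$ and $X=P$ (the presented algebra) produces two maps in the \emph{same} direction and does not immediately close the loop; with the standard (initial-object) reading of the universal property your argument goes through verbatim, and this is clearly what is intended. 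Second, to apply the universal property to $P$ you need $\mathcal{L}$ to inject into $P$ (Property 1); this is not automatic from a presentation, but it follows at no cost from the fact that the five relations hold in $U(\mathcal{L})$, so the canonical map $P\to U(\mathcal{L})$ exists and restricts to the inclusion of $\mathcal{L}$, forcing $\mathcal{L}\to P$ to be injective. With these two sentences added, the proof is complete.
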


Let us define an element $K\in U(\cal L)$ by the formula $$K=A_i
B^i+C.$$
\begin{definition} A weak ${\cal M}$-structure on $\cal L$ is called
${\cal M}$-structure if $K\in U(\cal L)$ is a central element of
the algebra $U(\cal L)$.
\end{definition}

\begin{theorem} {\rm (cf. Example \ref{ex310})}. For any ${\cal M}$-structure the algebra $U(\cal L)$ is spanned by the
elements $K^s,\,A_i K^s,\,$ $ B_j K^s,\, A_i B^j K^s,$ where
$i,j=1,...,p,\,$ and $\, s=0,1,2,...$
\end{theorem}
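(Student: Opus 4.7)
The plan is to prove the claim by showing directly that the linear span
\[
V := \mathrm{span}\bigl\{\,K^{s},\ A_{i}K^{s},\ B^{j}K^{s},\ A_{i}B^{j}K^{s}\ :\ i,j=1,\ldots,p,\ s\ge 0\,\bigr\}
\]
coincides with $U(\mathcal{L})$. Since $V$ contains $1=K^{0}$ and sits inside $U(\mathcal{L})$, it suffices to verify that $V$ is a left $U(\mathcal{L})$-submodule, for then $U(\mathcal{L})=U(\mathcal{L})\cdot 1\subseteq V$, and the reverse inclusion is obvious. Because $U(\mathcal{L})$ is generated by $A_{i}$, $B^{j}$ and $C$, the task reduces to checking that left multiplication by each of these preserves $V$. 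The centrality of $K$ combined with the identity $K=A_{l}B^{l}+C$ gives $C=K-A_{l}B^{l}$, so multiplication by $C$ is the difference of multiplication by the central $K$ (which trivially preserves $V$) and multiplication by $A_{l}B^{l}$; hence it is enough to verify closure under left multiplication by each $A_{i}$ and each $B^{j}$.

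I would then proceed case by case against the four families of generators of $V$, noting that the trailing factor $K^{s}$ is central and can be passed through unchanged. Closure under $A_{i}$ is handled entirely by relation R1: $A_{i}A_{j}=\phi_{i,j}^{k}A_{k}+\mu_{i,j}\,1$ lies in $V$, and $A_{i}\cdot(A_{j}B^{k})=(A_{i}A_{j})B^{k}$ reduces to a linear combination of $A_{l}B^{k}$ and $B^{k}$. Closure under $B^{i}$ on the elementary generators $1$, $B^{j}$, $A_{j}$ is routine via R2 and R3, the latter producing a term $\delta^{i}_{j}C$ which is in $V$ by the previous observation. The nontrivial case is $B^{i}\cdot(A_{j}B^{k})=(B^{i}A_{j})B^{k}$: expansion by R3 followed by right-multiplication by $B^{k}$ produces, besides manifestly good terms, an unavoidable contribution of the form $\delta^{i}_{j}\,C B^{k}$.

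The main obstacle is precisely the reduction of $CB^{k}$, since none of the defining relations R1--R5 rewrites a product with $C$ on the left. It is here that the hypothesis that $K$ is central must be invoked a second time: the identity
\[
CB^{k}=(K-A_{l}B^{l})B^{k}=B^{k}K-A_{l}(B^{l}B^{k})
\]
moves $K$ past $B^{k}$ by centrality, and then R2 collapses $B^{l}B^{k}$ into $V$. With this final step the verification of the left $U(\mathcal{L})$-submodule property is complete, and the theorem follows. All remaining manipulations are mechanical applications of the defining relations and associativity; no dimension count or cohomological input is required.
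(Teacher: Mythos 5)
Your argument is correct. The paper itself states this theorem without proof (deferring to the original articles), so there is no in-text proof to compare against; judged on its own, your verification is complete and self-contained. The structure is sound: since $1=K^0\in V$, it suffices to show $V$ is closed under left multiplication by the algebra generators $A_i$, $B^j$, $C$, and the identity $C=K-A_lB^l$ together with the centrality of $K$ correctly reduces the case of $C$ to the cases of $A_i$ and $B^j$. The case analysis then closes up using only the relations for $A_iA_j$, $B^iB^j$ and $B^iA_j$; in particular you never need the relations for $B^iC$ and $CA_j$, since every occurrence of $C$ (from the $\delta^i_j C$ term in $B^iA_j$) is eliminated by substituting $K-A_lB^l$ and commuting $K$ past everything. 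The one place that deserved care --- the term $CB^kK^s$ arising from $B^i\cdot(A_jB^kK^s)$ --- is handled exactly right by writing $CB^k=B^kK-A_l(B^lB^k)$ and collapsing $B^lB^k$ with the quadratic relation in ${\cal B}$.
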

\begin{theorem} For any representation ${U(\cal L)} \to {\rm Mat}_m$
given by 
$$A_1\to {\bf a}_1,...,  A_p\to {\bf a}_p, \quad  B^1\to {\bf b}^1,...,B^p\to
{\bf b}^p, \quad C\to {\bf c}$$   
the formula
$$
X \circ Y =R(X) Y+X R(Y)-R(X Y),
$$
where
$$
R(x)={\bf a}_1 \,x \,{\bf b}^1+...+{\bf a}_p \,x\, {\bf b}^p+{\bf c}\, x,
$$
defines an associative product on
${\rm Mat}_m$ compatible with the usual matrix product.
\end{theorem}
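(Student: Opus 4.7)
The associativity of the deformed product $X\bullet Y = XY + \lambda\,X\circ Y$ for all $\lambda$ is equivalent to the vanishing, in the associator $(X\bullet Y)\bullet Z - X\bullet(Y\bullet Z)$, of the coefficients of each power of $\lambda$. The $\lambda^0$ coefficient is ordinary matrix associativity; the $\lambda^1$ coefficient is the compatibility identity
\[
(XY)\circ Z + (X\circ Y)Z = X(Y\circ Z) + X\circ(YZ);
\]
the $\lambda^2$ coefficient is associativity of $\circ$. I would begin by verifying compatibility directly from the definition $X\circ Y = R(X)Y + XR(Y) - R(XY)$: both sides collapse, after cancellation, to $R(X)YZ + XR(Y)Z + XYR(Z) - R(XYZ)$. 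Hence compatibility holds for any linear $R$ whatsoever, and the entire theorem reduces to proving associativity of $\circ$.

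For associativity of $\circ$, expand the associator $(X\circ Y)\circ Z - X\circ(Y\circ Z)$ and collect terms according to which outer factor ($X$ on the left, $Z$ on the right, both, or neither) appears. A careful bookkeeping rewrites this associator as the Hochschild coboundary
\[
(X\circ Y)\circ Z - X\circ(Y\circ Z) = -\beta(X,Y)\,Z + X\,\beta(Y,Z) + \beta(X,YZ) - \beta(XY,Z),
\]
where $\beta(U,V) := R(U)R(V) - R(U\circ V)$. In particular, $\beta \equiv 0$ is sufficient. I would then substitute $R(x) = \mathbf{a}_i\,x\,\mathbf{b}^i + \mathbf{c}\,x$ into $\beta(X,Y)$ and carry out the obvious cancellations. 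The surviving terms contain precisely the internal products $\mathbf{a}_j\mathbf{a}_i$, $\mathbf{b}^i\mathbf{b}^j$, $\mathbf{b}^i\mathbf{a}_j$, $\mathbf{b}^i\mathbf{c}$, $\mathbf{c}\,\mathbf{a}_j$; each is reduced by the defining relations of $U(\mathcal{L})$ listed in the preceding proposition, turning $\beta(X,Y)$ into a sum of words of length at most two in $\mathbf{a}_i, \mathbf{b}^j, \mathbf{c}$, sandwiching $X$ and $Y$.

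The main obstacle is showing that, after all substitutions, the coefficient of every independent monomial in $X$ and $Y$ vanishes. The part of $\beta$ lying purely in the span of $\mathcal{A}\oplus\mathcal{B}$ is controlled by the bimodule relations \eqref{bi1}--\eqref{bi3} together with associativity of $\mathcal{A}$ and $\mathcal{B}$, and these are already built into $U(\mathcal{L})$. The residual part involves the mixed monomial $A_iB^i$, and this is exactly where the centrality of $K = A_iB^i + C$ enters: writing out $[K,A_j]=0$ and $[K,B^i]=0$ in the basis $\{A_k, B^\ell, C, \mathbf{1}, A_kB^\ell\}$ produces the tensor identities among $\phi,\psi,\mu,\lambda,t,u,p,q$ that kill the cross-terms. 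A cleaner packaging of this step — which I would pursue in parallel — is to prove, using centrality of $K$, an identity of the form $R(X)R(Y) - R(X\circ Y) = Xf(Y) - f(XY) + f(X)Y$ for some explicit linear $f$, i.e.\ that $\beta$ is a Hochschild coboundary on $\mathrm{Mat}_m$; substituted into the cocycle expression above, this gives associativity of $\circ$ at once. The principal computational hurdle is matching, word-by-word, the residuals from the $\beta$-expansion against the identities extracted from $[K,A_j]=0$ and $[K,B^i]=0$, and confirming that centrality of $K$ provides exactly the relations missing from the weak $\mathcal{M}$-structure.
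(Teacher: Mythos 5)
The paper states this theorem without proof (the details are delegated to \cite{odsok1,odsok3}), so there is no in-text argument to compare against; judged on its own, your proposal is essentially sound and I could verify its key identities. Compatibility does indeed hold for an arbitrary linear $R$: both sides of the mixed associativity condition reduce to $R(X)YZ+XR(Y)Z+XYR(Z)-R(XYZ)$. Your associator identity
$(X\circ Y)\circ Z-X\circ(Y\circ Z)=-\beta(X,Y)Z+X\beta(Y,Z)+\beta(X,YZ)-\beta(XY,Z)$ with $\beta(U,V)=R(U)R(V)-R(U\circ V)$ is also correct, and it is the right reduction: associativity of $\circ$ follows once $\beta$ is a Hochschild coboundary.

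The one point you must commit to is that \emph{only} the coboundary route works: $\beta$ is \emph{not} identically zero here, so the version of your plan that tries to make ``the coefficient of every independent monomial in $X$ and $Y$ vanish'' would fail. Carrying out the substitution of $R(x)={\bf a}_i x{\bf b}^i+{\bf c}x$ and reducing all internal products ${\bf b}^i{\bf a}_j$, ${\bf a}_i{\bf a}_j$, ${\bf b}^i{\bf b}^j$, ${\bf b}^i{\bf c}$, ${\bf c}{\bf a}_j$ by the relations of $U({\cal L})$, the types ${\bf a}X{\bf a}Y{\bf b}$, ${\bf a}X{\bf b}Y{\bf b}$, ${\bf a}X{\bf c}Y{\bf b}$, ${\bf a}X{\bf a}Y$, ${\bf a}X{\bf b}Y$ cancel outright; the types ${\bf a}_\alpha XY{\bf b}^\gamma$, ${\bf a}_\alpha XY$ and $XY{\bf b}^m$ cancel precisely because centrality of $K$ forces $u^l_j=-t^l_j-\phi^l_{ik}\psi^{ki}_j$, $p^i=-\phi^i_{kl}\lambda^{lk}$, $q_j=-\mu_{ik}\psi^{ki}_j$ (these drop out of $[K,A_j]=[K,B^i]=0$, confirming your guess about where centrality enters). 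What survives is
$\beta(X,Y)=-\mu_{ki}\bigl(X{\bf b}^iY{\bf b}^k+{\bf b}^iX{\bf b}^kY-{\bf b}^iXY{\bf b}^k\bigr)-q_iX{\bf b}^iY+\mu_{ki}\lambda^{ik}XY$,
which is nonzero whenever $\mu\ne 0$ (e.g.\ in Example \ref{SerieA}) but is exactly the coboundary $Xf(Y)+f(X)Y-f(XY)$ of $f(x)=-\mu_{ki}{\bf b}^ix{\bf b}^k-q_i{\bf b}^ix+\mu_{ki}\lambda^{ik}x$. So your ``cleaner packaging in parallel'' is in fact the only viable branch, and with that explicit $f$ the argument closes.
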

\begin{example}\label{SerieA}
Suppose that $\cal A$ and $\cal B$ are generated by
 elements $A\in \cal A$ and  $B\in
\cal B$ such that $A^{p+1}=B^{p+1}=1$.  Assume that
$(B^i,A^{-i})=\epsilon^i-1$, $(1,C)=1$ and other scalar products
are equal to zero. Here $\epsilon$ is a primitive root of unity of
order $p$. Let
$$B^iA^j=\frac{\epsilon^{-j}-1}{\epsilon^{-i-j}-1}A^{i+j}+\frac{\epsilon^i-1}{\epsilon^{i+j}-1}B^{i+j}$$
for $i+j\ne0$ modulo $p$ and
$$B^iA^{-i}=1+(\epsilon^i-1)C,\qquad
CA^i=\frac{1}{1-\epsilon^i}A^i+\frac{1}{\epsilon^i-1}B^i, \qquad
B^iC=\frac{1}{\epsilon^{-i}-1}A^i+\frac{1}{1-\epsilon^{-i}}B^i$$
for $i\ne0$ modulo $p$. These formulas define an ${\cal
M}$-structure.
 
The central element has the following form
$$K=C+\sum_{0<i<p}\frac{1}{\epsilon^i-1}A^{-i}B^i.$$

Let $a$, $t$ be linear operators on some vector space. Assume that
$a^{p+1}=1$, $a t=\epsilon \,t a$ and that the operator $t-1$ is
invertible. It is easy to check that the formulas $$A\to a,\qquad
B\to \frac{\epsilon t-1}{t-1}a,\qquad C\to\frac{t}{t-1}$$ define a
representation of the algebra $U(\cal L)$.
\end{example}

\subsubsection{Case of semi-simple ${\cal A}$ and ${\cal B}$}

\begin{proposition}\label{prdim} Suppose that for a weak ${\cal M}$-structure the algebra ${\cal A}$ is semi-simple:
$$
{\cal A} = \oplus_{1\le i \le r} \, {\rm End}(V_i),  \qquad {\rm dim}\, V_i = m_i.
$$
Then ${\cal L}$ as a right ${\cal A}$-module is isomorphic to $ \oplus_{1\le i \le r} \, (V^{*}_
{i})^{2m_i}$ .
\end{proposition}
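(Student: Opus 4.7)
The plan is to produce a short exact sequence of right ${\cal A}$-modules $0 \to {\cal A} \to {\cal L} \to {\cal L}/{\cal A} \to 0$, identify the sub and quotient modules with the same module $\oplus_i (V_i^*)^{m_i}$, and conclude by the fact that modules over a semi-simple algebra are completely reducible.

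First I would verify that ${\cal A}\subset{\cal L}$ is a right ${\cal A}$-submodule: by property 2 of a weak ${\cal M}$-structure, the restriction of the right action ${\cal L}\times{\cal A}\to{\cal L}$ to ${\cal A}$ is just the internal multiplication, so ${\cal A}$ is closed under the action. By property 3 the form $(\cdot,\cdot)$ vanishes identically on ${\cal A}\times{\cal A}$, so ${\cal A}$ is isotropic. Since $\dim{\cal A}=\tfrac{1}{2}\dim{\cal L}$ and the form is non-degenerate on ${\cal L}$, a dimension count gives ${\cal A}^{\perp}={\cal A}$, and therefore the induced pairing $\langle\cdot,\cdot\rangle:{\cal L}/{\cal A}\times{\cal A}\to\C$ is perfect.

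Next I would use the identity $(vA_1,A_2)=(v,A_1A_2)$ to transport the right ${\cal A}$-action on ${\cal L}/{\cal A}$ through the perfect pairing. Concretely, the map $\Phi:{\cal L}/{\cal A}\to{\cal A}^*$ given by $\Phi([v])(A)=(v,A)$ is a linear isomorphism, and the identity above shows $\Phi([v]\cdot A_1)(A_2)=\Phi([v])(A_1A_2)$, i.e., $\Phi$ is an isomorphism of right ${\cal A}$-modules when ${\cal A}^*$ carries the right action dual to the \emph{left}-regular representation of ${\cal A}$ on itself.

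Now I would decompose both ends of the sequence using ${\cal A}=\oplus_{i}\mathrm{End}(V_i)=\oplus_i V_i\otimes V_i^*$. As a right ${\cal A}$-module, each summand $\mathrm{End}(V_i)\cong (V_i^*)^{m_i}$ (rows of a matrix algebra are simple right modules isomorphic to $V_i^*$, and there are $m_i$ of them), so ${\cal A}\cong\oplus_i(V_i^*)^{m_i}$. Similarly as a left ${\cal A}$-module ${\cal A}\cong\oplus_i V_i^{m_i}$; dualizing and using the identification from the previous step gives ${\cal L}/{\cal A}\cong\oplus_i(V_i^*)^{m_i}$ as right ${\cal A}$-modules. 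Since ${\cal A}$ is semi-simple the sequence splits, and therefore ${\cal L}\cong\oplus_i(V_i^*)^{2m_i}$ as right ${\cal A}$-modules, as claimed. The only delicate step is the correct bookkeeping of left versus right actions in passing through the duality $\Phi$; everything else is essentially a dimension count plus complete reducibility.
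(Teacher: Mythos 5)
Your proof is correct, and it takes a genuinely different route from the paper's. The paper first decomposes ${\cal L}$ into isotypic components ${\cal L}_i$ of type $V_i^*$, observes that ${\rm End}(V_i)\subset{\cal L}_i$ is isotropic and orthogonal to every ${\cal L}_j$ with $j\ne i$ (via $(v,a)=(v\,Id_i\,a)=(v\,Id_i,a)=0$), deduces from non-degeneracy the componentwise inequality $\dim{\cal L}_i\ge 2\,m_i^2$, and then forces equality by the global count $\dim{\cal L}=2\dim{\cal A}$. You instead work with the single submodule ${\cal A}\subset{\cal L}$: isotropy plus half-dimensionality gives ${\cal A}^\perp={\cal A}$, the invariance $(vA_1,A_2)=(v,A_1A_2)$ identifies ${\cal L}/{\cal A}$ with ${\cal A}^*$ (dual of the left regular representation) as a right module, and complete reducibility splits the extension. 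Both arguments rest on the same three ingredients (Property 3, non-degeneracy, $\dim{\cal L}=2\dim{\cal A}$), but yours yields the sharper structural statement ${\cal L}\cong{\cal A}\oplus{\cal A}^*$ canonically up to the choice of splitting, whereas the paper's is a purely numerical determination of the multiplicities $l_i$; the paper's version, on the other hand, localizes the information ($\dim{\cal L}_i=2m_i^2$ for each $i$) in a form that is immediately reusable in the proof of the subsequent theorem on the ${\cal A}\otimes{\cal B}$-module structure. Your left/right bookkeeping through the duality $\Phi$ is handled correctly: ${\cal A}$ as a right module over itself is $\oplus_i(V_i^*)^{m_i}$, and the dual of the left regular module $\oplus_i V_i^{m_i}$ is again $\oplus_i(V_i^*)^{m_i}$ as a right module, so the two halves match and the conclusion follows.
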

\begin{proof}
Since any right ${\cal A}$-module has the form $ \oplus_{1\le i \le r} \, (V^{*}_
{i})^{l_i}$ for some $l_1,\dots, \quad l_r \ge 0$, we have
${\cal L} = \oplus_{1\le i \le r} \,{\cal L}_i$, where ${\cal L}_i= (V^*_i)^{l_i}$. Note that ${\cal A} \subset {\cal L}$ and, moreover, $ {\rm End}(V_i) \subset {\cal L}_i$ for $i = 1,\dots, r$.
Besides, ${\rm End}(V_i)\perp {\cal L}_j$ for $i \ne j$. Indeed, we have $(v, a) = (v, \,Id_i\,a) = (v\,Id_i,\, a) = 0$ for $v \in {\cal L}_j$ and
$a \in {\rm End}(V_i)$, where $Id_i$ is the unity of the subalgebra ${\rm End}(V_i)$. Since $(\cdot, \cdot)$ is non-degenerate and
${\rm End}(V_i) \perp {\rm End}(V_i)$ by Property 3 of weak ${\cal M}$-structure, we have ${\rm dim}\,{\cal L}_i \ge 2 {\rm dim}\,{\rm End}(V_i)$.
But $\sum_i {\rm dim}\,{\cal L}_i = {\rm dim} \,{\cal L} = 2\,{\rm dim}\,{\cal A} = \sum_i 2 {\rm dim}\,{\rm End}(V_i)$ and we obtain ${\rm dim}\,{\cal L}_i = 2 {\rm dim}\,{\rm End}(V_i)$
for each $i = 1,\dots, r$ which is equivalent to the statement of the proposition. 
\end{proof}

\begin{theorem}\label{aij}
Suppose that a vector space $\cal L$ is equipped with a weak ${\cal
M}$-structure such that the associative algebras ${\cal A}$ and ${\cal B}$  are semi-simple: 
\begin{equation} \label{AABB}{\cal A}=\oplus_{1\le i\le
r}{\rm End}(V_i),\qquad {\cal B}=\oplus_{1\le j\le s}{\rm End}(W_j), \end{equation}
$$ {\rm dim}\,
V_i=m_i, \qquad \, {\rm dim}\, W_j=n_j.$$ 
Then ${\cal L}$ as an  ${\cal A} \otimes {\cal B}$-module is given by the formula
\begin{equation}\label{stL}
{\cal L} = \oplus_{1\le i\le r,1\le
j\le s}(V^{\star}_i\otimes W_j)^{a_{i,j}} 
\end{equation}
for some integers $a_{i,j}\ge
0,$
and
\begin{equation}\label{adm}   \sum_{j=1}^s
a_{i,j}\,n_j=2\,m_i,\qquad \sum_{i=1}^r a_{i,j}\,m_i=2\,n_j.
\end{equation}
\end{theorem}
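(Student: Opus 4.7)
The strategy is to combine classical Wedderburn theory for semi\-simple algebras with the dimension count already established in Proposition~\ref{prdim}. Since the left $\mathcal{B}$-action and the right $\mathcal{A}$-action on $\mathcal{L}$ commute, $\mathcal{L}$ is naturally a module over the tensor product $\mathcal{B}\otimes\mathcal{A}^{\mathrm{op}}$. Because $\mathcal{A}$ and $\mathcal{B}$ are semi\-simple with decompositions \eqref{AABB}, and because $\mathrm{End}(V_i)^{\mathrm{op}}\cong\mathrm{End}(V_i^{\ast})$, the algebra $\mathcal{B}\otimes\mathcal{A}^{\mathrm{op}}$ is itself semi\-simple with irreducible modules exactly the tensor products $V_i^{\ast}\otimes W_j$ for $1\le i\le r,\ 1\le j\le s$.

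By Wedderburn's theorem, $\mathcal{L}$ decomposes uniquely as
\[
\mathcal{L}\ =\ \bigoplus_{i,j}(V_i^{\ast}\otimes W_j)^{a_{i,j}}
\]
for some nonnegative integers $a_{i,j}$; this gives \eqref{stL}. The only remaining task is to verify the two dimension identities \eqref{adm}.

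For the first identity I forget the $\mathcal{B}$-action and regard $\mathcal{L}$ purely as a right $\mathcal{A}$-module. Under this restriction each summand $V_i^{\ast}\otimes W_j$ becomes a direct sum of $n_j=\dim W_j$ copies of $V_i^{\ast}$, so the total multiplicity of $V_i^{\ast}$ in $\mathcal{L}$ equals $\sum_j a_{i,j}n_j$. On the other hand Proposition~\ref{prdim} asserts that this multiplicity is exactly $2m_i$, which yields $\sum_j a_{i,j}n_j=2m_i$.

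For the second identity I would note that the hypotheses of Proposition~\ref{prdim} are symmetric in $\mathcal{A}$ and $\mathcal{B}$: the definition of a weak $\mathcal{M}$-structure treats the two subalgebras on equal footing (both contain $1$, the pairing is symmetric with $(\mathcal{A},\mathcal{A})=(\mathcal{B},\mathcal{B})=0$, and the two one-sided actions commute). Thus the proof given for Proposition~\ref{prdim}—orthogonal decomposition via the pairing $(\cdot,\cdot)$, combined with the inclusion $\mathrm{End}(W_j)\subset\mathcal{L}$—applies verbatim with the roles of $\mathcal{A}$ and $\mathcal{B}$ swapped, showing that $\mathcal{L}$ is isomorphic as a left $\mathcal{B}$-module to $\bigoplus_j W_j^{2n_j}$. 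Computing the multiplicity of $W_j$ in $\mathcal{L}$ in two ways gives $\sum_i a_{i,j}m_i=2n_j$, which is the second identity in \eqref{adm}. The main (minor) subtle point is checking that the symmetric version of Proposition~\ref{prdim} really does go through unchanged; once that is in hand, the result is an immediate dimension count.
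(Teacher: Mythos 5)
Your proposal is correct and follows essentially the same route as the paper: semisimplicity of $\mathcal{B}\otimes\mathcal{A}^{\mathrm{op}}$ gives the decomposition \eqref{stL}, and the dimension count of Proposition~\ref{prdim} applied to the $\mathcal{A}$-isotypic components yields the first identity in \eqref{adm}, with the second obtained by the symmetric argument. The paper phrases the count as $\dim\mathcal{L}_i=2m_i^2$ rather than as a multiplicity of $V_i^{\ast}$, but this is the same computation.
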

\begin{proof}
It is known that any ${\cal A} \otimes {\cal B}$-module has the form  \eqref{stL}. Applying Proposition \ref{prdim}, we obtain ${\rm dim}\,{\cal L}_i = 2\,m^2_i,$
where ${\cal L}_i  = \oplus_{1\le j \le s} (V^{*}_i \otimes W_j)^{a_{i,j}}$ . This
gives the first equation from \eqref{adm}. The second equation can be obtained similarly.
\end{proof}
\begin{remark}
Since the dimensions  of ${\cal A}$ and ${\cal B}$ coincide we have
$$
\sum_{i=1}^r m_i^2=\sum_{i=1}^s n_i^2.
$$
\end{remark}

\begin{definition} The $r\times s$-matrix $A=\{a_{i,j} \}$ from Theorem \ref{aij} is called the {\it matrix of multiplicities}
of the weak ${\cal M}$-structure.
\end{definition}

\begin{definition} The $r \times s$-matrix $A$ is called {\it decomposable} if there exist partitions
$\{1,\dots, r\} = I\cup I'$  and $\{1,\dots, s\} = J \cup J'$  such that $a_{i, j} = 0$ for $(i, j ) \in I\times J'$ or for $(i, j ) \in I'\times J$ .
\end{definition}
\begin{lemma} The matrix of multiplicities $A$ is indecomposable.
\end{lemma}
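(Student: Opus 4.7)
The plan is to argue by contradiction. Assume $A$ is nontrivially decomposable: there exist proper non-empty subsets $I\subset\{1,\dots,r\}$ and $J\subset\{1,\dots,s\}$, with complements $I'$ and $J'$, such that $a_{i,j}=0$ on $(I\times J')\cup(I'\times J)$. The bimodule decomposition \eqref{stL} then breaks into two sub-bimodules under the left $\mathcal{B}$- and right $\mathcal{A}$-actions:
\[
\mathcal{L}=\mathcal{L}_1\oplus\mathcal{L}_2,\qquad \mathcal{L}_1=\bigoplus_{(i,j)\in I\times J}(V_i^{*}\otimes W_j)^{a_{i,j}},\qquad \mathcal{L}_2=\bigoplus_{(i,j)\in I'\times J'}(V_i^{*}\otimes W_j)^{a_{i,j}}.
\]

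Next, I would locate the unit $1\in\mathcal{A}\cap\mathcal{B}$ inside this splitting via the simple-block idempotents. For each $i$ denote by $\mathrm{Id}_i\in\mathrm{End}(V_i)\subset\mathcal{A}\subset\mathcal{L}$ the unit of the $i$-th block of $\mathcal{A}$, and similarly $\mathrm{Id}_j\in\mathrm{End}(W_j)\subset\mathcal{B}$; set $e_I=\sum_{i\in I}\mathrm{Id}_i$, $e_{I'}=1-e_I$, $e_J=\sum_{j\in J}\mathrm{Id}_j$, $e_{J'}=1-e_J$. Since $1_{\mathcal{A}}=1_{\mathcal{B}}=1$, one has $1=e_I+e_{I'}=e_J+e_{J'}$. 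The key observation is that the right action of $\mathrm{Id}_i$ on $\mathcal{L}$ is the projector onto the $V_i^{*}$-isotypic summand, and the identity $\mathrm{Id}_i\cdot\mathrm{Id}_i=\mathrm{Id}_i$ inside $\mathcal{A}\subset\mathcal{L}$ shows that the vector $\mathrm{Id}_i\in\mathcal{L}$ itself belongs to $\bigoplus_l(V_i^{*}\otimes W_l)^{a_{i,l}}$. When $i\in I$ the vanishing $a_{i,l}=0$ on $l\in J'$ forces $\mathrm{Id}_i\in\mathcal{L}_1$, hence $e_I\in\mathcal{L}_1$; the mirror computation for the left $\mathcal{B}$-action yields $e_J\in\mathcal{L}_1$, and by symmetry $e_{I'},e_{J'}\in\mathcal{L}_2$.

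Comparing the two expressions for $1$ with the uniqueness of components in the direct sum $\mathcal{L}_1\oplus\mathcal{L}_2$, I obtain $e_I=e_J$. Since $e_I\in\mathcal{A}$ and $e_J\in\mathcal{B}$, Property~1 of the weak $\mathcal{M}$-structure gives $e_I=e_J\in\mathcal{A}\cap\mathcal{B}=\mathbb{C}\cdot 1$; writing $e_I=c\cdot 1$ and using $e_I^2=e_I$ in $\mathcal{A}$ forces $c\in\{0,1\}$, so $I\in\{\varnothing,\{1,\dots,r\}\}$ and the partition is trivial, contradicting our assumption. The one point needing genuine care, and where the main obstacle lies, is the placement of $\mathrm{Id}_i\in\mathcal{L}$ inside the correct isotypic summand; once this is justified from the standard bimodule theory over products of matrix algebras (together with the fact that the embedding $\mathcal{A}\hookrightarrow\mathcal{L}$ is equivariant for the right $\mathcal{A}$-action), the rest is linear-algebraic bookkeeping.
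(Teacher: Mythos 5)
Your argument is correct. The paper states this lemma without proof, so there is nothing to compare against; your reasoning is the natural one and uses exactly the ingredients the text makes available. The key step — that the block identity $\mathrm{Id}_i\in\mathcal{A}\subset\mathcal{L}$ lies in the $V_i^{*}$-isotypic component of $\mathcal{L}$ because Property~2 gives $\mathrm{Id}_i\cdot\mathrm{Id}_i=\mathrm{Id}_i$ and the right action of the central idempotent $\mathrm{Id}_i$ is the isotypic projector for the decomposition \eqref{stL} — is sound, and the same holds on the $\mathcal{B}$ side. The conclusion $e_I=e_J\in\mathcal{A}\cap\mathcal{B}=\C\cdot 1$ together with idempotency then forces the partition to be trivial. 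Two small points of completeness: first, you should dispose of the degenerate partitions where $I$ or $J$ is empty or full (e.g.\ $I$ proper but $J=\varnothing$); these are excluded immediately because a zero row or column of $A$ would contradict $\sum_j a_{i,j}n_j=2m_i>0$ and $\sum_i a_{i,j}m_i=2n_j>0$ from \eqref{adm}, which also justifies your opening reduction to $I$ and $J$ both proper and non-empty. Second, the projector description of the action of $\mathrm{Id}_i$ tacitly uses that $\mathcal{L}$ is a unital right $\mathcal{A}$-module (respectively unital left $\mathcal{B}$-module), which is what the decomposition \eqref{stL} of Theorem~3.26 already encodes, so this is fine to cite rather than re-derive.
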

Consider  \eqref{adm} as a system of linear equations for the vector $(m_1,\dots , m_r, n_1,\dots, n_s)$.
The matrix of the system  has the form
$Q = \begin{pmatrix} 2&-A\\-A^t&2\end{pmatrix}$.
According to the result by E. Vinberg \cite{Vinb}, if the kernel of
an indecomposable matrix $Q$ contains an integer positive vector,
them $Q$ is the Cartan matrix of an affine Dynkin diagram.
Moreover, it follows from the structure of $Q$ that this is
a simple laced affine Dynkin diagram with a partition of the
set of vertices into two subsets (white and black vertices on the pictures below) such that vertices of the same
subset are not connected.

\begin{theorem} Let $A$ be an $r\times s$-matrix of multiplicities
for an indecomposable weak \linebreak ${\cal M}$-structure. Then, after a
permutation of rows and columns and up to transposition, the matrix
$A$ is equal to one from the following list:

{\bf 1.} $A=(2)$. Here $r=s=1,$ $n_1=m_1=m$. The corresponding
Dynkin diagram is of the type $\tilde A_1.$

{\bf 2.} $a_{i,i}=a_{i,i+1}=1$ and $a_{i,j}=0$ for other pairs
$i,j$. Here $r=s=k\ge 2,$ the indices are taken modulo $k$,  and
$n_i=m_i=m$. The corresponding Dynkin diagram is $\tilde A_{2
k-1}.$

\bigskip
\bigskip

\begin{picture}(450,100)

\put(242,30){\circle{6.0}}

\put(162,30){\circle*{6.2}}

\put(82,30){\circle*{6.2}}

\put(2,30){\circle{6.0}}

\put(240.3,10){{\scriptsize $W_1$}}

\put(160.3,10){{\scriptsize $V_2$}}

\put(80.3,10){{\scriptsize $V_{k}$}}

\put(0.3,10){{\scriptsize $W_{k}$}}

\put(90,30){\line(1,0){16}} \put(114,30){\line(1,0){16}}
\put(138,30){\line(1,0){16}}

\put(10,30){\line(1,0){65}}

\put(170,30){\line(1,0){65}}



\put(6,31){\line(5,3){112}}

\put(240,31){\line(-5,3){112}}

\put(122,100){\circle*{6.0}}

\put(120,74){{\scriptsize $V_1$}}

\put(110,-20){{\bf $\tilde A_{2 k-1}$}}


\put(310,35){\circle{6.0}}

\put(400,35){\circle*{6.0}}


\put(313,39){\line(1,0){85}}

\put(313,32){\line(1,0){85}}

\put(290,32){{\scriptsize $V_1$}}

\put(410,32){{\scriptsize $W_1$}}

\put(345,-15){{\bf $\tilde A_{1}$}}

\end{picture}

\vspace{1cm}

{\bf 3.}   $A=$\begin{math} \bordermatrix{&\cr &1&1&0&0\cr
&1&0&1&0\cr &1&0&0&1}\end{math}. Here $r=3, \, s=4$ and $n_1=3
m,\,\, n_2=n_3=n_4=m, \,$ $\,m_1=m_2=m_3=2 m$.
The Dynkin diagram is $\tilde E_{6}:$ \vspace{1cm}

\begin{picture}(650,100)
\put(310,100){\circle{6.0}} \put(250,100){\circle*{6.0}}
\put(190,100){\circle{6.0}} \put(130,100){\circle*{6.0}}
\put(70,100){\circle{6.0}} \put(190,55){\circle*{6.0}}
\put(190,10){\circle{6.0}}

\put(303,110){{\scriptsize $W_4$}}

\put(257,100){\line(1,0){48}} \put(197,100){\line(1,0){48}}
\put(137,100){\line(1,0){48}} \put(77,100){\line(1,0){48}}
\put(190,16){\line(0,1){35}} \put(190,60){\line(0,1){35}}

\put(245,110){{\scriptsize $V_3$}}

\put(183,110){{\scriptsize$W_1$}}

\put(122,110){{\scriptsize $V_1$}}

\put(62,110){{\scriptsize $W_2$}}

\put(200,50){{\scriptsize $V_2$}}

\put(200,5){{\scriptsize $W_3$}}

\put(250,8){{\bf $\tilde E_{6}$}}

\end{picture}

\vspace{1cm}
{\bf 4.} $A=$\begin{math} \bordermatrix{&\cr &1&1&0&0&0\cr
&0&1&1&1&0\cr &0&0&0&1&1}\end{math}. \quad
Here $r=3$, $s=5$ \quad and \quad
$n_1=m,\quad n_2=3m, \quad n_3=2m,\quad n_4=3m,$  $\quad
n_5=m$, \qquad $m_1=2m,\quad m_2=4m, \quad m_3=2m$.
The Dynkin diagram is $\tilde E_{7}:$

\vspace{1cm}

\begin{picture}(650,80)
\put(370,70){\circle{6.0}} \put(310,70){\circle*{6.0}}
\put(250,70){\circle{6.0}} \put(190,70){\circle*{6.0}}
\put(70,70){\circle*{6.0}} \put(130,70){\circle{6.0}}
\put(10,70){\circle{6.0}} \put(190,25){\circle{6.0}}

\put(363,80){{\scriptsize $W_5$}}

\put(317,70){\line(1,0){48}} \put(257,70){\line(1,0){48}}
\put(197,70){\line(1,0){48}} \put(137,70){\line(1,0){48}}
\put(77,70){\line(1,0){48}} \put(17,70){\line(1,0){48}}
 \put(190,30){\line(0,1){35}}

\put(305,80){{\scriptsize $V_3$}}

\put(243,80){{\scriptsize$W_4$}}

\put(182,80){{\scriptsize $V_2$}}

\put(122,80){{\scriptsize $W_2$}}

\put(62,80){{\scriptsize $V_1$}}

\put(2,80){{\scriptsize $W_1$}}

\put(210,20){{\scriptsize $W_3$}}

\put(270,15){{\bf $\tilde E_{7}$}}

\end{picture}

{\bf 5.} $A=$\begin{math} \bordermatrix{&\cr &1&0&0&0&0\cr
&1&1&1&0&0\cr &0&0&1&1&0\cr &0&0&0&1&1}\end{math}. \qquad 
Here $r=4,s=5$ \quad and \qquad $n_1=4m,\quad n_2=3m,\quad
n_3=5m,\quad
 n_4=3m, $ $\quad n_5=m,\qquad m_1=2m,\quad m_2=6m,\quad
m_3=4m,$ \quad $ m_4=2m$.
The Dynkin diagram is $\tilde E_{8}:$

\vspace{1cm}
\begin{picture}(650,80)
\put(400,70){\circle{6.0}} \put(340,70){\circle*{6.0}}
\put(280,70){\circle{6.0}} \put(220,70){\circle*{6.0}}
\put(100,70){\circle*{6.0}} \put(160,70){\circle{6.0}}
\put(40,70){\circle{6.0}} \put(100,25){\circle{6.0}}
\put(-20,70){\circle*{6.0}}

\put(393,80){{\scriptsize $W_5$}}

\put(347,70){\line(1,0){48}} \put(287,70){\line(1,0){48}}
\put(227,70){\line(1,0){48}} \put(167,70){\line(1,0){48}}
\put(107,70){\line(1,0){48}} \put(47,70){\line(1,0){48}}
\put(-16,70){\line(1,0){48}}
 \put(100,30){\line(0,1){35}}

\put(335,80){{\scriptsize $V_4$}}

\put(273,80){{\scriptsize$W_4$}}

\put(212,80){{\scriptsize $V_3$}}

\put(152,80){{\scriptsize $W_3$}}

\put(92,80){{\scriptsize $V_2$}}

\put(32,80){{\scriptsize $W_1$}}

\put(110,20){{\scriptsize $W_2$}}

\put(-24,80){{\scriptsize $V_1$}}

\put(220,15){{\bf $\tilde E_{8}$}}

\end{picture}

{\bf 6.} $A=(1,1,1,1)$. Here $r=1,s=4$ and $n_1=n_2=n_3=n_4=m,\,\,
m_1=2 m$. The corresponding Dynkin diagram is $\tilde D_4.$

{\bf 7.} $a_{1,1}=a_{1,2}=a_{1,3}=1,\,\,$
$a_{2,3}=a_{2,4}=a_{3,4}=a_{3,5}=\cdots=a_{k-2,k-1}=a_{k-2,k}=1,\,\,$
$a_{k-1,k}=a_{k-1,k+1}=a_{k-1,k+2}=1,\,\,$
 and $a_{i,j}=0$ for other $(i,j)$.
 
Here we have $r=k-1,$ $s=k+2$ and $n_1=n_2=n_{k+1}=n_{k+2}=m,\,\,
n_3=\cdots=n_{k}=2 m,\,\, m_1=\cdots=m_{l}=2 m$. The corresponding
Dynkin diagram is $\tilde D_{2 k}$,\, where $k \ge 3.$

\begin{picture}(450,100)

\put(302,30){\circle*{6.0}}

\put(222,30){\circle{6.2}}

\put(142,30){\circle{6.2}}

\put(62,30){\circle*{6.0}}

\put(320,27){{\scriptsize $V_{k-1}$}}

\put(220.3,10){{\scriptsize $W_{k}$}}
\
\put(140.3,10){{\scriptsize $W_{3}$}}

\put(30.3,27){{\scriptsize $V_{1}$}}

\put(150,30){\line(1,0){16}} \put(174,30){\line(1,0){16}}
\put(198,30){\line(1,0){16}}

\put(70,30){\line(1,0){65}}

\put(230,30){\line(1,0){65}}



\put(58,28){\line(-1,-1){35}}

\put(58,32){\line(-1,1){35}}

\put(20,70){\circle{6.0}} \put(20,-10){\circle{6.0}}

\put(15,80){{\scriptsize $W_1$}} \put(15,-30){{\scriptsize $W_2$}}

\put(305,28){\line(1,-1){35}}

\put(305,32){\line(1,1){35}}

\put(343,70){\circle{6.0}} \put(343,-10){\circle{6.0}}
\put(340,80){{\scriptsize $W_{k+2}$}} \put(340,-30){{\scriptsize
$W_{k+1}$}}

\put(170,-40){{\bf $\tilde D_{2 k}$}}

\end{picture}
\vspace{2cm}

{\bf 8.} $a_{1,1}=a_{1,2}=a_{1,3}=1,\,\,$
$a_{2,3}=a_{2,4}=a_{3,4}=a_{3,5}=\cdots=a_{k-2,k-1}=a_{k-2,k}=1,\,\,$
$a_{k-1,k}=a_{k,k}=1,\,\,$
 and $a_{i,j}=0$ for other $(i,j)$.

Here we have $r=s=k\ge 3,\,\,$ $n_1=n_2=m,\,$ $ n_3=\cdots=n_{k}=2
m, \,\,m_1=\cdots=m_{k-2}=2 m, \,\,m_{k-1}=m_k=m$. The
corresponding Dynkin diagram is $\tilde D_{2 k-1}:$

\begin{picture}(450,100)

\put(302,30){\circle{6.0}}

\put(222,30){\circle*{6.2}}

\put(142,30){\circle{6.2}}

\put(62,30){\circle*{6.0}}

\put(320,27){{\scriptsize $W_k$}}

\put(220.3,10){{\scriptsize $V_{k-2}$}}

\put(140.3,10){{\scriptsize $W_{3}$}}

\put(30.3,27){{\scriptsize $V_{1}$}}

\put(150,30){\line(1,0){16}} \put(174,30){\line(1,0){16}}
\put(198,30){\line(1,0){16}}

\put(70,30){\line(1,0){65}}

\put(230,30){\line(1,0){65}}



\put(58,28){\line(-1,-1){35}}

\put(58,32){\line(-1,1){35}}

\put(20,70){\circle{6.0}} \put(20,-10){\circle{6.0}}

\put(15,80){{\scriptsize $W_1$}} \put(15,-30){{\scriptsize $W_2$}}

\put(305,28){\line(1,-1){35}}

\put(305,32){\line(1,1){35}}

\put(343,70){\circle*{6.0}} \put(343,-10){\circle*{6.0}}
\put(340,80){{\scriptsize $V_k$}} \put(340,-30){{\scriptsize
$V_{k-1}$}}

\put(170,-40){{\bf $\tilde D_{2 k-1}$}}

\end{picture}

\vspace{2cm}

 Note that if $k=3$, then
$a_{1,1}=a_{1,2}=a_{1,3}=1,\,\,$ $a_{2,3}=a_{3,3}=1$.

\end{theorem}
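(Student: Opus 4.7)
My strategy is to recognize $Q$ as the Cartan matrix of an affine root system, then exploit the bipartite block structure forced by the weak $\mathcal{M}$-structure to enumerate the possibilities.

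First, I would form the symmetric $(r+s)\times(r+s)$ matrix
$Q=\begin{pmatrix} 2\,{\bf 1}_r & -A \\ -A^t & 2\,{\bf 1}_s \end{pmatrix}$
and observe that \eqref{adm} is precisely the assertion that the positive integer vector $(m_1,\dots,m_r,n_1,\dots,n_s)$ belongs to $\ker Q$. Combined with the indecomposability of $A$ established in the preceding lemma, the cited theorem of Vinberg forces $Q$ to be the Cartan matrix of an affine root system. Since $Q$ is symmetric with non-positive integer off-diagonal entries, the only admissible types are the simply-laced ones $\tilde A_n$, $\tilde D_n$, $\tilde E_6, \tilde E_7, \tilde E_8$, together with the exceptional possibility of the $2\times 2$ block $\begin{pmatrix} 2 & -2 \\ -2 & 2 \end{pmatrix}$, which corresponds to $\tilde A_1$ and yields case 1 of the theorem.

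Next, the key additional constraint is the block form of $Q$ with $2\,{\bf 1}$ on each diagonal block: no edges connect two row-vertices or two column-vertices, i.e.\ the Dynkin diagram is bipartite with respect to the partition $\{V_1,\dots,V_r\}\sqcup\{W_1,\dots,W_s\}$. I would then run through the simply-laced affine diagrams and list the admissible 2-colorings. The cycle $\tilde A_n$ is bipartite iff $n+1$ is even, which excludes $\tilde A_{2k}$ and yields case 2 from $\tilde A_{2k-1}$. The diagrams $\tilde D_n$ and $\tilde E_n$ are trees, hence uniquely bipartite up to swapping colors; reading off the shape produces $\tilde D_4$ (case 6), $\tilde D_{2k}$ for $k\ge 3$ (case 7, both end-pairs of leaves lying in the $W$-class), $\tilde D_{2k-1}$ (case 8, one end-pair in $V$ and the other in $W$), and $\tilde E_6, \tilde E_7, \tilde E_8$ (cases 3, 4, 5 respectively).

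Finally, for each Dynkin diagram the multiplicities $(m_i,n_j)$ are determined up to a common positive scalar $m$ as the unique positive integer generator of the one-dimensional $\ker Q$ — that is, as the marks (Dynkin labels of the minimal imaginary root) of the corresponding affine root system. These marks are standard and match the values stated in each case. The main technical obstacle will be careful bookkeeping in the $\tilde D$ series, specifically distinguishing $\tilde D_{2k}$ from $\tilde D_{2k-1}$ according to the color class of each end-pair of leaves and aligning the standard marks with the explicit indexing used in cases 7 and 8; once this is checked, the matching in the remaining cases is routine.
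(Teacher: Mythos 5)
Your proposal follows essentially the same route as the paper: form the matrix $Q=\begin{pmatrix}2\,{\bf 1}&-A\\-A^t&2\,{\bf 1}\end{pmatrix}$ whose kernel contains the positive integer vector of dimensions, invoke Vinberg's theorem on indecomposable matrices to conclude that $Q$ is the Cartan matrix of a simply-laced affine Dynkin diagram that is bipartite with respect to the $V$/$W$ partition, and then enumerate the admissible two-colorings together with the marks of the minimal imaginary root. The only content you add beyond the paper's argument is the explicit observation that $\tilde A_n$ is bipartite iff $n$ is odd and that trees admit a unique bipartition, which is exactly the routine bookkeeping the paper leaves implicit; the proof is correct.
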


\subsubsection{Resume}

Suppose that $\cal L$ is an indecomposable ${\cal
M}$-structure with semi-simple algebras  \eqref{AABB}; then there exists an affine Dynkin diagram of the
type $A$, $D$, or $E$ such that:

{\bf 1.} There is a one-to-one correspondence between the set of
vertices and the set of vector spaces $\{V_1,...,V_r$,
$W_1,...,W_s\}$.

{\bf 2.} For any $i,j$ the spaces $V_i$, $V_j$ are not connected
by edges as well as $W_i$, $W_j$.

{\bf 3.} The vector $$(\dim V_1,...,\dim V_r,\dim W_1,...,\dim
W_s)$$ is equal to $m J,$ where $J$ is the minimal imaginary
positive root of the Dynkin diagram.
 
\begin{remark} It can be proved that for indecomposable ${\cal M}$-structures $m=1$.
\end{remark}

Given  an affine Dynkin diagram of the
$A$, $D$, or $E$-type,  to define the corresponding  ${\cal M}$-structure it remains to construct an
embedding $\cal A\to\cal L$, $\cal B\to\cal L$ and a scalar
product $(\cdot,\cdot)$ on the vector space $\cal L$.

If we fix an element $1\in\cal L$, then we can define the
embedding $\cal A\to\cal L$, $\cal B\to\cal L$ by the formula
$a\to 1a$, $b\to b1$ for $a\in\cal A$, $b\in\cal B$. We may assume
that $1$ is a generic element of $\cal L$.

Thus to study ${\cal M}$-structures corresponding to a Dynkin
diagram, one should take a generic element in ${\cal
L}=\oplus_{1\le i\le r,1\le j\le s}(V^{\star}_i\otimes
W_j)^{a_{i,j}},$ find its simplest canonical form by choosing
basis in the vector spaces $V_1,...,V_r,W_1,...,W_s$, calculate
the embedding $\cal A\to\cal L$, $\cal B\to\cal L$ and the scalar
product $(\cdot,\cdot)$ on the vector space $\cal L$.

The classification of generic elements $1\in\cal L$ up to choice
of basis in the vector spaces $V_1,...,V_r,W_1,...,W_s$ is
equivalent to the classification of irreducible representations of the
quivers corresponding to our affine Dynkin diagrams  and we can
apply known results about these representations.

\section{Non-abelian Hamiltonian formalism and trace Poisson brackets}

 A Poisson structure on a commutative algebra $A$ is a Lie algebra structure on $A$ given
by a Lie bracket
$$
\{\cdot,\,\cdot\}:A\times A \mapsto A,
$$
which satisfies the Leibniz rule
$$\{a ,\, b\,c\} = \{a,\, b\}\,c+ b\,\{a,\, c\},\qquad  a,b,c\in A,$$
with the right (and, hence, also with the left) argument.

It is well-known  that a naive translation of this definition
to the case of a non-commutative associative algebra $A$ is not very interesting because of lack
of examples different from the usual commutator (for prime rings it was shown in \cite{FL}).

\subsection{Non-abelian Poisson brackets on free associative algebras}

Here we consider a version of the Hamiltonian formalism for free associative algebra proposed in \cite{miksokcmp}.

Let ${\cal A}$ be free associative algebra $\C[x_1,\ldots,x_N\,]$ with the product $\circ$. 
For any $a \in {\cal A}$ we denote by $L_a$ (resp. $R_a$) the operators of left (resp. right) multiplication by $a$:
$$
L_a(X)=a\,X, \qquad R_a(X)=X\,a, \qquad X\in {\cal A}.
$$
The associativity of ${\cal A}$ is equivalent to the identity $[L_a, R_b]=0$
for any $a$ and $b$. Moreover,
$$
L_{ab}=L_{a}\,L_{b}, \quad R_{ab}=R_{b}\,R_{a},
\quad L_{a+b}=L_{a}+L_{b},   \quad R_{a+b}=R_{a}+R_{b}.
$$
\begin{definition}\label{localr}
We denote by ${\cal O}$ the associative algebra generated by
all operators of left and right multiplication by elements $x_i$.
This algebra is called the {\it algebra of local operators}.
\end{definition}

To introduce the concept of first integrals, we need an analog of
trace, which is not yet defined in the algebra ${\cal A}$. As a
matter of fact, in our calculations we use only two properties of
the trace, namely linearity and the possibility to perform cyclic
permutations in monomials. Let us define an equivalence relation
for elements of ${\cal A}$ in a standard way.

\begin{definition}\label{def311} Two elements $f_1$ and $f_2$ of ${\cal A}$ are said to be equivalent
and denote $f_1 \sim f_2$ iff $f_1$ can be obtained from $f_2$ by
cyclic permutations of generators in its monomials. We denote by ${\rm tr}\,f$ the equivalence class of the element $f$. 
\end{definition}

We are going to define a class of Poisson brackets on the functionals ${\rm tr}\,f$ (see Definition \ref{def311}). It is easy to see that  the vector space of such functionals can be identified with the quotient vector space ${\cal T}={\cal A}/[{\cal A},\,{\cal A}]$ and  
 the Poisson brackets have to be defined on  ${\cal T}$ \cite{CB}.

Let $a({\bf x})\in {\cal A}$, where ${\bf x}=(x_{1},\dots,x_{N})$ and $\delta {\bf x}=(\delta x_1,\dots,\delta x_N)$, $\delta x_i\in {\cal A}$. Then
$ \mbox{grad}_{\bf x}(a)\in {\cal A}^N $ is a vector 
$$
{\rm grad}_{\bf x}\,(a)=\Big({\rm grad}_{x_1}(a),\dots , {\rm grad}_{x_N}(a)\Big)
$$
uniquely defined by the formula
\[ \frac{d}{d\epsilon}a({\bf x}+\epsilon \, \delta \! {\bf x})|_{\epsilon
=0}\sim
\,\langle\delta \! {\bf x},\,\,  \mbox{grad}_{\bf x} \Big(\,a({\bf x})\Big)\rangle\, ,\]
where $\langle (p_{1},\dots,p_{N}), (q_{1},\dots,q_{N})\rangle = p_{1}\circ
q_{1}+\cdots+p_{N}\circ q_{N}.$
\begin{lemma}
 If $f \sim g,$ then ${\rm grad}_{\bf x}(f)={\rm grad}_{\bf x}(g).$
\end{lemma}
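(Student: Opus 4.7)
The plan is to unwind the defining formula for $\mathrm{grad}_{\bf x}$ on monomials and read off an explicit ``cyclic derivative'' expression whose invariance under cyclic permutation is manifest. First I would note that both sides of the identity $\mathrm{grad}_{\bf x}(f)=\mathrm{grad}_{\bf x}(g)$ are linear in their argument, so by writing $f,g$ as linear combinations of monomials it suffices to treat the case in which $f$ and $g$ are single monomials in $x_1,\ldots,x_N$.

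For a monomial $a=x_{i_1}x_{i_2}\cdots x_{i_n}$ the Leibniz rule gives
\[
\frac{d}{d\epsilon}a({\bf x}+\epsilon\,\delta{\bf x})\Big|_{\epsilon=0}
=\sum_{k=1}^{n} x_{i_1}\cdots x_{i_{k-1}}\,\delta x_{i_k}\,x_{i_{k+1}}\cdots x_{i_n}.
\]
Applying a cyclic permutation inside each summand to bring $\delta x_{i_k}$ to the right, and then collecting the coefficient of each $\delta x_j$, one obtains
\[
\mathrm{grad}_{x_j}(a)\;=\;\sum_{k:\,i_k=j} x_{i_{k+1}}x_{i_{k+2}}\cdots x_{i_n}x_{i_1}\cdots x_{i_{k-1}},
\]
which is the canonical representative of the gradient in $\mathcal{A}$ singled out by the definition. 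This is precisely the cyclic partial derivative with respect to $x_j$.

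The key observation is that the expression on the right depends on the tuple $(i_1,\ldots,i_n)$ only through its cyclic equivalence class: each summand already has the form ``read the word around the cycle starting just after an occurrence of $j$ and stopping just before it,'' and this prescription is unchanged under rotation $(i_1,\ldots,i_n)\mapsto(i_2,\ldots,i_n,i_1)$. Since the equivalence relation $\sim$ on monomials is generated by such single-step rotations, we conclude that $f\sim g$ implies $\mathrm{grad}_{x_j}(f)=\mathrm{grad}_{x_j}(g)$ for every $j$, and hence $\mathrm{grad}_{\bf x}(f)=\mathrm{grad}_{\bf x}(g)$.

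The only delicate point — and the main thing to articulate clearly — is the passage from the defining relation, which a priori determines $\mathrm{grad}_{\bf x}(a)$ only modulo $\sim$, to the honest element of $\mathcal{A}^N$ given by the explicit cyclic derivative formula; once this canonical lift is fixed, the cyclic invariance is structural rather than computational.
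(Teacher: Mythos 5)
Your argument is correct, and in fact the paper states this lemma without proof, so there is nothing to compare it against; the route you take (reduce to monomials by linearity, identify $\mathrm{grad}_{x_j}$ of a word with its cyclic derivative, observe that the cyclic derivative depends only on the cyclic class of the word) is the standard and natural one. One remark on your ``delicate point'': the defining relation does \emph{not} determine $\mathrm{grad}_{\bf x}(a)$ only modulo $\sim$ --- it determines it exactly as an element of ${\cal A}^N$, which is what the paper means by ``uniquely defined.'' The reason is that the relation is required to hold for \emph{all} $\delta{\bf x}$, and the pairing $(p,q)\mapsto \overline{p\circ q}\in {\cal A}/[{\cal A},{\cal A}]$ is non-degenerate on the free algebra; so once you exhibit the explicit cyclic-derivative vector satisfying the relation, it \emph{is} the gradient, and no choice of lift is being made. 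With that reading, your last paragraph can be replaced by the one-line appeal to uniqueness, and the proof is complete. (A trivial wording slip: to match the convention $\langle \delta{\bf x},\mathrm{grad}\rangle=\sum_j \delta x_j\circ \mathrm{grad}_{x_j}$ you should rotate $\delta x_{i_k}$ to the \emph{left} of each summand, though the resulting coefficient word is the same either way.)
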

It follows from the lemma that the map 
$ \mbox{grad}_{\bf x}: {\cal T} \to {\cal A}^N  $  is well-defined.

The  Poisson brackets on ${\cal T}$ are defined by the formula
\begin{equation}\label{nabr}  \{f,\,g\}=\langle\mbox{grad}_{\bf x}\, f, \,\,  \Theta ( \mbox{grad}_{\bf x}\, g)\rangle  ,\qquad
f,g\in {\cal T}, 
\end{equation}
where
\begin{equation}\label{skewsym}
\{f,g\}+\{g,f\}\sim 0, \qquad \{f,\{g,h\}\}+\{g,\{h,f\}\}+\{h,\{f,g\}\}\sim 0
\end{equation}
 for any elements $f,g,h\in {\cal T}$.  
Here a skew-symmetric Hamiltonian operator $\Theta$ is supposed to be an element of  ${\cal O}\otimes \mathfrak{gl}_{N}.$  
\begin{remark}
Actually, the right hand side of \eqref{nabr} is a well-defined element of ${\cal A}$ and we take its equivalence class for the left hand side.  The left hand sides of \eqref{skewsym} are regarded as elements of  ${\cal A}.$
\end{remark}
It is easy to show that
 \eqref{skewsym} is equivalent to
$$
\Theta ^{\star}=-\Theta
$$
i.e. $\Theta $ is a skew--symmetric operator with respect to the involution defined by
$$
L^\star _a =R_{a},\qquad  R^\star _a =L_{a}.
$$

\begin{definition} Brackets \eqref{nabr} are called {\it non-abelian Poisson brackets}.
\end{definition}

Hamiltonian system of equations on ${\cal A}$ has the form
\begin{equation}\label{hform}
\frac{d\, {\bf x}}{d\, t}=\Theta \Big(\mbox{grad}_{\bf x}  H\Big),
\end{equation}
where $H({\bf x}) \in {\cal A}/[{\cal A},\,{\cal A}] $ is a Hamiltonian and $\Theta$ is a
Hamiltonian operator. The ODE system \eqref{hform} has the form
\begin{equation}\label{geneq}
\frac{d x_{\alpha}}{d t}=F_{\alpha}({\bf x}), \qquad {\bf x}=(x_1,...,x_N),
\end{equation}
where $F_{\alpha}$ are (non-commutative) polynomials with scalar  constant coefficients. Formula \eqref{geneq} does not mean that the generators $x_i$ of the algebra $\cal A$ 
evolve in $t$. This formula defines a derivation $D_F$ of ${\cal A}$ such that $D_F(x_i)=F_i.$ However, if we replace the generators $x_i$ by $m\times m$ matrices ${\bf x}_i$, then \eqref{geneq} becomes a usual system of ODEs for the entries of the matrices ${\bf x}_i$.

\subsubsection{Non-abelian Hamiltonian operators}

Consider linear Hamiltonian operators. It means that the entries of the
Hamiltonian operator $\Theta$ are given by
\begin{equation}\label{thetaij}
\Theta_{ij}=b_{ij}^{k} \, R_{x_{k}}+\bar b_{ij}^{k}\, L_{x_{k}}.  
\end{equation}
The skew-symmetry of $\Theta$ implies
\begin{equation}\label{theta1}
\bar b_{ij}^{k}=- b_{ji}^{k}.
\end{equation}

\begin{proposition} An operator $\Theta$ given by  \eqref{thetaij},
 \eqref{theta1} is Hamiltonian iff $b_{ij}^{k}$ are the structural constant of an associative algebra.
\end{proposition}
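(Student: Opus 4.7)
The plan is as follows. Skew-symmetry of the bracket \eqref{nabr} on $\cal T$ amounts to $\Theta^\star=-\Theta$, and for $\Theta$ of the form \eqref{thetaij} this is exactly the condition $\bar b_{ij}^k=-b_{ji}^k$, i.e.\ \eqref{theta1}. So the first identity in \eqref{skewsym} is automatic, and the content of the proposition is the Jacobi identity. The idea is to test Jacobi on a small collection of trace monomials that is rich enough to capture associativity, and then extend to all of $\cal T$.

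First I would compute $\{{\rm tr}\,x_i,{\rm tr}\,x_j\}$ and $\{{\rm tr}\,x_i,{\rm tr}(x_jx_k)\}$ using ${\rm grad}_{x_l}({\rm tr}\,x_j)=\delta_{lj}$ and ${\rm grad}_{x_l}({\rm tr}(x_jx_k))=\delta_{lj}x_k+\delta_{lk}x_j$. Together with $\Theta_{ij}(1)=b_{ij}^k x_k-b_{ji}^k x_k$, these give
\[
\{{\rm tr}\,x_i,{\rm tr}\,x_j\}=(b_{ij}^k-b_{ji}^k)x_k,\qquad
\{{\rm tr}(x_ix_j),{\rm tr}\,x_k\}=x_j\,\Theta_{ik}(1)+x_i\,\Theta_{jk}(1).
\]
The nontrivial test for Jacobi is on triples of the form $({\rm tr}\,x_a,{\rm tr}\,x_b,{\rm tr}\,x_c)$ seen through one more bracket, i.e.\ on the degree-$3$ element obtained by $\{{\rm tr}\,x_a,\{{\rm tr}\,x_b,{\rm tr}\,x_c\}\}+\text{cyclic}$, interpreted modulo $[{\cal A},{\cal A}]$.

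The main step is to compute this cyclic sum explicitly: each inner bracket produces a linear combination of $x_k$'s, and the outer bracket turns such a term into a sum of quadratic monomials $x_p x_q$ weighted by products of two structure tensors. Collecting the result as an element of $\cal T$, one uses cyclic invariance to identify $x_p x_q\sim x_q x_p$ only inside ${\rm tr}$, so coefficients of genuinely independent monomials must vanish separately. After using the skew-symmetry relation $\bar b_{ij}^k=-b_{ji}^k$ to consolidate terms, the vanishing of the coefficient of $x_p x_q$ in the Jacobi sum reduces precisely to
\[
b_{ij}^{s}\,b_{sk}^{m}=b_{jk}^{s}\,b_{is}^{m},
\]
which is the associativity of the multiplication ${\bf e}_i\circ{\bf e}_j=b_{ij}^k{\bf e}_k$. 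This gives the ``only if'' direction.

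For the converse, assuming associativity, I would verify Jacobi on arbitrary trace monomials by induction on the total degree. The key point is that the operator $\Theta$ has entries in the algebra $\cal O$ of local operators (Definition \ref{localr}), and the action of $\Theta_{ij}$ on a monomial can be decomposed using $L_{ab}=L_aL_b$, $R_{ab}=R_bR_a$, so the double bracket of trace monomials factors through bracketings on shorter monomials. The main obstacle will be the combinatorial bookkeeping: the non-commutativity means each application of $\Theta$ splits a monomial at every occurrence of every generator, producing many ``split-and-insert'' terms, and one must track how cyclic equivalence on the final result cancels exactly the non-associative remainders. This is most cleanly organized by interpreting \eqref{thetaij} as coming from a double Poisson bracket in Van den Bergh's sense, after which Jacobi at the level of $\cal T$ follows from associativity by a standard diagrammatic check; but if one stays within the framework of the excerpt, the verification is a direct, if lengthy, manipulation using only \eqref{theta1} and the associativity of $b_{ij}^k$.
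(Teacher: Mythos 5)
You are in the position of having to supply a proof that the survey itself omits (the proposition is stated without proof), but the ``only if'' half of your argument fails at its central step. The Jacobi cyclic sum on the triple $({\rm tr}\,x_a,{\rm tr}\,x_b,{\rm tr}\,x_c)$ is \emph{linear} in the generators, not quadratic: since ${\rm grad}({\rm tr}\,x_c)$ is constant and the coefficients of $\Theta$ are linear in $\bf x$, one has $\{{\rm tr}\,x_b,{\rm tr}\,x_c\}=c_{bc}^{k}x_k$ with $c_{bc}^{k}=b_{bc}^{k}-b_{cb}^{k}$, and then $\{{\rm tr}\,x_a,\{{\rm tr}\,x_b,{\rm tr}\,x_c\}\}=c_{bc}^{k}c_{ak}^{m}x_m$; no quadratic monomials $x_px_q$ ever appear. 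The vanishing of the cyclic sum is therefore exactly the Jacobi identity for the antisymmetrized constants $c_{ij}^{k}$, i.e.\ the condition that $X\circ Y-Y\circ X$ be a Lie bracket, which is strictly weaker than associativity: $b_{ij}^{k}=\tfrac12\,\varepsilon_{ijk}$ passes your test but is not associative. The degeneration persists one degree higher: on a triple of total degree $4$ such as $({\rm tr}(x_px_q),{\rm tr}\,x_i,{\rm tr}\,x_j)$ the Jacobi sum consists of quadratic monomials, and since ${\rm tr}(x_ax_b)={\rm tr}(x_bx_a)$ every occurrence of $b$ again collapses to $c$; for instance $\{{\rm tr}\,x_j,{\rm tr}(x_px_q)\}=\Theta_{jp}(x_q)+\Theta_{jq}(x_p)\sim c_{jp}^{k}\,{\rm tr}(x_qx_k)+c_{jq}^{k}\,{\rm tr}(x_px_k)$. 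The first triples that can detect associativity have total degree $5$, e.g.\ $({\rm tr}(x_ax_bx_c),{\rm tr}\,x_d,{\rm tr}\,x_e)$ or $({\rm tr}(x_ax_b),{\rm tr}(x_cx_d),{\rm tr}\,x_e)$: their Jacobi sums contain cubic monomials, and since $x_qx_sx_k$ and $x_qx_kx_s$ lie in different cyclic classes, the coefficients $b_{pr}^{k}$ and $b_{rp}^{k}$ enter independently and the relation $b_{ij}^{s}b_{sk}^{m}=b_{jk}^{s}b_{is}^{m}$ can actually be extracted. Your ``only if'' computation must be redone on such a triple.

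The converse is only sketched, but the strategy is viable; note that ${\cal T}$ carries no product, so there is no Leibniz rule allowing you to check Jacobi on generators alone, and a global argument really is needed. The two clean options are the ones you hint at: either verify that for associative $b$ the linear double bracket of Subsection 3.3 satisfies the double Jacobi identity and use that $\mu(\ldb \cdot,\cdot \rdb)$ always descends to a Lie bracket on ${\cal A}/[{\cal A},{\cal A}]$; or observe, as in the Remark following Theorem \ref{th51}, that substituting $m\times m$ matrices turns \eqref{nabr} into the Lie--Kirillov--Kostant bracket of the Lie algebra associated with ${\rm Mat}_m\otimes{\cal A}$, for which Jacobi is automatic, and that an identity between noncommutative polynomials modulo cyclic equivalence holds once its matrix trace vanishes for all $m$. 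Either way the converse requires more than the ``direct, if lengthy, manipulation'' you defer.
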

\begin{corollary} Any pair of compatible associative algebras {\rm (}see Subsection 3.2.2 {\rm )} generates a pair of compatible linear non-abelian Poisson brackets.
\end{corollary}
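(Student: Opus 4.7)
The plan is to exploit the fact that the construction of the Hamiltonian operator $\Theta$ from the structure constants in \eqref{thetaij}--\eqref{theta1} is \emph{linear} in those structure constants, and then invoke the preceding proposition.

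First, I would set up notation. Let ${\bf V}$ be the underlying vector space of the two compatible associative algebras, with basis ${\bf e}_1,\dots,{\bf e}_N$, and denote by $b_{ij}^k$ and $c_{ij}^k$ the structure constants of the products $\star$ and $\circ$, respectively. By the definition of compatibility in Subsection 3.2.2, for any $\lambda\in\C$ the family of constants $b_{ij}^k+\lambda\,c_{ij}^k$ also defines an associative product on ${\bf V}$. According to the proposition just established, each choice of associative structure constants yields a linear Hamiltonian operator on $\cal A=\C[x_1,\dots,x_N]$ via \eqref{thetaij}--\eqref{theta1}; denote by $\Theta^{(\star)}$ and $\Theta^{(\circ)}$ the operators corresponding to $b_{ij}^k$ and $c_{ij}^k$, and by $\{\cdot,\cdot\}_\star$, $\{\cdot,\cdot\}_\circ$ the resulting linear non-abelian Poisson brackets on ${\cal T}={\cal A}/[{\cal A},{\cal A}]$.

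Next, I would observe that the assignment ``structure constants $\mapsto$ Hamiltonian operator'' defined by \eqref{thetaij}--\eqref{theta1} is manifestly linear: if $\widetilde b_{ij}^k=b_{ij}^k+\lambda\,c_{ij}^k$, then the corresponding operator is
\[
\widetilde\Theta_{ij}= \widetilde b_{ij}^k R_{x_k} - \widetilde b_{ji}^k L_{x_k} = \Theta^{(\star)}_{ij}+\lambda\,\Theta^{(\circ)}_{ij}.
\]
Since $\widetilde b_{ij}^k$ are the structure constants of an associative algebra for every $\lambda$, the preceding proposition implies that $\Theta^{(\star)}+\lambda\,\Theta^{(\circ)}$ is a Hamiltonian operator for every $\lambda$. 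Via the defining formula \eqref{nabr}, the associated bracket is exactly $\{\cdot,\cdot\}_\star+\lambda\,\{\cdot,\cdot\}_\circ$, so this one-parameter family of brackets satisfies the skew-symmetry and Jacobi conditions \eqref{skewsym} on $\cal T$ for all $\lambda$. This is precisely the statement that $\{\cdot,\cdot\}_\star$ and $\{\cdot,\cdot\}_\circ$ are compatible non-abelian Poisson brackets.

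There is essentially no obstacle: the entire content reduces to the linearity of the map from associative structure constants to Hamiltonian operators, together with the linearity of formula \eqref{nabr} in $\Theta$. The only subtle point worth verifying explicitly is that the skew-symmetry relation \eqref{theta1} is itself linear in $b_{ij}^k$, so that the pencil $\Theta^{(\star)}+\lambda\,\Theta^{(\circ)}$ automatically satisfies $\Theta^\star=-\Theta$. Once this is noted, the corollary follows immediately from the proposition without any additional computation.
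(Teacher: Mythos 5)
Your proposal is correct and is exactly the argument the paper intends: the corollary is stated as an immediate consequence of the preceding proposition, relying precisely on the linearity of the assignment $b_{ij}^k \mapsto \Theta$ in \eqref{thetaij}--\eqref{theta1} and the fact that compatibility of the associative products means the pencil of structure constants is associative for every $\lambda$. No gaps.
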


\begin{example}  Let $N=2$. Consider the following compatible associative products:
$$
x_{1}\star x_{1}=x_{1}, \quad x_{1}\star x_{2}=x_{2}\star x_{1}=x_{2}\star
x_{2}=0
$$
and
$$
x_{1}\circ x_{1}=x_{2}, \quad x_{1}\circ x_{2}=x_{2}\circ x_{1}=x_{2}\circ
x_{2}=0.
$$
The corresponding Poisson brackets $\{\cdot,\cdot\}_i$ have the Hamiltonian operators 
$$
\Theta_i=\begin{pmatrix} R_{x_{i}}-L_{x_{i}} & 0 \\ 0 & 0\end{pmatrix}, \qquad  i=1,2.
$$
The pencil  $\{\cdot,\cdot\}_1 + \lambda \{\cdot,\cdot\}_2$ has a Casimir function $C={\rm tr}\,(x_1+\lambda x_2)^3$, which produces the 
Hamiltonians $H_1 = -{\rm tr}\, (x_1^2 x_2)$ and  $\ds H_2=\frac{1}{3}{\rm tr}\, (x_1^3)$ commuting with respect to both brackets (see Theorem \ref{biH}). The formula
$$\frac{d\, {\bf x}}{d\, t}=\Theta_1 \Big(\mbox{grad}_{\bf x}  H_1\Big) = \Theta_2 \Big(\mbox{grad}_{\bf x}  H_2\Big)
$$
gives us a bi-Hamiltonian representation for the system
$$
\frac{d\, x_1}{d\, t} = x_1^2 x_2-x_2 x_1^2, \qquad \frac{d\, x_2}{d\, t} = 0
$$
already mentioned in Examples \ref{manak},\,\,\ref{EEx2} and in Application 3.1.
\end{example}

For  quadratic Poisson brackets the general form of the Hamiltonian operator is given by
\begin{equation}\label{tet}
\Theta_{i,j}=a^{pq}_{ij} L_{x_p} L_{x_q} - a^{qp}_{ji} R_{x_p}
R_{x_q} + r^{pq}_{ij} L_{x_p} R_{x_q},
\end{equation}
where $a^{pq}_{ij}$ and $r^{pq}_{ij}$ are some (complex) constants,
$r^{pq}_{ij}=-r^{qp}_{ji}$,
$p,q,i,j=1,...,N,$ and the summation  over repeated indices is assumed.

\begin{proposition} Formula  \eqref{tet} define a Poisson bracket iff the following relations hold:
\begin{equation}\label{r1}
r^{\sigma \epsilon}_{\alpha\beta}=-r^{\epsilon\sigma}_{\beta\alpha},
\end{equation}
\begin{equation}\label{r2}
r^{\lambda\sigma}_{\alpha\beta}
r^{\mu\nu}_{\sigma\tau}+r^{\mu\sigma}_{\beta\tau} r^{\nu\lambda}_{\sigma\alpha}+r^{\nu\sigma}_{\tau\alpha} r^{\lambda\mu}_{\sigma\beta}=0,
\end{equation}
\begin{equation}\label{r3}
a^{\sigma\lambda}_{\alpha\beta} a^{\mu\nu}_{\tau\sigma}=a^{\mu\sigma}_{\tau\alpha} a^{\nu\lambda}_{\sigma\beta},
\end{equation}
\begin{equation}\label{r4}
a^{\sigma\lambda}_{\alpha\beta} a^{\mu\nu}_{\sigma\tau}=a^{\mu\sigma}_{\alpha\beta} r^{\lambda\nu}_{\tau\sigma}+a^{\mu\nu}_{\alpha\sigma}
r^{\sigma\lambda}_{\beta\tau}.
\end{equation}
and
\begin{equation}\label{r5}
a^{\lambda\sigma}_{\alpha\beta} a^{\mu\nu}_{\tau\sigma}=a^{\sigma\nu}_{\alpha\beta} r^{\lambda\mu}_{\sigma\tau}+a^{\mu\nu}_{\sigma\beta}
r^{\sigma\lambda}_{\tau\alpha}.
\end{equation}
\end{proposition}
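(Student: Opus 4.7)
The plan is to check the two defining axioms of a Poisson bracket separately: the skew-symmetry (\ref{skewsym}) and the Jacobi identity. I expect skew-symmetry to yield exactly (\ref{r1}), and the Jacobi identity to yield (\ref{r2})--(\ref{r5}).

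First I would handle skew-symmetry. As explained in the text, this is equivalent to $\Theta^\star=-\Theta$ under the involution $L_a^\star=R_a$, $R_a^\star=L_a$. Applied to the $(i,j)$-entry of (\ref{tet}) we get
\[
(\Theta_{ij})^\star=a^{pq}_{ij}R_{x_q}R_{x_p}-a^{qp}_{ji}L_{x_q}L_{x_p}+r^{pq}_{ij}L_{x_q}R_{x_p},
\]
and the condition $(\Theta_{ji})^\star=-\Theta_{ij}$ becomes, after relabeling $p\leftrightarrow q$, a tautology on the two $a$-terms (the form (\ref{tet}) was chosen precisely so that the $a$-part is automatically adjoint-skew) and reduces to $r^{pq}_{ij}=-r^{qp}_{ji}$, which is (\ref{r1}).

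Next I would attack the Jacobi identity. Since both sides are linear on $\mathcal T=\mathcal A/[\mathcal A,\mathcal A]$ and the bracket on $\mathcal T$ is determined by its values on a system of generating functionals, it suffices to check Jacobi on test elements of the form $f_\alpha=\mathrm{tr}(x_\alpha)$ and then on $\mathrm{tr}(x_\alpha x_\beta)$ to pick up the higher-order information; in fact, checking Jacobi on triples $(f_\alpha,f_\beta,f_\gamma)$ and on triples of the form $(\mathrm{tr}(x_\alpha x_\beta),f_\gamma,f_\delta)$ is enough to close the system. The gradient of $\mathrm{tr}(x_\alpha)$ is a unit coordinate vector, so $\{f_\alpha,f_\beta\}$ is just $\Theta_{\alpha\beta}(1)$, a quadratic expression in the $x_i$; the second bracket then produces a quartic expression, and summing cyclically yields a sum of monomials $x_px_qx_rx_s$ whose coefficients must vanish modulo cyclic equivalence.

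The key step is to sort these quartic contributions by their $L$/$R$ pattern. The purely $L\circ L$ contributions give a bilinear relation on the $a$-tensor which, after applying the cyclic symmetry, collapses to (\ref{r3}). The purely $R\circ R$ contributions give the same relation (by construction of (\ref{tet})). The purely $L R$-type contributions decouple from the $a$-tensor and produce the classical Yang--Baxter-like equation (\ref{r2}). The mixed $(a,r)$ contributions split into two inequivalent shapes, depending on whether the $r$-tensor contracts with the inner or the outer index of the $a$-tensor in the iterated bracket; these yield (\ref{r4}) and (\ref{r5}) respectively.

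The main obstacle I anticipate is the bookkeeping: distinct terms in the cyclic Jacobi sum produce monomials that coincide only modulo $[\mathcal A,\mathcal A]$, so one must carefully reduce all quartic monomials to a chosen cyclic-normal form before extracting coefficients, and one must verify that no further identities are hidden in monomials that are products of three distinct generators. A secondary subtlety is sufficiency: once (\ref{r1})--(\ref{r5}) are shown necessary from the test functionals above, one must confirm that they propagate to Jacobi for arbitrary elements of $\mathcal T$. This follows from the fact that the gradient satisfies a non-commutative Leibniz-type rule, so any trace functional is built as a $\C$-linear combination of the $\mathrm{tr}(x_{\alpha_1}\cdots x_{\alpha_k})$; an induction on $k$, together with the already-established relations on the generating functionals, extends Jacobi to all of $\mathcal T$.
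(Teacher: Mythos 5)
Your treatment of skew-symmetry is fine: computing $\Theta^{\star}$ for \eqref{tet} under $L_a^{\star}=R_a$, $R_a^{\star}=L_a$ does make the $a$-part self-cancelling and reduces $(\Theta_{ji})^{\star}=-\Theta_{ij}$ to \eqref{r1}. Note, however, that the paper gives no proof of this proposition at all, so there is nothing to match your argument against except the surrounding machinery; judged on its own, your Jacobi half has genuine gaps. A small but telling one first: $\{\{f_\alpha,f_\beta\},f_\gamma\}$ is \emph{cubic}, not quartic, in the generators ($\mathrm{grad}$ of the quadratic $\Theta_{\alpha\beta}(1)$ is linear, and it is then paired with the quadratic $\Theta(e_\gamma)$), and correspondingly \eqref{r2}--\eqref{r5} are quadratic, not cubic, in the tensors. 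This suggests the computation you describe was not actually carried out.

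The serious gap is sufficiency. Your closing "induction on $k$" has nothing to run on: ${\cal T}={\cal A}/[{\cal A},{\cal A}]$ carries no product, so there is no Leibniz rule to propagate the Jacobi identity from low-degree test functionals to arbitrary ones --- the paper makes exactly this point about brackets of the form \eqref{evpuas}. The standard repair, and the one the paper itself points to at the end of Section 3.3, is to verify the identity at a level where Leibniz \emph{is} available: either for the quadratic double bracket $\ldb x_\alpha,x_\beta\rdb$ (where the Leibniz rule is an axiom, so checking the double Jacobi identity on the generators alone genuinely suffices and yields \eqref{r1}--\eqref{r5}, after which the induced bracket on ${\cal T}$ inherits Jacobi), or for the trace brackets \eqref{Poisson} on $m\times m$ matrices for all $m$. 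There is also an unproved assertion on the necessity side: you claim the cyclic Jacobi sum decouples by $L/R$ pattern into the four separate relations, but the inner bracket $\{f_\alpha,f_\beta\}=\Theta_{\alpha\beta}(1)$ does not see the $L/R$ structure at all ($L_{x_p}L_{x_q}$ and $L_{x_p}R_{x_q}$ act identically on $1$); the separation of the $aa$-, $ar$- and $rr$-contributions must come from where the \emph{outer} $\Theta$ inserts its variables and from the linear independence of distinct cyclic word classes, and a priori your chosen test triples only force the cyclic aggregate of the conditions rather than each of \eqref{r2}--\eqref{r5} individually. Until that bookkeeping is done, neither direction of the "iff" is established.
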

\begin{remark}
Conditions \eqref{r1} and \eqref{r2} mean that the tensor ${\bf r}$ satisfies the associative Yang-Baxter {\rm(}or Rota-Baxter{\rm)} equation {\rm \cite{Rota,Agui}}. 
\end{remark}

\subsection{Trace Poisson brackets}

The non-abelian brackets are defined between traces only. But if $x_1,...,x_N$ are $m\times m$-matrices, we can extend these brackets to the matrix entries.

We consider $N m^2$-dimensional Poisson brackets defined on functions in entries $x^j_{i,\alpha}$ of 
$m\times m$-matrices $x_{\alpha}$. 
Here and in the sequel, we use Latin indices ranging from 1 to $m$ for the matrix entries and Greek
indices ranging from 1 to $N$ to label the matrices.

\begin{definition} Such a bracket is called {\it trace Poisson bracket} iff
\begin{itemize}
\item the bracket is ${\rm GL}(m)$-invariant;
\item for  any two matrix polynomials $P_i(x_1,...,x_N), \quad i=1,2\,\,$ with coefficients in $\C$ the bracket between its traces  is the trace of some matrix 
polynomial $P_3$.
\end{itemize}
\end{definition}
 
\begin{theorem}\label{th51} Any constant trace Poisson bracket has the form 
\begin{equation}\label{Poissoncon}
\{x^j_{i,\alpha},x^{j^{\prime}}_{i^{\prime},\beta}\}=
 \delta^j_{i^{\prime}} \delta^{j^{\prime}}_i c_{\alpha \beta};
\end{equation}
 
 Any linear trace Poisson bracket has the form 
\begin{equation}\label{Poissonlin}
\{x^j_{i,\alpha},x^{j^{\prime}}_{i^{\prime},\beta}\}=
b_{\alpha,\beta}^{\gamma}x_{i,\gamma}^{j^{\prime}}\delta^j_{i^{\prime}}-
b_{\beta,\alpha}^{\gamma}x_{i^{\prime},\gamma}^j\delta^{j^{\prime}}_i;
\end{equation}

Any quadratic trace Poisson bracket is given by 
\begin{equation}\label{Poisson}
\{x^j_{i,\alpha},x^{j^{\prime}}_{i^{\prime},\beta}\}=
r^{\gamma\epsilon}_{\alpha\beta}x^{j^{\prime}}_{i,\gamma}x^j_{i^{\prime},\epsilon}+
a^{\gamma\epsilon}_{\alpha\beta}x^k_{i,\gamma}x^{j^{\prime}}_{k,\epsilon}\delta^j_{i^{\prime}}-
a^{\gamma\epsilon}_{\beta\alpha}x^k_{i^{\prime},\gamma}x^{j}_{k,\epsilon}\delta^{j^{\prime}}_i.
\end{equation}
 Moreover

1)\,  Formula \eqref{Poissoncon} defines a Poisson
bracket iff
$$c_{\alpha \beta}=- c_{\beta \alpha};$$

2)\, Formula \eqref{Poissonlin} defines a Poisson
bracket iff
\begin{equation}\label{r0}
b^{\mu}_{\alpha \beta} b^{\sigma}_{\mu \gamma}=b^{\sigma}_{\alpha \mu} b^{\mu}_{\beta \gamma};
\end{equation}

3)\, Formula \eqref{Poisson} defines a Poisson brackets iff conditions \eqref{r1}--\eqref{r5} hold.
\end{theorem}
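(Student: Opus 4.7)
The plan is to organize the proof around two independent ingredients: (a) the classification of $\mathrm{GL}(m)$-equivariant tensors of the correct shape, which dictates the three ansätze \eqref{Poissoncon}, \eqref{Poissonlin}, \eqref{Poisson}; and (b) the verification that the Jacobi identity is equivalent to the stated constraints on $c_{\alpha\beta}$, $b^\gamma_{\alpha\beta}$, and $(a,r)$. The trace property, together with linearity in entries of degree $0,1,2$, will be shown to eliminate all other possibilities.

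First, for the form of the bracket. Write $\{x^j_{i,\alpha},x^{j'}_{i',\beta}\}$ as a polynomial of degree $\le 2$ in the $x$'s and read it off as a $\mathrm{GL}(m)$-equivariant map $V^*\otimes V \otimes V^*\otimes V \to T(\mathrm{End}\,V)$, where the two copies of $V^*\otimes V$ correspond to the free indices $(i,j),(i',j')$. By Schur--Weyl / first fundamental theorem for $\mathrm{GL}(m)$, the space of equivariant constant $4$-tensors with this index shape is spanned by $\delta^{j'}_i\delta^j_{i'}$ and $\delta^j_i\delta^{j'}_{i'}$; only the former survives because the trace condition (bracket of a trace with a trace is a trace of a matrix polynomial) forbids the disconnected $\delta^j_i\delta^{j'}_{i'}$ contribution. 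This yields \eqref{Poissoncon}. In the linear case, an equivariant $x$-linear tensor with the right index shape must be a $\C$-linear combination of the four contractions $x^{j'}_{i,\gamma}\delta^j_{i'}$, $x^j_{i',\gamma}\delta^{j'}_i$, $x^{j}_{i,\gamma}\delta^{j'}_{i'}$, $x^{j'}_{i',\gamma}\delta^j_i$; again only the first two are consistent with the trace axiom, giving \eqref{Poissonlin} with two coefficient tensors, which skew-symmetry immediately collapses to a single $b^\gamma_{\alpha\beta}$. In the quadratic case the same invariant-theoretic list produces exactly three admissible index patterns: one ``connected'' pattern $x^{j'}_{i,\gamma}x^j_{i',\epsilon}$ and two ``split'' patterns $x^k_{i,\gamma}x^{j'}_{k,\epsilon}\delta^j_{i'}$ and $x^k_{i',\gamma}x^j_{k,\epsilon}\delta^{j'}_i$. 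Skew-symmetry then forces \eqref{r1} for $r$ and relates the two $a$-tensors by a transposition of $\alpha,\beta$, producing \eqref{Poisson}.

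Next, for the Jacobi identity. In the constant case it is automatic since second derivatives of the $x$'s vanish. In the linear case, compute $\{\{x^j_{i,\alpha},x^{j'}_{i',\beta}\},x^{j''}_{i'',\gamma}\}$ explicitly using \eqref{Poissonlin}; the cyclic sum is a sum of monomials of degree one in $x$, each carrying one of a small set of Kronecker patterns. Collect coefficients of the independent patterns; the only non-trivial relation that arises is $b^\mu_{\alpha\beta}b^\sigma_{\mu\gamma}=b^\sigma_{\alpha\mu}b^\mu_{\beta\gamma}$, i.e. \eqref{r0}, which is precisely associativity of the product $e_\alpha\cdot e_\beta = b^\gamma_{\alpha\beta}e_\gamma$. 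In the quadratic case the same procedure gives a cyclic sum of monomials of degree two in $x$; these monomials now split into five inequivalent Kronecker/index classes (``pure $r$'', ``pure $a$'', and three mixed classes), and setting the coefficient of each class to zero yields the five equations \eqref{r1}--\eqref{r5} (with \eqref{r1} already used in writing down the ansatz and \eqref{r2} the associative Yang--Baxter equation on $r$, \eqref{r3} the associativity of $a$, and \eqref{r4}, \eqref{r5} the compatibility between $a$ and $r$).

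The main obstacle is the quadratic Jacobi computation: the cyclic sum produces a large collection of index monomials, and one must verify that they partition cleanly into exactly five $\mathrm{GL}(m)$-irreducible types, so that the vanishing of each type gives a single tensor identity independent of $m$ (at least for $m$ large enough that no accidental collisions among $\delta$-patterns occur, which is the standard use of the second fundamental theorem of invariant theory for $\mathrm{GL}(m)$). The separation into types is essentially a bookkeeping problem: each term is indexed by which free indices among $(i,j,i',j',i'',j'')$ are contracted into deltas versus into the $x$-variables, and one checks that the five classes listed correspond bijectively to the five independent tensor conditions. Once this bookkeeping is established, the equivalence of Jacobi with \eqref{r1}--\eqref{r5} is automatic and completes the theorem.
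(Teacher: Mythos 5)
The paper itself states Theorem~\ref{th51} without proof (the details are deferred to the references on trace/non-abelian brackets), so there is no in-text argument to compare against; judged on its own, your plan follows the standard and essentially unavoidable route, and its two ingredients --- (a) invariant theory for $\mathrm{GL}(m)$ to pin down the ansatz, with the trace axiom killing the ``disconnected'' patterns, and (b) expansion of the Jacobi cyclic sum into independent index patterns --- are the right ones. Two points deserve tightening. First, in step (a) your enumeration of equivariant tensors is incomplete as written: besides the contractions you list, the first fundamental theorem also produces patterns carrying explicit trace factors, e.g.\ $\delta^j_i\delta^{j'}_{i'}\,\mathrm{tr}(x_\gamma)$ in the linear case and $\delta^{j'}_i\delta^j_{i'}\,\mathrm{tr}(x_\gamma x_\epsilon)$ or $x^{j}_{i,\gamma}x^{j'}_{i',\epsilon}$ in the quadratic case; these are excluded by the same mechanism (the bracket of traces becomes a \emph{product} of traces, e.g.\ $\mathrm{tr}(\mathrm{grad}_{x_\alpha}P_1)\,\mathrm{tr}(\mathrm{grad}_{x_\beta}P_2)$, which for generic $P_1,P_2$ and $m\ge 2$ is not a single trace), but they must be listed and dispatched, and the linear independence of the surviving Kronecker patterns needs $m$ large enough --- exactly the caveat you flag. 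Second, step (b) is only described, not executed: the equivalence of the quadratic Jacobi identity with the five conditions \eqref{r1}--\eqref{r5} is precisely the content of part 3), and asserting that the degree-two monomials ``partition cleanly into five classes'' is the theorem rather than its proof; to make this complete you would have to write out the cyclic sum, verify that each of the five tensor identities appears as the coefficient of one independent monomial type, and check that no further independent types (hence no further conditions) occur. With those two gaps filled the argument is correct and coincides with the one used in the original sources; note also that \eqref{r1} and the relation between the two $a$-terms come from skew-symmetry alone, as you say, while \eqref{r0} in the linear case is exactly the associativity of the algebra $\mathrm{Mat}_m\otimes{\cal A}$ mentioned in the remark following the theorem.
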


\begin{remark} Formula  \eqref{r0} means that  $b^{\sigma}_{\alpha \beta}$
are the structure constants of an associative algebra $\cal A$. A straightforward verification shows that  \eqref{Poissonlin} is nothing but the Lie-Kirillov-Kostant bracket defined by the Lie algebra corresponding to the associative  algebra ${\rm Mat}_m\otimes {\cal A}.$
\end{remark} 

\begin{lemma} For any Hamiltonian of the form $H=\mbox{tr}\,P,$ where $P$ is a matrix polynomial, the equations of 
motion can be written in the matrix form \eqref{geneq}.
\end{lemma}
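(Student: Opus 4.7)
The plan is to compute $\{H,\,x^j_{i,\alpha}\}$ directly for $H = {\rm tr}\, P$ and show that, for each fixed $\alpha$, the result depends on the free indices $(i,j)$ only as the $(i,j)$-entry of a matrix polynomial $F_\alpha(x_1,\dots,x_N)$. By linearity I would reduce to the case of a single monomial $P = x_{\beta_1}\cdots x_{\beta_n}$. The standard cyclic gradient identity then gives
\[
\frac{\partial\,{\rm tr}\,P}{\partial x^{j'}_{i',\beta}} \;=\; (Q_\beta)^{i'}_{j'}, \qquad Q_\beta \;=\; \sum_{k:\,\beta_k=\beta} x_{\beta_{k+1}}\cdots x_{\beta_n}x_{\beta_1}\cdots x_{\beta_{k-1}},
\]
so each gradient entry is already the entry of an explicit matrix polynomial.

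Next I would substitute the three canonical trace brackets from Theorem \ref{th51} into
\[
\frac{dx^j_{i,\alpha}}{dt} \;=\; \sum_{\beta,i',j'} \{x^j_{i,\alpha},\,x^{j'}_{i',\beta}\}\,(Q_\beta)^{i'}_{j'}
\]
and carry out the Kronecker contractions. In the constant case the two $\delta$'s collapse immediately. In the linear case, the $\delta^j_{i'}$ and $\delta^{j'}_i$ produce respectively $(x_\gamma Q_\beta)^j_i$ and $(Q_\beta x_\gamma)^j_i$, so $F_\alpha = b^\gamma_{\alpha\beta}\,x_\gamma Q_\beta - b^\gamma_{\beta\alpha}\,Q_\beta x_\gamma$. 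In the quadratic case the two $a$-terms collapse analogously to $a^{\gamma\epsilon}_{\alpha\beta}(x_\gamma x_\epsilon Q_\beta)^j_i$ and $-a^{\gamma\epsilon}_{\beta\alpha}(Q_\beta x_\gamma x_\epsilon)^j_i$; the $r$-term, which carries no $\delta$ on the free indices, contracts through the identity $x^{j'}_{i,\gamma}(Q_\beta)^{i'}_{j'}x^j_{i',\epsilon} = (x_\gamma Q_\beta x_\epsilon)^j_i$. Collecting everything, in the most general (quadratic) case,
\[
F_\alpha \;=\; r^{\gamma\epsilon}_{\alpha\beta}\,x_\gamma Q_\beta x_\epsilon + a^{\gamma\epsilon}_{\alpha\beta}\,x_\gamma x_\epsilon Q_\beta - a^{\gamma\epsilon}_{\beta\alpha}\,Q_\beta x_\gamma x_\epsilon,
\]
which is a non-commutative polynomial in $x_1,\dots,x_N$ with scalar coefficients, so the equations of motion take the form \eqref{geneq}.

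The step demanding the most care is the quadratic $r$-term, precisely because it is the one piece lacking a Kronecker $\delta$ on $(i,j)$: one must see that the two sums over $i'$ and $j'$ still close up into an ordinary matrix product, which works only because $j'$ is shared between $x_\gamma$ and $Q_\beta$ while $i'$ is shared between $Q_\beta$ and $x_\epsilon$. Conceptually, the whole lemma is an expression of ${\rm GL}(m)$-invariance: a derivation of $\bigoplus_\alpha {\rm Mat}_m$ generated by a ${\rm GL}(m)$-invariant bracket and a ${\rm GL}(m)$-invariant Hamiltonian must commute with simultaneous conjugation, and such derivations are automatically given by non-commutative polynomials in their matrix arguments — the explicit calculation above merely exhibits those polynomials.
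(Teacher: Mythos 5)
Your computation is correct, and in fact the paper states this lemma without any proof at all, so your argument fills a gap rather than duplicating one. The index gymnastics check out: for a monomial $P=x_{\beta_1}\cdots x_{\beta_n}$ the cyclic gradient $Q_\beta$ is exactly right, and contracting it against each of the three bracket forms of Theorem 3.14 does close up into matrix products — the constant term gives $c_{\alpha\beta}Q_\beta$, the linear term gives $b^\gamma_{\alpha\beta}\,x_\gamma Q_\beta - b^\gamma_{\beta\alpha}\,Q_\beta x_\gamma$, and the quadratic term gives the three-fold expression you wrote, with the $r$-piece closing precisely because $j'$ links $x_\gamma$ to $Q_\beta$ and $i'$ links $Q_\beta$ to $x_\epsilon$. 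This is surely the intended (routine but necessary) verification.

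One caution about your closing ``conceptual'' remark: ${\rm GL}(m)$-equivariance alone does \emph{not} force a polynomial map $\bigoplus_\alpha {\rm Mat}_m \to {\rm Mat}_m$ to be a non-commutative polynomial with scalar constant coefficients — by classical invariant theory such maps are generated by non-commutative monomials \emph{together with trace factors}, e.g.\ ${\rm tr}(x_1x_2)\,x_3$ is equivariant but is not of the form required by \eqref{geneq}. So the invariance heuristic, taken literally, proves less than the lemma claims; it is the explicit form of the brackets in Theorem 3.14 (the fact that the structure constants carry no extra contracted matrix indices beyond the single chain you identified) that rules out trace factors. Since you rest the proof on the explicit contraction and present the invariance argument only as motivation, this does not damage the proof, but the phrase ``automatically given by non-commutative polynomials'' should be weakened or dropped.
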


Under a linear transformation of the matrices $x_i \to g^j_i x_j$ the constants in \eqref{Poisson} are
transformed in a standard way:
\begin{equation}\label{basis}
r_{ij}^{kl} \to g^{\alpha}_i g^{\beta}_j h_{\gamma}^k h_{\epsilon}^l
\,r_{\alpha,\beta}^{\gamma,\epsilon}, \qquad
a_{ij}^{kl} \to g^{\alpha}_i g^{\beta}_j h_{\gamma}^k h_{\epsilon}^l
\,a_{\alpha,\beta}^{\gamma,\epsilon}.
\end{equation}
Here $g_i^jh_j^k=\delta_i^k$.
\begin{definition}
Two brackets of the form \eqref{Poisson}  related by  \eqref{basis} are called {\it equivalent}.
\end{definition}

A relation between non-abelian and trace Poisson brackets can be established via the formula
$$
x^{j}_{i,\alpha}={\rm tr}(e^{i}_{j} x_{\alpha}), \qquad
x^{j^{\prime}}_{i^{\prime},\beta}={\rm
tr}(e^{i^{\prime}}_{j^{\prime}}
x_{\beta}),
$$
where $e^{i}_{j}$ stand for the matrix unities. For instance, consider the Hamiltonian operator \eqref{tet}. Applying formula \eqref{nabr} and using the definition of the gradient, we arrive at \eqref{Poisson}.

Identities \eqref{r1}--\eqref{r5} can be rewritten in a tensor form. 
Let $\bf V$ be a linear space with a basis ${\bf e}_i,~i=1,...,N$. Define linear
operators $R$ and $A$ on the space ${\bf V}\otimes {\bf V}$ by $R\, {\bf e}_i\otimes
{\bf e}_j=r^{pq}_{ij} {\bf e}_p\otimes {\bf e}_q,~A \,{\bf e}_i\otimes {\bf e}_j=a^{pq}_{ij}{\bf e}_p\otimes
{\bf e}_q$.  Then identities \eqref{r1} - \eqref{r5} are equivalent to
$$R^{12}=-R^{21},\qquad R^{23}R^{12}+R^{31}R^{23}+R^{12}R^{31}=0,$$
$$A^{12}A^{31}=A^{31}A^{12},$$
$$\sigma^{23}A^{13}A^{12}=A^{12}R^{23}-R^{23}A^{12},$$
$$A^{32}A^{12}=R^{13}A^{12}-A^{32}R^{13}.$$
Here all operators act on ${\bf V}\otimes {\bf V}\otimes {\bf V}$, by $\sigma^{ij}$ we
mean transposition of $i$-th and $j$-th components of the tensor product
and $A^{ij},~R^{ij}$ mean operators $A,~R$ acting on the tensor product of
the $i$-th and $j$-th components.

The equivalence transformation
 \eqref{basis} corresponds to $A \to G A G^{-1},
R \to  G R G^{-1}$, where $G=g\otimes g$ and $g\in {\rm End}({\bf V})$.

\begin{definition} (cf.\eqref{admiss})
A vector $\Lambda=(\lambda_1,...,\lambda_N)$ is said to be {\it
admissible} for \eqref{Poisson} if for any $i,j$
$$
(a^{pq}_{ij}  - a^{qp}_{ji} + r^{pq}_{ij}) \lambda_p \lambda_q=0.
$$
\end{definition}

\begin{lemma} For any admissible vector the argument shift $x_i \to x_i+\lambda_i\,
{\rm Id}$ in \eqref{Poisson} yields a linear Poisson bracket  \eqref{Poissonlin}, where
$$
b^{p}_{ij}=(a_{ij}^{qp}+a_{ij}^{pq}+r_{ij}^{pq})\lambda_q,
$$
compatible with  \eqref{Poisson}.
\end{lemma}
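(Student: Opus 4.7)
The plan is to substitute $x^{j}_{i,\alpha}\mapsto \tilde x^{j}_{i,\alpha}:=x^{j}_{i,\alpha}+t\,\lambda_\alpha\,\delta^{j}_{i}$ into the right-hand side of \eqref{Poisson}, where $t$ is a formal parameter, and then read off the coefficient of each power of $t$. Since the substitution is a polynomial change of variables, the resulting expression is a Poisson bracket in the new variables for every $t$; expanding it as $\{\cdot,\cdot\}_{\mathrm{quad}}+t\,\{\cdot,\cdot\}_{1}+t^{2}\,\{\cdot,\cdot\}_{0}$, each coefficient is separately a Poisson bracket pairwise compatible with the others, in the spirit of the shift-argument method of Subsection 1.4.1.

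Expanding the three monomials on the right-hand side of \eqref{Poisson} with the substitution $\tilde x^{j}_{i,\alpha}=x^{j}_{i,\alpha}+t\lambda_\alpha\delta^{j}_{i}$, the $t^{0}$ terms reproduce \eqref{Poisson} unchanged. The $t^{1}$ terms are collected by grouping those of the form $x^{j'}_{i,\mu}\delta^{j}_{i'}$ and of the form $x^{j}_{i',\mu}\delta^{j'}_{i}$. A direct calculation shows that the coefficient of $x^{j'}_{i,\mu}\delta^{j}_{i'}$ is $\bigl(r^{\mu q}_{\alpha\beta}+a^{q\mu}_{\alpha\beta}+a^{\mu q}_{\alpha\beta}\bigr)\lambda_{q}$, which is exactly the formula $b^{\mu}_{\alpha\beta}$ stated in the lemma. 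For the coefficient of $x^{j}_{i',\mu}\delta^{j'}_{i}$, using $r^{q\mu}_{\alpha\beta}=-r^{\mu q}_{\beta\alpha}$ from \eqref{r1} one obtains $-\bigl(r^{\mu q}_{\beta\alpha}+a^{q\mu}_{\beta\alpha}+a^{\mu q}_{\beta\alpha}\bigr)\lambda_{q}=-b^{\mu}_{\beta\alpha}$. Hence the $t^{1}$ contribution is precisely \eqref{Poissonlin} with the claimed $b^{p}_{ij}$.

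The $t^{2}$ contribution is the constant
\[
\bigl(r^{pq}_{\alpha\beta}+a^{pq}_{\alpha\beta}-a^{pq}_{\beta\alpha}\bigr)\lambda_{p}\lambda_{q}\,\delta^{j'}_{i}\delta^{j}_{i'}.
\]
Since $\lambda_{p}\lambda_{q}$ is symmetric in $p,q$, only the $(p,q)$-symmetrization of the bracketed expression matters. Symmetrizing, and using $r^{qp}_{\alpha\beta}=-r^{pq}_{\beta\alpha}$ from \eqref{r1} to rewrite $r^{pq}_{\alpha\beta}+r^{qp}_{\alpha\beta}=r^{pq}_{\alpha\beta}-r^{pq}_{\beta\alpha}$, one obtains the same expression as the symmetrization (in $p,q$) of the admissibility condition $\bigl(a^{pq}_{\alpha\beta}-a^{qp}_{\beta\alpha}+r^{pq}_{\alpha\beta}\bigr)\lambda_{p}\lambda_{q}=0$. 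Thus admissibility forces the $t^{2}$ coefficient to vanish, so the shifted bracket is of the form $\{\cdot,\cdot\}_{\mathrm{quad}}+t\,\{\cdot,\cdot\}_{1}$, and the standard pencil argument of Subsection 1.4.1 shows $\{\cdot,\cdot\}_{1}$ is a Poisson bracket compatible with \eqref{Poisson}.

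The only technically delicate point is the symmetrization step for the $t^{2}$ coefficient: the admissibility condition is stated with the asymmetric swap $a^{pq}_{ij}-a^{qp}_{ji}$, whereas the naive expansion gives the unsymmetric $a^{pq}_{ij}-a^{pq}_{ji}$. Reconciling these requires the bookkeeping above, invoking \eqref{r1} twice — once to rewrite the $r$-piece of the expansion, and once to rewrite the $r$-piece of the admissibility condition — so that both are seen to coincide as symmetric bilinear forms in $(\lambda_{p},\lambda_{q})$. Everything else is a routine collection of indices.
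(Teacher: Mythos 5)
Your proof is correct and is exactly the argument the paper intends: the lemma is stated without proof, but it is the direct specialization of the shift‑argument method of Subsection 1.4.1 to the bracket \eqref{Poisson}, and your index bookkeeping for the $t^1$ and $t^2$ coefficients (including the identification of the $t^1$ term with \eqref{Poissonlin} and the vanishing of the $t^2$ term via the symmetrized admissibility condition) checks out. The only minor remark is that the appeal to \eqref{r1} in the $t^2$ step is not actually needed, since the $(p,q)$-symmetrizations of $a^{pq}_{\alpha\beta}-a^{pq}_{\beta\alpha}+r^{pq}_{\alpha\beta}$ and of $a^{pq}_{\alpha\beta}-a^{qp}_{\beta\alpha}+r^{pq}_{\alpha\beta}$ coincide term by term.
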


\subsubsection{Case ${\bf a}=0$ and anti-Frobenius algebras }

There is a subclass of brackets  \eqref{Poisson} that corresponds to the case when the tensor ${\bf a}$ is equal to $0$. Relations   \eqref{r1},  \eqref{r2} mean 
that the tensor ${\bf r}$ is a constant solution of the associative Yang-Baxter equation (\cite{Agui},\cite{Sch}). Such solutions can be constructed in the following algebraic way.
 
\begin{definition} An {\it anti-Frobenius algebra} is an associative algebra  $\cal A$
(not necessarily with unity) with non-degenerate
anti-symmetric bilinear form $(~,~)$ satisfying the following relation
\begin{equation}\label{af}
(x,\,y\circ z)+(y,\,z\circ x)+(z,\,x\circ y)=0
\end{equation}
for all $x,y,z\in \cal A$. In other words the form $(~,~)$ defines an 1-cocycle on $\cal A$.
\end{definition}

\begin{theorem} There exists a one-to-one correspondence between
solutions of  \eqref{r1},  \eqref{r2}  up to equivalence and exact
representations of anti-Frobenius algebras up to isomorphism.
\end{theorem}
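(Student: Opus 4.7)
The plan is to construct the correspondence explicitly in both directions and then to check compatibility with equivalences.

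Forward direction. Let $(\mathcal{A},\circ,(\cdot,\cdot))$ be an anti-Frobenius algebra and $\rho:\mathcal{A}\hookrightarrow \mathrm{End}(V)$ an exact (faithful) representation. Choose a basis $\{f_{\alpha}\}$ of $\mathcal{A}$ and let $\{f^{\alpha}\}$ be its dual basis with respect to $(\cdot,\cdot)$, i.e.\ $(f_{\alpha},f^{\beta})=\delta_{\alpha}^{\beta}$. Put
\[
R \;=\; \sum_{\alpha}\rho(f_{\alpha})\otimes \rho(f^{\alpha})\ \in\ \mathrm{End}(V)\otimes \mathrm{End}(V),
\]
and read off the tensor $r^{pq}_{ij}$ from the action on the basis of $V\otimes V$. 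The antisymmetry $(f_{\alpha},f^{\beta})=-(f^{\beta},f_{\alpha})$ translates directly into \eqref{r1}. A short calculation reveals that the three terms of \eqref{r2}, expanded in the basis $\{f_{\alpha}\otimes f_{\beta}\otimes f_{\gamma}\}$, regroup into the cocycle identity \eqref{af} evaluated on triples of basis vectors, combined with associativity of $\rho$; hence \eqref{r2} holds.

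Reverse direction. Given $r\in \mathrm{End}(V)\otimes \mathrm{End}(V)$ satisfying \eqref{r1}--\eqref{r2}, write $r=\sum_{\alpha=1}^{n} a_{\alpha}\otimes b_{\alpha}$ with $n$ minimal; then $\{a_{\alpha}\}$ and $\{b_{\alpha}\}$ are each linearly independent. The antisymmetry \eqref{r1} forces $\mathrm{span}\{a_{\alpha}\}=\mathrm{span}\{b_{\alpha}\}$, and the transition matrix $c^{\beta}_{\alpha}$ defined by $b_{\alpha}=c^{\beta}_{\alpha}a_{\beta}$ is nondegenerate and skew. Set $\mathcal{A}:=\mathrm{span}\{a_{\alpha}\}\subset \mathrm{End}(V)$, take the inclusion for $\rho$, and define the bilinear form by declaring $\{a_{\alpha}\}$ and $\{b_{\alpha}\}$ to be dual bases, i.e.\ $(a_{\alpha},b_{\beta})=\delta_{\alpha\beta}$; equivalently, $(x,y)=-(c^{-1})^{\alpha\beta}x_{\alpha}y_{\beta}$ in coordinates relative to $\{a_{\alpha}\}$. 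Skewness of $c$ yields antisymmetry of $(\cdot,\cdot)$, and nondegeneracy is built in. The crucial point is then to prove that $\mathcal{A}$ is closed under the associative product of $\mathrm{End}(V)$ and that $(\cdot,\cdot)$ satisfies the cocycle relation \eqref{af}; both follow from \eqref{r2} by pairing the Yang--Baxter identity with appropriate basis elements of $\mathrm{End}(V)^{\otimes 3}$ and using the minimality (which guarantees that the tensor $r$ has no kernel on the two factors).

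Compatibility with equivalences. A change of basis $g\in \mathrm{End}(V)$ acts on $r$ by $R\mapsto (g\otimes g)R(g\otimes g)^{-1}$, which amounts to replacing $\rho$ by its conjugate representation $\rho^{g}(x)=g\rho(x)g^{-1}$, i.e.\ an isomorphism of representations of $\mathcal{A}$. Conversely, two anti-Frobenius algebras whose faithful representations produce the same (up to equivalence) $r$ must have isomorphic underlying algebras via the identification $a_{\alpha}\leftrightarrow \rho(f_{\alpha})$, and the form is transported accordingly. This matches the equivalence classes on both sides.

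The main obstacle is the reverse direction: showing that $\mathcal{A}$ is closed under matrix multiplication in $\mathrm{End}(V)$ and extracting the cocycle condition cleanly from \eqref{r2}. The subtlety is that \eqref{r2} is an identity in $\mathrm{End}(V)^{\otimes 3}$, and one must pair it with elements dual to the $a_{\alpha}$ in the third tensor slot to extract an identity that lives in $\mathcal{A}\otimes \mathcal{A}$; minimality of the presentation of $r$ is what makes this pairing well-defined. Once this is done, antisymmetry of the form, the associativity of $\mathcal{A}$, and the cocycle condition \eqref{af} all drop out of a single calculation, and the correspondence is a bijection on equivalence classes.
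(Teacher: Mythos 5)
Your proposal is correct and follows essentially the same route as the paper: both sides rest on writing $r$ in a minimal decomposition $r=\sum g^{\alpha\beta}\,y_\alpha\otimes y_\beta$ with $g$ skew and non-degenerate (your transition matrix $c$ is exactly this coupling), deducing from \eqref{r2} that the span of the $y_\alpha$ closes under the matrix product, and taking the inverse of the coupling matrix as the anti-Frobenius form. The level of detail is also comparable, since the paper likewise leaves the substitution of the decomposition into \eqref{r1}--\eqref{r2} as an asserted computation.
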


\begin{proof} The tensor ${\bf r}$ can be written as  
$r^{ij}_{kl}=\sum_{\alpha,\beta=1}^pg^{\alpha\beta}y^i_{k,\alpha}y^j_{l,\beta},$
where $g^{\alpha\beta}=-g^{\beta\alpha}$, the matrix $G=(g^{\alpha\beta})$
is non-degenerate and $p$ is the smallest possible.
Substituting this expression into  \eqref{r1},  \eqref{r2}, we obtain that there
exists a tensor $\phi^{\gamma}_{\alpha\beta}$
such that
$y^i_{k,\alpha}y^k_{j,\beta}=\phi^{\gamma}_{\alpha\beta}y^i_{j,\gamma}$. Let
 $\cal A$ be an
associative algebra with basis $y_1,...,y_p$ and product
$y_{\alpha}\circ y_{\beta}=\phi^{\gamma}_{\alpha\beta}\,y_{\gamma}$.
Define the anti-symmetric bilinear form by $(y_{\alpha},y_{\beta})=g_{\alpha\beta},$ where
$\{g_{\alpha\beta}\}=G^{-1}$.
Then  \eqref{r1},  \eqref{r2} are equivalent to the anti-Frobenius property  \eqref{af}.
\end{proof}

\begin{example} \,(cf. \cite{elash}). Let $\cal A$ be the associative algebra of $N\times N$-matrices with zero $N$-th row. For a generic element $l$ of ${\cal A}^{*};$
the bilinear form $(x,y)=l([x,y])$ is a non-degenerate 
anti-symmetric form, which satisfies  \eqref{af}. It can be written as $(x,y)={\rm tr}([x,y]\, k^T),$ where $k\in {\cal A}$. Let us choose $\quad k_{ij}=0,\,\, i\ne j$, $\quad k_{ii}=\mu_i,$ where $i,j=1,...,N-1,$ and $\quad k_{iN}=1,\,\, i=1,...,N-1$. The corresponding bracket  \eqref{Poisson} is given by the following tensor ${\bf r}$:
\begin{equation}\label{elash}
r^{ii}_{Ni}=-r^{ii}_{iN}=1, \qquad r^{ij}_{ij}=r^{ji}_{ij}=r^{ii}_{ji}=-r^{ii}_{ij}=\frac{1}{\mu_i-\mu_j}, 
\end{equation}
where $  i\ne j,\quad i,j=1,...,N-1.$ 
The remaining elements of the tensor ${\bf r}$ and all elements of the tensor ${\bf a}$ are supposed to be zero.  It can be verified that   \eqref{elash} is equivalent to the bracket given by
\begin{equation}\label{ex4}
  r^{\alpha \beta}_{\alpha \beta}=r^{\beta \alpha}_{\alpha \beta}=r^{\alpha\alpha}_{\beta\alpha}=-r^{\alpha\alpha}_{\alpha\beta}=\frac{1}{\lambda_\alpha-\lambda_\beta},
\qquad \alpha\ne \beta,\qquad \alpha,\beta=1,\ldots,N.
\end{equation}
Here $\lambda_1,\ldots,\lambda_N$ are arbitrary pairwise distinct parameters.  Formula  \eqref{Poisson} with zero tensor ${\bf a}$ defines the corresponding trace Poisson bracket for entries of matrices $x_1,\dots, x_N$ of arbitrary size $m$. For $m=1$ we have the  scalar Poisson bracket 
$$
\{x_\alpha,x_\beta\}=\frac{(x_\alpha-x_\beta)^2}{\lambda_\beta-\lambda_\alpha}, \qquad \alpha\ne \beta,\qquad \alpha,\beta=1,...,N.
$$
If $N$ is even, then the trace Poisson structure  \eqref{ex4} is non-degenerate, i.e. the rank of the Poisson tensor $\Pi$ is equal to $
N m^2$. In the odd case ${\rm rank} \, \Pi =(N-1)\, m^2$.
\end{example}
\begin{remark}
Bracket \eqref{ex4} can be directly obtained from the anti-Frobenius algebra
$$
 \mathcal{A}_{N,1} = \{ A \in {\rm Mat}_N \; \mid \sum_{i} a_{ij} = 0 \quad \forall \, j = 1,\ldots,N \}
$$
equipped with the bilinear form
\begin{equation}
\label{form}
 (x, y) = {\rm tr} \left([x,y] \cdot {\rm diag}\,(\lambda_1, \ldots, \lambda_N) \right).
\end{equation}
\end{remark}

In \cite{zobnin} the algebra $\mathcal{A}_{N,1}$ was generalized.
Let $M$ be a proper divisor of $N$. We consider $N(N-M)$-dimensional algebra
$$
 \mathcal{A}_{N,M} = \{ A \in  {\rm Mat}_N \; \mid \sum_{i \equiv r \!\!\pmod{M}} a_{ij} = 0 \quad \forall \, r = 1, \ldots, M, \; \forall \, j = 1,\ldots,N \}
$$
equipped with the bilinear form~\eqref{form}. Suppose that  $\lambda_i$ are pairwise distinct.
One can check that in this case the form $(x, y)$ is non-degenerate \cite{elash}.
 The components of the tensor ${\bf r}$ corresponding to the algebra $\mathcal{A}_{N,M}$ are given by
\begin{gather*}
r^{\alpha \beta}_{\gamma \delta} = 0, \qquad \text{ if } \alpha \not \equiv \delta \quad \text{ or }\quad  \beta \not \equiv \gamma, \\[2mm]
r^{\alpha \alpha}_{\varepsilon \alpha} = -r^{\alpha \alpha}_{\alpha \varepsilon} = \frac{1}{\lambda_\alpha - \lambda_\varepsilon}, \qquad \text{ when } \alpha \ne \varepsilon, \\[2mm]
r^{\alpha \alpha}_{\gamma \delta} = 0, \qquad \text{ if } \gamma \ne \alpha \quad \text{ or } \quad \delta \ne \alpha,\\[3mm]
r^{\alpha \beta}_{\beta \alpha} = 
\frac{1}{\lambda_\alpha - \lambda_\beta}
\left(
\frac{
\prod\limits_{\beta' \equiv \beta, \, \beta' \ne \beta} (\lambda_\alpha - \lambda_{\beta'}) \prod\limits_{\alpha' \equiv \alpha, \, \alpha' \ne \alpha} (\lambda_\beta - \lambda_{\alpha'})}
{\prod\limits_{\alpha' \equiv \alpha, \, \alpha' \ne \alpha} (\lambda_\alpha - \lambda_{\alpha'}) \prod\limits_{\beta' \equiv \beta, \, \beta' \ne \beta} (\lambda_\beta - \lambda_{\beta'})}
- 1\right), \qquad \text{ if } \alpha \ne \beta, \\[4mm]
r^{\alpha \beta}_{\gamma \delta} = 
\frac{1}{\lambda_\alpha - \lambda_\beta} \cdot \frac{
\prod\limits_{\gamma' \equiv \gamma, \, \gamma' \ne \gamma} (\lambda_\alpha - \lambda_{\gamma'}) \prod\limits_{\delta' \equiv \delta, \, \delta' \ne \delta} (\lambda_\beta - \lambda_{\delta'})}
{\prod\limits_{\alpha' \equiv \alpha, \, \alpha' \ne \alpha} (\lambda_\alpha - \lambda_{\alpha'}) \prod\limits_{\beta' \equiv \beta, \, \beta' \ne \beta} (\lambda_\beta - \lambda_{\beta'})} \qquad \text{ otherwise}.
\end{gather*}
Here $x \equiv y$ means that $x = y \! \pmod{M}$.

With these formulas one can construct corresponding quadratic Poisson brackets.
For example, in the case $N = 2M$ and $m=1$ the corresponding scalar Poisson bracket has the form
$$
 \left\{x_{\alpha}, x_{\beta}\right\} = \frac{(x_{\alpha} - x_{\alpha'})(x_{\beta}-x_{\beta'})(\lambda_{\alpha'}-\lambda_{\beta'})}{
 (\lambda_{\alpha} - \lambda_{\alpha'})
 (\lambda_{\beta} - \lambda_{\beta'})},
$$
where for any $\gamma$  the positive integer $\gamma'$ is uniquely defined by the condition $|\gamma' - \gamma| = M$.

\begin{op} Describe all anti-Frobenius algebras ${\cal A}$ of the form 
$$
{\cal A}={\cal S}\oplus {\cal M},
$$
where ${\cal S}$ is a semi-simple associative algebra and ${\cal M}$ is a ${\cal S}$-module such that  ${\cal M}^2=\{0\}.$
\end{op}
\subsubsection{Classification of trace quadratic brackets for $N=2$}

Consider the case $N=2.$ 
\begin{theorem}
Any Poisson bracket  \eqref{Poisson} up to transformations
 \eqref{basis} and to the proportionality is one of the following brackets. Here we present non-zero
components of the tensors ${\bf a}$ and ${\bf r}$ only.\begin{itemize}
\item   $r^{22}_{12}=1,$ $\qquad r^{22}_{21}=-1$;

\item  $r^{21}_{11}=1,$ $\qquad r^{12}_{11}=-1,$ $ \qquad a^{22}_{21}=a^{12}_{11}=-1$;

\item $r^{21}_{11}=-1,$ $\qquad r^{12}_{11}=1,$  $\qquad a^{22}_{12}=a^{21}_{11}=1$;

\item   $r^{22}_{12}=1,$ $\qquad r^{22}_{21}=-1,$ $\qquad a^{12}_{11}=a^{22}_{21}=1$;

\item   $r^{22}_{21}=1,$ $\qquad r^{22}_{12}=-1,$ $\qquad a^{21}_{11}=a^{22}_{12}=1$;

\item $a^{22}_{11}=1$;

\item   $r^{21}_{11}=1,$ $\qquad r^{12}_{11}=-1$.

\end{itemize}
\end{theorem}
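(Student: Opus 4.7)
The plan is a two-stage reduction of the pair of tensors $(R,A)$ satisfying \eqref{r1}--\eqref{r5} modulo the diagonal $\mathrm{GL}_2$-action \eqref{basis}, augmented by the $\mathbb{C}^{*}$-scaling allowed by ``up to proportionality''. The key observation is that \eqref{r1} and the associative Yang--Baxter equation \eqref{r2} involve only the $r$-tensor, so one can first classify $R$ up to equivalence and then, for each normal form of $R$, classify the admissible $A$.

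First I would use the correspondence between solutions of the AYBE and faithful representations of anti-Frobenius algebras, established in the subsection on ``Case ${\bf a}=0$'', to enumerate $\mathrm{GL}_2$-orbits of $r$-tensors on the two-dimensional space $\mathbf{V}$. For $N=2$ the skew-symmetry \eqref{r1} reduces the ambient $16$-dimensional space of endomorphisms of $\mathbf{V}\otimes\mathbf{V}$ to the $6$-dimensional $(-1)$-eigenspace of conjugation by the swap, and the AYBE cuts out a low-dimensional variety splitting into a very short list of orbits: the zero orbit, the rank-one ``boundary'' orbits arising from degenerate anti-Frobenius data, and the non-trivial orbit coming from the essentially unique non-commutative $2$-dimensional anti-Frobenius algebra admitting a faithful representation on $\mathbf{V}$.

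Second, with $R$ fixed in a chosen normal form, I would treat \eqref{r4}--\eqref{r5} as a system of bilinear equations in the entries of $A$ and \eqref{r3} as a purely quadratic one. The joint solution set is an explicit affine variety whose irreducible components can be enumerated by hand (or by a Gr\"obner-basis computation); the residual stabilizer $H_{R}\subset\mathrm{GL}_2$, together with the $\mathbb{C}^{*}$ rescaling, acts on these components, and I expect each to reduce to one of the seven representatives in the statement. Pairwise inequivalence of the final list is then verified by tabulating simple $\mathrm{GL}_2$-invariants, for instance the ranks of $R$ and $A$ regarded as operators on $\mathbf{V}\otimes\mathbf{V}$ and low-degree traces such as $\mathrm{tr}(R^{2})$, $\mathrm{tr}(A^{2})$ and $\mathrm{tr}(RA)$.

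The main obstacle is the joint nonlinearity of \eqref{r3}--\eqref{r5} in $A$, which prevents a purely linear-algebraic treatment at the second stage: for each normal form of $R$ the variety of admissible $A$ may have several components and possibly continuous families, and one has to show that every component is either reducible to a representative already on the list or eliminated by a hidden equivalence that changes the chosen normal form of $R$. A useful sanity check is a parameter count matching the six parameters of $R$ and sixteen of $A$ against the dimensions of the $\mathrm{GL}_2\times\mathbb{C}^{*}$-orbits and of the constraint varieties, so that the combined dimension of all seven orbits equals the dimension of the full solution variety; this confirms that no branch has been overlooked.
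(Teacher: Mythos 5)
The survey states this theorem without proof (the classification is imported from the original papers), so there is no in-text argument to measure your proposal against; I can only judge the strategy on its own terms. Your two-stage plan --- first classify the ${\bf r}$-tensors satisfying \eqref{r1}, \eqref{r2} up to the action \eqref{basis} and proportionality, then for each normal form solve \eqref{r3}--\eqref{r5} for ${\bf a}$ modulo the residual stabilizer --- is sound, and the splitting is legitimate precisely because \eqref{r1}, \eqref{r2} involve ${\bf r}$ alone. The anti-Frobenius correspondence does reduce the first stage to a finite enumeration, but be careful with your phrase ``the essentially unique non-commutative $2$-dimensional anti-Frobenius algebra'': for $N=2$ there are exactly \emph{two} nonzero orbits, namely the $r$-tensors of items 1 and 7, arising from the algebra $\mathcal{A}_{2,1}$ (zero second row) and from its opposite algebra (zero second column); these are non-isomorphic, and the corresponding $r$-tensors are distinguished by whether the one-dimensional image of $R$ lies in the symmetric or the antisymmetric part of ${\bf V}\otimes{\bf V}$ (an invariant, since the group acts by $G=g\otimes g$).

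The genuine gaps are the following. First, essentially all of the content of the theorem sits in your second stage, which you leave unexecuted: five of the seven items have ${\bf a}\ne 0$, and for each of the three $r$-normal forms (including $r=0$) one must actually decompose the variety cut out by the quadratic system \eqref{r3}--\eqref{r5} in the sixteen components $a^{pq}_{ij}$ and reduce every irreducible component by the residual group; nothing in the plan rules out in advance a continuous family of inequivalent ${\bf a}$'s, so this computation cannot be deferred and still be called a proof. Second, the invariants you propose for pairwise inequivalence do not suffice: items 1 and 7 both have ${\rm rank}\,R=1$ and $R^{2}=0$, so ranks and traces of powers cannot separate them --- you need the symmetric-versus-antisymmetric image criterion above, and analogous refined invariants involving ${\bf a}$ to separate items 1, 4, 5 and items 2, 3, 7 from one another. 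Third, the dimension-count ``sanity check'' cannot certify completeness: the seven representatives are isolated orbits of various dimensions inside a reducible variety, and matching a total dimension says nothing about whether a lower-dimensional component has been overlooked. In short, the route is the right one, but as written it is a programme for a proof rather than a proof.
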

For a classification in the case $N=3$, ${\bf a}=0$ see \cite{sokol3}. 
\begin{op}
Describe all trace Poisson brackets  \eqref{Poisson} for $N=3$.
\end{op}

\subsection{Double Poisson brackets on free associative algebra}

In the previous sections we observed that identities \eqref{r1}--\eqref{r5} describe both non-abelian and trace quadratic Poisson brackets. 
Here we show that this is also true for the quadratic double Poisson brackets on the free associative algebra with generators $x_1,...,x_N.$

\begin{definition}  \cite{VdB}. A double Poisson bracket on an associative algebra $A$ is a $\C$-linear map
$\ldb,\rdb : A \otimes A \to  A \otimes A$ satisfying the following conditions:
$$
\ldb u, v\rdb = - \ldb v,u\rdb ^{\circ},
$$
$$
 \ldb u, \ldb v,w \rdb \rdb + \sigma  \ldb v, \ldb w,u \rdb \rdb +\sigma^2  \ldb w, \ldb u,v \rdb \rdb  =0,
$$
and
$$
\ldb u, vw \rdb = (v\otimes 1)  \ldb u,w \rdb  + \ldb u,v \rdb (1\otimes  w).
$$
Here $\quad (u\otimes v)^{\circ}\stackrel{def}{=} v\otimes u$; \quad $\ldb v_1, v_2\otimes v_3 \rdb :=\ldb v_1, v_2 \rdb \otimes v_3$ \quad 
and \quad $\sigma( v_1 \otimes v_2 \otimes v_3 ):= v_{3}\otimes v_{1} \otimes v_{2}$.
\end{definition}

Denote by $\mu$ the multiplication map $\mu:  A\otimes A \to A$ given by $\mu(u\otimes v)=u v.$
We define a $\C$-bilinear bracket operation on $A$ by  $\{\cdot,\,\cdot\}\stackrel{def}{=} \mu(\ldb \cdot,\, \cdot \rdb).$

\begin{proposition} Let $\ldb \cdot,\, \cdot \rdb$ be a double Poisson bracket on $A$. Then $\{\cdot,\, \cdot\}$ is a trace bracket on $A/[A, A]$, which is defined as
$$
\lbrace \bar a, \bar b\rbrace =\overline{\mu(\ldb a,b\rdb}),
$$
where $\bar a$ means the image of $a\in A$ under the natural projection $A\to A/[A,A].$
\end{proposition}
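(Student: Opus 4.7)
Three things need to be verified: (i) the assignment $\{\bar a,\bar b\}:=\overline{\mu(\ldb a,b\rdb)}$ is well-defined on $\mathcal{T}:=A/[A,A]$, (ii) it is antisymmetric, and (iii) it satisfies the Jacobi identity. The guiding observation is that $uv\equiv vu\pmod{[A,A]}$ for all $u,v\in A$, so that $\mu$ and its iterate $m:A^{\otimes 3}\to A$, $a_1\otimes a_2\otimes a_3\mapsto a_1a_2a_3$, become invariant under cyclic permutation of tensor factors when composed with the projection $A\to\mathcal{T}$; in particular $m\circ\sigma\equiv m\equiv m\circ\sigma^2\pmod{[A,A]}$.

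\textbf{Antisymmetry.} Writing $\ldb v,u\rdb=\sum_i p_i\otimes q_i$, the axiom $\ldb u,v\rdb=-\ldb v,u\rdb^{\circ}$ gives $\ldb u,v\rdb=-\sum_i q_i\otimes p_i$, so
$$\mu(\ldb u,v\rdb)+\mu(\ldb v,u\rdb)=\sum_i(p_iq_i-q_ip_i)\in[A,A],$$
and hence $\{\bar u,\bar v\}=-\{\bar v,\bar u\}$ in $\mathcal{T}$.

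\textbf{Well-definedness.} By antisymmetry it suffices to check the second slot, i.e.\ that $\mu(\ldb a,bc\rdb)\equiv \mu(\ldb a,cb\rdb)\pmod{[A,A]}$. The double Leibniz rule gives
$$\ldb a,bc\rdb=(b\otimes 1)\ldb a,c\rdb+\ldb a,b\rdb(1\otimes c),\qquad \ldb a,cb\rdb=(c\otimes 1)\ldb a,b\rdb+\ldb a,c\rdb(1\otimes b),$$
and applying $\mu$ yields $\mu(\ldb a,bc\rdb)-\mu(\ldb a,cb\rdb)=[b,\mu(\ldb a,c\rdb)]+[\mu(\ldb a,b\rdb),c]\in[A,A]$.

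\textbf{Jacobi.} Set $\ldb v,w\rdb=\sum_i p_i\otimes q_i$. The Leibniz computation above yields
$$\mu\bigl(\ldb u,\mu(\ldb v,w\rdb)\rdb\bigr)=\sum_i\bigl(p_i\,\mu(\ldb u,q_i\rdb)+\mu(\ldb u,p_i\rdb)\,q_i\bigr),$$
so $\{\bar u,\{\bar v,\bar w\}\}$ decomposes into a ``right'' piece $\sum_i\mu(\ldb u,p_i\rdb)q_i$ and a ``left'' piece $\sum_i p_i\mu(\ldb u,q_i\rdb)$. Directly from the convention $\ldb v_1,v_2\otimes v_3\rdb=\ldb v_1,v_2\rdb\otimes v_3$, the triple $m\ldb u,\ldb v,w\rdb\rdb=\sum_i\mu(\ldb u,p_i\rdb)q_i$ matches the right piece exactly. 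Applying $m$ to the double Jacobi identity and using $m\sigma\equiv m\pmod{[A,A]}$ gives
$$m\ldb u,\ldb v,w\rdb\rdb+m\ldb v,\ldb w,u\rdb\rdb+m\ldb w,\ldb u,v\rdb\rdb\equiv 0\pmod{[A,A]},$$
which collects the three ``right'' pieces of the cyclic sum $\{\bar u,\{\bar v,\bar w\}\}+\{\bar v,\{\bar w,\bar u\}\}+\{\bar w,\{\bar u,\bar v\}\}$ to zero. For the three ``left'' pieces, I cycle inside $\mathcal{T}$ (e.g.\ $p_i\mu(\ldb u,q_i\rdb)\equiv\mu(\ldb u,q_i\rdb)p_i$) and use the antisymmetry $\ldb w,v\rdb=-\sum_i q_i\otimes p_i$ to identify $\sum_i\mu(\ldb u,q_i\rdb)p_i=-m\ldb u,\ldb w,v\rdb\rdb$, and analogously for the other two. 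Applying the double Jacobi identity with the triple $(u,w,v)$ in place of $(u,v,w)$ and using $m\sigma\equiv m$ again shows that the sum of these ``left'' contributions also vanishes modulo $[A,A]$. Combining the two vanishings establishes the classical Jacobi identity on $\mathcal{T}$.

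\textbf{Main obstacle.} Steps (i) and (ii) are one-line calculations. The subtle point is (iii): the convention that $\ldb v_1,v_2\otimes v_3\rdb$ acts only on the first factor means that a single application of the double Jacobi plus $m\circ\sigma\equiv m$ recovers only ``one half'' of $\{\bar u,\{\bar v,\bar w\}\}$. Extracting the other half requires an auxiliary use of antisymmetry to identify the remaining Leibniz contributions with $m\ldb\cdot,\ldb\cdot,\cdot\rdb\rdb$ for a permuted triple, and then invoking the double Jacobi a second time. Keeping track of the signs and the interplay between $\sigma$, the Leibniz expansion, and the symmetry axiom is the core technical step.
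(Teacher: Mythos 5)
Your proof is correct. Note that the paper itself gives no proof of this proposition -- it is stated as a consequence of Van den Bergh's formalism with a reference to \cite{VdB} -- so there is nothing to compare against line by line; your argument is the standard one and every step checks out. The two easy parts (antisymmetry from $\ldb u,v\rdb=-\ldb v,u\rdb^{\circ}$ plus cyclicity of $\mu$ modulo $[A,A]$, and well-definedness from the Leibniz rule producing commutators) are fine, and you correctly identify the genuine subtlety in the Jacobi identity: since $\ldb v_1,v_2\otimes v_3\rdb$ acts only on the first tensor factor, applying $m$ to the double Jacobi identity and using $m\circ\sigma\equiv m \pmod{[A,A]}$ accounts only for the ``right'' halves $\sum_i\mu(\ldb u,p_i\rdb)q_i$ of the three nested brackets. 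Your device for the ``left'' halves -- cycling $p_i\mu(\ldb u,q_i\rdb)$ to $\mu(\ldb u,q_i\rdb)p_i$ in $A/[A,A]$, recognizing this as $-m\ldb u,\ldb w,v\rdb\rdb$ via the antisymmetry axiom, and invoking the double Jacobi identity a second time for the permuted triple $(u,w,v)$ -- is exactly what is needed, and the signs work out as you state.
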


Let ${\cal A}=\C[x_1,\ldots,x_N]$ be the free associative algebra. If the double brackets $\ldb x_i, x_j \rdb$ between all generators $x_i$ are fixed, then the bracket between two arbitrary elements of ${\cal A}$ 
is uniquely determined. Constant, linear, and quadratic double brackets are defined by 
$$
\ldb x_i,x_j\rdb = c_{ij} 1\otimes 1, \qquad c_{i,j}=-c_{j,i},
 $$$$
\ldb x_i,x_j\rdb = b_{ij}^k x_k\otimes 1 - b_{ji}^k1\otimes x_k,
$$
and  
$$
\ldb x_{\alpha}, x_{\beta}\rdb =r_{\alpha \beta}^{u v} \, x_u \otimes x_v+a_{\alpha \beta}^{v u} \, x_u x_v\otimes 1-a_{\beta \alpha}^{u v} \,1\otimes  x_v x_u,  
$$
respectively.
\begin{proposition} These formulas define  double Poisson brackets iff the constants $ c_{ij}, b_{ij}^k, r^{pq}_{ij}, a^{pq}_{ij} $ satisfy the identities of Theorem \ref{th51}.
\end{proposition}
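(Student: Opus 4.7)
The plan is to verify directly the three axioms of a double Poisson bracket—skew-symmetry, the right derivation rule, and the double Jacobi identity—on the generators $x_i$, since the first argument is a derivation (by the skew-symmetry axiom combined with the derivation rule) and the maps are determined on generators on both sides. The derivation rule in the second argument is automatic in all three cases by construction: formulas for $\ldb x_\alpha, x_\beta\rdb$ are \emph{defined only on generators} and extended by $\ldb u, vw \rdb = (v\otimes 1)\ldb u,w\rdb + \ldb u,v \rdb(1\otimes w)$, so nothing has to be checked for this axiom.

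Skew-symmetry is a direct bookkeeping check on generators. For the constant bracket, applying $(\cdot)^\circ$ to $\ldb x_j,x_i\rdb = c_{ji}\,1\otimes 1$ one gets $c_{ji}\,1\otimes 1$, so the requirement $\ldb x_i,x_j\rdb = -\ldb x_j,x_i\rdb^\circ$ becomes exactly $c_{\alpha\beta} = -c_{\beta\alpha}$. For the linear bracket, applying $(\cdot)^\circ$ to $b_{ji}^k x_k\otimes 1 - b_{ij}^k 1\otimes x_k$ recovers $\ldb x_i,x_j\rdb$ up to sign automatically, so skew-symmetry imposes no new condition. In the quadratic case, $(\cdot)^\circ$ swaps the two tensor slots, and the two ``$\mu(\ldb,\rdb)\otimes 1$'' pieces are interchanged with the ``$1\otimes \mu(\ldb,\rdb)$'' pieces correctly by the $a$-part of the formula; the $r$-part then forces the skew-symmetry condition \eqref{r1}, $r^{\sigma\epsilon}_{\alpha\beta} = -r^{\epsilon\sigma}_{\beta\alpha}$.

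The core of the proof is the double Jacobi identity
\[
\ldb x_i, \ldb x_j, x_k\rdb\rdb + \sigma \ldb x_j, \ldb x_k, x_i\rdb\rdb + \sigma^2 \ldb x_k, \ldb x_i, x_j\rdb\rdb = 0,
\]
evaluated on triples of generators. In the constant case both inner brackets collapse to $c\,1\otimes 1$ and then the outer bracket of $1$ with anything is zero, so the identity holds with no further conditions. In the linear case one uses $\ldb x_i, x_k\otimes 1\rdb = \ldb x_i,x_k\rdb\otimes 1$ and similarly for $1\otimes x_k$, expands each of the three cyclic terms, and collects the coefficient of each tensor monomial $x_\lambda\otimes x_\mu\otimes x_\nu$ (with the appropriate unit factors). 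Every resulting coefficient is a quadratic expression in the $b^k_{ij}$, and a straightforward reorganization shows that the totality of these coefficients vanishes precisely when \eqref{r0} holds, i.e.\ when the $b^k_{ij}$ are the structure constants of an associative algebra.

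The quadratic case is the only one with nontrivial bookkeeping. One expands $\ldb x_i, \ldb x_j,x_k\rdb\rdb$ using the derivation rule in the second argument applied to each monomial of $\ldb x_j,x_k\rdb$: e.g.\ $\ldb x_i, a^{vu}_{jk} x_u x_v\otimes 1\rdb = a^{vu}_{jk}\bigl((x_u\otimes 1)\ldb x_i,x_v\rdb + \ldb x_i,x_u\rdb(1\otimes x_v)\bigr)\otimes 1$, and similarly for the other pieces, then substitutes the quadratic formula for $\ldb x_i,x_\bullet\rdb$. Doing the same for the two $\sigma$-rotated terms and grouping the result by the nine possible tensor-slot shapes—$x\otimes x\otimes x$, $xx\otimes 1\otimes x$, $x\otimes xx\otimes 1$, $1\otimes xx\otimes x$, and their cyclic and $a$-$a$ variants—yields five independent linear relations on the tensors $\mathbf{a}$ and $\mathbf{r}$. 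The $xxx$-monomials give the associative Yang--Baxter relation \eqref{r2}; the ``$\mathbf a\mathbf a$'' purely $\mathbf a$-monomials give \eqref{r3}; the two mixed classes give the compatibilities \eqref{r4} and \eqref{r5}.

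The main obstacle is the sheer combinatorial complexity of collecting terms in the quadratic Jacobi identity and matching the cyclic permutations $\sigma, \sigma^2$ to the correct tensor slots; the content here is identical to (and essentially a lift of) the verification of Theorem \ref{th51}, since multiplying each double bracket by $\mu$ and tracing yields precisely that theorem. In fact the cleanest way to finish is to observe that each of the five tensor-shape components of the double Jacobi identity \emph{separately} reduces, upon application of $\mu\otimes\mathrm{id}$ or $\mathrm{id}\otimes\mu$ and identification of the resulting relation, to one of the five conditions \eqref{r1}--\eqref{r5}; conversely, if \eqref{r1}--\eqref{r5} all hold then every component vanishes, which gives the ``if'' direction.
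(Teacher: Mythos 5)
The paper states this proposition without any proof (consistent with its survey style; the computation is in the cited literature on double Poisson brackets), so there is nothing of the paper's to compare against line by line. Your direct verification is the natural and correct route, and your bookkeeping of which axiom produces which identity is accurate: skew-symmetry yields $c_{\alpha\beta}=-c_{\beta\alpha}$ in the constant case, is automatic in the linear case, and forces \eqref{r1} in the quadratic case; the double Jacobi identity is vacuous in the constant case, gives the associativity condition \eqref{r0} in the linear case, and splits by tensor shape into \eqref{r2}--\eqref{r5} in the quadratic case. The splitting argument is legitimate because the monomials $x_u\otimes x_v\otimes x_w$, $x_ux_v\otimes 1\otimes x_w$, etc.\ are linearly independent in ${\cal A}^{\otimes 3}$ for the free algebra, so each shape's coefficient must vanish separately; your closing suggestion to instead apply $\mu\otimes\mathrm{id}$ or $\mathrm{id}\otimes\mu$ is unnecessary and slightly risky (multiplication can collapse distinct tensor shapes), and you should drop it in favor of simply reading off coefficients.

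The one step you use without justification is the reduction of the double Jacobi identity to triples of generators. This is not automatic from the definitions: you must first observe that once the Leibniz rule in the second argument and skew-symmetry hold, the triple bracket $\ldb u,\ldb v,w\rdb\rdb+\sigma\ldb v,\ldb w,u\rdb\rdb+\sigma^2\ldb w,\ldb u,v\rdb\rdb$ is a derivation ${\cal A}\to{\cal A}^{\otimes 3}$ in its last argument (for the outer bimodule structure) and hence, by its cyclic covariance, in each argument, so that it vanishes identically iff it vanishes on generators. This is a standard lemma of Van den Bergh's, but as stated your proof silently assumes it; add it explicitly. A similar (easier) remark applies to propagating skew-symmetry from generators to all of ${\cal A}$. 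With these two points supplied, the argument is complete.
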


\end{document}